\documentclass[10pt]{article}
\usepackage{graphicx}
\usepackage{graphics, cancel}
\DeclareGraphicsRule{.pstex}{eps}{*}{}
\usepackage{color}
\usepackage{amsmath, amsthm, slashed,bbold,upgreek}
\usepackage{amssymb,enumitem}
\usepackage{rotating}
\usepackage{epsfig}
\usepackage{verbatim}
\usepackage[margin=1.4in,dvips]{geometry}
\usepackage{rotating}
\usepackage{graphicx}
\usepackage[affil-it]{authblk}
\newtheorem{definition}{Definition}[section]
\newtheorem{theorem}{Theorem}[section]
\newtheorem{conjecture}{Conjecture}[section]
\newtheorem{corollary}{Corollary}[section]
\newtheorem{proposition}{Proposition}[section]
\newtheorem{lemma}{Lemma}[section]
\newtheorem{remark}{Remark}[section]

\newtheorem*{theorem*}{Theorem}
\newtheorem*{conjecture*}{Conjecture}
\DeclareMathOperator{\arccot}{arccot}
\setcounter{secnumdepth}{5}
\setcounter{tocdepth}{5}

\title{A scattering theory construction \\ of dynamical vacuum black holes}

\author[1,4]{Mihalis Dafermos}
\author[2,4]{Gustav Holzegel}
\author[3,4]{Igor Rodnianski}
\affil[1]{\small University of Cambridge, Department of Pure Mathematics and Mathematical Statistics, Wilberforce~Road, Cambridge CB3 0WA, United Kingdom}

\affil[2]{Imperial College London, Department of Mathematics, South Kensington Campus, London~SW7~2AZ, United Kingdom}

\affil[3]{Massachusetts Institute of Technology, Department of Mathematics, 77~Massachusetts~Avenue,~Cambridge,~MA~02139, United States}

\affil[4]{Princeton University, Department of Mathematics, Fine Hall, Washington Road, Princeton,~NJ~08544, United States}

\hyphenation{Schwarz-schild}
\hyphenation{Minkow-ski}
\hyphenation{Schwarz-schildean}
\hyphenation{Traut-mann}
\hyphenation{Christo-doulou}
\begin{document}

\maketitle

\begin{abstract}
We construct a large class of dynamical vacuum black hole spacetimes whose exterior geometry asymptotically settles down to a fixed Schwarzschild or Kerr metric. The construction proceeds  by solving a backwards scattering problem for the Einstein vacuum
equations with characteristic data prescribed on the event horizon and (in the limit) at null infinity. The class admits the full ``functional'' degrees of freedom for the vacuum equations,
and thus our solutions will in general possess no geometric or algebraic symmetries. 
It is essential, however, for the construction that the scattering data (and  the resulting solution
spacetime) converge to 
 stationarity exponentially fast,  in advanced and retarded time,  their rate of decay  
 intimately related to the surface gravity of the event horizon. This can be traced back to the celebrated redshift effect, which in the context of backwards evolution  is seen as a blueshift.
\end{abstract}

\tableofcontents

\section{Introduction}

The question of the dynamical stability of vacuum black holes is a fundamental open problem in
classical  general relativity:
\begin{conjecture*}[Nonlinear stability of Kerr]
For all vacuum Cauchy data
 sufficiently
``near'' the data corresponding to a subextremal ($|a_0|<M_0$) 
Kerr metric $g_{a_0,M_0}$~\cite{Kerr}, 
the maximal vacuum 
Cauchy development spacetime~\cite{Geroch}
$(\mathcal{M},g)$
\begin{enumerate}
\item
possesses a complete null infinity $\mathcal{I}^+$ (cf.~\cite{Chrmil}) whose
past $J^-(\mathcal{I}^+)$ is bounded in the future
by a smooth affine complete event horizon $\mathcal{H}^+\subset\mathcal{M}$, 
\item
stays globally close to $g_{a_0,M_0}$ in $J^-(\mathcal{I}^+)$, 
\item 
asymptotically settles down in $J^-(\mathcal{I}^+)$
 to a nearby subextremal member of
the Kerr family $g_{a,M}$ with  parameters $a\approx a_0$ and $M\approx M_0$.
\end{enumerate}
\end{conjecture*}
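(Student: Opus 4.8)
The plan is to run a global bootstrap (continuity) argument for the Einstein vacuum equations in a \emph{teleological} gauge normalized to the final, a priori unknown, Kerr solution $g_{a,M}$. First I would reformulate the equations in a gauge well adapted to the expected asymptotic state --- either a double null foliation whose spheres are normalized on a far outgoing ``last'' cone and on the event horizon, or a generalized wave-coordinate gauge built around $g_{a,M}$. In either case the unknowns are a collection of metric components / Ricci coefficients together with the Weyl curvature components, coupled through transport equations along the two null directions and the Bianchi equations. The structural input to exploit is that in such a gauge the nonlinearities have the correct null and triangular structure, so that the slowly decaying quantities never multiply one another in a genuinely dangerous way; the quasilinear principal part must then be treated by energy estimates with respect to a vector field that degenerates correctly at the horizon and at null infinity.

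Second, I would isolate the linearized core: decay for the spin $\pm 2$ Teukolsky equations on a subextremal Kerr background. Via a Chandrasekhar-type transformation these are mapped to Regge--Wheeler-type wave equations with a potential, for which one proves integrated local energy decay (Morawetz) estimates. The delicate point is the simultaneous presence of (i) the horizon, handled by the red-shift vector field; (ii) trapping at the photon ``sphere'', which forces a loss of derivatives unless one argues frequency by frequency; and (iii) superradiance, i.e.\ the absence of a globally causal Killing field, which requires showing that the trapped and the superradiant frequency regimes are quantitatively disjoint and then patching the frequency-localized estimates. Combined with the $r^p$-weighted hierarchy of estimates near $\mathcal{I}^+$ and the Teukolsky--Starobinsky identities, this yields polynomial-in-time decay for the curvature; integrating the transport equations then propagates decay to all Ricci coefficients, recovering and improving the bootstrap assumptions.

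Third --- and here lies the genuinely new difficulty over Schwarzschild or the linear theory --- one must handle the ``non-radiating'' low modes: the $\ell=0$ part of the relevant fields corresponds to a dynamical change of mass, the $\ell=1$ part to a change of angular momentum and to residual gauge, and these do not decay on their own. The remedy is a modulation argument: at each stage of the bootstrap one defines parameters $(a(u),M(u))$ (together with residual gauge parameters) by orthogonality conditions that annihilate the non-decaying projections, shows that these converge as $u\to\infty$ to limits $(a_\infty,M_\infty)$ close to $(a_0,M_0)$, and re-centres the teleological gauge on $g_{a_\infty,M_\infty}$. Simultaneously one must verify that the event horizon $\mathcal{H}^+$, defined teleologically as the past boundary of $J^-(\mathcal{I}^+)$, is a smooth affine-complete null hypersurface asymptotic to the horizon of $g_{a_\infty,M_\infty}$, and that $\mathcal{I}^+$ is complete, the latter following once uniform control towards $i^0$-type directions is available. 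The present scattering construction --- producing solutions that converge exponentially to Kerr --- can serve as a consistency check on this asymptotic picture, although the Cauchy problem only provides polynomial rates.

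The main obstacle I expect is the interaction between the merely polynomial decay rates and the quasilinear character of the equations, compounded by the fact that the background is itself known only up to the modulated parameters: closing the bootstrap demands that every error term decay strictly faster than the quantity it feeds, which is borderline precisely for the top-order $r^p$ fluxes, the horizon fluxes, and the modulation equations, where the naive estimates lose exactly the margin one needs. A secondary but serious difficulty is geometric: one must control the stability of the trapped set and of the ergoregion under the dynamical perturbation of the metric, so that the frequency-space disjointness underlying the linear step persists along the whole evolution.
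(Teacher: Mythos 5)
The statement you have been asked about is not a theorem of this paper: it is stated as a \emph{conjecture}, and the paper explicitly treats the nonlinear stability of Kerr as a fundamental open problem which its own results do not resolve (the authors construct special dynamical vacuum black holes by solving a \emph{backwards} scattering problem with exponentially decaying data, precisely because the forward Cauchy problem of the conjecture is out of reach). There is therefore no proof in the paper against which to compare your argument, and your text should be judged on its own terms as a purported proof of an open conjecture.

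On those terms, what you have written is a research programme, not a proof, and the gap is total rather than local. Every one of your three ``steps'' is itself a major open problem at the level of this paper: (i) integrated decay for the spin $\pm 2$ Teukolsky system on the full subextremal Kerr family, including the frequency-localized treatment of the coupled trapping/superradiance regimes and the passage through Chandrasekhar-type transformations, is asserted rather than established; (ii) the claim that the quasilinear error terms ``never multiply one another in a genuinely dangerous way'' in your chosen gauge is exactly the content that must be proved --- the paper's own discussion of the null condition shows how delicate this bookkeeping is even in the far simpler backwards/exponential setting, and you give no hierarchy of $p$-weights, no list of borderline terms, and no mechanism for absorbing them; (iii) the modulation of $(a,M)$ and the residual gauge freedom is described only by its desired conclusion (``shows that these converge''), with no orthogonality conditions written down and no argument that the modulation equations close against merely polynomial decay. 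You also correctly identify the borderline nature of the top-order fluxes and the stability of the trapped set as obstacles, but identifying an obstacle is not overcoming it. In short, nothing in the proposal constitutes a verifiable deduction; to make progress you would need to restrict to a setting where at least the linear step is a theorem (e.g.\ Schwarzschild, or $|a|\ll M$) and then carry out the estimates explicitly, rather than narrating the expected shape of the argument for the full subextremal range.
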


The problem poses a considerable challenge.
The \emph{Einstein vacuum  equations} 
\begin{equation}
\label{vacEq}
{\rm Ric}(g)=0
\end{equation}
constitute a complicated system of nonlinear
hyperbolic equations for an unknown $3+1$-dimensional Lorentzian metric $g$.
In the asymptotically flat setting,  
non-linear stability under the evolution of $(\ref{vacEq})$
is known only for the trivial solution
 \emph{Minkowski space} $\mathbb R^{3+1}$, as was proven in monumental
work of Christodoulou and Klainerman~\cite{ChristKlei}.
Resolving the above conjecture
 would require understanding the long time dynamics of
 general solutions of the Cauchy problem in 
 a neighbourhood of the highly non-trivial \emph{Kerr} family of vacuum metrics.
Moreover, due to the ``supercriticality'' of the nonlinearity implicit in equations
$(\ref{vacEq})$, statements 1,~2 and~3 of the  conjecture 
 are in fact strongly coupled, and one thus  expects to have to  prove
 them  all simultaneously. 

Recent work on this problem has been concentrated in two directions: 
One direction has been to indeed look at  the fully \emph{non-linear} Einstein equations,
further coupled\footnote{In $3+1$ dimensions under spherical symmetry,
in view of Birkhoff's theorem~\cite{Wald},
it is \emph{necessary} in this context to couple $(\ref{vacEq})$
with matter fields
so as to yield non-trivial dynamics. The simplest choice
to consider is then the \emph{Einstein--scalar field} 
system: ${\rm Ric}(g)=\nabla\psi \otimes\nabla \psi$. 
In $4+1$ dimensions, however, the vacuum equations $(\ref{vacEq})$ themselves admit
 a more interesting
Bianchi IX symmetry assumption~\cite{Bizon}, compatible
with asymptotic flatness, which is governed
by two dynamic degrees of freedom
satisfying a system of $1+1$-dimensional pde's.
The stability of $4+1$-dimensional
Schwarzschild  has been resolved in this symmetric
setting  in~\cite{HolBi9}. In the
present paper, in what follows we shall always consider the classical context of $3+1$ dimensions.}
to additional matter fields, but \emph{under spherical symmetry} \cite{ChrMToGC, DafRod}, so
as for the problem to effectively reduce to a system of $1+1$-dimensional hyperbolic
pde's.
This effectively breaks the supercriticality\footnote{One can understand this as follows:
In spherical symmetry, suitable Einstein-matter systems become subcritical with respect to the conserved quantity given by the flux of
Hawking mass, \emph{provided that the area radius function $r$ is bounded away from $0$}. 
The latter bound indeed holds in the domain of outer
communications to the future of any hypersurface
containing a \emph{marginally trapped surface}, in particular, for small perturbations
of Schwarzschild.
Using the aforementioned conserved quantity,
the existence of a complete null infinity $\mathcal{I}^+$
can then be proven \emph{without proving asymptotic 
stability}~\cite{Mihali1}. In contrast, the  stability problem for
Minkowski space
retains its ``supercriticality'' even under spherical symmetry, in view of the presence of the 
regular centre $r=0$. Nonetheless, that problem remains much easier (see~\cite{Christodoulou}) than its non-spherical symmetric
version~\cite{ChristKlei}!}
 of the problem and decouples 1--3 above. 
 In this context, the analogue of the above conjecture 
 is completely understood for the simplest matter models, but at the expense of changing the very
 nature of the analysis and suppressing (via the symmetry assumption) several
 of the most interesting phenomena associated to black holes.
 
The second main direction of study
has restricted itself to
the \emph{linear} problem associated with (\ref{vacEq}),
where the background metric $g$ is fixed,
but where the linear fields considered are now \emph{without} symmetry. See~\cite{mihalisnotes}
for an extensive review. 
In the latter direction, even the case of the Cauchy problem for the linear \emph{scalar} wave equation
\begin{equation}
\label{scalarwave}
\Box_g\psi=0
\end{equation}
on a fixed subextremal Kerr exterior background $(\mathcal{M},g_{a,M})$--what 
can reasonably be viewed as 
a ``poor man's'' linearisation of $(\ref{vacEq})$, completely neglecting the tensorial
structure--has only recently been understood: first for the Schwarzschild case $a=0$ in \cite{KayWald, DafRod2, Sterbenz}, then for the slowly rotating case $|a| \ll M$ in \cite{mihalisnotes, dafrodsmalla, Toha2, AndBlue} and finally for the full case $|a|<M$ in \cite{dafrodlargea, SRT}. Note also \cite{Whiting, Finster}. The extremal case $|a|=M$, on the other hand, is subject to the recently discovered \emph{Aretakis instability} \cite{Aretakis, Aretakis2, Reallextreme, Aretakis3}, hence its exclusion from our formulation of  the conjecture.

Beyond  the above two directions, the
study of the dynamical stability of black holes  is still terra incognita.
Better understanding of  the full non-linear Kerr stability conjecture
is hampered by the fact that a much more basic question has
not yet been answered:
\begin{quotation}
\noindent{\it 
Are there \underline{any} examples of dynamical vacuum black hole
spacetimes which radiate  nontrivially for all time  to both a complete event horizon 
$\mathcal{H}^+$ and to
a complete null infinity $\mathcal{I}^+$ and asymptotically settle down
to Schwarzschild or Kerr?}
\end{quotation}

The purpose of the present paper is to answer the above question in the affirmative, in fact,
to address
both the issue of existence\footnote{The closest previously known examples exhibiting such dynamics are the so-called Robinson-Trautmann
solutions, which we shall discuss in Section \ref{sec:RobTraut}.}
of such examples and how a suitably general such class can be effectively 
parametrized. 
Our main result can be summarised:
\begin{theorem*}
For all $|a|\le M$,
there exist  smooth vacuum black hole spacetimes,  
parametrized by ``scattering data'' on a complete event horizon
$\mathcal{H}^+$ and a complete
null infinity $\mathcal{I}^+$ (with the
full functional degrees of freedom), 
which asymptotically settle down  to the Kerr metric $g_{a,M}$. 
\end{theorem*}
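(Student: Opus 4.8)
The plan is to realise each spacetime as a teleological perturbation of the Kerr exterior $g_{a,M}$, reconstructed from its future boundary $\mathcal H^+\cup\{i^+\}\cup\mathcal I^+$ by solving $(\ref{vacEq})$ backwards in time. First I would fix a gauge on the domain of outer communications $\mathcal D$ --- a genuine double-null gauge in neighbourhoods of $\mathcal H^+$ and of $\mathcal I^+$, interpolated in the bulk by a gauge adapted to a foliation by asymptotically Kerr spacelike slices $\Sigma_\tau$ --- relative to which $(\ref{vacEq})$ splits into the null structure equations $\partial\Gamma=\Gamma\cdot\Gamma+\Psi$ for the renormalised connection coefficients $\Gamma$ together with the Bianchi equations $\partial\Psi=\Gamma\cdot\Psi$ for the null curvature components $\Psi=(\alpha,\beta,\rho,\sigma,\underline\beta,\underline\alpha)$, in the spirit of the formulation of~\cite{ChristKlei}. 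The \emph{scattering data} are the freely specifiable radiative pieces of this system on the two null ends: the conformal geometry of $\mathcal H^+$ (the horizon shear $\hat\chi$) as a symmetric trace-free tensor on the horizon spheres depending on advanced time $v$, and the renormalised shear $\Xi$ at $\mathcal I^+$ (equivalently the Bondi news) depending on retarded time $u$; the ``junction data at $i^+$'' is fixed to be that of $g_{a,M}$. Since $\hat\chi$ and $\Xi$ are two independently prescribable pairs of functions of three variables --- matching the two dynamical degrees of freedom of $(\ref{vacEq})$ --- this is the sense in which the construction carries the full functional degrees of freedom.

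Next I would fix the admissible class: $\hat\chi$ and $\Xi$ are smooth, of size $\varepsilon$, and converge to the vanishing Kerr values exponentially fast towards $i^+$ at a rate $\lambda$ which is a fixed multiple of the surface gravity $\kappa_+$ of $\mathcal H^+$, the multiple growing with the number of derivatives to be controlled (so that $C^\infty$ corresponds to decay faster than every exponential). The construction then proceeds in three stages. (i) \emph{A limiting hierarchy}: for each $n$ solve an honest finite characteristic initial value problem for $(\ref{vacEq})$ on a region $\mathcal D_n$ exhausting $\mathcal D$, with data on cones $C_n$ (approaching $\mathcal H^+$) and $\underline{C}_n$ (approaching $\mathcal I^+$) that converge to $(\hat\chi,\Xi)$, and exact $g_{a,M}$ data on the corner sphere near $i^+$; local existence and propagation of constraints here are standard. (ii) \emph{Uniform a priori estimates} for the $g_n$, in norms combining --- all run \emph{backwards in $\tau$} --- an exponentially $v$-weighted redshift estimate near $\mathcal H^+$ (cf.~\cite{mihalisnotes}), an $r^p$-weighted hierarchy near $\mathcal I^+$, an integrated local energy decay estimate in the bulk controlling the trapped null geodesics, and their commuted versions at higher order; the quadratic terms $\Gamma\cdot\Gamma$ and $\Gamma\cdot\Psi$ are absorbed from the smallness of $\varepsilon$ in a bootstrap, and the dynamically defined gauge is tracked by coupled transport estimates for the metric coefficients. (iii) \emph{Passage to the limit}: the uniform bounds give sub-convergence $g_n\to g$ in $C^\infty_{\mathrm{loc}}(\mathcal D)$ to a smooth vacuum metric; the weighted norms pass to the limit and certify that $g\to g_{a,M}$ exponentially along the $\Sigma_\tau$, that the generators of $\mathcal H^+$ and of $\mathcal I^+$ are affinely complete, that $\mathcal H^+=\partial J^-(\mathcal I^+)$ (stable under the small perturbation), and that $g$ induces precisely $(\hat\chi,\Xi)$. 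Letting $(\hat\chi,\Xi)$ range over the admissible class, checking that distinct data give non-isometric spacetimes (injectivity modulo residual gauge), and extending $g$ slightly beyond $\mathcal H^+$ so the latter is a genuine boundary, yields the asserted parametrised family.

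The heart of the matter, and the expected main obstacle, is the backward estimate near $\mathcal H^+$ in stage (ii): evolved towards the past, the redshift at $\mathcal H^+$ reverses into a \emph{blueshift}. Concretely, the transport equation for the $k$-th transversal derivative along $\mathcal H^+$ carries an anti-damping term $+k\kappa_+$, so backward integration from $i^+$ amplifies by $e^{k\kappa_+(v'-v)}$ and closes only if the forcing --- hence ultimately the data --- decays strictly faster than $e^{-k\kappa_+v}$; the resolution is to run the vector-field multiplier estimates with a weight $\sim e^{2\lambda v}$ near $\mathcal H^+$ tuned to $\kappa_+$, so that the wrong-sign bulk term produced on integration by parts is dominated exactly because the data decays exponentially at the matched rate --- the quantitative content of the remark in the abstract that the decay rate is ``intimately related to the surface gravity''. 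Two further difficulties compound this. The bulk estimate must handle the trapping of null geodesics, which for $a\neq0$ is frequency-dependent and not localised at a single hypersurface; this is the technically heaviest ingredient and must be adapted from~\cite{dafrodlargea,SRT} to the backward setting and coupled to the redshift and $r^p$ hierarchies without degrading the exponential weights upon commutation. And the extremal case $|a|=M$, with $\kappa_+=0$, requires separate treatment: the blueshift disappears but so does the exponential decay mechanism, and one must instead use that data decaying towards $i^+$ automatically annihilate the conserved horizon charges of Aretakis type~\cite{Aretakis}, at the price of correspondingly weaker decay for the resulting spacetime.
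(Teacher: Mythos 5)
Your overall skeleton --- finite mixed Cauchy--characteristic problems with cut-off data exhausting the exterior, backwards weighted energy/transport estimates in a double-null-type gauge, exponential decay of the data tuned to the surface gravity to beat the blueshift, and passage to the limit --- is the paper's architecture. But you have misidentified what makes the estimates close, and the step you call ``the technically heaviest ingredient'' is not only unnecessary but is precisely what the construction is designed to avoid. You propose to prove an integrated local energy decay (Morawetz) estimate controlling trapping, adapted from the forward Kerr theory of~\cite{dafrodlargea, SRT} to the backward setting, and to couple it to the redshift and $r^p$ hierarchies. In the actual argument no Morawetz estimate, no understanding of trapping, and no treatment of superradiance is needed anywhere: because the scattering-data fluxes decay like $e^{-P\tau}$, the energy identity on $\mathcal{M}(\tau_*,\tau,v_\infty)$ only has to be combined with the crude bound $\int|{\rm Bulk}|\lesssim\int_{\tau_*}^{\tau}\mathcal{E}(\tau')\,d\tau'$ (no sign information, no positivity) and Gronwall, with $P$ chosen larger than the implicit constant --- cf.~$(\ref{thebulkterforwave3})$--$(\ref{whatitgives})$ and $(\ref{grontype})$. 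This is why Schwarzschild and Kerr are at the same level of difficulty in this problem. Insisting on a backward ILED estimate is a wrong turn: the Morawetz bulk term has a favourable sign only for the forward problem, and one would be attempting a far harder (and unneeded) linear-theory input inside a nonlinear bootstrap. Relatedly, your bootstrap closes ``by smallness of $\varepsilon$''; near $\mathcal{I}^+$ smallness alone does not handle the borderline $O_{p+1}$ terms in the $\slashed{\nabla}_4$-transport equations, whose $r$-weight gives no gain on integration in $v$ --- one needs the structural fact that these borderline terms are of the form $f_{p_1}\overset{(3)}{\Gamma}_{p_2}$, i.e.~involve only quantities already controlled by the $\slashed{\nabla}_3$-equations, which is the ``null condition'' $(\ref{moreprecisely0})$--$(\ref{moreprecisely1})$, $(\ref{moreprecbian1})$--$(\ref{moreprecbian2})$; smallness in the paper comes from $\tau_0$ large and the exponential factor, not from an amplitude parameter on the data.

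Your treatment of the extremal case is also mistaken in direction. You claim that for $|a|=M$ ``the exponential decay mechanism disappears'' and that one must instead work with weaker decay and vanishing Aretakis charges. Nothing forces one to weaken the decay: exponential decay of the data is an assumption, imposed uniformly for all $|a|\le M$, and the degeneration $\kappa\to 0$ only \emph{removes} the blueshift constraint on $P$, so the same estimates close --- the extremal case is the easiest, not a case requiring separate treatment (the vanishing of the Aretakis constants is a by-product, not an ingredient). Two smaller overreaches: the theorem does not assert (and the argument does not give) injectivity of the data-to-solution map --- uniqueness is only obtained a posteriori within the class of solutions settling down exponentially fast, and is otherwise open --- so your ``non-isometric for distinct data'' step is neither needed nor available; and ``smooth'' here means arbitrary prescribed finite regularity, since the admissible decay rate $P$ grows with the number of derivatives (the amplified blueshift under $\slashed{\nabla}_3$-commutation), so your claim that super-exponential decay yields genuine $C^\infty$ solutions on a fixed region is not established by this scheme.
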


We have here stated the result somewhat loosely.
A  precise version of the theorem in the Schwarzschild case 
$a=0$ is stated as Theorem \ref{theo:full} of Section \ref{sec:maintheorem}, which in turn follows from Theorems \ref{theo1} and \ref{theo2}. The case of the general Kerr family  is considered  
in Section~\ref{sec:thekerrcase}. As we shall discuss below, the Kerr case produces no additional conceptual difficulties but is computationally more involved and less explicit, hence the
pedagogical advantage of treating first Schwarzschild in detail. We also remark that by ``smooth" above, we in fact mean solutions of arbitrary prescribed finite regularity, cf.~Remark \ref{rem:Pdep}.
Finally, we note explicitly that the above theorem includes the extremal case $|a|=M$, despite
its exclusion from the stability of Kerr conjecture.
We will return to this surprising fact later on.

Let us note that a general notion of vacuum spacetimes asymptotically settling down to Schwarzschild was in fact first introduced in~\cite{Holzegelspin2}. Though~\cite{Holzegelspin2}
did not produce examples of such spacetimes, it proved that \emph{given such a spacetime},
certain higher order energies (associated with the propagating degrees of freedom of the vacuum equations $(\ref{vacEq})$) decayed quantitatively with 
respect to a suitable foliation. The point was that the decay bounds were estimable from 
initial data augmented only by certain
global a priori decay estimates for a finite number of lower order energies. This type of 
result  was motivated by its possible usefulness in the
context of the stability problem. 
Our  theorem above in particular gives the first non-trivial
examples of  spacetimes satisfying the assumptions of~\cite{Holzegelspin2}.

While still far from resolving the stability conjecture with which we set out,
the above theorem confirms that the qualitative  picture of dynamical vacuum
black holes radiating their ``dynamical degrees of freedom'' to the event horizon and to 
(a complete) null infinity,  finally asymptotically settling down 
to the Kerr family,       is indeed reflected by
a non-trivial class of spacetimes with no geometric or algebraic symmetries.
We will give further discussion of the relation of what we have proven
with the full stability problem at various points in the remainder of this introduction.

\subsection{Brief overview} \label{sec:mainpoints}

In this section, we give a brief overview of the main ideas behind our construction, beginning
with the basic setup. In our discussion, we will assume some
familiarity with general relativity and the geometry of the Schwarzschild
and Kerr families in particular. 
We refer the reader to the textbook~\cite{Wald} for  background, the lecture notes~\cite{mihalisnotes} for a mathematical discussion of
black holes with emphasis on wave propagation, and the monumental works of Christodoulou
and Klainerman~\cite{ChristKlei} and Christodoulou~\cite{formationofbh}
on the global analysis of the vacuum equations $(\ref{vacEq})$, works which provide the
ultimate source for many ideas which 
will be used here.

\subsubsection{The setup}
\label{thesetuphere}
Scattering data for the Einstein vacuum equations (\ref{vacEq}) are posed on (what will be) the event horizon $\mathcal{H}^+$ and null infinity $\mathcal{I}^+$ of our spacetime.
The geometry defined by the scattering data on $\mathcal{H}^+$ and $\mathcal{I}^+$ will  approach \emph{exponentially}\footnote{Following Christodoulou~\cite{formationofbh},
null characteristic data can be parametrized by the geometry on a fixed
sphere and a seed ``function'' representing the
conformal geometry of the cone. Exponential approach
can here be characterized by suitable exponential decay of the seed function.}
 the geometry of a Kerr event horizon and a Kerr null infinity, in 
appropriately defined advanced and retarded time respectively, for 
some fixed parameters $|a|\le M$.
The necessity of assuming exponential approach will be discussed in
Section~\ref{whyexp}  below.

Our objective is then to construct a  spacetime $\left(\mathcal{M},g\right)$ solving $(\ref{vacEq})$,
whose domain of existence can be pictured by the shaded region of the Penrose diagram below: 
\[
\input{intro1.pstex_t}
\]
which  attains the prescribed  scattering data as induced geometry
on the horizon $\mathcal{H}^+$ and null infinity
$\mathcal{I}^+$, the latter as an asymptotic limit. The data on null infinity $\mathcal{I}^+$
then encodes gravitational radiation to far away observers.

The actual construction is taken via the limit of an associated finite problem; see
the figure below:
\[
\begin{picture}(0,0)%
\includegraphics{intro2.pstex}%
\end{picture}%
\setlength{\unitlength}{1816sp}%
\begingroup\makeatletter\ifx\SetFigFont\undefined%
\gdef\SetFigFont#1#2#3#4#5{%
  \reset@font\fontsize{#1}{#2pt}%
  \fontfamily{#3}\fontseries{#4}\fontshape{#5}%
  \selectfont}%
\fi\endgroup%
\begin{picture}(4464,2462)(3949,-3671)
\put(4126,-3286){\rotatebox{45.0}{\makebox(0,0)[lb]{\smash{{\SetFigFont{8}{9.6}{\rmdefault}{\mddefault}{\updefault}{\color[rgb]{0,0,0}scattering data on $\mathcal{H}^+$}%
}}}}}
\end{picture}%

\]
Here the scattering data on the horizon  $\mathcal{H}^+$ is ``cut off'' at late finite  advanced
time to  match to Kerr data while the data on null infinity 
$\mathcal{I}^+$ is approximated by data on an ingoing null cone, again cut off at late retarded time,
and these data
are further supplemented by trivial Kerr data on a spacelike hypersurface connecting
the two. (See Section \ref{sec:local} for a detailed discussion of setting up the data.) 
This now defines a mixed Cauchy-characteristic initial value problem whose local well posedness in the smooth category\footnote{In spaces of finite differentiability, there is an inherent loss
of derivatives in the characteristic initial value problem, even for the linear wave equation
(see~\cite{formationofbh}).
In our setting, this loss is  for instance already
reflected in the numerology of the table of Section~\ref{sec:redsrole}.\label{alosshere}} essentially follows from \cite{ChoquetBruhat} and \cite{Rendall} (as applied in \cite{formationofbh}). The global estimates on solutions of this finite problem (to be discussed below)
will indeed allow one to infer existence 
of a solution to the original limiting problem, \emph{not, however, uniqueness}. 
For the latter, see the discussion
in Section~\ref{difrences} and again in Section \ref{sec:uniq}.
The estimates will moreover show that the spacetime indeed possesses a complete
null infinity and approaches the Kerr metric
uniformly in the shaded region with respect to a suitable foliation.

We will return to these global estimates for the finite approximate problem 
in Section~\ref{THEMAIN} later in this introduction.   To set the stage for these, 
we shall first introduce in Section~\ref{renor} 
the double null gauge with its 
 corresponding 
Ricci connection coefficients and null-frame curvature components,
renormalised by subtracting out a background Kerr solution.
The analytic content of the Einstein equations $(\ref{vacEq})$ is then captured
by the null structure equations and Bianchi equations, whose schematic
form under our renormalisation is discussed in Section~\ref{MOSTBASICSCHEM}.  
We then turn to a discussion of the two fundamental
issues which govern our setup:
first,  in Section~\ref{NSBE}, the problem
of capturing the ``null condition'' at null infinity  $\mathcal{I}^+$ (a difficulty familiar
from~\cite{ChristKlei}), 
and then, in Section~\ref{whyexp}, the role of the
red-shift effect near $\mathcal{H}^+$, which in the context of our problem
will in fact appear as a blue-shift and which is the essential  origin
of our strong exponential decay assumptions.
Given the latter assumption and a proper understanding of the null condition,
the global estimates of the proof described in Section~\ref{THEMAIN}
will in fact be relatively straightforward.

\subsubsection{A renormalised double null gauge}
\label{renor}
As in~\cite{formationofbh, KlaiNic},
we will capture the  content of the Einstein equations in
the form of the structure equations associated to a \emph{double null foliation},
which in our case will
cover the shaded
region above.
(This formalism  is particularly suited for our problem in view of the geometry of
the Schwarzschild and Kerr black hole exteriors, which, \emph{in contrast to 
the case of Minkowski space $\mathbb R^{3+1}$}, can
be \emph{globally} covered by such foliations without degeneration.)
The leaves of this foliation are null hypersurfaces defined as level sets of functions $u$ and $v$
which we identify with \emph{retarded} and \emph{advanced time}, respectively,
and will intersect in $2$-spheres.  The event horizon $\mathcal{H}^+$ will correspond to $u=\infty$ and null infinity
$\mathcal{I}^+$ with $v=\infty$, while the finite approximation to null infinity will be
a null hypersurface $v=v_\infty$.
This foliation defines two useful null frames $(\ref{frefdef})$ and $(\ref{kruskfr})$  (differing in normalisation), Ricci coefficients $(\ref{RicC})$,
for instance the outgoing shear $\hat\chi$ or expansion $tr\chi$, and components of
the Riemann curvature tensor $(\ref{curvdefins})$, 
for instance the components $\alpha_{AB}$ and $\rho$,
which satisfy the so-called \emph{null structure
equations} and the  \emph{Bianchi equations}.  
It is these two sets of equations which we shall
use to estimate solutions.  

A novelty with respect to previous work is that we shall in fact \emph{renormalise} these
two sets of equations by subtracting the contribution due to the background Schwarzschild or more generally Kerr  metric
to which the solution is to asymptotically approach. 
This background conveniently also defines the differentiable structure of the ambient manifold
(with its coordinates $u$, $v$) on which
everything is then to be defined.

In the simpler case where the background is to be Schwarzschild, the
ambient differential structure
is introduced in Section~\ref{sec:manifold} and
 the renormalised null structure and Bianchi equations are first introduced
in Section~\ref{therebne}. 
The former equations concern quantities including for example the \emph{outgoing shear} and
\emph{renormalised expansion},
\begin{equation}
\label{forexampcon}
\hat \chi(u,v,\theta,\phi) ,\qquad (tr\chi- tr\chi_{\circ}) (u,v,\theta,\phi),
\end{equation}
respectively (see formulas $(\ref{RicC})$, $(\ref{eq:ex0})$), whereas
the latter concern quantities including for example
the (renormalised) Riemann components
\begin{equation}
\label{forexambian}
\alpha_{AB}(u,v,\theta,\phi), \qquad 
 \qquad (\rho-\rho_\circ)(u,v,\theta,\phi)
\end{equation}
 (see formulas~$(\ref{curvdefins})$).
 The second example in each of $(\ref{forexampcon})$, $(\ref{forexambian})$ has been
nontrivially  renormalised by subtracting
a Schwarzschild contribution (denoted by a subsript $\circ$), whereas 
in the first, the
Schwarzschild contribution would vanish.
Concretely, $tr\chi_\circ(u,v,\theta,\phi)=2r^{-1}(1-2Mr^{-1})$, where
$r(u,v,\theta,\phi)$ is the Schwarzschild area-radius function, a function
we shall use in what follows to understand decay properties towards null infinity $\mathcal{I}^+$.
Notice that everything is manifestly defined
as functions or tensors of ambient coordinates  $(u,v,\theta,\phi)$,
and this is what fundamentally allows comparison of say $tr\chi$ and $tr\chi_\circ$, etc.

In the Kerr case (Section~\ref{sec:thekerrcase}), the renormalisation is more 
involved, since in our double null foliation gauge
all Ricci and curvature components  appear non-trivially\footnote{In contrast, the null frame
which makes the algebraically special property of Kerr manifest is non-integrable.} 
for the background Kerr solution, and moreover, as functions of  ambient coordinates they
are only given implicitly  and  depend  nontrivially
on two variables, cf.~Section \ref{sec:thekerrcasere}. 
In both cases, it can be verified that the renormalised  null structure and Bianchi equations 
share a common ``schematic'' form characterized by a certain hierarchial structure in 
the non-linear interactions and their decay properties. 
The subsequent analysis only depends on this form; in particular, the above
complications aside,
\emph{the Kerr case yields no additional conceptual difficulties over Schwarzschild}.

\subsubsection{The schematic form of the equations}\label{MOSTBASICSCHEM}
Before proceeding further, let us make 
 some of our above comments  more explicit 
by giving a first glimpse  of the schematic from of the
renormalised null structure and Bianchi 
equations. (The form will be further elaborated in Section~\ref{NSBE}.)

We will denote by 
 $\Gamma$  an arbitrary renormalised Ricci coefficient and by
 $\psi$ an arbitrary renormalised      Riemann curvature component.
(These are defined in Section~\ref{sec:Gamdef} in the case of a Schwarzschild background,
and in Section~\ref{sec:thekerrcase} for the general Kerr case.)
Thus, quantities $(\ref{forexampcon})$
are examples of a $\Gamma$ 
while $(\ref{forexambian})$ are examples of $\psi$.
The (renormalised) Riemann curvature components $\psi$ can in turn be grouped
into so-called \emph{Bianchi pairs}, which we will denote $(\uppsi, \uppsi')$.
(For instance, $(\uppsi, \uppsi') = (\alpha,\beta )$ , $\big((\rho-\rho_\circ,\sigma),\underline\beta\big)$ are examples of Bianchi pairs in the Schwarzschild case.) We note that the same component can appear in the role
of $\uppsi$ or $\uppsi'$ in distinct Bianchi pairs.

The content of the evolution aspect\footnote{Besides $(\ref{introforgamma})$, there are additional null structure relations
linking Ricci coefficients $\Gamma$ and curvature components $\psi$ 
by \emph{elliptic} equations on spheres.
These equations are given in Appendix~\ref{appendix}.
These additional relations 
can be interpreted as constraint equations which, if satisfied initially, are satisfied
subsequently as a consequence of the remaining equations. We shall thus
be able to estimate solutions of $(\ref{vacEq})$ without ever invoking explicitly the equations
of Appendix~\ref{appendix} (although we shall of course need to specifically
impose the latter
to constuct initial data (see Section~\ref{sec:hozco})!). 
We note that by not directly exploiting these additional null structure equations, 
 we obtain less sharp results with respect to regularity, but this is nonetheless
 sufficient for our purposes. These elliptic equations will, however, make a brief appearance in
 Section~\ref{difrences}; see footnote~\ref{footdif}.\label{longfoo}} of
the Einstein vacuum equations is given by the \emph{null structure
equations} for $\Gamma$, which can all be written schematically in the form
\begin{equation}
\label{introforgamma}
\slashed{\nabla}_3\Gamma = f \Gamma+\Gamma\cdot\Gamma+\psi, \qquad
\slashed{\nabla}_4\Gamma = f \Gamma+ \Gamma\cdot \Gamma+\psi
\end{equation}
and the \emph{Bianchi equations} for  Bianchi pairs $(\uppsi, \uppsi')$, which can
all be written schematically as
\begin{equation}
\label{introbianchi}
\slashed{\nabla}_3\uppsi =  \slashed{\mathcal{D}}\uppsi' +f\psi +f\Gamma +\Gamma\cdot \psi,
\qquad
\slashed{\nabla}_4\uppsi'= \slashed{\mathcal{D}}\uppsi+f\psi +f\Gamma +\Gamma\cdot \psi.
\end{equation}
Here $f$ denotes a known function or tensor (arising from the background Schwarzschild or more
generally Kerr metric), $f\Gamma$, $\Gamma\cdot\Gamma$, $\Gamma\cdot\psi$ etc., denote in 
fact \emph{sums} over various
contractions of the product of known functions and elements of $\Gamma$, etc.
The operators  $\slashed{\nabla}_3$, $\slashed{\nabla}_4$ are  appropriate
first order differential operators acting in the directions of the null vectors $e_3$,
$e_4$, tangential to the constant-$v$- and constant-$u$ null hypersurfaces
respectively, 
whereas $\slashed{\mathcal{D}}$ are first order differential operators on the
spheres of intersection of the null cones.

The essential hyperbolicity of the Einstein equations $(\ref{vacEq})$ is encoded in the Bianchi equations $(\ref{introbianchi})$,
which can be controlled by \emph{energy estimates}. 
(The latter can be derived via the Bel Robinson tensor, see~\cite{ChristKlei},
but alternatively more directly upon multiplication of each couple of equations for Bianchi pairs by
$\uppsi$, $\uppsi'$, respectively, and integration by
parts,  exploiting the divergence structure in the angular
operators $\slashed{\mathcal{D}}$.)
The null structure equations $(\ref{introforgamma})$ on the other hand are  here
estimated (solely, cf.~footnote~\ref{longfoo}) as transport equations. 
As is clear from above,  equations $(\ref{introforgamma})$
and $(\ref{introbianchi})$ are
coupled and must be estimated together.

An  important feature of the coupling in $(\ref{introforgamma})$ is that the components of curvature
$\psi$ appearing on the right hand side  are such that,
 upon 
integration along the appropriate null hypersurface, they can be estimated by a flux associated 
to the energy estimates for $(\ref{introbianchi})$. 
To obtain the latter, however, one needs to estimate \emph{higher} $L^q$ norms of
$\Gamma$ and $\psi$. 
These are in turn obtainable from higher order $L^2$ 
estimates via Sobolev inequalities. 
Thus,
one must also commute equations  $(\ref{introforgamma})$--$(\ref{introbianchi})$
with suitable differential operators, and derive higher order estimates.
It turns out that the schematic structure of  $(\ref{introforgamma})$--$(\ref{introbianchi})$
 is preserved
under appropriate commutation. We will return to this point in Section~\ref{compreserves}, after we have
further elaborated on this structure.

\subsubsection{Asymptotics towards $\mathcal{I}^+$ and the ``null condition''}
\label{NSBE}
The  level of structure exhibited by
$(\ref{introforgamma})$, $(\ref{introbianchi})$
just discussed, though already non-trivial,  is as such
sufficient only to prove \emph{local} estimates for $(\ref{vacEq})$.

The first global aspect that must be addressed is how to obtain
uniform estimates \emph{up to null infinity}
 $\mathcal{I}^+$, say at first instance only for a finite interval of
retarded time $u$. In analogy already with the semi-linear wave equation
\begin{equation}
\label{thesemil}
\Box\psi = Q(\nabla \psi,\nabla\psi)
\end{equation}
on $\mathbb{R}^{3+1}$, one can only hope to obtain such uniform control
by at the same time capturing \emph{decay} properties of the solutions.
In the case of $(\ref{vacEq})$, more specifically, for this one must\footnote{Problems (i)--(ii) are of course interesting in themselves! Moreover, in 
view of the fact that in our setup, we are imposing \emph{data} on $\mathcal{I}^+$, we need
at the very least
some basic understanding of the asymptotics just to say that the solution
indeed attains the scattering data. 
But more fundamentally,  (i)--(ii)  will be necessary simply to ensure the existence
of solutions up to $\mathcal{I}^+$.}
\begin{itemize}
\item[(i)]
guess the 
 correct hierarchy of asymptotics towards $\mathcal{I}^+$ for the various Ricci coefficients $\Gamma$
 and curvature components $\psi$,
 and 
  \item[(ii)] show that the non-linear structure of the interactions in
  $(\ref{introforgamma})$ and $(\ref{introbianchi})$ indeed allows
 for the propagation of this hierarchy, at least locally in retarded time $u$. 
\end{itemize}

The study of (i) for the Einstein vacuum equations $(\ref{vacEq})$ has a long history in connection with understanding
gravitational radiation. 
The pioneering works in the subject are due to Pirani, Trautman, Bondi and then
Penrose (see for instance~\cite{Penroseasymp}), 
who,
\emph{imposing} various \emph{a priori}  basic assumptions on the asymptotics, derived from these 
a specific hierarchy of decay rates for various curvature components, known as \emph{peeling}.
At the time, however, it remained completely unclear whether
there were any non-trivial spacetimes that satisfied these assumptions,
 much less whether they held for general solutions of the Cauchy
problem  arising from asymptotically flat data.

It was only  with the monumental work of Christodoulou and Klainerman~\cite{ChristKlei}
on the  stability of Minkowski space
that  the door opened to a definitive understanding of the
question of asymptotic structure for
$(\ref{vacEq})$. In the context of the global stability of Minkowski space,
it turned out that
 it was more natural to propagate (cf.~(ii))
a slightly weaker version of the original  peeling 
hierarchy of~\cite{Penroseasymp}. 
This propagation
 was ensured in~\cite{ChristKlei} by
 a careful
 analysis of the (null-decomposed) error-terms in the energy estimate for the Bianchi equations
arising from the Bel-Robinson tensor applied to appropriate vector field multipliers.
One can think of the consistency of the decay of all nonlinear error-terms with 
the hierarchy (i)
 as an elaborate version
 of the ``null condition''~\cite{Klainull}.\footnote{The difficulties of problem (ii)
are familiar from the semi-linear example $(\ref{thesemil})$. Not all non-linearities
$Q$ admit existence results for $\psi$ up to $\mathcal{I}^+$ even for finite $u$; a sufficient condition is that $Q$ satisfies the celebrated
null condition~\cite{Klainull}. As is well known however,
when cast in the the form of non-linear
wave equations by imposing the harmonic
gauge $g^{\mu\nu}\Gamma^{\lambda}_{\mu\nu}=0$,
the Einstein equations $(\ref{vacEq})$
do \emph{not} satisfy the null 
condition of~\cite{Klainull}; this is what made the problem of
stability of Minkowski space so difficult! This in turn is related to the fact that in such coordinates the 
asymptotics of $g_{\mu\nu}$ do not correspond to the
asymptotics of free waves in view of the logarithmic divergence of the 
light cones. Only much more recently was the harmonic gauge successfully used
to give
a new proof~\cite{Igor2} of a version of stability of Minkowski space, 
by propagating a weaker hierarchy of asymptotics.}

 Christodoulou in fact subsequently showed 
 in~\cite{mgrome} that 
 generic physically interesting Cauchy data \emph{never} satisfy the original full peeling
 hierarchy of~\cite{Penroseasymp}, the obstructions having the interpretation of moments encoding
 the past history of the system, explicitly calculable in 
 the post-Newtonian
 approximation.\footnote{The analysis of~\cite{ChristKlei} was localised 
 near null infinity in~\cite{KlaiNic} in a double null gauge
similar to the one applied here, where it was
shown that the full peeling hierarchy could be propagated if it was assumed on a given
outgoing null cone.
 See also~\cite{Friedrich}.}

Here, we are going to make the analogue of the ``null condition'' used
in the present work  manifest in a 
slightly more direct way,  at a level more readily read off from the null-decomposed
null-structure $(\ref{introforgamma})$ and 
Bianchi  equations $(\ref{introbianchi})$ 
themselves. This approach systematizes 
and extends observations made in~\cite{Holzegelspin2, LukRod}
and may be useful for other problems.

\paragraph{The $p$-index notation.}
To see our version of the ``null condition'' in the systematic form of the equations $(\ref{introforgamma})$, $(\ref{introbianchi})$,
we first must introduce some additional 
notation.

We will assign (see  Section \ref{sec:equations})
 to each renormalised\footnote{The considerations of this section 
 could also be applied to the \emph{un}renormalised equations; for this
all instances of  $f_p$ would be replaced
by $\Gamma_p$ or $\psi_p$.} connection coefficient $\Gamma$
and   curvature component $\psi$, as well as to each known tensor $f$ arising
from the background,  
 a characteristic weight $p$,
which we will denote by a subscript, which will reflect the fact that $r^p\|f_p\|$ has, and
$r^p\| \Gamma_p\|$,
$r^p\| \psi_p\|$ are expected to have, a finite (possibly zero\footnote{That is to say,
the weight may in fact be weaker than the actual decay rate in the solutions we finally
construct.}) trace on $\mathcal{I}^+$. Here, $r$ is the
Schwarzschild area-radius function referred to already in Section~\ref{renor}, i.e.~it is a known 
function of the ambient coordiantes $(u,v)$.
This procedure thus encodes our
``guess'' (i) above.

For $\Gamma$, we will further distinguish components with a ${}^{(3)}$, ${}^{(4)}$ 
superscript according
to whether the
relevant component satisfies an equation in the ingoing or outgoing null direction.

For instance, with this notation, the ingoing shear $\hat{\underline\chi}$ (which is unrenormalised
in the case of Schwarzschild background) can be written as 
\[
\hat{\underline\chi}=\overset {(3)}{\Gamma}_1,
\]
indicating   that this quantity satisfies an equation 
\[
\slashed{\nabla}_3\hat{\underline\chi}=\ldots
\]
and that $r^1\|\hat{\underline\chi}\|$ is expected to have a finite trace on $\mathcal{I}^+$.

Other examples of our use of notation in the case where the background is Schwarzschild are
\[
tr\chi-tr\chi_\circ = \overset{(4)}{\Gamma}_2, \qquad 
\underline\alpha=\psi_1.
\]

\paragraph{The null structure equations for $\Gamma$.}\label{NSEFGa}
With this notation, we may elaborate the structure of  
the equations $(\ref{introforgamma})$  (see Proposition~\ref{uneq})
as follows:
\begin{align}
\label{moreprecisely0}
&\slashed{\nabla}_3 \overset {(3)}{\Gamma}_p  
&=& \,O_p  \\ 
\label{moreprecisely1}
&\slashed{\nabla}_4 \left(r^{2c[\overset {(4)}{\Gamma}_p]}  \overset {(4)}{\Gamma}_p  \right) 
&=&\,
{r^{2c[ \overset {(4)}{\Gamma}_p]} } \left(\sum_{p_1+p_2=p+1}f_{p_1}\overset{(3)}\Gamma_{p_2}
+{O}_{p+\frac32}\right) &=&\,{r^{2c[ \overset {(4)}{\Gamma}_p]} }O_{p+1}
\end{align}
where by $O_z$ we mean a sum of products of $f$, $\Gamma$, $\psi$, whose total decay as measured
in the above sense is $r^{-z}$ or faster. Here, $c[\Gamma_p]$ is a weight factor
defined in Proposition~\ref{uneq}.

From the above, one sees immediately  that the decay assumptions on $\Gamma_p$
are \emph{consistent}, in the sense that the decay
assumptions on the differentiated quantities on the left hand side
of $(\ref{moreprecisely0})$, $(\ref{moreprecisely1})$
 are ``retrieved'' by inserting the decay assumptions of the hiearchy
 on the 
right hand side and integrating $(\ref{moreprecisely0})$--$(\ref{moreprecisely1})$
as transport equations for finite
affine retarded time in the $\slashed{\nabla}_3$ direction 
\begin{equation}
\label{fromheresmall}
\int_{u}^{u+\epsilon}{ O_p}\lesssim \epsilon O_p
\end{equation}
and for infinite affine advanced time in the 
$\slashed\nabla_4$ direction,
\begin{equation}
\label{borderlinet}
\int_{v}^{\infty}{r^{2c[ \overset {(4)}{\Gamma}_p]}O_{p+1}} dv \lesssim r^{2c[ \overset {(4)}{\Gamma}_p]}O_{p},
\end{equation}
where we are exploiting  the extra decay in $r$ to integrate,
noting  also that for large $r$, $dv\sim dr$.

In the above computation, however, inequality $(\ref{borderlinet})$ fails
to yield a smallness parameter. Thus, in itself, the above computation does
not allow 
to prove estimates. 
As is apparent, however, from the precise structure of the middle term of $(\ref{moreprecisely1})$,
the \emph{borderline} terms, i.e.~those which are $O_{p+1}$ and no better (as measured in the $p$-subscript notation), are 
of the form 
\[
f_{p_1}\overset{(3)}\Gamma_{p_2}
\]
where
$\overset{(3)}\Gamma_{p_2}$ can be estimated by $(\ref{moreprecisely0})$.
Thus, at least when considering only \emph{local} evolution in retarded time,
a smallness parameter $\epsilon$ can be retrieved  for this term
by integrating equation $(\ref{moreprecisely0})$
in view of 
$(\ref{fromheresmall})$,\footnote{One can draw a comparison at this point with the reductive structure in~\cite{formationofbh}.} while for 
 the non-borderline terms, we have
\begin{equation}
\label{nonborderlinet}
\int_{v}^{\infty}{O_{p+\frac32}} dv \lesssim \epsilon O_{p},
\end{equation}
where $\epsilon\to 0$ as $r(v)\to \infty$.

One can view the special structure of the non-linear interactions just discussed as representing
a ``null condition'' at the
level of the null structure equations, which (contingent also on the considerations
for $\psi$ to which the null structure equations are coupled--see Section~\ref{Herethebianchi} below!) in principle permits propagation of the
$p$-hierarchy for $\Gamma$ (cf.~(ii) above), at least locally in retarded time.
In the more difficult context of our global estimates, which require
solving for infinite retarded time, we will see how this is done
in practice in the discussion of Section~\ref{THEMAIN}.

\paragraph{The Bianchi equations for $\psi$.}\label{Herethebianchi}
 To propagate the $p$-hierarchy for $\psi$, we must also 
identify a ``null condition'' at the level of the Bianchi equations.

We first elaborate $(\ref{introbianchi})$    with regards to our $p$-hierarchy
 by rewriting the equations:
\begin{align}
\label{moreprecbian1}
&\slashed{\nabla}_3\uppsi_p =  \slashed{\mathcal{D}}\uppsi'_p +
O_p,\\
\label{moreprecbian2}
&\slashed{\nabla}_4\uppsi'_{p'}+\gamma_4(\uppsi'_{p'})tr\chi \uppsi'_{p'}= \slashed{\mathcal{D}}\uppsi_p + O_{p'+\frac32}
\end{align}

With respect to the $p$-decay of the last terms, we see a similar structure 
to that of $(\ref{moreprecisely0})$--$(\ref{moreprecisely1})$ for $\Gamma$. 
Now, however, the equations
are to be estimated not as transport equations, but 
 with \emph{weighted} energy estimates proven
by multiplying  each pair $(\ref{moreprecbian1})$, $(\ref{moreprecbian2})$,
respectively, by $r^q \uppsi_p$, $r^{q}\uppsi'_{p'}$,
and then integrating by parts, for a well-chosen 
weight $q(\uppsi_p)$.

The choice of the weight $r^q$ serves
so as to eliminate the contribution of the
$\gamma_4(\uppsi'_{p'})tr\chi \uppsi'_{p'}$ term in the divergence identity, which would 
be borderline with respect to decay (cf.~the weight factor $r^{2c[\overset {(4)}{\Gamma}_p]} $ 
in $(\ref{moreprecisely1})$). 
The significance of the presence of $O_{p'+\frac32}$ on the right
hand side of $(\ref{moreprecbian2})$, as opposed to $O_p$ on the right hand side of $(\ref{moreprecbian1})$, enters because
 the $\uppsi_p$ terms appear in energy fluxes through 
outgoing null cones, while the $\uppsi'_{p'}$ appear in energy fluxes
through ingoing null cones. Thus, extra decay is required in $(\ref{moreprecbian2})$
upon multiplication by $\uppsi'_{p'}$,
to ensure integrability as $r\to \infty$ for terms where this direction is not represented
by a flux.
In contrast to the case of $(\ref{moreprecisely1})$, there are no ``borderline'' terms
arising from this procedure,  as the decay of the extra
terms in $(\ref{moreprecbian2})$ is strictly greater that $p'+1$.
Concretely,
in the context of our global estimates, 
one can view the
final manifestation of this ``null condition'' as 
represented by the estimate  $(\ref{grontype})$ in Section~\ref{THEMAIN}.

\paragraph{Commutation.}\label{compreserves}
As discussed already in Section~\ref{MOSTBASICSCHEM}, one must derive higher order estimates
so as to close via Sobolev inequalities. This, however, means that
suitable decay must then 
also be captured for 
higher order quantities.
As will be shown in Section~\ref{sec:commute},
the $p$-hierarchical structure of the schematic form of the equations $(\ref{moreprecisely0})$--$(\ref{moreprecisely1})$
and $(\ref{moreprecbian1})$--$(\ref{moreprecbian2})$
is preserved under
arbitrary commutations with respect to a suitable set of differential operators $\mathfrak{D}^k$.
(These operators include tangential operators to the spheres which are not however
the usual ``angular momentum operators'' but 
in fact raise the type of tensors. See in particular the discussion of 
footnote~\ref{footnotecom}.)
This will indeed  allow for higher order estimates for $\mathfrak{D}^k\Gamma$ and 
$\mathfrak{D}^k\psi$.

\subsubsection{The blue-shift at $\mathcal{H}^+$ and the necessity of exponential decay}
\label{whyexp}
Having discussed the analysis at null infinity $\mathcal{I}^+$, we turn to 
considerations regarding the horizon
$\mathcal{H}^+$.
When solving the Einstein equations $(\ref{vacEq})$ \emph{backwards}, we immediately
meet a fundamental obstacle: {\bf the celebrated
red-shift effect on $\mathcal{H}^+$ is now seen as a \emph{blue-shift effect}}:
\[
\input{intro7.pstex_t}
\]
This effect should be understood in the geometric optics approximation  as follows:
If two observers $A$ and $B$ cross the horizon $\mathcal{H}^+$ as depicted, the frequency of a signal that $B$ sends to $A$ \emph{backwards} in time will be received by $A$ exponentially blue-shifted (in the difference in retarded horizon-crossing time between the two observers). 

For stationary black holes, the exponential factor is determined by the so-called \emph{surface gravity} $\kappa$ of the horizon~\cite{mihalisnotes}. In the Kerr
case, this is given explicitly by
\[
\kappa=\frac{\sqrt{M^2-a^2}}{2M^2+2M\sqrt{M^2-a^2}}.
\]
Note that in the Schwarzschild case $\kappa=(2M)^{-1}$, while for  $M$ fixed,
$\kappa$ decreases as $|a|\to M$, 
vanishing in the extremal case $|a|=M$.

As the red-shift/blue-shift is an effect of geometric optics, it is also present in the
 context of the linear wave equation $(\ref{scalarwave})$ 
 on a fixed Schwarzschild background, where everything can be made
 very concrete, in view
 of the existence of a $\partial_t$-energy scattering theory due
 to Dimock and Kay~\cite{Dimock1, Dimock2}.\footnote{There is a large
 literature concerning scattering theory for $(\ref{scalarwave})$ on black holes.
 We mention also the monograph~\cite{Matzner} the works~\cite{Bachelot3, Nicolas2, Haefner2}
 and the very recent~\cite{baskin}.}
Recall from~\cite{Dimock1} that for scattering data
$\psi|_{\mathcal{H}^+}$ and $r\psi|_{\mathcal{I}^+}$ 
assumed only to be of finite $\partial_t$-energy,
one can associate a unique solution $\psi$ of $(\ref{scalarwave})$ in the domain of outer
communications of Schwarzschild, realising the scattering data,
such that $\psi$ has finite 
$\partial_t$-energy on $t=0$.
 It can then be explicitly shown that, in accordance with the above geometric optics effect,
\emph{generic} smooth scattering data  $\psi|_{\mathcal{H}^+}$ and  $r\psi|_{\mathcal{I}^+}$ 
decaying slower than exponential will lead to a
solution $\psi$ which, though smooth in the black hole exterior,
will fail to be regular on the event horizon.\footnote{Recall that
the $\partial_t$ energy does not control transversal derivatives
to $\mathcal{H}^+$.}
Thus, to ensure regularity assuming only  decay assumptions for
  $\psi|_{\mathcal{H}^+}$ and  $r\psi|_{\mathcal{I}^+}$, one must 
impose  that this decay is
exponential.

For the more complicated Einstein vacuum equations $(\ref{vacEq})$, one sees the role
of the blue-shift  in our schematic equations
in the \emph{sign} of certain $f_{p_1}$ terms in equations
$(\ref{moreprecisely0})$ and $(\ref{moreprecbian1})$, as these drive exponential growth
of certain $\Gamma$ and $\psi$ when solving
backwards, if the final parameters are to satisfy $|a|<M$. 
In view of the above remarks, then similarly with the case
of the wave equation $(\ref{scalarwave})$, for $(\ref{vacEq})$ one is 
again led naturally
to the imposition of exponential decay.\footnote{In fact, to ensure each order
of higher regularity at the horizon one is led to impose a faster exponential decay
rate. 
This is because the strength of the red-shift is more and more enhanced 
at each order of commutation for the 
commuted equations described above in Section~\ref{compreserves}.
See Section~\ref{sec:redshiftrole} and Remark~\ref{rem:Pdep}.}
The challenge is then to show that indeed (issues of asymptotics towards $\mathcal{I}^+$--just
discussed in Section~\ref{NSBE}--aside)
the quantities $\Gamma$ and $\psi$ grow \emph{at most} exponentially when solving backwards. 
This is precisely what will be achieved in our global estimates outlined in  Section~\ref{THEMAIN}.

We  defer a discussion of the possibility of constructing singular
solutions of  $(\ref{vacEq})$, like those of $(\ref{scalarwave})$ discussed immediately above, to
Section~\ref{sec:weaknull}. Finally, we will discuss
in 
 Section~\ref{instabconj}
 the significance of these
remarks for the relation of the spacetimes which we construct here
with generic solutions of the forward problem,
for which, as we shall see, exponential decay is \emph{not} expected to hold.

Let us point out explicitly that in the extremal case $|a|=M$, in view of the degeneration
of $\kappa$, one is not in fact ``forced'' to impose exponential decay, and this case will play
thus a special role in the discussion of~Sections~\ref{sec:weaknull} and~\ref{instabconj}.
In the present work, however, we will not attempt to exploit this but
will make  uniform assumptions on data
for all $|a|\le M$.

\subsubsection{The global estimates}
\label{THEMAIN}
Having given a preview of all the essential aspects of the analysis, let us now return to 
the proof proper and describe
how these elements
enter into the main global estimates.

Recall the setup of Section~\ref{thesetuphere} for the associated finite approximate problem.
In addition to the double null foliation discussed above, we will foliate the ambient
manifold (which we now denote by $\mathcal{M}(\tau_0,\tau,v_{\infty})$)  on which our solution will be defined 
 by (what will be) spacelike
hypersurfaces
$\Sigma_{\tau_*}$, $\tau_0\le \tau_*\le \tau$,
terminating on the event horizon $\mathcal{H}^+$ and the finite approximation
$v=v_{\infty}$ to null infinity $\mathcal{I}^+$.
For each such  $\tau_*$, we may also consider the
subregion $\mathcal{M}(\tau_*,\tau,v_{\infty})$, depicted as the darker
shaded region below.
\[
\input{locally24.pstex_t}
\]
As is usual for non-linear problems, existence of the solution and estimates up to $\Sigma_{\tau_0}$
must be proven simultaneously, by a continuity argument
in $\tau_*$. A further complication
arises from the fact that 
for existence,  we appeal to general well-posedness
theory for $(\ref{vacEq})$ in the smooth category, whereas
we estimate solutions using $(\ref{introforgamma})$--$(\ref{introbianchi})$ 
in the gauge described in Section~\ref{renor}.
A  framework for 
dealing with this has been given in complete detail in~\cite{formationofbh}.
See our treatment in Section~\ref{sec:logic}. 
For the present discussion, 
let us for convenience assume that we are \emph{given} the existence of a spacetime in the region
in question and concentrate only on the issue of proving global \emph{estimates}
independent of $\tau$ and $v=v_\infty$ (as $\tau\to \infty ,v_\infty\to\infty$).

\paragraph{The case of the linear scalar wave equation $\Box\psi=0$.}
\label{warmupcase}
As a warm-up, let us first see how one obtains uniform estimates for solutions $\psi$
of the linear scalar wave equation 
$(\ref{scalarwave})$  in the  region $\mathcal{M}(\tau_0,\tau,v_{\infty})$, 
with analogously prescribed (approximated)
scattering data, and where $g$ is now the fixed Kerr metric. Namely, let 
 us prescribe
trivial data $\psi=0$, $\nabla\psi=0$ on $\Sigma_\tau$, 
exponentially decaying (in view again of the considerations of Section~\ref{whyexp}) data $\psi|_{\mathcal{H}^+}$ 
on $\mathcal{H}^+$, cut off to vanish in the future
of $\Sigma_\tau$, and exponentially decaying approximate
scattering data $r\psi|_{v=v_\infty}$ on $v=v_\infty$, again cut off.  
 
As we shall see, the argument is extremely straightforward. 
One defines an energy 
\begin{equation}
\label{rweightshere}
\mathcal{E}[\psi](\tau_*)=\int_{\Sigma_{\tau_*}} (1-2M/r)^{-1} r^{-h} |\partial_u \psi|^2+r^2|\partial_v\psi|^2+r^2|\slashed\nabla\psi|^2,
\end{equation}
which is non-degenerate at the horizon and incorporates positive $r$-weights at null infinity related to the expected decay
hierarchy 
\begin{equation}
\label{hierarc}
|\partial_u\psi| \lesssim r^{-1}, \qquad |\partial_v\psi| \lesssim r^{-2},
\qquad  |\slashed\nabla\psi| \lesssim r^{-2}.
\end{equation}
(The negative $r$-weight $r^{-h}$ arises
from the geometry of $\Sigma_\tau$ itself, for $1<h< 2$, from the choice $(\ref{hereh!})$.
The integral is to be taken with respect to the induced volume form.)
We can then derive an energy identity in the region $\mathcal{M}(\tau_*,\tau,v_\infty)$
of the form
\begin{align}
\label{thebulkterforwave}
\nonumber
\mathcal{E}[\psi] (\tau_*)+&
\int_{\mathcal{M}(\tau_*,\tau,v_\infty)}
 {\rm Bulk\  term}\\
&=  {F}_{\mathcal{H}^+\cap J^+(\Sigma_{\tau_*})}[\psi] 
 +{F}_{{v=v_\infty}\cap J^+(\Sigma_{\tau_*})}[\psi]+\mathcal{E}[\psi] (\tau),
\end{align}
where ${F}$ denote the flux terms corresponding to the above energy quantity.

Our assumptions on (approximate) scattering data are
 \begin{equation}
 \label{ofthefluxes}
\mathcal{E}[\psi] (\tau)=0, \qquad
 {F}_{\mathcal{H}^+\cap J^+(\Sigma_{\tau_*})}[\psi] 
 +{F}_{{v=v_\infty}\cap J^+(\Sigma_{\tau_*})}[\psi]\lesssim e^{-P\tau_*}.
\end{equation}
It follows that $(\ref{thebulkterforwave})$ yields
\begin{align}
\label{thebulkterforwave2}
\mathcal{E}[\psi] (\tau_*)+
\int_{\mathcal{M}(\tau_*,\tau,v_\infty)}
 {\rm Bulk\  term}
\lesssim e^{-P\tau_*}.
\end{align}
On the other hand, one can show  (see~\cite{DafRodInf})
\begin{equation}
\label{BASIKO}
\int_{\mathcal{M}(\tau_*,\tau,v_\infty)}
 |{\rm Bulk\  term}|
 \lesssim \int_{\tau_*}^\tau \mathcal{E}[\psi] (\tau') d\tau'.
 \end{equation}
 Were both integrals restricted to a unformly bounded $r$-range $r\le R$, the  inequality $(\ref{BASIKO})$
 would be essentially trivial,
 requiring only the fact that the background $g$ (Kerr in our case)
 admits a time-translation invariant
timelike vectorfield. In view, however, of the fact that 
$\sup_{\mathcal{M}(\tau_*,\tau,v_\infty)} r\to\infty$
as $v\to v_\infty$,
the fact that no positive $r$ weights appear on the right hand side of
$(\ref{BASIKO})$ is
nontrivial, and reflects the relation of the weights of $(\ref{rweightshere})$
with the hierarchy $(\ref{hierarc})$, the fact that Kerr is 
 asymptotically flat,
the properties of the geometry of $\Sigma_\tau$, and the existence
a timelike Killing field near infinity.\footnote{In fact, the existence
of time-translation invariance and Killing field can be weakened
to the existence of vector fields whose deformation tensor 
is uniformly bounded in a suitable $r$-weighted sense.}

Thus,  $(\ref{thebulkterforwave2})$ yields 
\begin{align}
\label{thebulkterforwave3}
\mathcal{E}[\psi] (\tau_*)\lesssim
\int_{\tau_*}^\tau \mathcal{E}[\psi] (\tau') d\tau'
+ e^{-P\tau_*},
\end{align}
which by Gronwall's inequality
gives
\begin{equation}
\label{whatitgives}
\mathcal{E}[\psi] (\tau_*)\lesssim e^{-P\tau_*},
\end{equation}
provided that  $P$ is chosen larger than
the implicit constant in the $\lesssim$ symbol of $(\ref{thebulkterforwave3})$. 

We may see concretely the role of the blue-shift effect discussed in Section~\ref{whyexp}
in  the identity $(\ref{thebulkterforwave})$: If $|a|<M$ then the
\emph{sign} of the bulk term 
in a neighbourhood of the horizon $\mathcal{H}^+$
is  \emph{negative} and comparable
to the integrand of $(\ref{thebulkterforwave})$, with constant
of proportionality related to the surface gravity (see~\cite{mihalisnotes}).
$P$ is thus in particular constrained
by the strength of the surface gravity.\footnote{Let us note that
the sign of the bulk term is also necessarily negative near $\mathcal{I}^+$ due
to the weights. The precise analysis of~\cite{DafRodInf} shows, however, that there is hierarchial structure
relating the $r$-weights of boundary and bulk terms. This can be used to
show that considerations near $\mathcal{I}^+$ only constrain the energy $\mathcal{E}$ 
to grow polynomially.
We shall not pursue this further here.}

Let us note explicitly that our argument above appeals
neither 
to a conserved non-negative energy nor to a
Morawetz-type estimate (integrated local energy decay).
We see thus that the assumption of exponential decay of the scattering-data fluxes
$(\ref{ofthefluxes})$
has
absolved us of the arduous task of understanding either superradiance or the structure of trapped null-geodesics, so fundamental in establishing decay for the forward problem (see Section 4.1 of \cite{mihalisnotes}). Thus, the Schwarzschild and Kerr problems are at the exact
same level of difficulty, and in
particular, the extremal limit $|a| \to M$, where the problems of superradiance and trapping are strongly coupled (see Section 1.5 of \cite{dafrodlargea}) and moreover the redshift degenerates, is entirely unproblematic here, indeed, in the extremal case $|a|=M$
one can obtain in principle a better estimate\footnote{For this, however, one would indeed have to understand the  issue of 
trapping, etc.
Note that in the extremal case, the solutions $\psi$ of this finite problem 
constructed above (and
thus also the limiting solution with the data imposed on $\mathcal{H}^+$ and 
$\mathcal{I}^+$) manifestly have vanishing Aretakis constants (see~\cite{Aretakis2}).} 
than $(\ref{thebulkterforwave3})$.

For the linear equation $(\ref{scalarwave})$,
there is in fact considerable flexibility as to the
$r$-weights  in $(\ref{rweightshere})$. 
 For instance, we could have removed completely
 the $r^2$ weights from $\mathcal{E}$ and still would obtain $(\ref{BASIKO})$
 (which would now indeed be essentially trivial).
 More generally, we could have replaced $r^2$ with $r^p$ for $0\le p\le 2$ 
 (see again~\cite{DafRodInf}).
 This would lead of course to less precise uniform estimates
 on the solutions, but in the limit $v_\infty\to \infty$ would nonetheless
 still allow us to give some meaning to 
 a solution of the scattering problem.
 
If, on the other hand, we were to pass to  the analogous problem replacing 
now the linear equation $(\ref{scalarwave})$ with the
 the semilinear wave equation $(\ref{thesemil})$, then simply
to obtain the analogue of $(\ref{BASIKO})$, including the correct
$r$-weights in $(\ref{rweightshere})$ is now essential, and moreover, obtaining
 $(\ref{BASIKO})$ will depend on the validity of a null condition on $Q$.
In anticipation of our
argument, the reader may wish at this point to work out the analogue of our theorem
for the case of $(\ref{thesemil})$ for a $Q$
satisfying an appropriate null condition. 
Here, let us however proceed directly to $(\ref{vacEq})$.

\paragraph{Energy estimates for curvature $\psi$.}\label{EEfCu}
We  return thus to the vacuum equations $(\ref{vacEq})$ and the problem at hand.

We define an energy
$\mathcal{E}[\mathfrak{D}^3\Psi] (\tau_*)$
through a $\Sigma_{\tau_*}$ leaf (see the related notation in Section~\ref{sec:normse}), 
which is a sum of energy-type quantities containing 
each  renormalised Bianchi
component, commuted up to order $3$, i.e.~$\mathfrak{D}^k\psi$ for $|k|\le 3$,
incorporating the weights $r^q(\uppsi)$ as discussed in Section~\ref{NSBE}:
\begin{equation}
\label{herethedefi}
\mathcal{E}[\mathfrak{D}^3\Psi] (\tau_*)=\int_{\Sigma_{\tau_*}} \sum_{\psi, |k|\le 3}( r^{2q}
{\rm\ or\ }r^{2q-h} )
 |\mathfrak{D}^k
\psi|^2.
\end{equation}
Some components will naturally have $r^{-h}$ weights due to the geometry of
$\Sigma_\tau$ (cf.~$(\ref{rweightshere})$).
As discussed already in Section~\ref{MOSTBASICSCHEM}, 
considering a higher order energy is necessary for the estimates to close,
and the precise numerology is imposed by the structure of the  
 nonlinear terms  and 
the Sobolev inequality used.

By the procedure described in Section~\ref{NSBE} using (the $\mathfrak{D}$-commuted
version of) formulas 
$(\ref{moreprecbian1})$--$(\ref{moreprecbian2})$, 
we will derive an energy identity for the  quantity $\mathcal{E}$ in the region 
$\mathcal{M}(\tau_*,\tau,v_\infty)$, for each $\tau_0\le \tau_*\le\tau$,
 from which we derive a relation,
\begin{align}
\label{thebulkter}
\nonumber
\mathcal{E}[\mathcal{D}^3\Psi] (\tau_*)+&
\int_{\mathcal{M}(\tau_*,\tau,v_\infty)}
 {\rm Bulk\  term}\\
 &\le  {F}_{\mathcal{H}^+\cap J^+(\Sigma_{\tau_*})}[\mathfrak{D}^3\Psi] 
 +{F}_{{v=v_\infty}\cap J^+(\Sigma_{\tau_*})}[\mathfrak{D}^3\Psi]+\mathcal{E}[\mathfrak{D}^3\Psi] (\tau),
\end{align}
where the quantities on the right hand side are flux terms
completely determined by scattering data.
In view of the set-up of our approximate problem and the
exponential approach assumption on the data discussed in   Section~\ref{whyexp}, we have 
for the last term on the right hand side of $(\ref{thebulkter})$
\[
\mathcal{E}[\mathfrak{D}^3\Psi] (\tau)=0,
\]
while for the first two terms we have
\[
{F}_{\mathcal{H}^+\cap J^+(\Sigma_{\tau_*})}[\mathfrak{D}^3\Psi] \lesssim e^{-P\tau_*},
\qquad {F}_{{v=v_\infty}\cap J^+(\Sigma_{\tau_*})}[\mathfrak{D}^3\Psi] \lesssim e^{-P\tau_*}.
\]
Compare with $(\ref{ofthefluxes})$. (Here it is important that our 
$\mathcal{E}(\mathfrak{D}^3\Psi)$, $F(\mathfrak{D}^3\Psi)$ are constructed from  the renormalised quantities 
$\psi$  with the Kerr contribution subtracted out.\footnote{We note that the loss of derivatives
in the characteristic initial value problem, discussed in footnote~\ref{alosshere}, arises
when relating these fluxes to tangential
derivatives to the null hypersurfaces $\mathcal{H}^+$ and $v=v_\infty$.})

We thus have in analogy with $(\ref{thebulkterforwave2})$:
\begin{equation}
\label{thebulkter2}
\mathcal{E}[\mathfrak{D}^3\Psi] (\tau_*)
 \lesssim e^{-P\tau_*} +
\int_{\mathcal{M}(\tau_*,\tau,v_\infty)}
 |{\rm Bulk\  term}|.
\end{equation}

Let us denote by ${F}_v[\mathfrak{D}^3\Psi](\tau_*)$ the associated flux of the above
energy identity on the constant-$v$ hypersurface
intersected with $J^+(\Sigma_{\tau_*})\cap \mathcal{M}(\tau_*,\tau,v_\infty)$,
and by ${F}_u[\mathfrak{D}^3\Psi](\tau_*)$ the analogous quantity on a constant-$u$ 
hypersurface (cf.~the notation of Section~\ref{sec:normse}). 
We obtain a similar estimate to $(\ref{thebulkter2})$
bounding $F_v$ and $F_u$, and the three can be combined to yield
\begin{equation}
\label{thebulkter3}
\mathcal{E}[\mathfrak{D}^3\Psi] (\tau_*)+
F_u[\mathfrak{D}^3\Psi](\tau_*)+F_v[\mathfrak{D}^3\Psi](\tau_*)
 \lesssim e^{-P\tau_*} +
\int_{\mathcal{M}(\tau_*,\tau,v_\infty)}
 |{\rm Bulk\  term}|.
\end{equation}
It turns out that having bounded above
also the null flux terms 
will be useful in Section~\ref{HGFGa} below.

The bulk terms  on the right hand side 
of $(\ref{thebulkter3})$ are cubic and higher in their combined dependence on  background terms $f$,
renormalised curvature
$\mathfrak{D}^3\psi$ and renormalised Ricci coefficients $\mathfrak{D}^3\Gamma$:
\begin{equation}
\label{Bterm}
 {\rm Bulk\  term}=\sum_{k+l\le 3}
r^q f\cdot\mathfrak{D}^{k}\psi\cdot \mathfrak{D}^{l}\psi +
   r^qf\cdot\mathfrak{D}^{k}\Gamma \cdot \mathfrak{D}^{l}\psi
+
r^q\mathfrak{D}^{k}\Gamma\cdot \mathfrak{D}^{l_1}\psi\cdot \mathfrak{D}^{l_2}\psi +
 \cdots.
\end{equation}
Recall how these terms arise from the right hand side of $(\ref{moreprecbian1})$--$(\ref{moreprecbian2})$
after multiplication by $r^q\uppsi$, $r^q\uppsi'$, everything having been appropriately commuted by $\mathfrak{D}$.
Note, however, that the first two terms displayed above are only quadratic in the  quantities
$(\mathfrak{D}^k\psi, \mathfrak{D}^k\Gamma)$. 
As in our discussion of the linear wave equation $(\ref{scalarwave})$ in Section~\ref{warmupcase},
some of the constituents of the first term will be negative (and proportional
to the surface gravity), and this  reflects
concretely the 
blue-shift effect of Section~\ref{whyexp}.
The second term
arises from the
presence of the ``non-homogeneous terms'' $f\Gamma$
in renormalised Bianchi $(\ref{introbianchi})$.

We can now see
the significance of the null condition discussed briefly in Section~\ref{Herethebianchi}.
The appropriate choice of weights defining the energy
in $(\ref{herethedefi})$ allows us to estimate:
\begin{align}
\nonumber
\label{grontype}
\int_{\mathcal{M}(\tau_0,\tau,v_\infty)}
 |{\rm Bulk\  term} |\lesssim &
   \int_{\tau_0}^{\tau} \mathcal{E}[\mathfrak{D}^3\Psi](\tau_*) d\tau_*\\
   \nonumber
   &+\int_{\mathcal{M}(\tau_0,\tau,v_\infty)} f_2 r^{2p-2} \|\mathfrak{D}^{k}\Gamma_p \|_{L^2(u,v)}^2\\
  &+
\left( \sup_{\Gamma ,u,v} r^{p}\|\mathfrak{D}^{1} \Gamma_p\|_{L^\infty(u,v)}+\cdots\right)
  \int_{\tau_0}^{\tau} \mathcal{E}[\mathfrak{D}^3\Psi](\tau_*) d\tau_*.
\end{align}
The first two lines of the right hand side above  arise from the first two terms on the
right hand side of $(\ref{Bterm})$. Note that the first line is 
 comparable to the estimate $(\ref{BASIKO})$ for the linear
wave equation $(\ref{scalarwave})$.
The third line (familiar from the analogous energy
estimates for quasilinear wave equations)
arises
from the cubic terms in $(\ref{Bterm})$. We have omitted  from the prefactor
other higher $L^q$ norms (involving also  $\psi$)
which can be treated similarly to the one we have included.
The remarkable point to notice is that all weights are consistent
with our $p$-hierarchy.

The weighted $L^\infty$ norms in the last term on the right hand side of $(\ref{grontype})$
above can be controlled
\begin{equation}
\label{sobolevineq}
r^p\|\mathfrak{D}^1\Gamma_p\|_{L^\infty (u,v)}
\lesssim \sum_{0\le k\le 3}
r^{p-1}\|\mathfrak{D}^{k}{\Gamma}_p\|_{L^2(u,v)},
\end{equation}
by  applying
suitable Sobolev 
inequalities
 on the $2$-spheres of constant $(u,v)$.
Thus, to close, we must couple the estimate $(\ref{thebulkter})$ with estimates
that allow for control of $r^{p-1}\|\mathfrak{D}^{k}\Gamma_p\|_{L^2(u,v)}$ 
for $0\le k\le 3$.
Specifically,  we shall show in Section~\ref{HGFGa} below that
 \begin{equation}
 \label{givenithere}
r^{p-1}\| \mathfrak{D}^k\Gamma_p\|_{L^2(u,v)} (\tau_*) \lesssim e^{-P\tau_*/2}.
\end{equation}
Here, $\tau_*=\tau_*(u,v)$ denotes the $\tau_*$
value such that $(u,v)\in \Sigma_{\tau_*}$.  
 
It is clear that, \emph{given $(\ref{givenithere})$}, then by Gronwall's inequality,
the estimate
$(\ref{grontype})$ (together with the Sobolev inequality $(\ref{sobolevineq})$)
immediately yields\footnote{In view of the omitted terms
in $(\ref{grontype})$, the actual story 
is slightly more complicated, in that we will need $L^\infty$ estimates
for curvature $\psi$.  This will require introducing the inequalities $(\ref{ourgoalhere})$
as boostrap assumptions and subsequently \emph{improving} the relevant constants by the
estimate described. Including $(\ref{ourgoalhere})$ as boostrap assumptions
 will also be necessary because control
of the Sobolev constants themselves require some basic geometric
input. See Section~\ref{HGFGa} below where we shall explicitly
introduce a bootstrap assumption in the context of obtaining $(\ref{givenithere})$ and Definition~\ref{def:Adef} for the precise bootstrap setup.
\label{talkingboot}}
\begin{equation}
\label{ourgoalhere}
\mathcal{E}[\mathfrak{D}^k\Psi] (\tau_*) \lesssim e^{-P\tau_*}, \qquad
F_{u,v}[\mathfrak{D}^k\Psi] (\tau_*) \lesssim e^{-P\tau_*},
\end{equation}
in analogy with $(\ref{whatitgives})$, for $P$ sufficiently 
large.\footnote{In view of the comments after $(\ref{Bterm})$, one sees that $P$ is
in particular constrained by the surface gravity of the horizon.}
The estimates $(\ref{givenithere})$ and $(\ref{ourgoalhere})$ (which
can easily be extended to all higher order) would then together represent
our desired uniform estimates in the region $\mathcal{M}(\tau,\tau_0,v_\infty)$.

We turn finally to obtaining $(\ref{givenithere})$.

\paragraph{Global transport estimates for $\Gamma$.}\label{HGFGa}
As discussed already, to obtain 
$(\ref{givenithere})$ we will
integrate the ($\mathfrak{D}$-commuted version of the) transport equations
$(\ref{moreprecisely0})$--$(\ref{moreprecisely1})$, as described in  
Section~\ref{NSEFGa},
along constant-$u$ and constant-$v$ hypersurfaces.

In view of our exponential decay assumption, our data can be taken to satisfy
\begin{equation}
\label{Gdata1}
r^{p}\|\mathfrak{D}^{k}\Gamma|_{\mathcal{H}^+}\|_{L^2(\infty,v)}\lesssim e^{-P\tau_*/2}, \qquad
r^{p}\|\mathfrak{D}^{k}\Gamma|_{v=v_\infty} \|_{L^2(u,v_\infty)} \lesssim e^{-P\tau_*/2},
\end{equation}
\begin{equation}
\label{Gdata2}
r^{p}\|\mathfrak{D}^{k}\Gamma|_{\Sigma_\tau}\|_{L^2(u,v)}=0,
\end{equation}
consistent with $(\ref{givenithere})$. Moreover, 
by a continuity argument in $\tau_*$, we can in fact \emph{assume} the inequality
$(\ref{givenithere})$ itself as a bootstrap assumption, provided that our
estimates improve the relevant constants (see footnote~\ref{talkingboot}  
and  Definition~\ref{def:Adef}).
In particular, given $(\ref{givenithere})$, we indeed
 have by Section~\ref{EEfCu}, the estimates $(\ref{ourgoalhere})$.

We now integrate  $(\ref{moreprecisely0})$--$(\ref{moreprecisely1})$,
 backwards in time \emph{after integration in $L^2$ over the constant-$(u,v)$ spheres},
starting at the 
hypersurfaces $\mathcal{H}^+$, $v=v_\infty$, $\Sigma_{\tau}$ where data are defined.
The contributions of data are then
bounded precisely by $(\ref{Gdata1})$, $(\ref{Gdata2})$.

Let us first briefly discuss the 
integrals  of the $\psi$ terms
which appear linearly
 on the right hand side of $(\ref{introbianchi})$.
 Having first integrated in $L^2$ of the spheres, upon integration in the $u$
and $v$ directions these terms
can be estimated
from the null curvature fluxes $F_v$ and $F_u$ (which have in
turn been estimated in $(\ref{ourgoalhere})$!), after applying
Cauchy-Schwarz. 
Thus, at least from the point of view of regularity, the estimates can in principle
close.

From the point of view of $r^p$-weights,
we have already discussed in Section~\ref{Herethebianchi}
 the structure that in principle   will allow
a smallness factor to arise when evolving locally in retarded time $u$. 
In our global context, instead of shortness of $u$-interval (exploited to yield the
$\epsilon$ of $(\ref{fromheresmall})$),
we now exploit the exponential factor $e^{-P\tau/2}$ of our bootstrap
assumption $(\ref{givenithere})$.

Without giving the details of the estimates here, let us note simply that
integration in the $u$ direction will now
produce factors
\begin{equation}
\label{integral1}
 \sup_{\theta,\phi}\left|\int_{u}^\infty e^{-P\tau_*} \Omega^2_{\mathcal{EF}} (u_*,v,\theta,\phi) du_*\right|
\end{equation}
whereas integration 
in the $v$
direction will produce factors
\begin{equation}
\label{integral2}
\int_{v}^\infty e^{-P\tau_*} r^{-p-1} dv_*.
\end{equation}
Here $\Omega^2_{\mathcal{EF}}$ is the conformal factor of the metric in our gauge
with respect to an Eddington-Finkelstein normalised null frame (see $(\ref{maing2})$).

Functions  $u$, $v$, $\tau$, and $r$ are
 all fixed to our differential structure
 and satisfy
\[
dv \lesssim r^h d\tau
\]
whereas, 
for $r\ge R$ we have 
\[
\partial_vr \le \frac12.
\]
Moreover,
our bootstrap assumptions allows us to show that the conformal factor
of the metric $\Omega^2_{\mathcal{EF}}$
satisfies
\[
\Omega^2_{\mathcal{EF}} du_*\sim d\tau.
\]
(This behaviour can be understood already from the exactly Schwarzschild case.)

From these relations, one sees immediately that  $(\ref{integral1})$ yields  a factor
\begin{equation}
\label{epshere}
\epsilon e^{-P\tau_*(u,v)},
\end{equation}
whereas $(\ref{integral2})$, using also $(\ref{epshere})$ for the borderline terms, yields
\begin{equation}
\label{epsthere}
\epsilon r^{-p}e^{-P\tau_*(u, v)},
\end{equation}
where
 the $\epsilon$ arises from choice of a suitably large $P$.
The $\epsilon$ factors in $(\ref{epshere})$ and $(\ref{epsthere})$ are the global
analogue of those in the local
naive computations $(\ref{fromheresmall})$ and $(\ref{nonborderlinet})$
and allow one to improve 
 the bootstrap assumption  $(\ref{givenithere})$, as desired.
 See Section~\ref{sec:ricciimprove} for details.

This concludes our discussion of the proof of global uniform
estimates for the approximate finite
problem.

\subsubsection{Differences of solutions and convergence}
\label{difrences}
Since our main theorem does not assert uniqueness (recall the discussion of
Section~\ref{thesetuphere}), one can  infer the existence 
of a solution of the limiting problem (with data at $\mathcal{H}^+$ and $\mathcal{I}^+$)
from the above estimates for the finite problem simply
by taking a subsequential limit
via Arzela-Ascoli (cf.~\cite{formationofbh}).

For future applications, we prefer to understand the convergence more quantitatively.
For this, one must estimate differences of two solutions $g$, $g^\dagger$. 
We 
shall consider 
\[
{\boldsymbol \Gamma}=\Gamma-\Gamma^\dagger,\qquad
{\boldsymbol\psi}=\psi-\psi^\dagger
\]
and derive
a system of equations for these quantities analogous to  the null structure and Bianchi equations.
It turns out that the general structure 
as described in Sections~\ref{MOSTBASICSCHEM}--\ref{NSBE}
 is again reflected in this system.\footnote{\label{footdif}We note only one additional feature:
We will separate out a subcollection ${\bf G}$ of the commuted
${\bf \Gamma}$, for which we shall
appeal to the additional
 elliptic null structure equations (cf.~footnote~\ref{longfoo}) to improve their regularity. 
 We note that we will 
 also appeal to this extra structure
 in the context of the estimates of Section~\ref{EEfCu} in the Kerr case $a\ne 0$
 (see~Section~\ref{sec:thekerrcase}).
We stress that in neither  cases is our appeal to this elliptic structure truly fundamental for the
argument--we do this simply to avoid applying
an additional commutation beyond the $\mathfrak{D}^3$ required
so as for $(\ref{grontype})$ to close.}
See Section~\ref{sec:convergence}.
The resulting estimates then 
show that our limit converges strongly. Moreover, these equations can in fact be used
to assert more generally the uniqueness of our solution in the class of
solutions a priori assumed to 
settle down to the Kerr family at a suitably fast exponential rate.

\subsection{Outline of the paper}
Having given an overview of the main ideas of our proof, we give here a very brief 
outline of the structure of the paper for the convenience of the reader.

In Section~\ref{sec:setting}, we fix the ambient manifold on which both our finite approximations
(as a subset) and our final spacetime will be defined. 
We then introduce the class of metrics to be considered, in appropriate gauge,
and give the renormalised Bianchi and null structure equations in this context.

We then proceed in Section~\ref{sec:equations} to discuss our systematic reformulation of the renormalised equations. Furthermore, the set of commutation operators
$\mathfrak{D}$ is defined which 
allows commuting the equations an arbitrary number of times while preserving the fundamental structure required for our estimates.

Section~\ref{sec:norms} will introduce the basic norms which will be relevant both in understanding
the conditions imposed on data, and for controlling the solutions.

Defining a notion of scattering data and associated data for an approximate finite problem is the content of Section \ref{sec:local}. This is completed by an appropriate well posedness statement, Theorem \ref{theolocal}.

The main theorems of the paper in the Schwarzschild case will then be given in Section~\ref{sec:maintheorem}. The main theorem is Theorem \ref{theo:full}.
Theorem \ref{theo1} is a statement of the uniform control of solutions to the ``finite" problem, while Theorem \ref{theo2} addresses the issue of convergence of the approximation procedure. The latter two will imply Theorem \ref{theo:full}.

Section \ref{sec:bootstrap} is devoted to the proof of Theorem \ref{theo1}, while
Theorem \ref{theo2} is proven in Section \ref{sec:convergence}.

The slight variation in the setup which is necessary to treat the Kerr case (including the case of extremality) is briefly discussed in Section \ref{sec:thekerrcase}.

The paper ends with a discussion of the Robinson-Trautman metrics and some formulae
which are collected in Appendix A.

\subsection{Future directions and other comments}
We end this introduction with some comments on open directions 
suggested by our results.

\subsubsection{The event horizon as a weak null singularity} \label{sec:weaknull}
We have discussed in Section~\ref{whyexp} the ``necessity'' of imposing exponential decay
on our scattering data on $\mathcal{H}^+\cup \mathcal{I}^+$,
in view of the constraints given by the horizon blueshift.
Let us comment here in more detail on what   happens if one indeed tries to solve the problem with slower decay assumed on the scattering data.

Recall that, according to Section~\ref{whyexp}, for solutions of the linear scalar
wave equation~$(\ref{scalarwave})$ on a fixed Schwarzschild background, 
one sees that generic polynomially decaying scattering  data lead to 
a solution regular in the black hole exterior but 
singular  on the horizon. 
For a general non-linear wave equation, this would suggest
that it would be simply impossible to provide any meaningful solution to 
the backwards problem, as the nonlinearities could propagate the 
singular behaviour from the horizon to the exterior, invalidating any existence theory.
Experience from the vacuum equations $(\ref{vacEq})$, on the other hand,
suggests that they may exhibit precisely that special 
structure\footnote{The recent~\cite{LukRod, LukRod3} may be especially relevant for this.} necessary
to preserve the localised singular behaviour of the linear wave equation $(\ref{scalarwave})$.
Motivated by the latter, we thus conjecture:
\begin{conjecture} \label{conj:inst2}
For smooth scattering vacuum data
as in the main theorem but now assumed
to settle down to a Kerr solution on $\mathcal{H}^+$ and $\mathcal{I}^+$
only at a (suitably fast) inverse polynomial rate, there again exists a vacuum
spacetime $(\mathcal{M},g)$ ``bounded by'' $\mathcal{H}^+$ and $\mathcal{I}^+$,
attaining the data,
regular away from $\mathcal{H}^+$. However,
for generic such data with asymptotic parameters $|a|<M$,  the
Christoffel symbols of the resulting metric 
(specifically, for instance, the ingoing null shear $\underline{\hat{\chi}}$)
fail to be locally  square integrable  near the horizon.
\end{conjecture}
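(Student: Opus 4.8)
\textbf{Proof proposal for Conjecture~\ref{conj:inst2}.}

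The plan is to adapt the entire backwards-construction machinery of the main theorem, but with the exponential weights $e^{-P\tau_*}$ replaced by inverse-polynomial weights $\tau_*^{-s}$ (for $s$ suitably large depending on the number of commutations required to close), and to track carefully where the blueshift forces a loss. First I would set up the finite approximate problem exactly as in Section~\ref{sec:local}, now with scattering data on $\mathcal{H}^+$ and on $v=v_\infty$ decaying only polynomially in advanced and retarded time respectively, supplemented by trivial Kerr data on $\Sigma_\tau$. The existence of the solution on each finite region $\mathcal{M}(\tau_0,\tau,v_\infty)$ is never in question --- it follows from the same well-posedness input (Theorem~\ref{theolocal}, i.e.~\cite{ChoquetBruhat, Rendall}) --- so the whole content is again in the global estimates and in their behaviour as $\tau\to\infty$, $v_\infty\to\infty$. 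The key point is that in the transport estimates for $\Gamma$ (Section~\ref{HGFGa}) and in the energy estimates for $\psi$ (Section~\ref{EEfCu}), the quantities that are regular \emph{away} from the horizon --- those whose $\slashed\nabla_3$ or $\slashed\nabla_4$ equation carries a \emph{good}-signed $f_{p_1}$ coefficient, or those controlled purely by the weighted flux hierarchy of Section~\ref{Herethebianchi} --- still close, because the $p$-hierarchical ``null condition'' structure of $(\ref{moreprecisely0})$--$(\ref{moreprecbian2})$ is completely insensitive to whether the driving data decay exponentially or polynomially; the only place exponential decay of the data was ever used was to beat the blueshift.

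Next I would isolate the handful of components on which the blueshift acts, that is, those $\Gamma$ and $\psi$ whose $\slashed\nabla_3$-equation has the \emph{bad} sign of $f_{p_1}$ (so that backwards integration amplifies rather than damps). Integrating $\slashed\nabla_3 X = \kappa_{\rm eff}\, X + (\text{source})$ backwards from $\mathcal{H}^+$, where the source decays only like $\tau_*^{-s}$, gives $|X|(\tau_*) \lesssim e^{\kappa_{\rm eff}(\text{distance from }\mathcal{H}^+)}$ times a polynomially small factor --- i.e.~$X$ is finite and smooth in the \emph{open} exterior, with all the decay one wants towards $\mathcal{I}^+$ and towards any fixed $u=$const cone, but as one approaches the horizon $u\to\infty$ the exponential amplification, matched against the merely polynomial decay of the data in horizon-crossing time, blows up. Concretely, for $\underline{\hat\chi}=\overset{(3)}{\Gamma}_1$, the backwards transport equation along $\mathcal{H}^+$-parallel null generators produces a factor whose growth rate towards the horizon is governed by the surface gravity $\kappa$; carrying this out for generic data one obtains a lower bound showing $\int |\underline{\hat\chi}|^2$ over a neighbourhood of $\mathcal{H}^+$ (with respect to a regular foliation near $\mathcal{H}^+$) diverges. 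The existence of the spacetime regular away from $\mathcal{H}^+$ then follows by the same subsequential-limit/convergence argument as in Section~\ref{difrences}, with exponential replaced by polynomial throughout, once one checks (as in Section~\ref{sec:convergence}) that differences of finite approximations converge away from the horizon.

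The \textbf{main obstacle}, and the reason I expect this to be a genuine theorem rather than a routine adaptation, is \emph{controlling the nonlinear feedback of the singular components onto the rest of the system near the horizon}. In the linear wave equation discussion of Section~\ref{whyexp} the singular behaviour stays localised trivially; for $(\ref{vacEq})$ one must verify that the schematic products $\Gamma\cdot\Gamma$, $\Gamma\cdot\psi$ in $(\ref{introforgamma})$--$(\ref{introbianchi})$, when one of the factors is a blueshift-amplified component, do not destroy the existence theory in the open exterior --- i.e.~one needs precisely the special structure alluded to in the footnote citing \cite{LukRod, LukRod3}: that the blueshift-sensitive components appear in the bad nonlinear interactions only multiplied by \emph{good}-signed or $\slashed\nabla_4$-controlled factors, so that near $\mathcal{H}^+$ the singular growth is confined to an ``upper-triangular'' subsystem. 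Establishing this requires re-examining the full list of null structure and Bianchi interactions (not merely the schematic form) to check the relevant coefficients have the right sign/direction, in the manner of the weak-null-singularity analysis of \cite{LukRod}. The second, softer difficulty is the lower bound: one must show the divergence of $\int|\underline{\hat\chi}|^2$ is \emph{generic}, which means identifying a non-degenerate obstruction --- the leading coefficient in the backwards-amplified expansion of $\underline{\hat\chi}$ near $\mathcal{H}^+$, analogous to the moments of \cite{mgrome} --- and showing it is nonzero off a codimension-$\ge 1$ subset of scattering data. The extremal case $|a|=M$ is, as noted in Section~\ref{whyexp}, expected to be excluded from the genericity statement since $\kappa=0$ kills the amplification; I would state the conjecture, as above, for $|a|<M$.
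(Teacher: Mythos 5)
The statement you are addressing is stated in the paper as a \emph{conjecture} (Conjecture~\ref{conj:inst2}); the paper offers no proof of it, only the heuristic motivation from the linear wave equation in Section~\ref{whyexp} and a pointer to the weak-null-singularity literature. So there is no proof of record to compare against, and what you have written is, as you yourself partly acknowledge, a research programme rather than a proof. The two obstacles you name in your final paragraph are not side issues to be checked at the end: they \emph{are} the open content of the conjecture, and until they are resolved nothing has been proved.

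To be concrete about why the first steps also do not go through as written: your claim that the global estimates of Sections~\ref{EEfCu} and~\ref{HGFGa} ``close, because the $p$-hierarchical null condition is insensitive to whether the data decay exponentially or polynomially'' is too quick. The bootstrap of Section~\ref{sec:bootstrap} is a single coupled continuity argument in which \emph{every} $\Gamma$ and $\psi$ must satisfy the uniform bounds $(\ref{decl})$--$(\ref{decl4})$ throughout $\mathcal{M}(\tau_*,\tau,v_\infty)$, including arbitrarily close to $\mathcal{H}^+$; the smallness factors $(\ref{epshere})$, $(\ref{epsthere})$ and the Gronwall absorption in $(\ref{thebulkterforwave3})$/$(\ref{grontype})$ are extracted precisely from choosing $P$ large relative to the surface gravity. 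If the blueshift-sensitive components are permitted to blow up at the horizon, the Sobolev inequalities on the spheres, the control of the fluxes $F_u$, $F_v$ entering the transport estimates for $\Gamma$, and the cubic error estimates $(\ref{Bterm})$ all lose their input near $\mathcal{H}^+$, since the singular components multiply the regular ones in $\Gamma\cdot\Gamma$ and $\Gamma\cdot\psi$. One would need a genuinely new existence scheme in which the norms themselves are degenerate (weighted to vanish) at $\mathcal{H}^+$ and in which the nonlinear interactions are shown to have the ``upper-triangular'' structure you describe -- this is the analogue, at the event horizon and for the backwards scattering problem, of the weak null singularity existence theory of~\cite{LukRod, LukRod3}, and it has not been carried out. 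Likewise the genericity of the lower bound on $\int\|\underline{\hat\chi}\|^2$ requires identifying and controlling a nonvanishing leading coefficient in a setting where the solution itself is only constructed through this yet-to-be-built scheme. In short: your outline is a reasonable statement of \emph{why one might believe} the conjecture and of what a proof would have to accomplish, but it does not constitute a proof, and the paper does not claim one either.
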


The statement that the solution attains the data implies in particular that the metric and various tangential
derivatives thereof extend continuously to the boundary.
The horizon $\mathcal{H}^+$ would then correspond precisely to a \emph{weak null singularity},
analogous to phenomena well known from the context of black hole 
interiors~\cite{Dafermosinterior,Dafermosnospacelike} (cf.~Section~\ref{BHI} below).\footnote{The
name ``weak'' null singularity is traditional, but this singularity is in fact quite strong, in particular,
the Einstein equations $(\ref{vacEq})$ would \emph{not} be satisfied at $\mathcal{H}^+$ 
in the weak sense.} See also~\cite{Kozameh}.

The restriction to the subextremal case is related precisely to the degeneration of the surface
gravity when $|a|=M$. In the extremal case, the spacetimes of the above conjecture
may be in fact regular.

If Conjecture~\ref{conj:inst2} is indeed true, it would furthermore be interesting to understand the threshold governing the rate of polynomial decay sufficient for solving backwards.

\subsubsection{Comparison with the forward problem} \label{sec:instabcon}
\label{instabconj}

Conjecture~\ref{conj:inst2}    can be interpreted as the statement that our result--in its general outlines--is sharp, in the sense that,
if one were to impose generic weaker-than-exponentially decaying 
scattering data on $\mathcal{H}^+$ and
$\mathcal{I}^+$, then this scattering data will \emph{not} arise from regular Cauchy data.

This might at first suggest that the generic perturbations of Kerr initial data (as in
the statement of the non-linear black hole stability conjecture)
should give rise to spacetimes which exhibit precisely the type of decay we impose
on $\mathcal{H}^+\cup\mathcal{I}^+$.
The situation, however, turns out to be far more complicated!

\emph{Even for the linear scalar wave equation $(\ref{scalarwave})$},
on asymptotically flat spacetimes with non-zero mass (including, thus,
the Schwarzschild and Kerr case), generic Cauchy  data  are expected
to lead to polynomial tails on both $\mathcal{H}^+$ and $\mathcal{I}^+$,
no matter how smooth and localised the Cauchy data are imposed to be. This phenomenon originates in the scattering of low frequencies.\footnote{See Section 4.6 of \cite{mihalisnotes} for further discussion and also the more recent \cite{tohaneanu, DonnSchlag}, the latter showing that a $t^{-3}$ decay rate is indeed optimal on Schwarzschild for compactly supported data.}

On the other hand, turning to the simplest model nonlinear problem, that of
a general semi-linear wave equation $(\ref{thesemil})$ on Minkowski space, again
solutions of generic Cauchy data 
can be expected to develop polynomial tails on $\mathcal{I}^+$.
 See for instance \cite{BizonYM}. 
 
 Thus,  generic perturbations
 of Kerr initial data on a Cauchy hypersurface  can be expected
 to   have non-trivial contributions to their asymptotics generated from \emph{both} the linearisation and the non-linearities\footnote{For the collapse of the self-gravitating scalar field in spherical symmetry, one can show upper bounds on decay rates up to (within $\epsilon$) the obstructions given by the linear theory \cite{DafRod}. In the absense of symmetry, it is unclear whether the linear tails retain their relevance because they may be dwarfed by tails generated by the non-linearity. For instance, only much slower polynomial decay is known in the context of \cite{ChristKlei}.} of the Einstein equations $(\ref{vacEq})$.
Whatever the precise behaviour may be, this already  motivates:
\begin{conjecture} \label{conj:inst3}
For
 ``generic'' smooth Cauchy data as in the statement of the Non-linear stability of Kerr Conjecture,
the resulting vacuum spacetime
(and thus its scattering data on $\mathcal{I}^+\cup\mathcal{H}^+$) will settle
down to the Kerr family  only at an inverse polynomial rate and no better. Thus, the entire family of
spacetimes constructed in our main theorem
arise from a non-generic set of initial data for the Cauchy problem.
\end{conjecture}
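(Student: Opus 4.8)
The plan is to reduce the statement to a sharp forward-in-time asymptotic analysis of the nonlinear evolution near Kerr. Since the spacetimes produced by Theorem~\ref{theo:full} converge to their final $g_{a,M}$ on $\mathcal{H}^+\cup\mathcal{I}^+$ at an \emph{exponential} rate, the second sentence of the conjecture follows immediately from the first; so the whole issue is to prove that for a \emph{generic} Cauchy datum in a neighbourhood of subextremal Kerr data --- in whatever function-space topology the nonlinear stability conjecture is ultimately phrased --- the maximal development settles down to its final Kerr metric on $\mathcal{H}^+$ and on $\mathcal{I}^+$ at a fixed inverse polynomial rate \emph{and no faster}. Note that this does \emph{not} follow from Conjecture~\ref{conj:inst2}, which concerns only the backwards map (polynomially-decaying scattering data generically fail to arise from regular Cauchy data) and says nothing about the image of the forward map; the forward lower bound has to be produced directly. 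The whole argument is of course conditional on a version of nonlinear stability of Kerr strong enough to support the refined asymptotics below.

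First I would establish the \emph{linear} obstruction. For the linearised Einstein equations around $g_{a,M}$ --- worked, say, through the Teukolsky scalars together with a mode decomposition handling separately the $\ell=0,1$ sector, which carries the dynamical shifts of $M$ and $a$ --- one performs a low-frequency analysis of the associated resolvent near $\omega=0$, isolating the branch cut at the bottom of the continuous spectrum. This yields leading-order tails $\sim\tau^{-k}$ along $\mathcal{H}^+$ and $\sim u^{-k}$ along $\mathcal{I}^+$ whose coefficients are explicit linear functionals of the Cauchy data, in exact analogy with the scalar Price law reviewed in~\cite{mihalisnotes} and made sharp in~\cite{tohaneanu, DonnSchlag}. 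Each such functional is checked to be not identically zero, so its vanishing cuts out a closed subspace of infinite codimension; the complement is the ``linearly generic'' set of data.

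Next, nonlinear persistence. Taking the assumed nonlinear stability statement as a starting point, I would bootstrap a two-term expansion $\Gamma = \Gamma_{g_{a,M}} + \Gamma^{\mathrm{lin}} + \Gamma^{\mathrm{err}}$, and similarly for $\psi$, where $\Gamma^{\mathrm{lin}}$ solves the linearised system of the previous step with the same data and $\Gamma^{\mathrm{err}}$ is the genuinely nonlinear remainder. The renormalised null structure and Bianchi equations in the schematic form $(\ref{introforgamma})$--$(\ref{introbianchi})$, with the $p$-weight bookkeeping of Section~\ref{NSBE}, show that $\Gamma^{\mathrm{err}}$ is driven by quadratic interactions built from $\Gamma^{\mathrm{lin}}$; one must then show that the tails so generated are themselves inverse-polynomial and generically nonvanishing --- in the spirit of the semilinear-on-Minkowski phenomenon of~\cite{BizonYM} and of the nonlinear memory of~\cite{ChristKlei, mgrome} --- and in any case do not improve on the linear tail. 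Collecting the resulting (countably many) tail-coefficient functionals, one per multipole and per order of the expansion, the data for which the solution converges faster than every inverse power lie in their common zero set, which is meager, indeed of infinite codimension, in any reasonable topology on data near Kerr; intersecting with the linearly generic set of the previous step completes the proof of the conjecture, the exponentially-converging spacetimes of Theorem~\ref{theo:full} in particular belonging to the non-generic complement.

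The hard part is the nonlinear persistence step. It amounts to a sharp, quantitative refinement of the nonlinear stability of Kerr: one needs not merely convergence to Kerr, but control of the \emph{precise leading asymptotics} of every renormalised Ricci and curvature component near both $\mathcal{H}^+$ and $\mathcal{I}^+$, together with an argument that the nonlinear interactions neither destroy the linear Price tail nor conspire --- generic data excepted --- to cancel it. Even the linear step is not fully in hand for $a\neq0$: Price's law for the full system of gravitational perturbations of Kerr, including the coupling among modes and the slowly-decaying $\ell=0,1$ content that records the shift of the final parameters, remains a substantial open problem, closely tied to the quantitative linear decay theory underlying~\cite{Holzegelspin2} and to the structure uncovered in~\cite{LukRod}.
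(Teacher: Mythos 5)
The statement you are attempting to prove is stated in the paper as a \emph{conjecture} (Conjecture~\ref{conj:inst3}); the paper offers no proof of it, only heuristic motivation drawn from the expected polynomial tails for the linear wave equation on black hole backgrounds and from semilinear model problems such as~\cite{BizonYM}. Your proposal is, accordingly, not a proof but a conditional research programme, and you largely say so yourself. The decisive steps are all open: (i) the sharp linear step --- Price's law \emph{with lower bounds} for the full system of gravitational perturbations of Kerr, including the $\ell=0,1$ sector carrying the parameter shifts --- is not known for $a\neq 0$ (and even on Schwarzschild the known sharp results~\cite{tohaneanu, DonnSchlag} concern the scalar equation, not the coupled Teukolsky/metric system); (ii) the nonlinear persistence step presupposes a resolution of the nonlinear stability of Kerr conjecture in a form strong enough to control leading-order asymptotics of every renormalised component at $\mathcal{H}^+$ and $\mathcal{I}^+$, which is far beyond anything proved here or elsewhere; (iii) the genericity claim requires showing that the countably many tail-coefficient functionals are well-defined, continuous and not identically zero on the relevant data space, and that the nonlinear corrections cannot generically cancel the linear tail --- none of which is established by exhibiting the schematic structure $(\ref{introforgamma})$--$(\ref{introbianchi})$. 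Writing ``one must then show'' does not discharge these obligations.

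That said, your outline is consistent with the paper's own motivating discussion in Section~\ref{instabconj}, and two of your observations are correct and worth retaining: the second sentence of the conjecture does follow from the first together with the exponential convergence of the spacetimes of Theorem~\ref{theo:full}, and the conjecture genuinely does not follow from Conjecture~\ref{conj:inst2}, since the latter constrains only the backwards map. But as submitted, the proposal establishes neither the lower bound on the decay rate nor the genericity statement; it reduces an open conjecture to several harder open conjectures. It should be presented as a programme or a heuristic, not as a proof.
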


It is a difficult problem
to identify and understand any non-trivial 
sufficient \emph{compatibility} assumption on polynomially decay on $\mathcal{I}^+$
and $\mathcal{H}^+$ scattering data
that would allow for solving backwards to obtain  Cauchy data  regular
at the
horizon. This problem is interesting even in the
 case of $(\ref{scalarwave})$ on a fixed Schwarzschild background.

\subsubsection{Wave operators and reference dynamics}
In this paper, we view ``scattering theory'' simply as the map
from asymptotic data on $\mathcal{H}^+ \cup \mathcal{I}^+$ to solution of $(\ref{vacEq})$.
In particular, we never refer explictly to a global \emph{comparison dynamics}
with an auxiliary reference solution.

For the reader wishing to make that connection, 
let us consider for convenience the scattering problem for
 the spherically symmetric linear wave equation $\Box_g\psi=0$ on a Schwarzschild
 background. We may rewrite this equation
in terms of $r\psi$ as
 \begin{equation}
 \label{rewriteit}
-(\partial_{t}^2+ \partial_{r_*}^2)(r\psi)= V\cdot r\psi
\end{equation}
where $r^*$ is the Regge--Wheeler coordinate taking values in $(-\infty,\infty)$
and $V(r^*)$ is a  potential. We may define comparison dynamics 
associated to
\begin{equation}
\label{Free}
-(\partial_{t}^2+ \partial_{r_*}^2)(r\widetilde\psi)=0,
\end{equation}
and wave operators $W_\pm$, etc., by conjugation of $(\ref{rewriteit})$
with the evolution of $(\ref{Free})$.
The analogue of our scattering map (i.e.~the map from data
on $\mathcal{H}^+\cup\mathcal{I}^+$ to solution of $(\ref{rewriteit})$
restricted to $t=0$ say)
 can then be understood in the standard way in the language
of these wave operators.

It may be useful to remark that in this language, writing the scattering map
as a conjugation typically requires understanding \emph{decay} for the reference comparison
dynamics
in solving forward 
while only a \emph{boundedness} statement for the actual dynamics  in solving 
backwards.\footnote{More generally, one requires that solutions of the 
actual problem grow suitably more slowly than the decay of the reference problem.}
It is the latter
which is in general unavailable 
in the black hole context, in view of the blue-shift effect (unless the whole theory
can be formulated with respect to a degenerate energy, as in the
$\partial_t$-scattering theory~\cite{Dimock1, Dimock2} for $(\ref{scalarwave})$ on 
Schwarzschild\footnote{This theory is analogous to the usual scattering 
setting where ``boundedness'' is an immediate consequence of unitarity.}, 
discussed in Section~\ref{whyexp}). From this point of view,
the imposition of exponential decay on scattering data allows one to relax
the requirement of boundedness above with that of at most exponential growth. One sees thus
that this  assumption
can in principle be used to construct solutions in many 
other scattering problems for which boundedness  statements
(either forward or backward)
are not available, for instance $(\ref{scalarwave})$ with a time-dependent
potential.

\subsubsection{Uniqueness} \label{sec:uniq}
As explained already in Section~\ref{thesetuphere}, our theorem does \emph{not} assert the uniqueness of the vacuum spacetime $\left(\mathcal{M},g\right)$ realising the given scattering data at $\mathcal{H}^+ \cup \mathcal{I}^+$, \emph{even in the case of trivial data}, for which our theorem of course yields simply Kerr as solution. 
The method of proof of Theorem \ref{theo2} can be used to prove uniqueness in the class of solutions which are assumed a priori to 
exponentially settle down to Kerr in
the domain of outer communications. It would already be interesting to obtain uniqueness if it is only assumed that $\left(\mathcal{M},g\right)$ is uniformly close to Kerr.\footnote{Cf.~with the recent work \cite{Alexakis3} on the uniqueness of the Kerr family in the class of \emph{stationary} solutions under similar assumptions of uniform closeness.} This problem exhibits some of the difficulties of the stability problem but is in principle easier.

\subsubsection{Asymptotically flat hypersurfaces terminating at spatial infinity}
In the present paper, we have only done the analysis up to a spacelike hypersurface
$\Sigma_{\tau_0}$ intersecting null infinity $\mathcal{I}^+$,
but, provided scattering data along $\mathcal{I}^+$ are imposed with suitable fall-off towards spacelike infinity $i_0$, one can seek to obtain
a spacetime admitting an asymptotically flat Cauchy hypersurface:
\[
\input{intro3.pstex_t}
\]
This would be interesting to do explicitly using the set-up developed here (cf.~the comments in Section 14 of \cite{Mihalisbourbaki}). See also \cite{Li}.

\subsubsection{The black hole interior}\label{BHI}
We have been referring to the spacetimes of our main Theorem throughout as
 ``black hole'' spacetimes. This is justified!
Indeed, one can attach an honest black hole region to our spacetimes by extending the data suitably through the event horizon $\mathcal{H}^+$
and solving a characteristic problem, as indicated in the figure below. By the results of \cite{luk2}, there indeed exists a solution in a neighborhood of the entire event horizon (the lightly shaded region). In view of \cite{Dafermosinterior}, one may conjecture that, \emph{provided the final parameters are not Schwarzschild},
i.e.~$a\ne 0$, the solution will exist in a small uniform strip (the dark shaded region below) and will admit as future boundary a Cauchy horizon $\mathcal{CH}^+$ emanating from $i^+$ through which the metric extends continuously to a larger spacetime but for which generically the Christoffel symbols will fail to be square integrable. This latter property is essential to ensure the validity of Christodoulou's formulation of strong cosmic censorship \cite{formationofbh}. However, a resolution of these issues will require techniques beyond those of the present paper.
\[
\input{intro6.pstex_t}
\]

\subsubsection{$\Lambda\ne0$}
Finally, the problem posed in this paper can also be considered in the case 
of the Einstein equations with cosmological constant
$\Lambda\ne 0$:
\[
{\rm Ric}=\Lambda g
\]

In the case $\Lambda>0$, the corresponding problem would be to
construct  
spacetimes asymptotically settling down to Kerr-de Sitter in the region between
the cosmological horizon $\mathcal{C}^+$ and black hole horizon $\mathcal{H}^+$:
\[
\input{intro5.pstex_t}
\]
See \cite{Haefner, DafRoddS, Melrose, Dyatlov1, VasyDyatlov, sbzworski, Schlue} for the study of the wave equation (\ref{scalarwave}) on such backgrounds. Here, in principle, it is not necessary to capture a special ``null condition'' in the non-linearity of (\ref{vacEq}), and one can imagine an
alternative approach
\emph{based entirely in harmonic coordinates} or suitable modifications thereof as in \cite{Igor2, Ringstrom,SpeckRod}. See also \cite{FangWang}.

In the case $\Lambda<0$, then the question would  be to construct spacetimes
which asymptotically settle down to Schwarzschild- or Kerr-AdS.
\[
\input{intro4.pstex_t}
\]
See \cite{HawkingReall, HolzegelAdS, gs:decay, GHHCMW, Gannot, oberwolf,
newGustJac} for the study of the wave equation (\ref{scalarwave}) on such backgrounds. Recall that $\mathcal{I}^+$ is now timelike and the associated spacetimes are not globally hyperbolic.
For the usual boundary conditions imposed at null-infinity $\mathcal{I}^+$, 
 one expects that the free scattering data are prescribed only at the horizon $\mathcal{H}^+$. 
(See, however,~\cite{AndChruDel}.) 
This requires on the other hand a good understanding of the well-posedness issue for the full Einstein equations in the presence of a timelike boundary at infinity. See \cite{FriedrichAdS,gs:lwp}.

Note that in the asymptotically AdS case the potential contrast with generic solutions of the ``forward'' problem is even more stark than that described in Section~\ref{instabconj} above.
On Kerr--AdS, it has now been proven that
generic solutions of the wave equation decay logarithmically~\cite{gs:decay} 
and no better~\cite{oberwolf, Gannot, newGustJac}, suggesting that 
perhaps Kerr--AdS spacetimes are  unstable as solutions of the vacuum Einstein
equations~\cite{gs:decay}.\footnote{Compare with the case of pure AdS, where
solutions of the linear
wave equation do not decay, leading to the conjecture
in~\cite{eguchiha} that pure AdS is
 nonlinearly unstable (see also \cite{Anderson}). 
This instability has subsequently been discovered  numerically by Bizon and
Rostworowski~\cite{BizonAdS},
who have also understood better its heuristic basis. The black hole
case may turn out to be quite subtle; see~\cite{Marolfstab}.}
Thus, the method of the present paper could
provide an indispensable tool for constructing dynamical black hole spacetimes with negative cosmological constant.\footnote{One can draw an analogy here with our main theorem applied to the extremal case $|a|=M$ where the forward problem is subject to the Aretakis instability along the horizon \cite{Aretakis, Aretakis2, Reallextreme}.}

\subsection{Acknowledgements}
The authors thank Stefanos Aretakis and Martin Taylor for many comments on the 
manuscript.
M.D.~is supported in part by a grant from the European Research Council. G.H. and I.R. acknowledge support through NSF grants DMS-1161607 and DMS-0702270 respectively.
\section{The geometric setting} \label{sec:setting}
We henceforth  specialise our 
discussion to the Schwarzschild $a=0$ case of the main theorem, to be
stated precisely in Section~\ref{sec:maintheorem}.
We shall consider the general Kerr case
in Section~\ref{sec:thekerrcase}.

In this section, we will consider 
a \emph{given} spacetime in a particular double null gauge,
and our goal will be to derive the null structure and Bianchi
equations, discussed in Section~\ref{renor}. The assumptions on the underlying
manifold will be introduced in stages, begining in Section~\ref{sec:manifold} 
simply with its differential structure and associated coordinates and frames.
The differential
structure is chosen so as for the manifold
to admit naturally the Schwarzschild metric of mass $M$ (see Section~\ref{sec:ssv}).  
Once we impose (in Section~\ref{therebne}) the vacuum equations,
the existence of a single
example other than Schwarzschild
satisfying our assumptions will 
only become explicit
in the context of the proof of Theorem~\ref{theo1} in Section~\ref{sec:bootstrap}.  
Without first deriving
these equations, however, it would be impossible to understand the setup of initial data
in Section~\ref{sec:local}.

\subsection{The manifold} \label{sec:manifold}
Define the four-dimensional manifold with boundary
\[
\mathcal{M} = \mathcal{D} \times S^2 = \left(-\infty,0\right] \times \left(0, \infty\right) \times S^2  \, .
\]
We will denote a point in $\mathcal{M}$ by its coordinates $\left(U,V,\theta^1,\theta^2\right)$ with the implicit understanding that two coordinate charts are required on $S^2$. The boundary $\mathcal{H}^+:=\{0\} \times \left(0,\infty\right) \times S^2$ of $\mathcal{M}$ will, for reasons that will become apparent later, be called the \emph{horizon}. 

Fix a constant $M>0$. This constant will correspond to the mass of an auxiliary Schwarzschild metric on $\mathcal{M}$ to be defined in Section \ref{sec:ssv} (and
will represent finally in Theorem~\ref{theo:full} the mass of the Schwarzschild
metric to which our spacetime will settle down to). Besides the $\left(U,V\right)$ coordinate system for the base space $\mathcal{D}$, which we shall call \emph{Kruskal coordinates}, we define so-called \emph{Eddington-Finkelstein coordinates} $\left(u,v\right)$ related to $\left(U,V\right)$ by the transformations
\begin{equation} \label{EFK}
U = - e^{-\frac{u}{2M}} \textrm{ \ \ \ \ and  \  \ \ \ \ } V =  e^{\frac{v}{2M}} \, .
\end{equation}
Eddington-Finkelstein coordinates cover $\mathcal{M}$ except for its boundary $\mathcal{H}^+$ which formally corresponds to $u = \infty$. We denote the spheres at fixed $\left(U,V\right)$ by $S^2_{U,V}$ (or $S^2_{u,v}$). 

Let the function $r : \mathcal{M} \rightarrow \left[2M,\infty\right) \subset \mathbb{R}^+$ be defined implicitly as the solution of
\begin{align} \label{rdef}
e^{\frac{v-u}{2M}} = \frac{1-\frac{2M}{r}}{\frac{2M}{r} e^{-\frac{r}{2M}}}   \, .
\end{align}
Note that the left hand side is manifestly non-negative and that $r=2M$ holds on the horizon. Defining $x= \frac{r}{2M}$, we see that the right hand side is equal to $e^x \left(x-1\right)$, which is strictly monotonically increasing on $[1,\infty)$. Hence $r$ is well-defined and it is also seen to be smooth on $\mathcal{M}$. Besides the function $r$, we define a smooth function $t : \mathcal{M} \setminus \partial \mathcal{M} \rightarrow \mathbb{R}$ by 
\[
t\left(u,v,\theta^1,\theta^2\right) = u+v \, 
\]
and a smooth function $r^\star : \mathcal{M} \setminus \partial \mathcal{M}  \rightarrow \mathbb{R}$ by
\[
r^\star \left(u,v,\theta^1,\theta^2\right) = v-u \, .
\]
Note that fixing an $r>2M$ and a $t \in \mathbb{R}$ determines uniquely a pair $\left(u,v\right)$ and vice versa. Therefore, we could also use $\left(t,r,\theta^1, \theta^2\right)$ or equivalently $\left(t,r^\star,\theta^1, \theta^2\right)$ as coordinates for the interior of $\mathcal{M}$, as there is a simple smooth invertible relation between $r^\star$ and $r$ following from (\ref{rdef}).
\begin{figure}
\[
\input{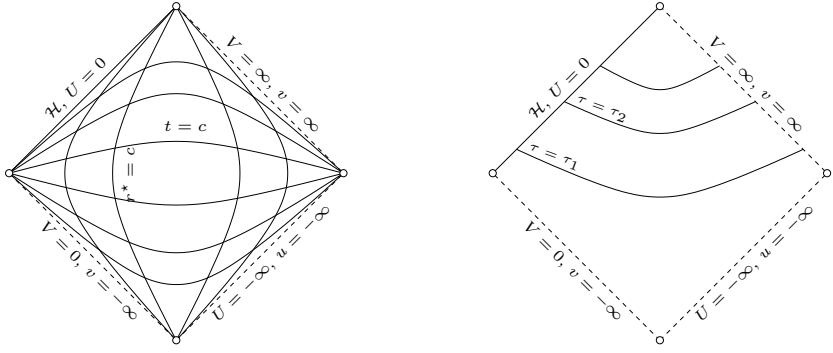}
\]
\caption{The manifold $\mathcal{D}$} \label{figure}
\end{figure}

With this in mind, we define a piecewise smooth function $\tau\left(u,v\right)$ 
as follows \cite{dafrodlargea}. Interpreting $r$ as a function of $r^\star$ we fix some $1< h \leq 2$ and define
\begin{equation}
\label{hereh!}
f\left(r^\star\right) = \int_0^{r^\star} \sqrt{1-\frac{(2M)^h}{(r\left(\tilde{r}^\star\right))^h}\left(1-\frac{2M}{r\left(\tilde{r}^\star\right)}\right)} d\tilde{r}^\star \, .
\end{equation}
Next, setting $D =  f\left(r^\star\left(r=9/4M\right)\right) +  f\left(r^\star\left(r=8M\right)\right)$, we define
\begin{equation} \label{fpiece}
f_{piece} \left(r^\star\right)  = \left\{
\begin{array}{ll}
f\left(r^\star\left(u,v\right)\right)  &\text{\ \ \ if } r\left(u,v\right) < 9/4 M ,\\
f\left(r^\star\left(r=9/4M\right)\right)  &\text{\ \ \ if } 9/4M \leq r\left(u,v\right) \leq 8M,\\
-f\left(r^\star\left(u,v\right)\right) + D &\text{\ \ \ if } r\left(u,v\right) > 8M.
\end{array} \right.
\end{equation}
The function $f_{piece}$ is piecewise smooth on $\mathcal{M} \setminus \partial \mathcal{M}$ and can be turned into a smooth function by mollifying it in a small neighbourhood of $r^\star\left(r=\frac{9}{4}M\right)$ and $r^\star\left(r=8M\right)$ respectively. We denote the function thus obtained by $f_{smooth}$.\footnote{The introduction of $f_{smooth}$ is for the aesthetic convenience of dealing with smooth hypersurfaces. The only property of the mollification required later (once a metric has been introduced) is that it preserves (up to a small constant) global bounds on the norm of the gradient of $\tau=u+v+f_{piece}$ when comparing with the gradient of $\tau=u+v+f_{smooth}$. See Lemma \ref{sec:slices}.} Finally,
\begin{equation*}
\tau \left(u,v\right) := u+v + f_{smooth} \left(r^\star\left(u,v\right)\right) \, .
\end{equation*}
The function $\tau$ extends smoothly to a monotonically increasing function along the horizon $\mathcal{H}^+$ and we therefore consider $\tau$ as a function on all of $\mathcal{M}$. The hypersurfaces of constant $\tau$ define a family of $3$-dimensional connected smooth submanifolds (with boundary being a $2$-sphere) on $\mathcal{M}$, which we will call $\Sigma_\tau$. In view of these considerations, we obtain an additional coordinate system on $\mathcal{M}$: the $\left(\tau, r, \theta^1, \theta^2\right)$-coordinates. We define the region
\[
\mathcal{M} \left(\tau_1, \tau_2\right) := \bigcup_{\tau_1 \leq \tau \leq \tau_2} \Sigma_\tau \, .
\]
For later purposes it will be convenient to introduce also the region
\begin{equation} \label{finregion}
\mathcal{M} \left(\tau_1, \tau_2,v_\infty\right) = \mathcal{M} \left(\tau_1, \tau_2\right) \cap \{v \leq v_\infty \}
\end{equation}
for a given $v_\infty \gg \tau_2$. Associated with such a region are the quantities
\begin{align} \label{ufutdef}
{u}_{fut}\left(v\right) := \sup_{\left(\tilde{u},v\right) \in \mathcal{M} \left(\tau_1,\tau_2, v_\infty\right)} \tilde{u} \qquad , \qquad {v}_{fut} \left(u\right):= \sup_{\left(u,\tilde{v}\right)  \in \mathcal{M} \left(\tau_1,\tau_2, v_\infty\right)} \tilde{v} \, .
\end{align}
Note that $u_{fut}=\infty$ if the constant $v$-hypersurface intersects the horizon, and $v_{fut}=v_\infty$ if the constant $u$-hypersurface reaches $v=v_\infty$. We will sometimes use the shorthand notation $u_{fut}$ if the $v$-hypersurface involved is clear from the context and similarly $v_{fut}$ if the $u$-hypersurface is unambiguous.
\subsection{The class of metrics}
So far we have not introduced any metric on $\mathcal{M}$. We now proceed to equip the manifolds $\mathcal{M}\left(\tau_1,\tau_2,v_\infty\right)$ with smooth metrics of the form (in Kruskal coordinates)
\begin{align} \label{maing}
g = - 2 \Omega_{\mathcal{K}}^2 \left(dU \otimes dV + dV \otimes dU\right) + \slashed{g}_{AB} \left(d\theta^A - b_{\mathcal{K}}^A dV\right) \otimes \left(d\theta^B - b_{\mathcal{K}}^B dV\right)  ,
\end{align}
where $A,B=1,2$ and
\begin{equation}
\begin{split}
&\Omega_{\mathcal{K}} : \mathcal{M}\left(\tau_1,\tau_2,v_\infty\right) \rightarrow \mathbb{R}^+ \qquad \textrm{is a function,}  \\
&\slashed{g}_{AB} \left(U,V,\theta^1,\theta^2\right) d\theta^A d\theta^B \qquad \textrm{\ is a Riemannian metric on $S^2_{U,V}$,} \\
& b^A_{\mathcal{K}} \ \ \  \textrm{ is a vectorfield taking values in the tangent space of  $S^2_{U,V}$.} 
\end{split}
\end{equation}
In addition, we shall require $b^A_\mathcal{K}=0$ on the horizon $\{U=0\} \cap \mathcal{M}\left(\tau_1,\tau_2\right)$.\footnote{This condition can be viewed as a gauge condition, which is useful as it turns the constraint equations into ODEs, cf.~Section \ref{sec:local}.}  All quantities are assumed to be smooth. Defining $\Omega^2_{\mathcal{E}\mathcal{F}}$ and $b_{\mathcal{E}\mathcal{F}}$ via the relations
\begin{align} 
 \Omega_{\mathcal{K}}^2 = \Omega_{\mathcal{E}\mathcal{F}}^2 \frac{2M}{V} \frac{2M}{-U} \ \ \ \ , \ \ \ b_{\mathcal{E}\mathcal{F}}^A = b_{\mathcal{K}}^A \frac{V}{2M} \, , \nonumber
\end{align}
we have in Eddington-Finkelstein coordinates
\begin{align} \label{maing2}
g = - 2 \Omega_{\mathcal{E}\mathcal{F}}^2 \left(du \otimes dv + dv \otimes du\right) + \slashed{g}_{AB} \left(d\theta^A - b_{\mathcal{E}\mathcal{F}}^A dv\right) \otimes \left(d\theta^B - b_{\mathcal{E}\mathcal{F}}^B dv\right) .
\end{align}
Observe that hypersurfaces of constant $u$ and $v$ (or $U$ and $V$ respectively) are null hypersurfaces of the metric $g$. Note also that since the metric (\ref{maing}) is regular, then necessarily $\lim_{u \rightarrow \infty} \Omega^2_{\mathcal{E}\mathcal{F}} \left(u,v\right) = 0$ for all $v$. This illustrates the failure of the Eddington-Finkelstein coordinate system on the horizon.

As we will state explicitly in Section \ref{sec:local}, any metric arising in the context of a suitable initial value formulation for the Einstein vacuum equations can be expressed (a-priori only locally) in the form (\ref{maing}) by choosing a double null foliation as the coordinate gauge and Lie-transporting the angular coordinates fixed on a particular sphere on the horizon in the null-directions.  See \cite{formationofbh}.

For now, we continue by defining geometric objects such as the frame, the Ricci-coefficients and the curvature for a given $\left(\mathcal{M} \left(\tau_1, \tau_2,v_\infty\right),g\right)$ of the form (\ref{maing}). \newline

{\bf Convention:} Because the Eddington-Finkelstein coordinates are more convenient in computations, we agree on the following convention. We may drop the subscript ${\mathcal{E}\mathcal{F}}$ for $b_{\mathcal{E}\mathcal{F}}$ and $\Omega_{\mathcal{E}\mathcal{F}}$. Hence $b^A=b^A_{\mathcal{E}\mathcal{F}}$ and $\Omega^2 = \Omega^2_{\mathcal{E}\mathcal{F}}$. On the other hand, the Kruskal-quantities will always retain their subscript $\mathcal{K}$.

\subsection{Frames} \label{sec:frames}
We define the \emph{regular normalized Eddington Finkelstein (${\mathcal{E}\mathcal{F}}$) frame}:
\begin{align} \label{frefdef}
e_3^{\mathcal{E}\mathcal{F}} = \frac{1}{\Omega^2} \partial_u \textrm{ \ \ \ , \ \ \ } e_4^{\mathcal{E}\mathcal{F}}= \partial_v + b^A e_A \ \ \ , \ \ \ e_A = \partial_{\theta^A} \, .
\end{align}
One easily checks that $g\left(e_3^{\mathcal{E}\mathcal{F}},e_4^{\mathcal{E}\mathcal{F}}\right)=-2$, $g\left(e_A,e_B\right) = \slashed{g}_{AB}$ and that all other combinations are zero, Moreover, $\nabla_{e_3^{\mathcal{E}\mathcal{F}}} e_3^{\mathcal{E}\mathcal{F}}= 0$.
To see that this frame is indeed regular consider the normalized Kruskal frame
\begin{align}
\label{kruskfr}
e_3^{\mathcal{K}} = \frac{1}{\Omega_{\mathcal{K}}} \partial_U \textrm{ \ \ \ , \ \ \ } e_4^{\mathcal{K}} = \frac{1}{\Omega_{\mathcal{K}}} \left(\partial_V + b_\mathcal{K}^A e_A \right) \, .
\end{align}
We have the following relation between the two frames:
\begin{align} \label{framerel}
e_3^{\mathcal{K}} = \frac{\Omega_{\mathcal{K}} V}{2M} e^{\mathcal{E}\mathcal{F}}_3   \textrm{ \ \ \ \ ,  \ \ \ \ } e_4^{\mathcal{K}} = \frac{2M} {\Omega_{\mathcal{K}} V} e^{\mathcal{E}\mathcal{F}}_4 \, ,
\end{align}
which shows that indeed the above ${\mathcal{E}\mathcal{F}}$-frame is regular.

\begin{remark} \label{stupre}
The frame $\frac{1}{\Omega_{\mathcal{E}\mathcal{F}}} \partial_u$, $\frac{1}{\Omega_{\mathcal{E}\mathcal{F}}} \left(\partial_v +b_{\mathcal{E}\mathcal{F}}^A e_A\right)$ would be the ``natural" null-coordinate frame arising from Eddington Finkelstein coordinates; however, this frame is not regular on the horizon $\mathcal{H}^+$.
\end{remark}

With respect to the regular ${\mathcal{E}\mathcal{F}}$-frame we define the Ricci coefficients
\begin{align} \label{RicC}
\chi_{AB} &= g \left(\nabla_A e_4^{\mathcal{E}\mathcal{F}} ,e_B\right) \, , \qquad  \qquad \qquad \underline{\chi}_{AB} = g \left(\nabla_A e_3^{\mathcal{E}\mathcal{F}},e_B\right) \, ,
\nonumber \\
\eta_{A} &= -\frac{1}{2} g \left(\nabla_{e_3^{\mathcal{E}\mathcal{F}}} e_A,e_4^{\mathcal{E}\mathcal{F}}\right) \, , \qquad  \qquad \underline{\eta}_{A} = -\frac{1}{2} g \left(\nabla_{e_4^{\mathcal{E}\mathcal{F}}}e_A,{e_3^{\mathcal{E}\mathcal{F}}}\right) \, ,
\nonumber \\
\hat{\omega} &= \frac{1}{2} g \left(\nabla_{e_4^{\mathcal{E}\mathcal{F}}} {e_3^{\mathcal{E}\mathcal{F}}},e_4^{\mathcal{E}\mathcal{F}}\right) \, , \qquad \qquad \hat{\underline{\omega}} =  \frac{1}{2}g \left(\nabla_{e_3^{\mathcal{E}\mathcal{F}}} e_4^{\mathcal{E}\mathcal{F}},e_3^{\mathcal{E}\mathcal{F}}\right) = 0 \, ,  \nonumber \\
\zeta_A &= \frac{1}{2} g \left(\nabla_A e_4^{\mathcal{E}\mathcal{F}},e_3^{\mathcal{E}\mathcal{F}}\right) \, .
\end{align}
The following relations can be established using the properties of the frame:\footnote{In the (irregular) double-null frame of Remark \ref{stupre}, ($\frac{1}{\Omega}\partial_u$, $\frac{1}{\Omega}\left(\partial_v + b^A e_A\right)$), we would have the familiar $2\zeta = \eta - \underline{\eta}$ from \cite{formationofbh}.}
\begin{align}
\zeta = - \underline{\eta} \ \ \ \ \textrm{,} \ \ \  2 \slashed{\nabla} \left(\log \Omega \right) = \eta + \underline{\eta} \ \ \ \ \textrm{,} \ \ \ \hat{\omega} = \frac{e_4^{\mathcal{E}\mathcal{F}}\left( \Omega^2\right)}{\Omega^2}\, .
\end{align}
We also define the null-decomposed curvature components
\begin{align}
\label{curvdefins}
\alpha_{AB} &= R \left(e_A, e_4^{\mathcal{E}\mathcal{F}}, e_B, e_4^{\mathcal{E}\mathcal{F}}\right) \ , \qquad \ \  \qquad \  \underline{\alpha}_{AB} = R \left(e_A, e_3^{\mathcal{E}\mathcal{F}}, e_B, e_3^{\mathcal{E}\mathcal{F}}\right) \nonumber \\
2\beta_{A} &=  R \left(e_A, e_4^{\mathcal{E}\mathcal{F}}, e_3^{\mathcal{E}\mathcal{F}}, e_4^{\mathcal{E}\mathcal{F}}\right) \, , \qquad \qquad \ 2\underline{\beta}_{A} = R \left(e_A, e_3^{\mathcal{E}\mathcal{F}}, e_3^{\mathcal{E}\mathcal{F}}, e_4^{\mathcal{E}\mathcal{F}}\right) \nonumber \\
4\rho &=  R\left(e_4^{\mathcal{E}\mathcal{F}}, e_3^{\mathcal{E}\mathcal{F}},e_4^{\mathcal{E}\mathcal{F}}, e_3^{\mathcal{E}\mathcal{F}}\right) \, ,  \qquad \qquad  4\sigma = {}^\star R\left(e_4^{\mathcal{E}\mathcal{F}},e_3^{\mathcal{E}\mathcal{F}},e_4^{\mathcal{E}\mathcal{F}},e_3^{\mathcal{E}\mathcal{F}}\right)
\end{align}
with ${}^\star R$ denoting the Hodge dual of $R$. Cf.~\cite{ChristKlei,formationofbh}.

\subsection{The Schwarzschild values} \label{sec:ssv}
With the help of the spacetime function $r\left(u,v\right)$ defined in (\ref{rdef}), we can define a notion of  the \emph{Schwarzschild values} (indicated by a $\circ$) of the metric quantities:
\begin{align} \label{eq:exr}
\Omega^2_{\circ}\left(u,v\right) = 1- \frac{2M}{r\left(u,v\right)} \ \ \ , \ \ \
\slashed{g}^{\circ}_{AB} =  r^2 \left(u,v\right) \gamma_{AB} 
\ \ \ \ , \ \ \ 
b^\circ_A = 0 \, ,
\end{align}
where $\gamma_{AB} d\theta^A d\theta^B$ is the round metric on the unit sphere. The choice (\ref{eq:exr}) of $\Omega^2, \slashed{g}_{AB}$ and $b^A$ in (\ref{maing2}) defines an auxiliary Schwarzschild metric of mass $M$ on $\mathcal{M}\left(\tau_1,\tau_2\right)$.
 From the above one can compute
\begin{align}
\label{eq:ex0}
\rho_{\circ} = \frac{-2M}{r^3} \  \ \ \ \ , \ \ \ \hat{\omega}_{\circ} = \frac{2M}{r^2} \textrm{ \ \ \  \  , \ \ \ \ } \hat{\underline{\omega}}_{\circ} = 0 \, ,
\end{align}
\begin{align} \label{eq:exs}
tr \left( \chi_{\circ}\right) =  2 \frac{1-\frac{2M}{r}}{r} \textrm{ \ \ \ \ and \ \ \ \ \ } tr \  \underline{\chi}_{\circ} = - \frac{2}{r} \, ,
\end{align}
while all other Ricci-coefficients and curvature components vanish identically. Note that while for Schwarzschild the function $r$ indeed has the geometric interpretation as the area radius of the spheres $S^2_{u,v}$, this is not true in general, as $r$ was defined without reference to any metric.

Using the relation (\ref{framerel}), one can also compute all these quantities (\ref{eq:exr})--(\ref{eq:exs}) with respect to the Kruskal frame. Here we only note 
\begin{align} \label{Kom}
\left(\Omega^2_{\mathcal{K}}\right)_\circ = 4M^2 \frac{2M}{r} e^{\frac{-r}{2M}} \, .
\end{align}
\subsection{The renormalised Bianchi and null structure equations}
\label{therebne}
We are going to impose that $\left(\mathcal{M}\left(\tau_1,\tau_2,v_\infty\right),g\right)$ satisfies the vacuum Einstein equations (\ref{vacEq}). It follows that the Ricci-coefficients satisfy the so-called \emph{null structure equations} and that the Riemann curvature tensor satisfies the \emph{Bianchi equations} on $\left(\mathcal{M}\left(\tau_1,\tau_2,v_\infty\right),g\right)$. See \cite{formationofbh, KlaiNic}. These equations contain the full content of the Einstein equations and will be used to estimate solutions.\footnote{Note however that in the logic of the proof of Section \ref{sec:logic} we \emph{construct} solutions of (\ref{vacEq}) by appealing directly to the general existence theorems of \cite{ChoquetBruhat, Geroch, Rendall}, not by extracting a closed subsystem of the equations here.}

In the following subsections, we express the Bianchi and null-structure equations on $\left(\mathcal{M}\left(\tau_1,\tau_2,v_\infty\right),g\right)$ as equations for the null-decomposed quantities of Section \ref{sec:frames}. These quantities are interpreted as \emph{$S^2_{u,v}$-tensors} of rank $2$, $1$ and $0$ and the equations themselves will be expressed in terms of the \emph{$S^2_{u,v}$-projected covariant derivatives}: $\slashed{\nabla}_3 = \slashed{\nabla}_{\frac{1}{\Omega^2} \partial_u}$ and $\slashed{\nabla}_4 = \slashed{\nabla}_{\partial_v + b^A e_A}$ acting on these tensors.
 We refer the reader to \cite{ChristKlei, formationofbh} for detailed derivations of these formulae and the notational conventions but recall at least the following Lemma, which relates the projected covariant derivatives to the coordinate derivatives.
\begin{lemma} \label{teco}
Let $\xi_{A_1...A_N}$ be a covariant $S^2_{u,v}$-tensor of rank $N$.
We have
\[
\left(\slashed{\nabla}_3 \xi\right)_{A_1\ldots A_N}  = \frac{1}{\Omega^2}\partial_u \left(\xi_{A_1...A_N} \right) -\sum_{k=1}^N \left(\slashed{g}^{-1}\right)^{BC} \underline{\chi}_{A_k B} \xi_{A_1 \ldots A_{k-1} C A_{k+1} \ldots A_N}
\]
and
\begin{align}
\left(\slashed{\nabla}_4 \xi\right)_{A_1\ldots A_N} = \left[ \partial_v + b^C \partial_{\theta^C} \right] \left(\xi_{A_1\ldots A_N} \right) - \sum_{k=1}^N \left(\slashed{g}^{-1}\right)^{BC}{\chi}_{A_k B} \xi_{A_1\ldots A_{k-1} C A_{k+1}\ldots A_N} \nonumber \\
-\sum_{k=1}^N b^D \slashed{\Gamma}^{C}_{ \ DA_k}  \xi_{A_1 \ldots A_{k-1} C A_{k+1}\ldots A_N} \nonumber
\end{align}
\begin{align} \label{eq:tec}
\left(\slashed{\nabla}_B \xi\right)_{A_1...A_N} = \partial_{\theta^B} \left(\xi_{A_1\ldots A_N}\right) - \sum_{k=1}^N \slashed{\Gamma}^{C}_{ \ BA_k}  \xi_{A_1\ldots A_{k-1} C A_{k+1} \ldots A_N}
\end{align}
with $\slashed{\Gamma}^{C}_{ \ AB} = \frac{1}{2} \left(\slashed{g}^{-1}\right)^{CD} \left(\slashed{g}_{AD,B} + \slashed{g}_{DB,A} - \slashed{g}_{AB,D}\right)$ the Christoffel symbols of the metric $\slashed{g}$.
\end{lemma}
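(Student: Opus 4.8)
The plan is to prove all three identities by one mechanism. If $\xi$ is a covariant $S^2_{u,v}$-tensor of rank $N$ and $V$ is any of the vector fields $e_3^{\mathcal{E}\mathcal{F}}=\frac{1}{\Omega^2}\partial_u$, $e_4^{\mathcal{E}\mathcal{F}}=\partial_v+b^C\partial_{\theta^C}$, $e_B=\partial_{\theta^B}$ appearing in $(\ref{frefdef})$, then by the Leibniz rule for the projected connection, tested against the coordinate vector fields $e_{A_k}=\partial_{\theta^{A_k}}$,
\[
(\slashed{\nabla}_V\xi)_{A_1\ldots A_N}=V\big(\xi_{A_1\ldots A_N}\big)-\sum_{k=1}^N \xi\big(e_{A_1},\ldots,\Pi\nabla_V e_{A_k},\ldots,e_{A_N}\big),
\]
where $\Pi$ is the $g$-orthogonal projection onto the tangent space of $S^2_{u,v}$ (whose normal directions are spanned by $e_3^{\mathcal{E}\mathcal{F}},e_4^{\mathcal{E}\mathcal{F}}$). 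The first term is already in the claimed form, since $e_3^{\mathcal{E}\mathcal{F}}$ and $e_4^{\mathcal{E}\mathcal{F}}$ act on the scalar components $\xi_{A_1\ldots A_N}$ exactly as $\frac{1}{\Omega^2}\partial_u$ and $[\partial_v+b^C\partial_{\theta^C}]$ respectively. So the entire content of the Lemma reduces to computing $\Pi\nabla_V e_A$ for the three choices of $V$ — equivalently, projecting the spacetime Christoffel symbols of $(\ref{maing2})$ (via the Koszul formula) and re-expressing the answer through the Ricci coefficients $(\ref{RicC})$.

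Two of the three cases are immediate. For $V=e_B=\partial_{\theta^B}$ one invokes the Gauss formula: the surface $\{u=\mathrm{const},\,v=\mathrm{const}\}$ is an embedded submanifold of $(\mathcal{M},g)$ whose induced metric, reading off $(\ref{maing2})$, is precisely $\slashed{g}_{AB}\,d\theta^Ad\theta^B$ (the $b^A\,dv$ terms vanish there); hence the tangential part of $\nabla_{\partial_{\theta^B}}\partial_{\theta^A}$ is the Levi--Civita connection of $\slashed{g}$, i.e.\ $\Pi\nabla_{\partial_{\theta^B}}\partial_{\theta^A}=\slashed{\Gamma}^C_{\ BA}\partial_{\theta^C}$, which is $(\ref{eq:tec})$. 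For $V=e_3^{\mathcal{E}\mathcal{F}}$, the commutator $[\partial_{\theta^A},\tfrac{1}{\Omega^2}\partial_u]=\partial_{\theta^A}\!\big(\tfrac{1}{\Omega^2}\big)\partial_u$ is proportional to $\partial_u$, which by the form of $(\ref{maing2})$ is $g$-orthogonal to every $\partial_{\theta^B}$; therefore, using torsion-freeness, $g(\nabla_{e_3^{\mathcal{E}\mathcal{F}}}\partial_{\theta^A},\partial_{\theta^B})=g(\nabla_{\partial_{\theta^A}}e_3^{\mathcal{E}\mathcal{F}},\partial_{\theta^B})=\underline{\chi}_{AB}$ by definition $(\ref{RicC})$, whence $\Pi\nabla_{e_3^{\mathcal{E}\mathcal{F}}}\partial_{\theta^A}=(\slashed{g}^{-1})^{BC}\underline{\chi}_{AB}\partial_{\theta^C}$. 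Substituting this gives the first displayed formula of the Lemma.

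The remaining case $V=e_4^{\mathcal{E}\mathcal{F}}=\partial_v+b^B\partial_{\theta^B}$ is the only one requiring genuine care, and this bookkeeping is where I expect the main obstacle to lie. Splitting $\nabla_{e_4^{\mathcal{E}\mathcal{F}}}\partial_{\theta^A}=\nabla_{\partial_v}\partial_{\theta^A}+b^B\nabla_{\partial_{\theta^B}}\partial_{\theta^A}$, the second summand contributes $b^B\slashed{\Gamma}^C_{\ BA}\partial_{\theta^C}$ by the Gauss formula already established — this is the origin of the last sum in the stated formula. For the first summand one cannot simply identify $g(\nabla_{\partial_v}\partial_{\theta^A},\partial_{\theta^B})$ with $\chi_{AB}$, because now the commutator $[\partial_{\theta^A},e_4^{\mathcal{E}\mathcal{F}}]=(\partial_{\theta^A}b^C)\partial_{\theta^C}$ is \emph{tangent} to $S^2_{u,v}$ and no longer drops out of the pairing; instead one expands $g(\nabla_{\partial_v}\partial_{\theta^A},\partial_{\theta^B})$ directly by the Koszul formula using only the components $g_{\theta^A\theta^B}=\slashed g_{AB}$ and $g_{v\theta^A}=-\slashed g_{AB}b^B$ of $(\ref{maing2})$, and compares it with the Koszul expansion of $\chi_{AB}=g(\nabla_{\partial_{\theta^A}}e_4^{\mathcal{E}\mathcal{F}},\partial_{\theta^B})$. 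Grouping the $\partial_\theta\slashed{g}$ combinations that appear into $\slashed{\Gamma}$-symbols, cancelling the terms that occur with opposite signs, and using the symmetry $\slashed{\Gamma}^C_{\ BA}=\slashed{\Gamma}^C_{\ AB}$, one identifies $\Pi\nabla_{e_4^{\mathcal{E}\mathcal{F}}}\partial_{\theta^A}$ with the coefficient occurring in the statement, so that together with the term $[\partial_v+b^C\partial_{\theta^C}]\xi_{A_1\ldots A_N}$ one obtains the second displayed formula. Throughout, the single thing to watch is the consistent handling of the shift $b^A$ in both frame and metric; with that accounted for the argument is purely computational, and the detailed derivations are also recorded in \cite{ChristKlei, formationofbh}.
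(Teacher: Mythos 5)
Your reduction of the Lemma to computing $\Pi\nabla_V\partial_{\theta^A}$ for $V\in\{e_3^{\mathcal{E}\mathcal{F}},e_4^{\mathcal{E}\mathcal{F}},\partial_{\theta^B}\}$ is the natural route (the paper itself gives no proof, deferring to \cite{ChristKlei,formationofbh}), and your treatment of the angular case (Gauss formula, since the induced metric on $S^2_{u,v}$ is $\slashed{g}$) and of the $\slashed{\nabla}_3$ case (the commutator is proportional to $\partial_u$, which is $g$-orthogonal to the spheres because $g_{u\theta^A}=0$) is correct.

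The gap sits exactly at the step you single out as delicate and then only assert. Carrying the $e_4$ case out: torsion-freeness together with $[e_4^{\mathcal{E}\mathcal{F}},\partial_{\theta^A}]=-(\partial_{\theta^A}b^C)\partial_{\theta^C}$ (a \emph{tangential} vector) gives
\[
g\big(\nabla_{e_4^{\mathcal{E}\mathcal{F}}}\partial_{\theta^A},\partial_{\theta^B}\big)=\chi_{AB}-(\partial_{\theta^A}b^C)\slashed{g}_{CB},
\qquad\text{i.e.}\qquad
\Pi\nabla_{e_4^{\mathcal{E}\mathcal{F}}}\partial_{\theta^A}=\big(\chi_A{}^{C}-\partial_{\theta^A}b^C\big)\partial_{\theta^C}.
\]
Equivalently, in your splitting the $b^B\slashed{\Gamma}^C_{\ BA}$ contribution of $b^B\nabla_{\partial_{\theta^B}}\partial_{\theta^A}$ is exactly cancelled by a piece of $\Pi\nabla_{\partial_v}\partial_{\theta^A}$, because the Koszul formula gives $g(\nabla_{\partial_v}\partial_{\theta^A},\partial_{\theta^B})=\chi_{AB}-(\partial_{\theta^A}b^C)\slashed{g}_{CB}-b^D\slashed{g}_{BC}\slashed{\Gamma}^C_{\ DA}$. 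Your mechanism therefore produces
\[
(\slashed{\nabla}_4\xi)_{A_1\ldots A_N}=\big[\partial_v+b^C\partial_{\theta^C}\big]\xi_{A_1\ldots A_N}-\sum_{k}\big(\slashed{g}^{-1}\big)^{BC}\chi_{A_kB}\,\xi_{A_1\ldots C\ldots A_N}+\sum_k(\partial_{\theta^{A_k}}b^C)\,\xi_{A_1\ldots C\ldots A_N},
\]
which differs from the displayed statement by $\sum_k(\slashed{\nabla}_{A_k}b^C)\xi_{\ldots C\ldots}$; the two coincide only when $\slashed{\nabla}b=0$. So the sentence claiming that the grouping and cancellations ``identify $\Pi\nabla_{e_4^{\mathcal{E}\mathcal{F}}}\partial_{\theta^A}$ with the coefficient occurring in the statement'' is precisely the step that fails, and no bookkeeping can rescue it: testing the printed formula on $\xi=\slashed{g}$ (where the $b$-terms assemble into $b^C(\slashed{\nabla}_C\slashed{g})_{AB}=0$) yields $\partial_v\slashed{g}_{AB}-\chi_{AB}-\chi_{BA}=-(\slashed{\nabla}_Ab_B+\slashed{\nabla}_Bb_A)$ by the first variation identity $\chi_{AB}+\chi_{BA}=\partial_v\slashed{g}_{AB}+(\mathcal{L}_b\slashed{g})_{AB}$, whereas the projected covariant derivative satisfies $\slashed{\nabla}_4\slashed{g}=0$ (metric compatibility, used throughout the paper). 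In other words, your argument, done honestly, proves the identity with $+\partial_{\theta^{A_k}}b^C$ in place of $-b^D\slashed{\Gamma}^C_{\ DA_k}$ (the $\slashed{\nabla}_3$ and $\slashed{\nabla}_B$ displays are unaffected, as no shift enters there); you should either derive that corrected formula and flag the discrepancy with the statement as printed, or exhibit a (non-standard) definition of $\slashed{\nabla}_4$ under which the printed version is true --- simply asserting agreement with the printed coefficient is not available.
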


We also recall the following operations on $S^2_{u,v}$-tensors:
The left Hodge-dual
\[
{}^\star \xi_A = \slashed{\epsilon}_{AB} \xi^B  \ \ \ and \ \ \ {}^\star \xi_{AB} = \slashed{\epsilon}_{AC} \xi^C_{\phantom{C}B}
\]
with $\slashed{\epsilon}$ denoting the volume form associated with $\slashed{g}$. The symmetric traceless product of two $1$-covariant $S^2_{u,v}$-tensors
\[
\left(\xi \widehat{\otimes} \varphi\right)_{AB} = \xi_A \varphi_B + \xi_B \varphi_A - \slashed{g}_{AB} \left(\left(\slashed{g}^{-1}\right)^{CD}\xi_C \varphi_D\right) \, 
\]
and the (anti-symmetric) products
\[
\xi \wedge \varphi = \slashed{\epsilon}^{CD} \xi_C \varphi_D \ \ \ \ \ \textrm{or} \ \ \ \ \ \xi \wedge \varphi = \slashed{\epsilon}^{AB} \left(\slashed{g}^{-1} \right)^{CD} \xi_{AC} \varphi_{BD}
\]
for $1$-covariant and $2$-covariant $S^2_{u,v}$-tensors respectively. Finally, we let
\begin{align}
\left(\xi,\phi\right)_{\slashed{g}} = \slashed{g}^{A_1 B_1} ... \slashed{g}^{A_N B_N} \xi_{A_1...A_N} \varphi_{B_1 ... B_N} 
\end{align}
\begin{align} \label{gnorm}
\| \xi \|^2_{\slashed{g}} = \slashed{g}^{A_1 B_1} ... \slashed{g}^{A_N B_N} \xi_{A_1...A_N} \xi_{B_1 ... B_N} \, 
\end{align}
denote the $\slashed{g}$-inner-product and $\slashed{g}$-norm of $N$-covariant $S^2_{u,v}$-tensors, respectively. With $d\mu_{\slashed{g}}$ denoting the induced volume element of $\slashed{g}$ we define also
\begin{align} \label{l2gnorm}
\|\xi \|^q_{L^q_{\slashed{g}}\left(S\right)} :=  \int_{S^2\left(u,v\right)} \|\xi \|_{\slashed{g}}^q d\mu_{\slashed{g}} \, \qquad \textrm{for $q\geq 1$}.
\end{align}
The subscript $\slashed{g}$ shall be conveniently dropped if it is clear from the context that the norm is taken with respect to $\slashed{g}$, i.e.~we write $\|\xi\|=\|\xi\|_{\slashed{g}} $.
Finally, we recall \cite{formationofbh} the musical notation $\xi^{\sharp}$ (indicating raising an index with $\left(\slashed{g}^{-1}\right)$) and the angular derivative operators 
 $\slashed{\mathcal{D}}^\star_1 \left(\pm \rho,\sigma\right) = \mp \slashed{\nabla}_A \rho + \slashed{\epsilon}_{AB} \slashed{\nabla}^B \sigma$ (mapping a pair of scalars to a $1$-form) and $-2\slashed{\mathcal{D}}^\star_2 \xi = \slashed{\nabla}_A \xi_B + \slashed{\nabla}_B \xi_A - \left( \slashed{div} \ \xi \right) \slashed{g}_{AB}$ (mapping a $1$-form to a $2$-form).

As discussed in Section~\ref{renor}, we shall in fact express our equations with respect to 
``renormalised'' curvature components and Ricci coefficients, which are defined simply by subtracting the non-trivial Schwarzschild
values
$(\ref{eq:exr})$--$(\ref{eq:exs})$.
Thus our renormalised quantities are
\[
{\rm curvature \ components:}\qquad \alpha, \beta, \rho-\rho_\circ, \sigma, \underline\beta, \underline \alpha
\]
\[
{\rm Ricci\ coefficients:}\qquad
\hat\omega-\hat\omega_\circ, \underline\eta, \eta, tr\underline\chi-tr\underline\chi_\circ,
 tr\chi-tr\chi_\circ, \hat{\underline\chi},\hat\chi
 \]
 \[
 {\rm metric \ quantities:}\qquad
  \left(\frac{\Omega_\circ^2}{\Omega^2}-1\right),
 b^A, \slashed{g} - \slashed{g}^\circ.
\]
The resulting system of equations to be expressed in the subsections below will be homogeneous in the above quantities,
and Schwarzschild will correspond to the trivial solution.

\subsubsection{Curvature} \label{bisec}
Given (\ref{vacEq}), the null components $\alpha$ and $\underline{\alpha}$ are (in addition to being symmetric) trace free and the Bianchi equations take the following form:
\begin{align}
\slashed{\nabla}_3 \alpha + \frac{1}{2} tr \underline{\chi} \alpha + 2 \underline{\hat{\omega}} \alpha &= -2 \slashed{\mathcal{D}}_2^\star \beta - 3 \hat{\chi} \rho_\circ - 3 \hat{\chi} \left(\rho-\rho_\circ\right) - 3{}^\star \hat{\chi} \sigma  + \left(4\eta + \zeta\right) \hat{\otimes} \beta \nonumber \\
\slashed{\nabla}_4 \beta + 2 tr \chi \beta - \hat{\omega} \beta &= \slashed{div} \alpha + \left(\underline{\eta}^\sharp + 2 \zeta^\sharp\right) \cdot \alpha \nonumber \\
\slashed{\nabla}_3 \beta + tr \underline{\chi} \beta + \underline{\hat{\omega}} \beta &= \slashed{\mathcal{D}}_1^\star \left(-\rho, \sigma\right) + 3 \eta \rho_\circ + 3\eta \left(\rho-\rho_\circ\right) + 3{}^\star \eta \sigma  \boxed{+ 2\hat{\chi}^\sharp \cdot \underline{\beta}} \nonumber \\
\slashed{\nabla}_4(\rho-\rho_\circ)  + \frac{3}{2} tr \chi(\rho-\rho_\circ) &= \slashed{div} \beta + \left(2\underline{\eta} + \zeta, \beta \right) - \frac{1}{2} \left(\underline{\hat{\chi}}, \alpha\right) 
-\frac32\rho_\circ(tr\chi-tr\chi_\circ)
\nonumber \\
\slashed{\nabla}_4 \sigma + \frac{3}{2} tr \chi \sigma &= -\slashed{curl} \beta - \left(2\underline{\eta} + \zeta \right) \wedge \beta  + \frac{1}{2} \underline{\hat{\chi}} \wedge \alpha  
\nonumber 
\end{align}
\begin{align} 
\slashed{\nabla}_3 (\rho-\rho_\circ) + \frac{3}{2} tr \underline{\chi}(\rho-\rho_\circ) &= -\slashed{div} \underline{\beta} - \left(2\eta - \zeta, \underline{\beta}\right) \ \boxed{- \frac{1}{2} \left(\hat{\chi}, \underline{\alpha}\right)}-\frac32\rho_\circ(tr\underline\chi-tr\underline\chi_\circ)\nonumber
\\
&\qquad
-\frac{3}{2} tr \underline{\chi}_\circ \rho_\circ \left(1-\frac{\Omega_\circ^2}{\Omega^2}\right)
\nonumber \\
\slashed{\nabla}_3 \sigma + \frac{3}{2} tr \underline{\chi} \sigma &= -\slashed{curl} \underline{\beta} 
- \left(2\eta - \zeta\right) \wedge \underline{\beta} \ \boxed{ - \frac{1}{2} \hat{\chi} \wedge \underline{\alpha}} \nonumber \\
\slashed{\nabla}_4 \underline{\beta} + tr \chi  \underline{\beta} + \hat{\omega} \underline{\beta} &= \slashed{\mathcal{D}}_1^\star \left(\rho, \sigma\right) - 3 \underline{\eta} \rho_\circ - 3 \underline{\eta} \left(\rho-\rho_\circ\right) + 3{}^\star \underline{\eta} \sigma  + 2 \underline{\hat{\chi}}^\sharp \cdot \beta \nonumber \\
\slashed{\nabla}_3 \underline{\beta} + 2 tr \underline{\chi}  \underline{\beta} - \hat{\underline{\omega}} \underline{\beta} &= - \slashed{div} \underline{\alpha} - \left(\eta^\sharp - 2 \zeta^\sharp\right) \cdot \underline{\alpha} \nonumber \\
\slashed{\nabla}_4 \underline{\alpha} + \frac{1}{2} tr \chi \underline{\alpha} + 2 \hat{\omega} \underline{\alpha} &=  2 \slashed{\mathcal{D}}_2^\star \underline{\beta} - 3 \underline{\hat{\chi}} \rho_\circ - 3  \underline{\hat{\chi}} \left( \rho - \rho_\circ\right) + 3{}^\star \underline{\hat{\chi}} \sigma - \left(4 \underline{\eta} - \zeta\right) \hat{\otimes} \underline{\beta}\nonumber
\end{align}
Here $\xi \cdot \varphi$ denotes the contraction of a symmetric $S^2_{u,v}$-tensor with an $S^2_{u,v}$-vector.
\begin{remark}
The boxed terms are the most weakly decaying towards null-infinity and will be referred to below. Recall also that in the $\mathcal{E}\mathcal{F}$-frame (\ref{frefdef}) we have $\hat{\underline{\omega}} =0$.
\end{remark}

\subsubsection{Ricci-coefficients} \label{sec:rc}
The null-structure equations for the Ricci coefficients take the following form:
\begin{align} \label{3omh}
\frac{1}{2} \slashed{\nabla}_3 \left(\hat{\omega} - \hat{\omega}_\circ\right) &= 2 \left(\eta, \underline{\eta} \right) - | \underline{\eta} |^2 - \left(\rho - \rho_\circ\right) + \rho_\circ \left( \frac{\Omega_{\circ}^2}{\Omega^2} - 1 \right)  
\end{align}
\begin{align} \label{etaeq}
 \slashed{\nabla}_3 \underline{\eta} &= \frac{1}{2} tr \underline{\chi} \left(\eta - \underline{\eta}\right) + \underline{\hat{\chi}} \cdot  \left(\eta - \underline{\eta}\right) + \underline{\beta}\nonumber \\
 \slashed{\nabla}_4  \eta &= \frac{1}{2} tr \chi \left(\underline{\eta} -\eta\right) + \hat{\chi} \cdot \left(\underline{\eta} - \eta\right)  - \beta
 \end{align}
\begin{align} \label{trx3}
\slashed{\nabla}_3 \left(tr \underline{\chi} - tr \underline{\chi}_\circ \right)= - \frac{1}{2}  \left(tr \underline{\chi} + tr \underline{\chi}_\circ \right) \left(tr \underline{\chi} - tr \underline{\chi}_\circ \right) \nonumber \\
 + \frac{1}{2} tr \underline{\chi}_\circ tr \underline{\chi}_\circ  \left( \frac{\Omega_{\circ}^2}{\Omega^2} - 1 \right) - \|\underline{\hat{\chi}} \|^2
\end{align}
\begin{align} \label{trx4}
\slashed{\nabla}_4 \left( tr \chi- tr \chi_\circ \right) =  - \frac{1}{2} \left( tr \chi+ tr \chi_\circ \right)  \left( tr \chi - tr \chi_\circ \right) \phantom{XXXX} \nonumber \\
+\hat{\omega} \left( tr \chi - tr \chi_\circ \right) + tr \chi_\circ \left(\hat{\omega} - \hat{\omega}_\circ\right) 
-  \|{\hat{\chi}} \|^2 
\end{align}
\begin{align} \label{xhat}
\slashed{\nabla}_{3} \underline{\hat{\chi}} &=  - tr \underline{\chi} \ \underline{\hat{\chi}} - \underline{\alpha} \nonumber \\
\slashed{\nabla}_{4} \hat{\chi} &= -tr \chi \ \hat{\chi} + \hat{\omega} \hat{\chi} - \alpha 
\end{align}

\subsubsection{Metric quantities} \label{sec:mq}
The metric quantities satisfy:
\begin{align} \label{4cofa}
 \slashed{\nabla}_4  \left( \frac{\Omega_{\circ}^2}{\Omega^2} - 1 \right) &= - \left(\hat{\omega} - \hat{\omega}_\circ\right)  \frac{\Omega^2_{\circ}}{\Omega^2} =  \left( \hat{\omega}_\circ - \hat{\omega} \right) + \left(\hat{\omega} - \hat{\omega}_\circ\right) \left(1- \frac{\Omega^2_{\circ}}{\Omega^2} \right) 
\end{align}

\begin{align} \label{beq}
\frac{1}{\Omega^2} \partial_u \left(b^A\right) = 2 \left(\eta - \underline{\eta}\right)^A
\end{align}
\begin{align} \label{mei}
\slashed{\nabla}_3 \left(\slashed{g}_{AB} - \slashed{g}^\circ_{AB} \right) = \left(1-\frac{(\Omega_{\circ})^2}{\Omega^2}\right) tr \underline{\chi}_\circ \  \slashed{g}^\circ_{AB}  + \left( tr \underline{\chi} - tr \underline{\chi}_\circ \right)\slashed{g}^\circ_{AB} \nonumber \\
+ 2\hat{\underline{\chi}}_{AB} + \hat{\underline{\chi}}_A^{\phantom{X}C} \left( \slashed{g}^\circ_{CB} - \slashed{g}_{CB}\right) + \hat{\underline{\chi}}_B^{\phantom{X}C} \left( \slashed{g}^\circ_{CA} - \slashed{g}_{CA} \right) 
\end{align}

\subsubsection{The remaining equations}
There are additional null-structure equations, first and foremost elliptic equations on the spheres $S^2_{u,v}$ but also additional transport equations involving curvature. For completeness, we present them in the appendix. These equations will be used in the construction of the data in Section \ref{sec:local}. In the context of our global exponential decay estimates, however, we shall not make use of them, with the exception of (\ref{elliptic}), which we will employ when we estimate differences of solutions. This is not necessary but convenient because it will allow us to close the estimates for differences without having to commute the equations. See also Section \ref{auxquant}. 

\section{A systematic formulation of the equations} \label{sec:equations}
In this section we shall reformulate the above set of equations in a systematic way, so as to reveal and isolate their structure relevant in our present work. 
This reformulation was previewed in Sections~\ref{MOSTBASICSCHEM} and~\ref{NSBE}.
In addition, as 
discussed in Section~\ref{compreserves},
we will select a suitable set of commutation operators which \emph{preserves} this structure of the equations independently of the number of commutations performed.

\subsection{Preliminaries} \label{sec:Gamdef}
Let 
\begin{align} 
\Gamma = \Big\{ \hat{\omega} - \hat{\omega}_\circ,  \eta, \underline{\eta}, tr \underline{\chi} - tr \underline{\chi}_\circ,   tr \chi - tr \chi_\circ,  \hat{\chi} ,  \underline{\hat{\chi}} , \frac{1}{2M} \left[
  \frac{\Omega_{\circ}^2}{\Omega^2} - 1\right] ,  b^A, \slashed{g}_{AB} - \slashed{g}^\circ_{AB} \Big\} \nonumber
\end{align}
and note that all these quantities are zero in Schwarzschild. We decompose $\Gamma$ further into
\begin{align} \label{Gammadeco}
\Gamma_1 &= \overset {(3)}{\Gamma}_1 = \{  \underline{\hat{\chi}} , \slashed{g}_{AB} - \slashed{g}^\circ_{AB} \} \, , \nonumber \\
\Gamma_2 &= \overset {(3)}{\Gamma}_2 \bigcup \overset {(4)}{\Gamma}_2 = \Big\{ \underline{\eta},  tr \underline{\chi} - tr \underline{\chi}_\circ, b \Big\} \bigcup \Big\{  tr \chi - tr \chi_\circ, \hat{\chi},   \frac{1}{2M} \left[
  \frac{\Omega_{\circ}^2}{\Omega^2} - 1\right], \eta \Big\} \nonumber \, , \\ 
\Gamma_3 &= \overset {(3)}{\Gamma}_3 = \{ \hat{\omega} - \hat{\omega}_\circ\} \, .
\end{align}
With a slight abuse of notation, we will denote an arbitrary element of the set $\overset{(.)}\Gamma_p$ again by $\overset{(.)}\Gamma_p$.

The heuristic idea underlying this decomposition is that all elements of the set $\Gamma$ or, in our abuse of notation, all $\Gamma$'s, vanish if the metric were exactly Schwarzschild. Therefore, the $\Gamma$'s will decay in time on the spacetimes that we are about to construct. 
The subscript encodes each quantity's characteristic $r$-decay expected at null-infinity $\mathcal{I}^+$: This simply means that $r^p \|\Gamma_p\|$ will have a finite limit as $r \rightarrow \infty$ for fixed $u$. The superscript $(3)$ or $(4)$ indicates whether the $\Gamma_p$ under consideration satisfies an equation in the null-direction $e_3$ or the $e_4$ direction respectively.\footnote{Some $\Gamma_p$ obey equations in both null-directions, i.e.~strictly speaking the superscript indicates which evolution equation we will exploit in the analysis.}

The reason for the normalization constant $(2M)^{-1}$ for the quantity $\frac{\Omega_{\circ}^2}{\Omega^2} - 1$ has to do with the scaling properties of the various quantities.

Similarly, we introduce the following notation for the curvature components
\begin{align} \label{stabmindec}
\psi_1 = \{ \underline{\alpha} \}  \ \ , \ \ \psi_2 = \{ \underline{\beta} \} \ \ , \ \ \psi_3 = \{ \rho-\rho_\circ, \sigma \} \ \ , \ \ \psi_\frac{7}{2} = \{ \beta \} \ \ , \ \ \psi_4 = \{ \alpha \} \, ,
\end{align}
with the understanding that $r^p \|\psi_p\|$ is expected to have a finite trace (but possibly vanishing) on null-infinity. We call each of the four pairs 
\[
\textrm{$\left(\alpha,\beta\right)$ \ \ , \ \  $\left(\beta, \left(\rho-\rho_\circ,\sigma\right)\right)$ \ \ , \ \  $\left(\left(\rho-\rho_\circ,\sigma\right) , \underline{\beta}\right)$ \ \ , \ \  $\left(\underline{\beta}, \underline{\alpha}\right)$}
\]
 a \emph{Bianchi pair} and denote it by $\left(\uppsi_p,\uppsi^\prime_{p^\prime}\right)$ with $p$ and $p^\prime$ referring to (\ref{stabmindec}).
 
 \begin{remark}
For the spacetimes we construct one actually has that $r^4\|\beta\|$ and $r^5\|\alpha\|$ have finite limits at null-infinity, i.e.~peeling behavior \cite{Wald}. Nevertheless, we have chosen to propagate the weaker decay for $\beta$ and $\alpha$ in (\ref{stabmindec}). The origin of the freedom of which decay bounds to propagate will become clear in the weighted estimates of Section \ref{sec:curvimprove}. Compare with \cite{ChristKlei, Lydia}, and see also \cite{mgrome} for a discussion of the curvature asymptotics arising from physically realistic applications, where it is proven that post Newtonian moments provide a general obstruction to the full peeling behavior on null-infinity.
 \end{remark}

\begin{definition} \label{def:fp}
For $p\geq 0$, we denote by $f_p$ a smooth function depending only on $r$ and satisfying for all $k \in \mathbb{N}_0$ the uniform bound $r^{k+p} | \left(\partial_r \right)^k f_p |\leq C_k$, where $C_k$ depends only on $k$ and $M$. In addition, the weighted smooth $S^2_{u,v}$-tensor $f_p \slashed{g}^\circ_{AB}$ may also be denoted by $f_p$. Note $r^{k+p} \| \left(\partial_r \right)^k f_p\slashed{g}^\circ_{AB} \|_{\slashed{g}^\circ}\leq C_k$ holds.
\end{definition}
Typical examples are
\[
 tr \chi_\circ tr \chi_\circ + \hat{\omega}_\circ = f_2 \ \ \  \textrm{or} \ \ \ 
 \slashed{g}_{AB}^\circ = f_0 \ \ \ or \ \ \ 
 \slashed{g}_{AB}^\circ tr \chi_\circ = f_1 \ \ \  \textrm{or} \ \ \  \rho_\circ = f_3 \, .
 \]
\begin{definition}
We denote by $\Gamma_{p_1} \Gamma_{p_2}$ any finite sum of products of two $\Gamma$'s, with each product being a contraction (with respect to $\slashed{g}$) between two $S^2_{u,v}$-tensors $\Gamma_{p_1}$ and $\Gamma_{p_2}$.\footnote{In the estimates to be proven later we will only employ the formula $\|\Gamma_{p_1} \Gamma_{p_2}\| \leq C \|\Gamma_{p_1}\| \|\Gamma_{p_2}\|$ which allows ourselves to ``forget" about the detailed product structure at this point. It will also be equivalent whether the norms are taken with respect to $\slashed{g}$ or $\slashed{g}^\circ$, as the two metrics will be always at least $C^1$-close in applications. Cf.~bootstrap assumption (\ref{decl4}).} Similarly, we denote by $f_{p_1} \Gamma_{p_2}$ any finite sum of products of an $f_{p_1}$ with some $\Gamma_{p_2}$.
\end{definition}
For instance, using the above notation we may write (\ref{mei}) as
\[
 \slashed{\nabla}_3 \overset {(3)}{\Gamma}_1  = f_1\Gamma_2 + f_0\Gamma_2 + 2\Gamma_1 + \Gamma_1\Gamma_1 =  f_0 \Gamma_2 + f_0 \Gamma_1 + \Gamma_1 \Gamma_1   \, ,
\]
where we used for instance that $2M \cdot tr \chi_\circ=f_1$.
The next proposition expresses all null-structure equations in this schematic form.
\subsection{The null-structure and Bianchi equations}
\label{heretherenorm}
\begin{proposition} \label{uneq}
The null-structure equations of Sections \ref{sec:mq} and \ref{sec:rc} take the following schematic form: 
\begin{align} \label{ns3}
&\slashed{\nabla}_3 \overset {(3)}{\Gamma}_p \phantom{+ c \left[\Gamma_p\right] tr \chi \cdot \Gamma_p \ } = E_3 [\overset {(3)}{\Gamma}_p]   \\ 
&\slashed{\nabla}_4 \overset {(4)}{\Gamma}_p + c [\overset {(4)}{\Gamma}_p ] \ tr \chi \cdot \overset {(4)}{\Gamma}_p  =  E_4 [\overset {(4)}{\Gamma}_p]  \label{ns4}
\end{align}
with
\begin{align} \label{basice3e4}
E_3 [\overset {(3)}{\Gamma}_p] &= \sum_{p_1+p_2 \geq p} \left(f_{p_1} + \Gamma_{p_1}\right) \Gamma_{p_2} + \psi_p \, ,  \\
E_4 [ \overset {(4)}{\Gamma}_p ] &=  \boxed{ \sum_{p_1+p_2= p+1}f_{p_1} \overset {(3)}{\Gamma}_{p_2}} + f_1 \Gamma_3 + f_2 \Gamma_p +  \sum_{p_1+p_2\geq p+2} \Gamma_{p_1} \Gamma_{p_2} + \psi_{\geq p+\frac{3}{2}}  . \label{eq:aux1}
\end{align}
Here the weight factor $c [\overset {(4)}{\Gamma}_p ]$ in (\ref{ns4}) is defined as follows: $c\left[\hat{\chi}\right]=c \left[tr \chi - tr \chi_\circ\right] = 1$, $c\left[\eta\right] = \frac{1}{2}$, $c \left[\frac{\Omega_\circ^2}{\Omega^2} - 1 \right]=0$.  
\end{proposition}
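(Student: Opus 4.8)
The plan is to verify Proposition \ref{uneq} by direct inspection of each null-structure equation written out in Sections \ref{sec:rc} and \ref{sec:mq}, translating every term into the schematic notation $f_p$, $\Gamma_p$, $\psi_p$ according to Definitions \ref{def:fp} and the decomposition (\ref{Gammadeco}), (\ref{stabmindec}). The key observation organizing the bookkeeping is that each of the listed equations is \emph{homogeneous} in the renormalised quantities (Schwarzschild being the trivial solution), so every term on the right-hand side is at least linear in $\Gamma$ or contains a curvature component $\psi$; the content of the proposition is purely the claim that the $p$-weights add up correctly in each term, and that the only borderline contribution in the $\slashed\nabla_4$ equations is the boxed $\sum_{p_1+p_2=p+1} f_{p_1}\overset{(3)}\Gamma_{p_2}$ term together with the $f_1\Gamma_3$ and $f_2\Gamma_p$ linear terms.

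Concretely, I would proceed equation by equation. For the $\slashed\nabla_3$ equations (\ref{3omh}), the $\slashed\nabla_3$ halves of (\ref{etaeq}), (\ref{xhat}), equations (\ref{trx3}), (\ref{beq}), (\ref{mei}): in each case one checks that $p$ equals the subscript assigned to the differentiated quantity in (\ref{Gammadeco}), that the Schwarzschild coefficients ($tr\underline\chi_\circ, \rho_\circ, \Omega_\circ^2$, etc.) carry the weights recorded after Definition \ref{def:fp} ($tr\underline\chi_\circ=f_1$, $\rho_\circ=f_3$, and so on), and that every product term $f_{p_1}\Gamma_{p_2}$ or $\Gamma_{p_1}\Gamma_{p_2}$ has $p_1+p_2\ge p$ — which is why (\ref{basice3e4}) carries a ``$\ge$'' rather than an equality. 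For instance in (\ref{3omh}) one has $p=3$ (since $\hat\omega-\hat\omega_\circ=\overset{(3)}\Gamma_3$), the term $(\eta,\underline\eta)$ is $\Gamma_2\Gamma_2$ with $2+2\ge 3$, $|\underline\eta|^2$ is $\Gamma_2\Gamma_2$, $\rho-\rho_\circ=\psi_3$, and $\rho_\circ(\Omega_\circ^2/\Omega^2-1)=f_3\cdot\Gamma_0$ gives $3+0\ge 3$; the already-done example computation for (\ref{mei}) in the text is exactly the template. The $\psi_p$ term in (\ref{basice3e4}) comes from the single curvature component appearing linearly: $\underline\beta$ in the $\underline\eta$ equation ($\psi_2$, $p=2$), $\underline\alpha$ in the $\underline{\hat\chi}$ equation ($\psi_1$, $p=1$), and nothing lower-weight can appear because the curvature hierarchy (\ref{stabmindec}) forbids it.

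For the $\slashed\nabla_4$ equations — the $\slashed\nabla_4$ halves of (\ref{etaeq}) and (\ref{xhat}), equations (\ref{trx4}) and (\ref{4cofa}) — one first moves the Schwarzschild-linearized principal term to the left to produce the $c[\overset{(4)}\Gamma_p]\,tr\chi\cdot\overset{(4)}\Gamma_p$ structure: this requires rewriting $\frac12(tr\chi+tr\chi_\circ)(tr\chi-tr\chi_\circ) = tr\chi_\circ(tr\chi-tr\chi_\circ) + \frac12(tr\chi-tr\chi_\circ)^2$ and absorbing $tr\chi_\circ(tr\chi-tr\chi_\circ) = (tr\chi - (tr\chi-tr\chi_\circ))(tr\chi-tr\chi_\circ)$ appropriately, with $tr\chi_\circ = \frac12 tr\chi + \ldots$; the constant $c$ is read off as the coefficient ($1$ for $tr\chi-tr\chi_\circ$ and $\hat\chi$ since the structure equations (\ref{trx4}),(\ref{xhat}) carry a full $tr\chi$, $\frac12$ for $\eta$ from (\ref{etaeq}), $0$ for $\Omega_\circ^2/\Omega^2-1$ since (\ref{4cofa}) has no $tr\chi$ factor). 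Then one classifies the remaining right-hand side: the genuinely borderline pieces are linear $f\cdot\overset{(3)}\Gamma$ terms where the $f$ has weight exactly one more than needed — these are boxed — plus the $f_1\Gamma_3$ and $f_2\Gamma_p$ terms (e.g.\ $tr\chi_\circ(\hat\omega-\hat\omega_\circ) = f_1\Gamma_3$ in (\ref{trx4}), $\hat\omega_\circ-\hat\omega = f_2\Gamma_3$... in (\ref{4cofa})); everything else is $\Gamma_{p_1}\Gamma_{p_2}$ with $p_1+p_2\ge p+2$ or a curvature term $\psi_{\ge p+3/2}$ ($\beta=\psi_{7/2}$ in the $\eta$ equation with $p=2$, $\alpha=\psi_4$ in the $\hat\chi$ equation with $p=2$, matching $\ge p+\frac32$). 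I expect the main obstacle to be purely organizational: the numerous Schwarzschild-background coefficients must each be assigned the correct $f_p$ weight and the $r$-decay of products tracked carefully, and the boxed-term extraction in the $\slashed\nabla_4$ equations requires the right algebraic regrouping of the $\frac12(tr\chi+tr\chi_\circ)(tr\chi-tr\chi_\circ)$-type terms — there is no conceptual difficulty, only the risk of a miscounted weight, so I would double-check each equation against the concrete examples already supplied in the text.
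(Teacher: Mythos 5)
Your proposal is the same argument as the paper's: the paper's proof of Proposition \ref{uneq} is literally ``direct inspection of all equations,'' and your equation-by-equation translation into the $f_p$, $\Gamma_p$, $\psi_p$ bookkeeping, including the regrouping $-\tfrac12(tr\chi+tr\chi_\circ)(tr\chi-tr\chi_\circ)+tr\chi\,(tr\chi-tr\chi_\circ)=\tfrac12(tr\chi-tr\chi_\circ)^2$ that produces the $c[\overset{(4)}{\Gamma}_p]$ structure, is exactly that inspection carried out in detail.

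Two of your illustrative weight assignments are off, though, and you should correct them. First, in (\ref{3omh}) the quantity $\frac{1}{2M}\bigl(\frac{\Omega_\circ^2}{\Omega^2}-1\bigr)$ is assigned weight $2$ in the decomposition (\ref{Gammadeco}) (it is an $\overset{(4)}{\Gamma}_2$), so the last term is $f_3\Gamma_2$ with $3+2\ge 3$, not ``$f_3\cdot\Gamma_0$''; your conclusion survives because of the ``$\ge$'', but the label is wrong. Second, and more substantively, the linear term in (\ref{4cofa}), namely $-\frac{1}{2M}(\hat\omega-\hat\omega_\circ)$ in the equation for $\frac{1}{2M}\bigl(\frac{\Omega_\circ^2}{\Omega^2}-1\bigr)$, is \emph{not} of the form $f_2\Gamma_3$ or $f_1\Gamma_3$ as you file it: it is $f_0\overset{(3)}{\Gamma}_3$, i.e.\ precisely the boxed anomalous term of (\ref{eq:aux1}) with $p_1+p_2=0+3=3=p+1$, as Proposition \ref{rem:anoterm} records. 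Mis-filing it as an $f_2$-weighted term would attribute $r^{-5}$ decay to a term that only decays like $r^{-3}$, and the distinction matters downstream: the boxed borderline terms are exactly the ones that cannot be handled by the generic $E_4$ gain and require the separate treatment via Lemma \ref{lem:4dir}, estimate (\ref{vrefe2}), and the explicit estimates (\ref{boxest})--(\ref{boxest3}). With those two classifications fixed, your verification matches the statement as written.
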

\begin{proof}
Direct inspection of all equations. This also reveals that the quantity $f_p$ in Proposition \ref{uneq} is actually always a function, except for equation (\ref{mei}), where $f_0$ stands for the round metric.
\end{proof}

\begin{remark} \label{rem:main}
As a general principle, the right hand side of (\ref{ns3}) ``preserves" the weight in $r$ while the right hand side of (\ref{ns4}) ``improves" it by a power strictly larger than $1$, except for the boxed anomalous term, which however only involves $\overset {(3)}{\Gamma}_p$. See the following Proposition \ref{rem:anoterm}.
\end{remark}
\begin{remark} \label{rem:reno}
The term containing the weight factor $c [\overset {(4)}{\Gamma}_p ]$  in (\ref{ns4}) can be eliminated by considering the  renormalized equation for $r^{2c[\overset {(4)}{\Gamma}_p]} \overset {(4)}{\Gamma}_p$. In view of
\[
\slashed{\nabla}_4 \overset {(4)}{\Gamma}_p + c [\overset {(4)}{\Gamma}_p] tr \chi \cdot \overset {(4)}{\Gamma}_p = r^{-2c[\overset {(4)}{\Gamma}_p]} \slashed{\nabla}_4 ( r^{2c[\overset {(4)}{\Gamma}_p]} \overset {(4)}{\Gamma}_p ) + c[\overset {(4)}{\Gamma}_p] \left(tr \chi - tr \chi_\circ \right) \overset {(4)}{\Gamma}_p \, , 
\]
we may write (\ref{ns4}) as
\[
 \slashed{\nabla}_4 ( r^{2c[\overset {(4)}{\Gamma}_p]} \overset {(4)}{\Gamma}_p ) = r^{2c[\overset {(4)}{\Gamma}_p] } \cdot E_4 [\overset {(4)}{\Gamma}_p] \, .
\]
\end{remark}

We collect the explicit structure of the anomalous term in the following Proposition:
\begin{proposition} \label{rem:anoterm}
The anomalous boxed term appears only in the equation for $\eta$ and the equation for  $ \left(\frac{\Omega^2_\circ}{\Omega^2}-1\right)$. For $\eta$ it has the explicit form 
\[
  \boxed{ \sum_{p_1+p_2=p+1}f_{p_1} \overset {(3)}{\Gamma}_{p_2}}  = \frac{1}{2} tr \chi_\circ \underline{\eta}  \, ,
\]
as seen from (\ref{etaeq}), while for $ \frac{1}{2M} \left(\frac{\Omega^2_\circ}{\Omega^2}-1\right)$ we have the explicit expression
\[
  \boxed{ \sum_{p_1+p_2=p+1}f_{p_1} \overset {(3)}{\Gamma}_{p_2}}  = - \frac{1}{2M} \left(\hat{\omega}-\hat{\omega}_\circ\right)  \, ,
\]
as seen from equation (\ref{4cofa}). Finally, the term $f_1 \Gamma_3$ in (\ref{eq:aux1}) appears only for the $\slashed{\nabla}_4 \left( tr \chi - tr \chi_\circ\right)$ equation. It has the explicit form $f_1 \Gamma_3 = tr \chi_\circ \left(\hat{\omega} - \hat{\omega}_\circ\right)$. 
\end{proposition}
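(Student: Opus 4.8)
The plan is to prove Proposition~\ref{rem:anoterm} by direct inspection, since the boxed term and the term $f_1\Gamma_3$ of $(\ref{eq:aux1})$ occur only inside $E_4[\overset{(4)}{\Gamma}_p]$, never inside $E_3$. By the decomposition $(\ref{Gammadeco})$, the only renormalised Ricci coefficients carrying a $(4)$-superscript are $tr\chi-tr\chi_\circ$, $\hat{\chi}$, $\frac{1}{2M}[\frac{\Omega_\circ^2}{\Omega^2}-1]$ and $\eta$, all of weight $p=2$, governed respectively by equations $(\ref{trx4})$, $(\ref{xhat})$, $(\ref{4cofa})$ and $(\ref{etaeq})$; so it suffices to re-derive $E_4$ in these four cases. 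For each equation I would (i) move the term $-c[\overset{(4)}{\Gamma}_2]\,tr\chi\cdot\overset{(4)}{\Gamma}_2$ appearing on the right-hand side over to the left so as to build the normalised operator of $(\ref{ns4})$; (ii) substitute $tr\chi=tr\chi_\circ+(tr\chi-tr\chi_\circ)$ and $\hat{\omega}=\hat{\omega}_\circ+(\hat{\omega}-\hat{\omega}_\circ)$; and (iii) classify the remaining summands according to the hierarchy in $(\ref{basice3e4})$, using $tr\chi_\circ=f_1$, $\hat{\omega}_\circ=f_2$ and the fact that a numerical constant qualifies as an $f_0$ in the sense of Definition~\ref{def:fp}. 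By construction the boxed term is then precisely any resulting summand of the form $f_{p_1}\overset{(3)}{\Gamma}_{p_2}$ with $p_1+p_2=p+1=3$.

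Carrying out step (iii): for $\eta$ (with $c[\eta]=\tfrac12$) equation $(\ref{etaeq})$ becomes $\slashed{\nabla}_4\eta+\tfrac12 tr\chi\,\eta = \tfrac12 tr\chi\,\underline{\eta}+\hat{\chi}\cdot(\underline{\eta}-\eta)-\beta$, and splitting $\tfrac12 tr\chi\,\underline{\eta}=\tfrac12 tr\chi_\circ\,\underline{\eta}+\tfrac12(tr\chi-tr\chi_\circ)\underline{\eta}$ isolates $\tfrac12 tr\chi_\circ\,\underline{\eta}=f_1\overset{(3)}{\Gamma}_2$ (weight $1+2=3$) as the unique boxed term, the remaining pieces being $\Gamma_2\Gamma_2$ of weight $4$ and $-\beta=\psi_{7/2}\in\psi_{\geq 7/2}$. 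For $\frac{1}{2M}[\frac{\Omega_\circ^2}{\Omega^2}-1]$ (with $c=0$, so no left-hand normalisation is needed) equation $(\ref{4cofa})$ reads $\slashed{\nabla}_4\big(\tfrac{1}{2M}[\tfrac{\Omega_\circ^2}{\Omega^2}-1]\big)=-\tfrac{1}{2M}(\hat{\omega}-\hat{\omega}_\circ)+\tfrac{1}{2M}(\hat{\omega}-\hat{\omega}_\circ)\big(1-\tfrac{\Omega_\circ^2}{\Omega^2}\big)$, and $-\tfrac{1}{2M}(\hat{\omega}-\hat{\omega}_\circ)=f_0\overset{(3)}{\Gamma}_3$ (weight $0+3=3$) is the boxed term while the second summand is $\Gamma_3\Gamma_2$ of weight $5$; this reproduces the two displayed formulae. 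For $\hat{\chi}$ (with $c=1$), $(\ref{xhat})$ gives $\slashed{\nabla}_4\hat{\chi}+tr\chi\,\hat{\chi}=\hat{\omega}\,\hat{\chi}-\alpha$ and $\hat{\omega}\,\hat{\chi}=\hat{\omega}_\circ\hat{\chi}+(\hat{\omega}-\hat{\omega}_\circ)\hat{\chi}=f_2\Gamma_2+\Gamma_3\Gamma_2$ with $-\alpha=\psi_4$, so no $f_{p_1}\overset{(3)}{\Gamma}_{p_2}$ of weight $3$ and no $f_1\Gamma_3$ occurs. Finally, for $tr\chi-tr\chi_\circ$ (with $c=1$), moving $-tr\chi(tr\chi-tr\chi_\circ)$ to the left turns the first term of $(\ref{trx4})$ into $\tfrac12(tr\chi-tr\chi_\circ)^2$, so $E_4=\tfrac12(tr\chi-tr\chi_\circ)^2+\hat{\omega}(tr\chi-tr\chi_\circ)+tr\chi_\circ(\hat{\omega}-\hat{\omega}_\circ)-\|\hat{\chi}\|^2$; here the first, the $(\hat{\omega}-\hat{\omega}_\circ)(tr\chi-tr\chi_\circ)$ and the last are $\Gamma_{p_1}\Gamma_{p_2}$ of weight $\geq 4$, $\hat{\omega}_\circ(tr\chi-tr\chi_\circ)=f_2\Gamma_2$, and $tr\chi_\circ(\hat{\omega}-\hat{\omega}_\circ)=f_1\Gamma_3$ is exactly the $f_1\Gamma_3$ term of $(\ref{eq:aux1})$ and not a boxed term, since $f_1\overset{(3)}{\Gamma}_3$ has weight $1+3=4\neq p+1$. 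This exhausts $\overset{(4)}{\Gamma}_2$ and yields all three assertions.

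I do not expect a genuine obstacle; the statement is pure bookkeeping on top of Proposition~\ref{uneq}. The two points requiring care are: first, one must move the \emph{full} $-c\,tr\chi\cdot\overset{(4)}{\Gamma}_2$ term (not merely its Schwarzschild part) to the left before classifying, since otherwise a spurious borderline term would appear for $tr\chi-tr\chi_\circ$ — the cancellation producing $\tfrac12(tr\chi-tr\chi_\circ)^2$ is the one elementary identity worth writing down explicitly; and second, the $r$-weight assignments $tr\chi_\circ=f_1$, $\hat{\omega}_\circ=f_2$, $\rho_\circ=f_3$, together with the memberships $\underline{\eta}\in\overset{(3)}{\Gamma}_2$ and $\hat{\omega}-\hat{\omega}_\circ\in\overset{(3)}{\Gamma}_3$, must be used consistently so that each product lands in the correct slot. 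The $\slashed{\nabla}_3$ equations need not be examined at all, since they feed only $E_3[\overset{(3)}{\Gamma}_p]$, which by $(\ref{basice3e4})$ contains neither a boxed term nor an $f_1\Gamma_3$.
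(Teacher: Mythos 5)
Your proposal is correct and proceeds exactly as the paper does: Proposition~\ref{rem:anoterm} is established by direct inspection of the four $\slashed{\nabla}_4$ null-structure equations $(\ref{etaeq})$, $(\ref{4cofa})$, $(\ref{xhat})$, $(\ref{trx4})$ for the members of $\overset{(4)}{\Gamma}_2$, after moving the full $c[\overset{(4)}{\Gamma}_p]\,tr\chi$ term to the left and splitting off the Schwarzschild parts $tr\chi_\circ=f_1$, $\hat\omega_\circ=f_2$. Your bookkeeping (including the cancellation yielding $\tfrac12(tr\chi-tr\chi_\circ)^2$ and the weight count distinguishing $f_1\overset{(3)}{\Gamma}_3$ from the boxed $f_{p_1}\overset{(3)}{\Gamma}_{p_2}$ with $p_1+p_2=3$) matches the paper's implicit verification.
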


The general principle of Remark \ref{rem:main} also applies to the Bianchi equations:

\begin{proposition}  \label{uneq2}
With $f_p$ as in Definition \ref{def:fp}, each Bianchi-pair $\left(  \uppsi_p, \uppsi^\prime_{p^\prime} \right)$ satisfies
\begin{align} \label{bi3}
\slashed{\nabla}_3 \uppsi_p  \phantom{a^\prime  x \gamma_4\left(\uppsi^\prime_{p^\prime}\right) tr {\chi} \uppsi^\prime_{p^\prime}}   &= \slashed{\mathcal{D}} \uppsi^\prime_{p^\prime}  + E_{3} \left[\uppsi_p\right]   \\ \label{bi4}
\slashed{\nabla}_4 \uppsi^\prime_{p^\prime} +  \gamma_4\left(\uppsi^\prime_{p^\prime}\right) tr {\chi} \uppsi^\prime_{p^\prime} &= \slashed{\mathcal{D}} \uppsi_p  + E_{4} \left[\uppsi^\prime_{p^\prime}\right]
\end{align}
where $\slashed{\mathcal{D}}$ is to be replaced by the angular operator appearing for the particular curvature component under consideration.\footnote{For instance, $\slashed{\mathcal{D}} \uppsi^\prime_{p^\prime} = -2 \slashed{\mathcal{D}}_2^\star \beta$ in the equation for $\slashed{\nabla}_3 \alpha$ or $\slashed{\mathcal{D}} \uppsi^\prime_{p^\prime}= -\slashed{curl} \beta$ in the equation for $\slashed{\nabla}_4 \sigma$, etc. Cf.~Section \ref{bisec}.} Also $\gamma_{4}\left(\uppsi^\prime_{p^\prime}\right):=\frac{p^\prime}{2}$ except for $\beta$, which has $\gamma_4 \left(\beta\right) = 2$, is the weight of the curvature component under consideration.  The error terms are of the following structure:
\begin{align}
E_{3} \left[\uppsi_p\right] &= f_1 \uppsi_p +  f_3 \Gamma_2 + \sum_{p_1+p_2\geq p}  \Gamma_{p_1} \psi_{p_2}  \, , \label{e3e} \\
E_{4} \left[\uppsi^\prime_{p^\prime}\right] &=  f_2 \uppsi^\prime_{p^\prime} +  f_3 \Gamma_{\min\left(p^\prime,2\right)} + \sum_{p_1+p_2\geq  p^\prime+\frac{3}{2}} \Gamma_{p_1} \psi_{p_2} \, . \label{e4e}
\end{align}
\end{proposition}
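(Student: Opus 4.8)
\textbf{Proof plan for Proposition \ref{uneq2}.}

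The plan is to verify the claimed schematic form by direct inspection of the eight Bianchi equations listed in Section \ref{bisec}, grouping them into the four Bianchi pairs and checking, term by term, that every contribution falls into one of the allowed buckets. First I would fix a Bianchi pair $\left(\uppsi_p,\uppsi'_{p'}\right)$ — say $\left(\alpha,\beta\right)$ with $p=4$, $p'=\tfrac72$ — and isolate the $\slashed\nabla_3\uppsi_p$ and $\slashed\nabla_4\uppsi'_{p'}$ equations. The principal terms $\slashed{\mathcal D}\uppsi'_{p'}$ and $\slashed{\mathcal D}\uppsi_p$ (e.g. $-2\slashed{\mathcal D}_2^\star\beta$ and $\slashed{div}\,\alpha$) are read off directly. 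The remaining terms on the right-hand side must then be matched against (\ref{e3e})–(\ref{e4e}); this requires knowing the $r$-weight $p$ of each $\psi$ from (\ref{stabmindec}), the weight of each $\Gamma$ from (\ref{Gammadeco}), and the weight of each background coefficient (recall $\rho_\circ=f_3$, $tr\chi_\circ=f_1$, $\hat\omega_\circ=f_2$, $tr\underline\chi_\circ=f_1$). For instance, in the $\slashed\nabla_3\alpha$ equation the term $-3\hat\chi\rho_\circ$ is of the form $f_3\,\overset{(4)}{\Gamma}_2$, hence lands in $f_3\Gamma_2$; the term $-3\hat\chi(\rho-\rho_\circ)$ is $\Gamma_2\psi_3$ with $2+3=5\geq 4=p$, hence lands in $\sum_{p_1+p_2\geq p}\Gamma_{p_1}\psi_{p_2}$; the term $\tfrac12 tr\underline\chi\,\alpha$ splits as $\tfrac12 tr\underline\chi_\circ\,\alpha + \tfrac12(tr\underline\chi-tr\underline\chi_\circ)\alpha = f_1\psi_4 + \Gamma_2\psi_4$, and since the coefficient sitting in front of $\uppsi_p$ itself is $f_1$ (weight one), this is exactly the isolated $f_1\uppsi_p$ term of (\ref{e3e}); the cubic-looking $(4\eta+\zeta)\widehat\otimes\beta$ is $\Gamma_2\psi_{7/2}$ with $2+\tfrac72\geq 4$. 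The $2\hat\omega\alpha$ term in the $\slashed\nabla_4\underline\alpha$ equation is handled analogously, noting $\hat\omega = \hat\omega_\circ + (\hat\omega-\hat\omega_\circ) = f_2 + \Gamma_3$, so $2\hat\omega\,\underline\alpha = f_2\uppsi'_{p'} + \Gamma_3\psi_1$, again fitting (\ref{e4e}) since the weight on $\uppsi'_{p'}$ is $f_2$ as claimed.

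The structurally important point to extract carefully is the \emph{weight} $\gamma_4(\uppsi'_{p'})$ of the term proportional to $tr\chi$ on the left of (\ref{bi4}). Here I would again split $tr\chi = tr\chi_\circ + (tr\chi-tr\chi_\circ)$: the leading-order piece $\gamma_4(\uppsi'_{p'})\,tr\chi_\circ\,\uppsi'_{p'}$ is kept on the left (it is the borderline term whose cancellation motivates the weighted energy estimate of Section \ref{NSBE}), while $\gamma_4(\uppsi'_{p'})(tr\chi-tr\chi_\circ)\uppsi'_{p'} = \Gamma_2\uppsi'_{p'}$ has total weight $2+p'\geq p'+\tfrac32$ and is absorbed into the error $E_4$. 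Reading the coefficient of $tr\chi$ off the original equations of Section \ref{bisec} gives $\gamma_4(\alpha)=\tfrac12$ (consistent with $\tfrac{p'}{2}$ for $\underline\alpha$ with $p'=1$ on the other side — care is needed because the roles of primed/unprimed swap between pairs), $\gamma_4\big((\rho-\rho_\circ,\sigma)\big)=\tfrac32$, $\gamma_4(\underline\beta)=1$, and the exceptional value $\gamma_4(\beta)=2$ which comes from the $2tr\chi\,\beta$ coefficient in the $\slashed\nabla_4\beta$ equation rather than from $\tfrac{p'}{2}=\tfrac74$; I would flag this mismatch explicitly as the one case where the naive formula $\tfrac{p'}{2}$ fails. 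Similarly, the $\hat\omega$ terms on the left of the $\slashed\nabla_4\beta$ and $\slashed\nabla_4\underline\beta$ equations are $f_2\uppsi'_{p'}$-type and go into the $f_2\uppsi'_{p'}$ slot of (\ref{e4e}).

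The only genuinely substantive check — and the step I expect to be the main obstacle — concerns the boxed, most-slowly-decaying terms in Section \ref{bisec}: $2\hat\chi^\sharp\cdot\underline\beta$ in $\slashed\nabla_3\beta$, $-\tfrac12(\hat\chi,\underline\alpha)$ in $\slashed\nabla_3(\rho-\rho_\circ)$, and $-\tfrac12\hat\chi\wedge\underline\alpha$ in $\slashed\nabla_3\sigma$. These are of the form $\Gamma_2\psi_2$, $\Gamma_2\psi_1$, $\Gamma_2\psi_1$ respectively, with total weights $2+2=4$, $2+1=3$, $2+1=3$ — and one must check these are $\geq p$ where $p$ is the weight of the \emph{unprimed} component $\uppsi_p$ in the pair. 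For the pair $\big((\rho-\rho_\circ,\sigma),\underline\beta\big)$ one has $p=3$ (the weight of $\underline\beta$, playing the role of $\uppsi_3$ in its $\slashed\nabla_3\underline\beta$ equation) — wait, one must be attentive to which component is $\uppsi_p$: in $\slashed\nabla_3(\rho-\rho_\circ) = -\slashed{div}\underline\beta - \dots$ the evolved quantity $\rho-\rho_\circ$ is $\uppsi_p$ with $p=3$, so we need $\Gamma_2\psi_1$ to have weight $\geq 3$, i.e. $2+1=3\geq 3$: borderline but admissible. These boxed terms are precisely the ones living at the edge of the $p$-hierarchy, so the content of the proposition is really the assertion that even they obey $p_1+p_2\geq p$; the computation fills in the specific $\rho_\circ$-renormalisation terms (the $-\tfrac32\rho_\circ(tr\chi-tr\chi_\circ)$, $-\tfrac32\rho_\circ(tr\underline\chi-tr\underline\chi_\circ)$, and $-\tfrac32 tr\underline\chi_\circ\rho_\circ(1-\Omega_\circ^2/\Omega^2)$ terms), each of which is $f_3\Gamma_2$ and so lands in the $f_3\Gamma_{\min(p',2)}$ or $f_3\Gamma_2$ slot. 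I would organise the write-up as a short table, one row per Bianchi equation, listing the principal term, the $f_p\uppsi$ or $f_p\uppsi'$ term, the $f_3\Gamma$ term, and the remaining $\Gamma\psi$ products with their weights, so the reader can verify (\ref{e3e})–(\ref{e4e}) by inspection; the proof itself then reduces to the single sentence "by direct inspection of the equations of Section \ref{bisec} together with the weight assignments (\ref{Gammadeco}), (\ref{stabmindec}) and Definition \ref{def:fp}."
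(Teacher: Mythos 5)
Your proposal is correct and coincides with the paper's own proof, which consists of exactly this term-by-term verification (the paper records it simply as ``direct inspection of all equations''); your identification of the exceptional weight $\gamma_4(\beta)=2$, the role of the $\min(p',2)$ in the $f_3\Gamma$ slot, and the borderline boxed terms $\Gamma_2\psi_1$, $\Gamma_2\psi_2$ matches the intended argument. Two cosmetic points: the statement keeps the full $tr\chi$ (not just $tr\chi_\circ$) on the left of (\ref{bi4}), so no splitting is needed there, and your ``$\gamma_4(\alpha)=\tfrac12$'' should read $\gamma_4(\underline\alpha)=\tfrac12$.
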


\begin{proof}
Direct inspection of all equations. 
\end{proof}

Again, the $tr \chi$-term in (\ref{bi4}) can be eliminated using the renormalization of Remark \ref{rem:reno}. We note that $E_3\left[\uppsi_p\right]$ typically (i.e.~except for 
the boxed terms in the equations for $\rho$ and $\sigma$ of Section \ref{bisec}) admits better decay than stated above. In particular, for $E_3\left[\alpha\right]$ the sum starts at $p+1$ while for $E_3 \left[\beta\right]$ the sum starts at $p+\frac{1}{2}$. In Section \ref{errorest}, we will actually revisit the individual expression for the curvature components when we prove estimates for the Bianchi equations. 
\subsection{Commutation} \label{sec:commute}
The fundamental goal of this section is to show that the structure of the equations observed in Propositions \ref{uneq} and \ref{uneq2} is preserved under appropriate commutations. Our formalism to commute is slightly non-standard and requires some explanation.

Let $\mathfrak{D} = \{ M \slashed{\nabla}_3, r \slashed{\nabla}_4, r \slashed{\nabla} \}$ denote a collection of operators acting on covariant $S^2_{u,v}$-tensors of rank $n$. We also define the subsets $\mathfrak{D}_{\nearrow} = \{ r \slashed{\nabla}_4, r \slashed{\nabla} \}$ and $\mathfrak{D}_{\nwarrow} = \{  M \slashed{\nabla}_3, r \slashed{\nabla} \}$. 
Clearly, the $M\slashed{\nabla}_3$ and $r\slashed{\nabla}_4$ operators do not change the nature of a tensor, while applying $\slashed{\nabla}$ changes a $\left(0,N\right)$-tensor into a $\left(0,N+1\right)$-tensor. In our scheme, we estimate higher angular derivatives of a curvature component $u_{A_1...A_N}$ by considering the evolution equation for the covariant $(N+1)$-tensor $\slashed{\nabla}_B u_{A_1...A_N}$. The point here is that the $L^2 \left(S^2_{u,v}\right)$-norm  of the latter is equivalent (modulo lower order terms of type $\|u\|^2$, which can be thought as having been estimated in the previous step) to the $L^2 \left(S^2_{u,v}\right)$-norm of all angular-derivatives applied to $u_{A_1...A_N}$, see (\ref{eq:tec}).\footnote{Alternatively, one could commute the null-decomposed equations by a basis of Schwarzschildean angular momentum operators (keeping the rank of the quantities). Indeed, in the spherically-symmetric case these operators would trivially commute, while in our approach there would still be a lower order error-term present. However, given only asymptotic-flatness, where commutation with angular momentum operators would also generate errors, the commutation formula for $r \slashed{\nabla}$ is in fact somewhat cleaner (as is manifest by the following Lemma \ref{comlem}) and the additional lower order error-term is easily 
controlled.\label{footnotecom}} 

The reason for introducing these operators is the simple 
\begin{quotation}
\noindent
{\it Commutation Principle: One expects to prove the same decay bounds for $\mathfrak{D} \Gamma_p$ as for $\Gamma_p$ and the same decay bounds for $\mathfrak{D}\psi_p$ as for $\psi_p$. } 
\end{quotation}

This principle serves as a useful guide to interpret the structure of the terms appearing in the equations.

Finally, a further remark about our schematic notation: If $\xi$ is a $S^2_{u,v}$-tensor of rank $n$, we denote by $\mathfrak{D}^k \xi$ any fixed $k$-tuple $\mathfrak{D}_k \mathfrak{D}_{k-1}...\mathfrak{D}_1 \xi$ of operators applied to $\xi$, where each $\mathfrak{D}_i \in  \{ M \slashed{\nabla}_3, r \slashed{\nabla}_4, r \slashed{\nabla} \}$.  Note that the rank of the tensor $\mathfrak{D}^k \xi$ depends on the number of angular operators in the $k$-tuple.

For instance, if $\mathfrak{D}^k$ contains $i$ angular operators, then $\left(\mathfrak{D}^k \underline{\alpha}\right)_{C_1...C_iAB}$ is a covariant $S^2_{u,v}$-tensor of rank $i+2$ which is symmetric and traceless with respect to its last two indices. Similarly, $\left(\mathfrak{D}^k \underline{\beta}\right)_{C_1...C_iB}$ is a covariant $S^2_{u,v}$-tensor of rank $i+1$. For the latter, the operators $\slashed{div}$, $\slashed{curl}$ are always defined with respect to the \emph{last} index. Similarly, the operator $\slashed{\mathcal{D}}^\star_2$ can be generalized as $\left(-2\slashed{\mathcal{D}}^\star_2\left(\mathfrak{D}^k \underline{\beta}\right)\right)_{C_1...C_iAB} $
\[ 
= \slashed{\nabla}_A \left(\mathfrak{D}^k \underline{\beta}\right)_{C_1...C_iB} + \slashed{\nabla}_B \left(\mathfrak{D}^k \underline{\beta}\right)_{C_1...C_iA} - \slashed{g}_{AB} \left[\slashed{\nabla}^D \left(\mathfrak{D}^k \underline{\beta}\right)_{C_1...C_iD}\right] \, ,
\]
mapping to a covariant $S^2_{u,v}$-tensor of rank $i+2$ which is symmetric and traceless with respect to its last two indices.

\subsubsection{The Commutation Lemma}
We recall Lemma 7.3.3 of \cite{ChristKlei}, stated and proven there for a general null-frame. In our  gauge, some of the terms in fact vanish; we have kept those below (but crossed them out) to allow direct comparison with \cite{ChristKlei}.
\begin{lemma} \label{commutelemma}  \label{comlem}
Let $\xi_{C_1 \ ... \  C_N}$ be a $\left(0,N\right)$-tensor on $S^2_{u,v}$. Then
\begin{align}
\left[ \slashed{\nabla}_4, \slashed{\nabla}_B \right] \xi_{A_1... A_k}  &=  F_{4 B A_1 ... A_k} - \left(\hat{\chi}_{B}^{\phantom{B}C} + \frac{1}{2} tr \chi \slashed{g}_{B}^{\phantom{B}C}  \right)  \slashed{\nabla}_C \xi_{A_1\ldots A_k} \, ,
\nonumber \\
\left[ \slashed{\nabla}_3, \slashed{\nabla}_B \right] \xi_{A_1... A_k}  &= F_{3 B A_1... A_k}- \left(\underline{\hat{\chi}}_{B}^{\phantom{B}C} +  \frac{1}{2} tr \underline{\chi} \slashed{g}_{B}^{\phantom{B}C}  \right) \slashed{\nabla}_C \xi_{A_1... A_k} \, ,
\nonumber \\
\left[ \slashed{\nabla}_3 , \slashed{\nabla}_4\right] \xi_{A_1... A_k} &= F_{3 4 A_1 ... A_k}  \,  \nonumber
\end{align}
where
\begin{align}
F_{4 B A_1 ... A_k} &= \xcancel{ \left(\underline{\eta}_B +\zeta_B\right)} \slashed{\nabla}_4 \xi_{A_1... A_k}  \nonumber \\ &+ \sum_{i=1}^k \Big( \chi_{A_iB}\underline{\eta}_C - \chi_{BC} \underline{\eta}_{A_i} + \epsilon_{A_i C} {}^\star \beta_B\Big) \xi^{\phantom{A_1 ... A_{i-1}} C}_{A_1 ... A_{i-1} \phantom{C} A_{i+1} \ldots A_k} \, , \nonumber 
\end{align}
\begin{align}
F_{3 B A_1 ... A_k} &=  \left(\eta_B -\zeta_B\right) \slashed{\nabla}_3 \xi_{A_1... A_k}  \nonumber \\ &+ \sum_{i=1}^k \Big( \underline{\chi}_{A_iB}{\eta}_C -  \underline{\chi}_{BC}\eta_{A_i} - \epsilon_{A_i C} {}^\star \underline{\beta}_B\Big) \xi^{\phantom{A_1 ... A_{i-1}} C}_{A_1 ... A_{i-1} \phantom{C} A_{i+1}  \ldots A_k} \, , \nonumber
\end{align}
\begin{align}
F_{3 4 A_1 ... A_k} &= \boxed{ \hat{\omega} \slashed{\nabla}_3 \xi_{A_1... A_k} } - \xcancel{ \hat{\underline{\omega}} } \cdot \slashed{\nabla}_4 \xi_{A_1... A_k} + \left(\eta^B - \underline{\eta}^B\right) \slashed{\nabla}_B \xi_{A_1... A_k} \nonumber \\  &+ 2 \sum_{i=1}^k \Big( \underline{\eta}_{A_i} \eta_C - \eta_{A_i} \underline{\eta}_C + \epsilon_{A_iC} \sigma \Big) \xi^{\phantom{A_1 ... A_{i-1}} C}_{A_1 ... A_{i-1} \phantom{C} A_{i+1}  \ldots A_k} \,  \nonumber
\end{align}
and we crossed out ($
\xcancel{A}$) terms which vanish in our choice of frame. In addition,
\begin{align}
\left[ \slashed{\nabla}_A, \slashed{\nabla}_B \right] \xi_{C_1...C_k} &= K \sum_{i=1}^k \left( \slashed{g}_{C_i B} \xi_{C_1...C_{i-1}AC_{i+1}...C_k} - \slashed{g}_{C_i A} \xi_{C_1...C_{i-1}BC_{i+1}...C_k}\right) \nonumber
\end{align}
with $K$ the Gauss curvature on the spheres $S^2_{u,v}$.
\end{lemma}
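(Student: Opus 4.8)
The statement to prove is Lemma~\ref{commutelemma} (the Commutation Lemma), which is essentially a restatement of Lemma 7.3.3 of Christodoulou--Klainerman, specialized to the gauge at hand. The plan is to derive these commutator formulas directly from the definition of the projected covariant derivatives $\slashed{\nabla}_3$, $\slashed{\nabla}_4$, $\slashed{\nabla}$, using the structure equations relating the spacetime connection $\nabla$ to the intrinsic connection on the spheres $S^2_{u,v}$, and then simply observe which terms drop out because $\hat{\underline{\omega}} = 0$ and $\zeta = -\underline{\eta}$ in our frame.

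First I would recall that, for an $S^2_{u,v}$-tangent tensor $\xi$, the projected derivative $\slashed{\nabla}_X \xi$ is the orthogonal projection onto $TS^2_{u,v}$ of the ambient $\nabla_X \xi$ (extended arbitrarily off the spheres), for $X \in \{e_3, e_4\}$, while $\slashed{\nabla}$ is the Levi-Civita connection of $\slashed{g}$ on each sphere. The commutators $[\slashed{\nabla}_4, \slashed{\nabla}_B]$ etc.\ are then computed by writing out both orders of differentiation, using that the ambient curvature tensor $R$ contracts to produce the null curvature components $\alpha$, $\beta$, $\rho$, $\sigma$ (per the definitions \eqref{curvdefins}), and that the failure of $e_4$ (resp.\ $e_3$) to be parallel along the spheres is measured by the second fundamental forms $\chi$ (resp.\ $\underline\chi$) and the torsion-type coefficients $\eta$, $\underline\eta$, $\zeta$. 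Concretely, one uses $\nabla_{e_A} e_4 = \chi_A{}^B e_B + \dots$ and the analogous formulas for $\nabla_{e_A} e_3$, $\nabla_{e_4} e_A$, $\nabla_{e_3} e_A$ expressed through the Ricci coefficients \eqref{RicC}. The purely angular commutator $[\slashed{\nabla}_A, \slashed{\nabla}_B]$ is just the Gauss equation on the $2$-sphere, giving the Gauss curvature $K$.

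The key steps in order: (1) write $\slashed{\nabla}_4 \slashed{\nabla}_B \xi - \slashed{\nabla}_B \slashed{\nabla}_4 \xi$ in terms of ambient covariant derivatives plus correction terms coming from the projection, collecting the $\nabla^2$-antisymmetrization into the ambient Riemann tensor; (2) decompose the resulting curvature contraction into null components, which for the $e_4$--$e_B$ commutator produces the ${}^\star\beta$ term displayed; (3) collect the connection-coefficient corrections into the $F$-tensors, using $\nabla_{e_A}e_4$, $\nabla_{e_4}e_A$ and the relation $2\zeta = \eta - \underline\eta$ (valid in the irregular double-null frame, cf.\ the footnote), then re-express in the regular $\mathcal{EF}$-frame where $\zeta = -\underline\eta$; (4) repeat verbatim for $[\slashed{\nabla}_3, \slashed{\nabla}_B]$ with $3 \leftrightarrow 4$ and $\chi \leftrightarrow \underline\chi$, $\eta \leftrightarrow \underline\eta$, $\beta \leftrightarrow \underline\beta$; (5) for $[\slashed{\nabla}_3, \slashed{\nabla}_4]$, use the equations for $\nabla_{e_3} e_4$ and $\nabla_{e_4} e_3$ and the curvature component $\sigma$ (the Hodge-dual $\rho$), obtaining the $F_{34}$-tensor; (6) cross out every term containing $\hat{\underline\omega}$ (which vanishes by \eqref{RicC}) and $\underline\eta + \zeta$ (which vanishes since $\zeta = -\underline\eta$), leaving the boxed $\hat\omega \slashed{\nabla}_3\xi$ term in $F_{34}$ and the $(\eta - \zeta) = 2\eta - \underline\eta + \dots$ simplifications; (7) state the Gauss-equation commutator with $K$.

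The main obstacle — really a bookkeeping obstacle rather than a conceptual one — is step (3)/(4): carefully tracking the Ricci-coefficient corrections arising from the fact that $e_3$, $e_4$ are \emph{not} the irregular null-coordinate frame $\frac{1}{\Omega}\partial_u$, $\frac{1}{\Omega}(\partial_v + b^A e_A)$ used in \cite{ChristKlei, formationofbh}, but the \emph{regular} normalized frame \eqref{frefdef} with $g(e_3,e_4) = -2$ and $\nabla_{e_3} e_3 = 0$. One must verify that the rescaling by $\Omega$ between the two frames, together with the identities $\hat{\underline\omega} = 0$, $\zeta = -\underline\eta$, $\hat\omega = e_4(\Omega^2)/\Omega^2$, produces exactly the claimed cancellations and leaves precisely the stated (boxed or non-crossed-out) terms; since \cite{ChristKlei} Lemma 7.3.3 already gives the general-frame version, the honest work here is just substituting our frame normalizations and simplifying, so the proof reduces to ``direct inspection'' after recalling the general formula — which is exactly how the authors will phrase it.
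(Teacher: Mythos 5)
Your proposal is correct, and it matches the paper's treatment: the paper gives no independent derivation but simply recalls Lemma 7.3.3 of Christodoulou--Klainerman for a general null frame and crosses out the terms ($\hat{\underline{\omega}}$ and $\underline{\eta}_B+\zeta_B$) that vanish in the regular $\mathcal{EF}$-frame, exactly as you anticipate in your final paragraph. Your more detailed sketch of re-deriving the commutators from the projected connection and the null decomposition of the Riemann tensor is a sound (if unnecessary, given the citation) elaboration of the same argument.
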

Note that in our schematic notation we have
\begin{align}
F_{4BA_1...A_k} &=  \left( f_1 \Gamma_2 + \Gamma_2 \Gamma_2 + \psi_\frac{7}{2}\right) \xi_{A_1...A_k} \,  ,
\nonumber \\
F_{3BA_1...A_k} &= \Gamma_2 \mathfrak{D}_\nwarrow u_{A_1...A_k} + \left( f_1 \Gamma_2 + \Gamma_1 \Gamma_2 + \psi_2\right) \xi_{A_1...A_k} \,  , \nonumber \\
F_{34A_1...A_k} &=  \left(f_2 + \Gamma_2 + \Gamma_3 \right) \mathfrak{D}_\nwarrow \xi_{A_1...A_k} +  \left( \Gamma_2 \Gamma_2 + \psi_3\right) \xi_{A_1...A_k} \,  ,
\nonumber \\
K &= 
f_2 + f_1 \Gamma_2 + \Gamma_1 \Gamma_2 + \Gamma_2 \Gamma_2 + \psi_3 \, , \nonumber
\end{align}
where we have written the Gauss curvature $K$ of (\ref{Gauss}) in schematic notation. In particular, the only ``linear" (in the sense that $\xi$ does not multiply a term which is zero in Schwarzschild) term in the $F$'s is the boxed term in $F_{34}$ whose presence (in particular the sign of its coefficient $\hat{\omega}$) will be seen as a manifestation of the redshift.
\subsubsection{The commuted Bianchi equations}
Using the Lemma we shall now establish that the structure of the Bianchi equations is preserved under ``commutation" with operators from $\mathfrak{D}$. Beware that the operator $\slashed{\nabla}$ actually changes the order of the tensor as explained at the beginning of Section \ref{sec:commute}. Note that we certainly have for $s \geq 1$
\[
\mathfrak{D}^s f_{p}  = f_p
\]
in our notation. 

\begin{proposition} \label{comsum}
For any positive integer $k$, the commuted equations for the Bianchi-pairs take the following form:
\begin{align}
\slashed{\nabla}_3 \left(\mathfrak{D}^k \uppsi_p\right)  \phantom{a^\prime  x \gamma_4\left(\uppsi^\prime_{p^\prime}\right) tr {\chi}  \left(\mathfrak{D}^k \uppsi^\prime_{p^\prime}\right)}   &= \slashed{\mathcal{D}}  \left(\mathfrak{D}^k\uppsi^\prime_{p^\prime}\right)  + E_{3} \left[\mathfrak{D}^k \uppsi_p\right]  \, ,   \label{cobi1}
 \\
\slashed{\nabla}_4  \left(\mathfrak{D}^k \uppsi^\prime_{p^\prime} \right) +  \gamma_4\left(\uppsi^\prime_{p^\prime}\right) tr {\chi}  \left(\mathfrak{D}^k \uppsi^\prime_{p^\prime} \right) &= \slashed{\mathcal{D}}  \left(\mathfrak{D}^k \uppsi_p \right)  + E_{4} \left[\mathfrak{D}^k \uppsi_{p^\prime}^\prime\right] \, , \label{cobi2}
\end{align}
with the error terms 
\begin{align} \label{3dirbu}
E_{3} \left[\mathfrak{D}^{k} \uppsi_p\right] = \mathfrak{D} \left(E_{3} \left[\mathfrak{D}^{k-1} \uppsi_p \right]  \right) + \left(f_1 + r \Gamma_2 \right) \left(\mathfrak{D}^{k} \uppsi_p + \boxed{\mathfrak{D}^{k} \uppsi^\prime_{p^\prime}} \ \right)  \nonumber \\
 + \left(f_1 + \Lambda_1 + \Lambda_2 \right) \left( \mathfrak{D}^{k-1} \uppsi_p + \boxed{\mathfrak{D}^{k-1}  \uppsi^\prime_{p^\prime}} \ \right) \, ,
 \end{align}
\begin{align} \label{4dirbu}
E_{4} \left[\mathfrak{D}^{k} \uppsi^\prime_{p^\prime}\right]=   \mathfrak{D} \left( E_{4} \left[\mathfrak{D}^{k-1}  \uppsi^\prime_{p^\prime} \right] \right) + \left[f_2 + \Gamma_2\right] \left(\mathfrak{D}^{k} \uppsi_p + \boxed{ \mathfrak{D}^{k} \uppsi^\prime_{p^\prime}} \ \right) \nonumber \\
+ \left(f_1+ \Lambda_1 \right) \mathfrak{D}^{k-1}  \uppsi_p +\left(f_2+ \Lambda_2 \right) \boxed{\mathfrak{D}^{k-1}  \uppsi^\prime_{p^\prime}}  + f_0 E_{4} \left[\mathfrak{D}^{k-1}  \uppsi^\prime_{p^\prime} \right] \, ,
\end{align}
where $\Lambda_1$ and $\Lambda_2$ are expressions of the form
\begin{align}
\Lambda_1 &=  r \psi_{\frac{7}{2}} + r \psi_2 + \psi_2 + f_1 \Gamma_1 + r\left(f_0 + \Gamma_1\right) \left[  r \Gamma_2 \left(f_1+\Gamma_1 + \Gamma_2\right)  \right] \, , \nonumber \\
\Lambda_2 &= \mathfrak{D}\Gamma_2 + f_0 \Gamma_2 + r \Gamma_1 \Gamma_2 + r \Gamma_2 \Gamma_2 + r \psi_{\frac{7}{2}} + \psi_3  \, .
\end{align}
\end{proposition}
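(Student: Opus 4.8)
The plan is to prove Proposition~\ref{comsum} by induction on the number of commutations $k$, using Proposition~\ref{uneq2} as the base case $k=0$ (where all error terms $\Lambda_1$, $\Lambda_2$ and the ``$\mathfrak{D}(E_3[\cdots])$''-type terms are simply absent) and the Commutation Lemma~\ref{comlem} to pass from $k-1$ to $k$. The inductive step amounts to applying one operator $\mathfrak{D}_k \in \{M\slashed{\nabla}_3, r\slashed{\nabla}_4, r\slashed{\nabla}\}$ to the equations \eqref{cobi1}--\eqref{cobi2} at level $k-1$ and rearranging everything back into the stated schematic form. The main structural point to verify is that the \emph{principal} part transforms correctly: the operator $\slashed{\mathcal{D}}$ on the right commutes with $M\slashed{\nabla}_3$, $r\slashed{\nabla}_4$, $r\slashed{\nabla}$ up to curvature-weighted commutator terms, and similarly the weight $\gamma_4(\uppsi'_{p'})\,tr\chi$ in \eqref{cobi2} produces, when hit by $\mathfrak{D}_k$, terms of the form $\mathfrak{D}_k(tr\chi)\,\mathfrak{D}^{k-1}\uppsi'_{p'}$ and $tr\chi\,\mathfrak{D}^k\uppsi'_{p'}$; the first is absorbed since $\mathfrak{D}_k(tr\chi) = \mathfrak{D}_k(tr\chi - tr\chi_\circ) + \mathfrak{D}_k(tr\chi_\circ)$, with the renormalized piece being an element of $\mathfrak{D}\Gamma_2$ (hence in $\Lambda_2$) and the Schwarzschild piece being of type $f_2$ (hence contributing to the $f_2 + \Gamma_2$ prefactor of $\mathfrak{D}^k\uppsi'_{p'}$), and the second is precisely the weight term kept on the left after renormalization.

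Concretely, I would organize the step as follows. First, I commute $\mathfrak{D}_k = r\slashed{\nabla}$ with \eqref{cobi1}: the commutator $[\slashed{\nabla}_3, r\slashed{\nabla}]$ is governed by $F_{3BA_1\cdots}$ from Lemma~\ref{comlem}, which in schematic form equals $\Gamma_2 \mathfrak{D}_\nwarrow(\mathfrak{D}^{k-1}\uppsi_p) + (f_1\Gamma_2 + \Gamma_1\Gamma_2 + \psi_2)\mathfrak{D}^{k-1}\uppsi_p$; the leading piece $\Gamma_2\mathfrak{D}_\nwarrow(\mathfrak{D}^{k-1}\uppsi_p)$ is where the $r\Gamma_2$-factor multiplying $\mathfrak{D}^k\uppsi_p$ in \eqref{3dirbu} originates (noting $\mathfrak{D}_\nwarrow = \{M\slashed{\nabla}_3, r\slashed{\nabla}\}$ and that $\slashed{\nabla}_3 \mathfrak{D}^{k-1}\uppsi_p$ can be re-expressed via the level-$(k-1)$ equation, feeding back $\slashed{\mathcal{D}}\mathfrak{D}^{k-1}\uppsi'_{p'}$ — this is exactly the origin of the boxed $\mathfrak{D}^k\uppsi'_{p'}$, $\mathfrak{D}^{k-1}\uppsi'_{p'}$ terms!), while the lower-order piece contributes to the $(f_1 + \Lambda_1 + \Lambda_2)$-prefactor of $\mathfrak{D}^{k-1}\uppsi_p$. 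Next, the term $\mathfrak{D}_k(E_3[\mathfrak{D}^{k-1}\uppsi_p])$ is kept verbatim as the first term of \eqref{3dirbu}, and one checks inductively (using $\mathfrak{D}^s f_p = f_p$ and the derivative bounds in Definition~\ref{def:fp}, together with $[\mathfrak{D}_k, \slashed{\nabla}]$ applied to the angular-derivative pieces of $E_3$) that this remains of the claimed type. The cases $\mathfrak{D}_k = M\slashed{\nabla}_3$ and $\mathfrak{D}_k = r\slashed{\nabla}_4$ are handled the same way: for $M\slashed{\nabla}_3$ one uses $[\slashed{\nabla}_3,\slashed{\nabla}_3]=0$ so no genuine commutator, only the $\mathfrak{D}$ hitting the error terms and the coefficient functions; for $r\slashed{\nabla}_4$ applied to \eqref{cobi1} one picks up $[\slashed{\nabla}_4,\slashed{\nabla}_3]$ from $F_{34}$, whose boxed linear term $\hat\omega\slashed{\nabla}_3\xi$ is the manifestation of the redshift and is absorbed into the $(f_2 + \Gamma_2)$ or $(f_1 + r\Gamma_2)$ prefactors after writing $\hat\omega = \hat\omega_\circ + (\hat\omega - \hat\omega_\circ) = f_2 + \Gamma_3$ (note: $\Gamma_3$ is admissible here since it is dominated by $\Gamma_2$-type weights in the relevant range).

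The careful $r$-weight bookkeeping is the technical heart: one must check at every stage that, after pulling $\mathfrak{D}_k$ through the equation and re-expressing $\slashed{\nabla}_3$- or $\slashed{\nabla}_4$-derivatives of the already-commuted quantities via the level-$(k-1)$ equations, every newly-generated term can be classified as either (i) a $\mathfrak{D}$ applied to a level-$(k-1)$ error term, (ii) a coefficient of type $f_1$, $f_2$, $r\Gamma_2$, or $\Gamma_2$ multiplying $\mathfrak{D}^k\uppsi_p$ or the boxed $\mathfrak{D}^k\uppsi'_{p'}$, or (iii) a coefficient of type $f_1 + \Lambda_1$, $f_2 + \Lambda_2$ multiplying $\mathfrak{D}^{k-1}\uppsi_p$ or $\mathfrak{D}^{k-1}\uppsi'_{p'}$, with $\Lambda_1, \Lambda_2$ exactly as displayed. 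In particular the definitions of $\Lambda_1$ and $\Lambda_2$ are \emph{designed} to be closed under the operations generated by this induction — e.g. $\Lambda_2$ contains $\mathfrak{D}\Gamma_2$, and since $\mathfrak{D}(\mathfrak{D}\Gamma_2)$ is again schematically controllable via the null structure equations of Proposition~\ref{uneq} (which feed curvature $\psi$ and quadratic $\Gamma\cdot\Gamma$ terms back in), the list does not grow. I expect the main obstacle to be precisely this verification of closure of $\{\Lambda_1, \Lambda_2\}$: one must confront the null structure equations \eqref{ns3}--\eqref{ns4} to see that $\mathfrak{D}\Gamma_2$ produces nothing worse than what is already listed, handle the Gauss curvature term $K$ (which enters through $[\slashed{\nabla}_A,\slashed{\nabla}_B]$ when two angular commutations stack up and is itself of type $f_2 + f_1\Gamma_2 + \Gamma_1\Gamma_2 + \Gamma_2\Gamma_2 + \psi_3$), and keep scrupulous track of the powers of $r$ so that, for instance, $r\slashed{\nabla}_4$ acting on an $r^{-1}$-weighted quantity does not degrade the weight hierarchy. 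Everything else is bookkeeping: the proof is, as the authors note for Propositions~\ref{uneq} and~\ref{uneq2}, ultimately ``direct inspection,'' but organized as an induction so that one only inspects the single-commutator step.
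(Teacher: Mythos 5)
Your overall strategy—induction on $k$, with Proposition \ref{uneq2} as the base case and a single application of $\mathfrak{D}_k\in\{M\slashed{\nabla}_3,r\slashed{\nabla}_4,r\slashed{\nabla}\}$ as the inductive step—is the same as the paper's (which carries out $k=1$ explicitly and appeals to induction), and your treatment of the $3$-equation is essentially sound. The gap is in the $4$-equation, precisely at the one place where the proposition has real content. When you commute \eqref{cobi2} with $r\slashed{\nabla}_4$ and differentiate the weight, you split $\mathfrak{D}_k(tr\chi)=\mathfrak{D}_k(tr\chi-tr\chi_\circ)+\mathfrak{D}_k(tr\chi_\circ)$ and classify the Schwarzschild piece as $f_2$. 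That is false for this operator: since $\slashed{\nabla}_4 r=\tfrac12 r\,tr\chi_\circ=1-\tfrac{2M}{r}$, one has $r\slashed{\nabla}_4(tr\chi_\circ)=r(1-\tfrac{2M}{r})\partial_r tr\chi_\circ=-\tfrac{2}{r}+O(r^{-2})$, i.e.\ an $f_1$ with non-vanishing leading coefficient (it is only $M\slashed{\nabla}_3 tr\chi_\circ$ that is $f_2$, and $r\slashed{\nabla}tr\chi_\circ=0$). Kept naively, this produces a term $f_1\,\mathfrak{D}^{k-1}\uppsi'_{p'}$, which is exactly what \eqref{4dirbu} forbids: the boxed $\uppsi'_{p'}$-terms must carry coefficients decaying like $r^{-2}$, and Remark \ref{rem:comstructure} explains that an $f_1\|\mathfrak{D}^i\uppsi'_{p'}\|^2$ bulk term would give a logarithmic divergence in the $v$-integration, so the structure you would arrive at is strictly weaker than the proposition and insufficient for the energy estimates. (Your placement of this term as a prefactor of $\mathfrak{D}^k\uppsi'_{p'}$ rather than $\mathfrak{D}^{k-1}\uppsi'_{p'}$ is a secondary slip, but it does not rescue the weight.)

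The missing idea is the cancellation the paper flags as crucial. In the commutator $[r\slashed{\nabla}_4,\slashed{\nabla}_4]\mathfrak{D}^{k-1}\uppsi'_{p'}=-(\slashed{\nabla}_4 r)\slashed{\nabla}_4\mathfrak{D}^{k-1}\uppsi'_{p'}$ (a term your concrete scheme never addresses for the $4$-equation, and which, left alone, is itself an $f_1$-weighted top-order boxed term $f_1\,r\slashed{\nabla}_4\mathfrak{D}^{k-1}\uppsi'_{p'}$), one must reinsert the level-$(k-1)$ equation, so that its $-\gamma_4\,tr\chi\,\mathfrak{D}^{k-1}\uppsi'_{p'}$ part yields $+\tfrac12 r\,tr\chi_\circ\,\gamma_4\,tr\chi\,\mathfrak{D}^{k-1}\uppsi'_{p'}$; simultaneously the differentiated weight is rewritten via the Raychaudhuri equation \eqref{trx4}, $\slashed{\nabla}_4 tr\chi=-\tfrac12 tr\chi\,tr\chi+\hat\omega\,tr\chi-\|\hat\chi\|^2$, whose leading piece $-\tfrac12\gamma_4 r\,tr\chi\,tr\chi\,\mathfrak{D}^{k-1}\uppsi'_{p'}$ cancels the previous contribution up to $\tfrac12\gamma_4 r\,tr\chi(tr\chi-tr\chi_\circ)$, i.e.\ up to $(f_2+\Gamma_2+r\Gamma_2\Gamma_2)\mathfrak{D}^{k-1}\uppsi'_{p'}$. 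Both individual contributions are $f_1$-level; only their difference is admissible, and without performing this reinsertion-plus-cancellation (and the analogous grouping of $\tfrac12 tr\chi_\circ\slashed{\mathcal{D}}\uppsi_p$ against the $\tfrac12 tr\chi\slashed{\nabla}$ part of $[\slashed{\nabla}_4,\slashed{\nabla}_B]$) the claimed form \eqref{4dirbu} cannot be reached. Your remaining bookkeeping (closure of $\Lambda_1,\Lambda_2$, the Gauss-curvature term, $\mathfrak{D}^s f_p=f_p$, the redshift term $\hat\omega\slashed{\nabla}_3$ in $F_{34}$) is fine, but it does not substitute for this step, which is the heart of the proof.
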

To understand the structure of the errors (\ref{3dirbu}) and (\ref{4dirbu}) we observe: 
\begin{remark}
In view of the commutation principle of Section \ref{sec:commute},
 the radial decay bound one expects to prove for $E_{3} \left[\mathfrak{D}^{k} \uppsi_p\right] $ and $E_{4} \left[\mathfrak{D}^{k} \uppsi^\prime_{p^\prime}\right]$ is the same as that for $k=0$ in Proposition \ref{uneq2}. In particular, one expects to prove that $E_{4} \left[\mathfrak{D}^{k} \uppsi^\prime_{p^\prime}\right]$ decays at least as $r^{-p^\prime-\frac{3}{2}}$ for all $k$.
\end{remark}
\begin{corollary}
The only terms  of $E_{3} \left[\mathfrak{D}^{k} \uppsi_p\right] $ and $E_{4} \left[\mathfrak{D}^{k} \uppsi^\prime_{p^\prime}\right]$ which are not quadratic in decaying quantities are of the form (cf.~(\ref{e3e}) and (\ref{e4e})):
\begin{align} \label{lin3term}
E_3^{lin}  \left[\mathfrak{D}^{k} \uppsi_p\right]  &=   \sum_{i=0}^k  \left(f_1 \mathfrak{D}^{i} \uppsi_p + f_1 \mathfrak{D}^{i} \uppsi^\prime_{p^\prime}  \right) + \mathfrak{D}^{k}  \left(f_3 \Gamma_2\right) \, ,
 \\
E_4^{lin}  \left[\mathfrak{D}^{k} \uppsi^\prime_{p^\prime}\right] &=  \sum_{i=0}^{k}  \left(f_1 \mathfrak{D}^{i} \uppsi_p + f_2 \boxed{\mathfrak{D}^{i} \uppsi^\prime_{p^\prime}}  \right)+ \sum_{i=0}^{k} \mathfrak{D}^{i} \left(f_3 \Gamma_{min(p^\prime,2)}\right) \, . \label{lin4term}
\end{align}
\end{corollary}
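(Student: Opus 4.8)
The plan is to deduce the Corollary from Proposition \ref{comsum} (and, for the anchor, Proposition \ref{uneq2}) by an induction on the number $k$ of commutations, at each stage simply reading off which terms fail to be at least quadratic in the decaying quantities. The structural fact that makes this possible is the one recorded in Section \ref{therebne}: after renormalisation the null structure and Bianchi equations are \emph{homogeneous} in the collection $\{\Gamma_q,\psi_q\}$ and their $\mathfrak{D}$-derivatives, Schwarzschild being the trivial solution. Hence every term of $E_3[\mathfrak{D}^k\uppsi_p]$, $E_4[\mathfrak{D}^k\uppsi'_{p'}]$ carries at least one such factor, so that ``not quadratic in decaying quantities'' is synonymous with ``carries exactly one decaying factor'', i.e.\ is of schematic form $f\cdot\mathfrak{D}^j(\text{single }\Gamma\text{ or single }\psi)$; I will call such terms \emph{linear} and write $E_3^{lin}$, $E_4^{lin}$ for the sum of the linear terms. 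In particular there are no pure-background ($f$-only) terms, and the transport terms $\slashed{\mathcal{D}}\uppsi'_{p'}$, $\slashed{\mathcal{D}}\uppsi_p$ in $(\ref{cobi1})$--$(\ref{cobi2})$ are principal, not part of the error.

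Before the induction I would isolate two elementary facts of the schematic calculus. First, the operators in $\mathfrak{D}=\{M\slashed{\nabla}_3,r\slashed{\nabla}_4,r\slashed{\nabla}\}$ never decrease the number of decaying factors of a product: by Leibniz, using $\mathfrak{D}^s f_p = f_p$ (in the schematic sense, the $\Omega^{-2}$ inside $M\slashed{\nabla}_3$, written $\Omega^{-2}=\Omega_\circ^{-2}(1+2M\Gamma_2)$, producing only an \emph{additional} $f_p\Gamma_2$), while $\mathfrak{D}\Gamma_q$, $\mathfrak{D}\psi_q$ again count as one decaying factor, the further terms generated by the Commutation Lemma \ref{comlem} ($F_3,F_4,F_{34},K$) all carrying at least two, with the sole exception of the principal commutator $-\frac{1}{2}tr\underline{\chi}\,\slashed{\nabla}\xi = f_1\mathfrak{D}\xi + \Gamma_2\mathfrak{D}\xi$ (and its $\slashed{\nabla}_4$ analogue), whose linear piece $f_1\mathfrak{D}\xi$ keeps the count of $\xi$ --- this last being exactly the mechanism by which the $f_1\mathfrak{D}^i\uppsi$ entries appear in $(\ref{3dirbu})$--$(\ref{4dirbu})$. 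Thus $\mathfrak{D}$ sends linear terms to sums of linear terms, and quadratic-or-worse to quadratic-or-worse; concretely $\mathfrak{D}(f_1\mathfrak{D}^i\uppsi) = f_1\mathfrak{D}^i\uppsi + f_1\mathfrak{D}^{i+1}\uppsi$ and $\mathfrak{D}(\mathfrak{D}^i(f_3\Gamma_2)) = \mathfrak{D}^{i+1}(f_3\Gamma_2)$. Second, since $r\ge 2M$ on $\mathcal{M}$, any $f_{p'}$ with $p'\ge p$ is also an $f_p$, so an $f_2$ multiplying a $\mathfrak{D}^i\uppsi_p$ may be relabelled $f_1$ (as in $(\ref{lin4term})$), while the boxed $\uppsi'_{p'}$-factors are retained with weight $f_2$.

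I would then run the induction. For $k=0$, $(\ref{e3e})$ gives linear part $f_1\uppsi_p + f_3\Gamma_2$ of $E_3[\uppsi_p]$ and $(\ref{e4e})$ gives $f_2\uppsi'_{p'} + f_3\Gamma_{\min(p',2)}$ of $E_4[\uppsi'_{p'}]$; both lie inside the right-hand sides of $(\ref{lin3term})$, $(\ref{lin4term})$ with $k=0$ (these formulas are upper bounds --- the $f_1\uppsi'_{p'}$ resp.\ $f_1\uppsi_p$ slots may be vacuous there). For the step I go term by term through $(\ref{3dirbu})$: inspecting the explicit expressions for $r\Gamma_2$, $\Lambda_1$, $\Lambda_2$ in Proposition \ref{comsum} shows each carries at least one decaying factor, so $r\Gamma_2\cdot\mathfrak{D}^k\uppsi$ and $\Lambda_j\cdot\mathfrak{D}^{k-1}\uppsi$ are quadratic; the surviving linear contributions of the two middle groups are $f_1(\mathfrak{D}^k\uppsi_p+\mathfrak{D}^k\uppsi'_{p'})$ and $f_1(\mathfrak{D}^{k-1}\uppsi_p+\mathfrak{D}^{k-1}\uppsi'_{p'})$, and by the induction hypothesis together with the first fact the linear part of $\mathfrak{D}(E_3[\mathfrak{D}^{k-1}\uppsi_p])$ telescopes to $\sum_{i=0}^{k}(f_1\mathfrak{D}^i\uppsi_p+f_1\mathfrak{D}^i\uppsi'_{p'})+\mathfrak{D}^{k}(f_3\Gamma_2)$. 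Adding these reproduces exactly $(\ref{lin3term})$. The same walk through $(\ref{4dirbu})$ --- now also using that the linear part of $f_0 E_4[\mathfrak{D}^{k-1}\uppsi'_{p'}]$ is $f_0 E_4^{lin}[\mathfrak{D}^{k-1}\uppsi'_{p'}]$, contained in the desired set by the induction hypothesis and $f_0 f_q = f_q$, and the relabelling $f_2\to f_1$ on the $\uppsi_p$-slots --- reproduces $(\ref{lin4term})$.

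The main obstacle will be purely this schematic bookkeeping, not any new idea: one must check, equation by equation in Section \ref{bisec} and in the proofs of Propositions \ref{uneq2} and \ref{comsum}, that no non-quadratic term of a shape absent from $(\ref{lin3term})$--$(\ref{lin4term})$ slips through --- in particular that renormalisation really does annihilate every $f$-only term and that no ``hidden'' $f_p\uppsi$ is lurking inside the explicitly quadratic-looking expressions --- and one must be confident that the action of $\mathfrak{D}$ on the symbols $f_p$, $\Gamma_p$, $\psi_p$, $\mathfrak{D}^i(f_3\Gamma_2)$ is as asserted. The two genuinely delicate points are the ones already flagged above: the $\Omega^{-2}$ weight in $M\slashed{\nabla}_3$ acting on $f_p$, and the principal commutator $-\frac{1}{2}tr\underline{\chi}\,\slashed{\nabla}\xi$ of Lemma \ref{comlem}, which is the one place where linearity is genuinely propagated rather than merely inherited, and hence the source of the $f_1\mathfrak{D}^i\uppsi_p$, $f_1\mathfrak{D}^i\uppsi'_{p'}$ entries on the right-hand sides of $(\ref{lin3term})$--$(\ref{lin4term})$.
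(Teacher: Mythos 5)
Your argument is correct and is essentially the paper's own route: the Corollary is read off from the base expressions $(\ref{e3e})$--$(\ref{e4e})$ and the commuted error structure $(\ref{3dirbu})$--$(\ref{4dirbu})$ of Proposition \ref{comsum} by exactly the induction-in-$k$ bookkeeping you describe, using $\mathfrak{D}^s f_p = f_p$ and the absorption $f_2 \subset f_1$, $f_0 f_q = f_q$. The only imprecision is your side remark that the $-\frac{1}{2}tr\underline{\chi}\,\slashed{\nabla}\xi$ commutator is the ``sole'' linear source (the boxed $\hat{\omega}\slashed{\nabla}_3\xi$ redshift term in $F_{34}$ and the $f_2$ part of $K$ also contribute linearly), but this does not affect your proof since your induction step works directly from the already-established coefficients in Proposition \ref{comsum}, where those contributions are subsumed.
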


\begin{remark} \label{rem:comstructure}
Very importantly, in (\ref{4dirbu}) and (\ref{lin4term}) the boxed terms $\mathfrak{D}^{i}  \uppsi^\prime_{p^\prime}$ only appear with coefficients decaying like $\frac{1}{r^2}$ or stronger, while all other curvature terms appear with $\frac{1}{r}$ coefficients. This is essential because the energy estimate for this Bianchi-pair will only provide control over $\mathfrak{D}^i \uppsi^\prime_{p^\prime}$ on constant $v$-slices. Hence spacetime terms of the form $\int_{\mathcal{M}} f_1 \|\mathfrak{D}^i \uppsi^\prime_{p^\prime}\|^2$ would lead to a logarithmic divergence when integrating in $v$ near infinity. 

In contrast to the above, from (\ref{3dirbu}) such $\int_{\mathcal{M}} f_1 \|\mathfrak{D}^i \uppsi^\prime_{p^\prime}\|^2$-terms indeed appear. However, here they are unproblematic, as in the energy estimate for this Bianchi-pair they will be multiplied by $\mathfrak{D}^i \uppsi_{p}$. The flux of the latter component is controlled on constant $u$ by the energy estimate. We can hence control the resulting spacetime errors by the flux-term on constant $u$, and then exploit the exponential decay in $u$ of this flux when integrating in $u$.
\end{remark}

\begin{proof}[Proof (Proposition \ref{comsum}).]
We will establish the statement for $k=1$. The general case follows by a simple induction.
Starting with the un-commuted Bianchi equation
\begin{align} \label{bic}
\slashed{\nabla}_4 \uppsi^\prime_{p^\prime} +  \gamma_4\left(\uppsi^\prime_{p^\prime}\right) tr {\chi} \uppsi^\prime_{p^\prime} &= \slashed{\mathcal{D}} \uppsi_p  + E_{4} \left[\uppsi^\prime_{p^\prime}\right] \, ,
\end{align}
we compute the commutator (abbreviating $\gamma_4 =  \gamma_4\left(\uppsi^\prime_{p^\prime}\right)$ for the moment)
\begin{align}
\mathfrak{C} = \left[ r \slashed{\nabla}_4 , \slashed{\nabla}_4  \right] \uppsi^\prime_{p^\prime} - \left[ r \slashed{\nabla}_4 , \slashed{\mathcal{D}}  \right] \uppsi_{p} +  r \cdot \gamma_4 \left(\slashed{\nabla}_4 tr \chi \right) \uppsi^\prime_{p^\prime} \nonumber \\ 
= - \frac{1}{2} r \ tr \chi_\circ  \left[ \slashed{\nabla}_4  \uppsi^\prime_{p^\prime} - \slashed{\mathcal{D}}  \uppsi_p \right]  - r\left[\left[\slashed{\nabla}_4 , \slashed{\mathcal{D}} \right] \uppsi_p + \frac{1}{2} tr \chi_\circ \slashed{\mathcal{D}} \uppsi_p \right] + r \ \gamma_4 \left(\slashed{\nabla}_4 tr \chi \right) \uppsi^\prime_{p^\prime}  \nonumber \\ 
= - \frac{1}{2} r \ tr \chi_\circ  \left( - \gamma_4  tr {\chi} \uppsi^\prime_{p^\prime} +  E_{4} \left[\uppsi^\prime_{p^\prime}\right]\right) - \left[ r F_4 \uppsi_p - \left( \hat{\chi} + \frac{1}{2} \left(tr \chi - tr \chi_\circ \right) \right)  r \slashed{\mathcal{D}}  \uppsi_p\right] \nonumber \\ +  r \ \gamma_4 \left( - \frac{1}{2} tr \chi \ tr \chi + \hat{\omega} tr \chi  - \|\hat{\chi}\|^2 \right) \uppsi^\prime_{p^\prime} \, ,\nonumber
\end{align}
where we used the commutation Lemma and reinserted the Bianchi equation (\ref{bic}) as well as the equation $\slashed{\nabla}_4 \left(tr \chi\right) + \frac{1}{2} tr \chi \ tr \chi = \hat{\omega} tr \chi  - \|\hat{\chi}\|^2$ from (\ref{trx4}). Note that there is a cancellation of the weakest decaying contribution in the first and last term. This is crucial because otherwise terms of the form $f_1 \uppsi^\prime_{p^\prime}$ would appear! In schematic notation we have using again Lemma \ref{comlem},
\begin{align}
\mathfrak{C}
 = \left(f_2 + \Gamma_2 + \Gamma_2 \Gamma_2 r \right) \uppsi_{p^\prime}^\prime +
 \Gamma_2 \mathfrak{D} \uppsi_p +  r \left[ f_1 \Gamma_2 + \Gamma_1 \Gamma_2 + \uppsi_\frac{7}{2} \right] \uppsi_p + f_0 \cdot E_{4} \left[\uppsi^\prime_{p^\prime}\right] \nonumber \, .
\end{align}
It follows that the $r\slashed{\nabla}_4$-commuted equation (\ref{bic}) finally reads
\begin{align}
\slashed{\nabla}_4 \left(r \slashed{\nabla}_4 \uppsi^\prime_{p^\prime} \right) +  \gamma_4\left(\uppsi^\prime_{p^\prime}\right) tr {\chi}\left(r \slashed{\nabla}_4 \uppsi^\prime_{p^\prime} \right) &= \slashed{\mathcal{D}} \left(r \slashed{\nabla}_4 \uppsi_p \right)  + {E}_{4}  \left[r \slashed{\nabla}_4 \uppsi^\prime_{p^\prime} \right]
\end{align}
with
\begin{align}
{E}_{4}  \left[r \slashed{\nabla}_4 \uppsi^\prime_{p^\prime} \right] = \mathfrak{D} \left(E_{4} \left[\uppsi^\prime_{p^\prime}\right] \right) -\mathfrak{C} = \mathfrak{D} \left(E_{4} \left[\uppsi^\prime_{p^\prime}\right] \right) + 
\Gamma_2 \mathfrak{D} \uppsi_p \nonumber \\ + f_0 E_{4} \left[\uppsi^\prime_{p^\prime}\right]  +  r \left[ f_1 \Gamma_2 + \Gamma_1 \Gamma_2 + \uppsi_\frac{7}{2} \right] \uppsi_p +  \left(f_2 + \Gamma_2 + \Gamma_2 \Gamma_2 r \right) \uppsi^\prime_{p^\prime} \, .
\end{align}
Note that according to the commutation principle, ${E}_{4}  \left[r \slashed{\nabla}_4 \uppsi^\prime_{p^\prime} \right]$ will decay at least as fast in $r$ as the original $E_{4} \left[\uppsi^\prime_{p^\prime}\right]$.
Commuting (\ref{bic}) with $\slashed{\nabla}_3$ we see that
\begin{align}
\slashed{\nabla}_4 \left( \slashed{\nabla}_3 \uppsi^\prime_{p^\prime} \right) +  \gamma_4\left(\uppsi^\prime_{p^\prime}\right) tr {\chi}\left(\slashed{\nabla}_3 \uppsi^\prime_{p^\prime} \right) &= \slashed{\mathcal{D}} \left( \slashed{\nabla}_3 \uppsi_p \right)  + {E}_{4}  \left[\slashed{\nabla}_3 \uppsi^\prime_{p^\prime} \right]
\end{align}
with
\begin{align}
{E}_{4}  \left[\slashed{\nabla}_3 \uppsi^\prime_{p^\prime} \right] =  \slashed{\nabla}_3 \left(E_{4} \left[\uppsi^\prime_{p^\prime}\right] \right) + 
\left(f_2 + \Gamma_2\right) \left[ \mathfrak{D} \uppsi_{p^\prime}^\prime + \mathfrak{D} \uppsi_p \right] \nonumber \\ 
 + \left(f_2 + \mathfrak{D} \Gamma_2 + \Gamma_2 \Gamma_2 + \uppsi_3 \right) \uppsi^\prime_{p^\prime} +  \left[ f_1 \Gamma_2 + \Gamma_1 \Gamma_2 + \uppsi_2 \right]  \uppsi_p \, .
\end{align}
Note that the $\mathfrak{D} \Gamma_2$-term arises from the derivative falling on $tr \chi$. Note also that in the first line $f_2=\hat{\omega}_0$ is precisely the term arising from the boxed term in the commutation Lemma \ref{comlem}. This is a \emph{manifestation of the redshift} at the level of commutation (cf.~Section 7.2 of \cite{mihalisnotes} and Section \ref{sec:redshiftrole}). Using Lemma \ref{comlem} to derive
\begin{align}
\left[ r \slashed{\nabla}_A , \slashed{\nabla}_4 \right] u = \Gamma_2 r \slashed{\nabla}_A u +  \left(  \Gamma_2  \right) \mathfrak{D} u  
+ r \left[ f_1 \Gamma_2 + \Gamma_1 \Gamma_2 + \uppsi_\frac{7}{2} \right] u \nonumber \\
= \Gamma_2  \cdot \mathfrak{D} u  
+ r \left[ f_1 \Gamma_2 + \Gamma_1 \Gamma_2 + \uppsi_\frac{7}{2} \right] u \, ,
\end{align}
we obtain for commutation with $r \slashed{\nabla}_A$ the expression
\begin{align}
\slashed{\nabla}_4 \left(r \slashed{\nabla} \uppsi^\prime_{p^\prime} \right) +  \gamma_4\left(\uppsi^\prime_{p^\prime}\right) tr {\chi}\left(r \slashed{\nabla} \uppsi^\prime_{p^\prime} \right) &= \slashed{\mathcal{D}} \left(r \slashed{\nabla} \uppsi_p \right)  + {E}_{4}  \left[r \slashed{\nabla} \uppsi^\prime_{p^\prime} \right] \, ,
\end{align}
where
\begin{align}
{E}_{4}  \left[r \slashed{\nabla} \uppsi^\prime_{p^\prime} \right] =r \slashed{\nabla} \left(E_{4} \left[\uppsi^\prime_{p^\prime}\right] \right) + 
\Gamma_2 \mathfrak{D} \uppsi_{p^\prime}^\prime +  \left( \mathfrak{D} \Gamma_2 +  f_0 \Gamma_2 + r \Gamma_1 \Gamma_2 + r \uppsi_\frac{7}{2} \right) \uppsi^\prime_{p^\prime} \nonumber \\  + r\left(f_0+\Gamma_1\right) \left[ f_{2}  + r \left(f_1+\Gamma_ 1 + \Gamma_2\right) \Gamma_2 +  \uppsi_3 \right]  \uppsi_p \, . \nonumber
\end{align}
Collecting terms we obtain the statement of Proposition \ref{comsum} in the $4$-direction.

We turn to commuting the other Bianchi equation
\begin{align} \label{bic2}
\slashed{\nabla}_3 \uppsi_p = \slashed{\mathcal{D}} \uppsi^\prime_{p^\prime}  + E_{3} \left[\uppsi_p\right]  \, .
\end{align}
Commuting (\ref{bic2}) with $\slashed{\nabla}_3$ yields
\begin{align}
\slashed{\nabla}_3 \left(\slashed{\nabla}_3 \uppsi_p \right) = \slashed{\mathcal{D}} \left(\slashed{\nabla}_3 \uppsi^\prime_{p^\prime}  \right) + E_{3} \left[\slashed{\nabla}_3 \uppsi_p\right]  
\end{align}
with
\begin{align}
E_{3} \left[\slashed{\nabla}_3 \uppsi_p\right] = \slashed{\nabla}_3  E_{3} \left[\uppsi_p\right]  + \left(f_2 + \Gamma_2 \right) \mathfrak{D} \uppsi_p + \left[f_1 \Gamma_2 + \Gamma_1 \Gamma_2 + \uppsi_2 \right] \uppsi_p \, .
\end{align}
Commuting  (\ref{bic2}) with $r \slashed{\nabla}_4$ yields
\begin{align}
\slashed{\nabla}_3 \left(r \slashed{\nabla}_4 \uppsi_p \right) = \slashed{\mathcal{D}} \left(r \slashed{\nabla}_4 \uppsi^\prime_{p^\prime}  \right) + E_{3} \left[r\slashed{\nabla}_4 \uppsi_p\right]  
\end{align}
with
\begin{align}
E_{3} \left[r\slashed{\nabla}_4 \uppsi_p\right] = r \slashed{\nabla}_4  E_{3} \left[\uppsi_p\right]  + \left(f_1 + r \Gamma_2 \right) \left[\mathfrak{D} \uppsi_p + \mathfrak{D} \uppsi^\prime_{p^\prime} \right] \nonumber \\
+ r \left[f_1 \Gamma_2 + \Gamma_1 \Gamma_2 + \uppsi_\frac{7}{2} \right]  \uppsi^\prime_{p^\prime}  + r \left[ \Gamma_2 \Gamma_2 + \uppsi_3 \right] \uppsi_p \, .
\end{align}
Finally, commutation with the weighted angular derivative $r \slashed{\nabla}$ yields
\begin{align}
\slashed{\nabla}_3 \left(r \slashed{\nabla} \uppsi_p \right) = \slashed{\mathcal{D}} \left(r \slashed{\nabla} \uppsi^\prime_{p^\prime}  \right) + E_{3} \left[r\slashed{\nabla} \uppsi_p\right]  
\end{align}
with
\begin{align}
E_{3} \left[r\slashed{\nabla} \uppsi_p\right] =&\, r \slashed{\nabla}  E_{3} \left[\uppsi_p\right]  + \left(f_1 + r \Gamma_2 \right) \left[\mathfrak{D} \uppsi_p  \right] \nonumber \\
&+ r \left(f_0+\Gamma_1\right) \left[ f_{2}  + \left(f_1+\Gamma_ 1\right) \Gamma_2 +  \uppsi_3 \right] \uppsi^\prime_{p^\prime} + r \left[ f_1 \Gamma_2 + \Gamma_1 \Gamma_2 + \uppsi_2 \right]  \uppsi_p \, .
\end{align}
\end{proof}

\subsubsection{The commuted null structure equations}
\begin{proposition} \label{prop:nscommute}
The commuted null-structure equations take the form 
\begin{align}
\slashed{\nabla}_3 \Big( \mathfrak{D}^k \overset {(3)}{\Gamma}_p \Big) \phantom{+ c\left[\Gamma_p\right] tr \chi  \left(\mathfrak{D}^k \Gamma_p \right)} = E_3  \Big[\mathfrak{D}^k \overset {(3)}{\Gamma}_p\Big] \, ,
\end{align}
\begin{align} \label{anomaly}
\slashed{\nabla}_4 \Big(\mathfrak{D}^k \overset {(4)}{\Gamma}_p \Big) + c[\overset {(4)}{\Gamma}_p] \ tr \chi  \Big(\mathfrak{D}^k \overset {(4)}{\Gamma}_p \Big) = E_4 \Big[ \mathfrak{D}^k \overset {(4)}{\Gamma}_p\Big] \, ,
\end{align}
with the right hand sides
\begin{equation}
\begin{split}
E_3  \Big[\mathfrak{D}^{k} \overset {(3)}{\Gamma}_p\Big] = \mathfrak{D} \Big( E_3  \Big[\mathfrak{D}^{k-1} \overset {(3)}{\Gamma}_p\Big] \Big) + \left(f_1 + \Gamma_1 + r \Gamma_2\right) \mathfrak{D}^k \overset {(3)}{\Gamma}_p \\
+ \left(\Gamma_2 + r \Gamma_1 \Gamma_2 + r \psi_{\geq \frac{7}{2}} \right)\mathfrak{D}^{k-1} \overset {(3)}{\Gamma}_p \, ,
\end{split}
\end{equation}
\begin{equation} \label{4dirgu}
\begin{split}
E_4 \Big[ \mathfrak{D}^k \overset {(4)}{\Gamma}_p \Big] = \mathfrak{D} \Big( E_4 \Big[ \mathfrak{D}^{k-1} \overset {(4)}{\Gamma}_p \Big] \Big) + \left(f_2 + \Gamma_2\right) \mathfrak{D}^k \overset {(4)}{\Gamma}_p  \\
+ \left(f_2 + \mathfrak{D}\Gamma_2 + \Gamma_2 + r \Gamma_1 \Gamma_2 + r \psi_{\geq \frac{7}{2}} \right) \mathfrak{D}^{k-1} \overset {(4)}{\Gamma}_p \, .
\end{split}
\end{equation}
We also recall $c\left[\hat{\chi}\right]=c \left[tr \chi - tr \chi_\circ\right] = 1$, $c\left[\eta\right] = \frac{1}{2}$, $c \left[\frac{\Omega_\circ^2}{\Omega^2} - 1 \right]=0$ and the expressions (\ref{basice3e4}), (\ref{eq:aux1}) from Proposition \ref{uneq}.
\end{proposition}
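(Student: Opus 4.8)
The statement to prove is Proposition \ref{prop:nscommute}, the commuted null structure equations. Here is the plan.

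\textbf{Overall strategy.} As with Proposition \ref{comsum}, the plan is to establish the case $k=1$ and then proceed by a straightforward induction on $k$, since the error terms are written recursively in terms of $\mathfrak{D}\big(E_3[\mathfrak{D}^{k-1}\Gamma_p]\big)$, etc. For $k=1$, I start from the uncommuted equations \eqref{ns3}, \eqref{ns4} of Proposition \ref{uneq} with right-hand sides $E_3[\overset{(3)}{\Gamma}_p]$ and $E_4[\overset{(4)}{\Gamma}_p]$ given by \eqref{basice3e4}, \eqref{eq:aux1}, and apply each of the three commutation operators $M\slashed{\nabla}_3$, $r\slashed{\nabla}_4$, $r\slashed{\nabla}$ in turn. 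In each case, the new equation has the form ``$\slashed{\nabla}_{3\text{ or }4}(\mathfrak{D}\Gamma) + (\text{weight}) = E_{3\text{ or }4}[\mathfrak{D}\Gamma]$'', where the new error is $\mathfrak{D}\big(E[\Gamma]\big)$ plus a \emph{commutator term} $\mathfrak{C}$ that must be computed using Lemma \ref{comlem} (and equations \eqref{trx3}, \eqref{trx4} for $\slashed{\nabla}$ of $tr\chi$, $tr\underline\chi$, when the angular or $\slashed{\nabla}_4$ derivative falls on the weight factor $c[\Gamma_p]tr\chi$). The bulk of the work is verifying that $\mathfrak{C}$, expressed in schematic notation, has exactly the claimed structure: a coefficient times $\mathfrak{D}^k\Gamma_p$ with the stated decay ($f_1+\Gamma_1+r\Gamma_2$ in the $3$-direction, $f_2+\Gamma_2$ in the $4$-direction), plus a coefficient times $\mathfrak{D}^{k-1}\Gamma_p$ with weaker (genuinely quadratic in decaying quantities, hence better-decaying) coefficients.

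\textbf{Key steps in order.} First, for commutation with $M\slashed{\nabla}_3$ of the $3$-direction equation: since $M\slashed{\nabla}_3$ commutes with $\slashed{\nabla}_3$, the only new contribution comes from $\mathfrak{D}$ hitting $E_3$, so $E_3[M\slashed{\nabla}_3\Gamma_p] = M\slashed{\nabla}_3(E_3[\Gamma_p])$, trivially of the right form. Second, for commutation with $r\slashed{\nabla}_4$ of the $3$-direction equation, and with $\slashed{\nabla}_3$ of the $4$-direction equation, I use $[\slashed{\nabla}_3,\slashed{\nabla}_4]\xi = F_{34}$ from Lemma \ref{comlem}; the crucial point is that the only ``linear'' term in $F_{34}$ is the boxed $\hat\omega\slashed{\nabla}_3\xi$, which upon schematic translation gives precisely the $(f_2+\dots)\mathfrak{D}^k\Gamma_p$-type term with a coefficient whose sign encodes the redshift. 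Here one must also handle the derivative falling on $c[\overset{(4)}{\Gamma}_p]tr\chi$: this produces $\slashed{\nabla}_3(tr\chi)$ or $\slashed{\nabla}_4(tr\chi)$ terms which I replace using the relevant null structure equation, noting that $tr\chi-tr\chi_\circ = \overset{(4)}{\Gamma}_2$ so these are quadratic and land in the $\mathfrak{D}^{k-1}$ part. Third, for commutation with $r\slashed{\nabla}$, I use $[\slashed{\nabla}_4,\slashed{\nabla}_B]$ and $[\slashed{\nabla}_3,\slashed{\nabla}_B]$ from Lemma \ref{comlem}; the schematic forms of $F_{3B}$, $F_{4B}$ recorded just after the Lemma feed in directly. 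The $r$ weight on $\slashed{\nabla}$ and the $r^{2c}$ renormalization of Remark \ref{rem:reno} have to be tracked so that the $r$-powers in $\Lambda_1$, $\Lambda_2$, and in the stated coefficients come out correctly — e.g. $r\cdot F_{4B}$ contributes $r\psi_{7/2}$, $r\Gamma_2\Gamma_2$, $f_1\cdot r\Gamma_2$, matching the claim. Fourth, I collect all contributions and read off $E_3[\mathfrak{D}^1\Gamma_p]$, $E_4[\mathfrak{D}^1\Gamma_p]$ in the stated schematic form, checking in particular that no term of the form $f_0\Gamma_p$ or $f_1\Gamma_p$ (which would destroy decay) survives — cancellations of the worst terms, as in the $tr\chi$-cancellation highlighted in the proof of Proposition \ref{comsum}, are what make this work. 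Finally, the induction step: assuming the formula for $k-1$, apply $\mathfrak{D}$ once more; the recursive terms $\mathfrak{D}(E_3[\mathfrak{D}^{k-1}\Gamma_p])$ absorb the ``$\mathfrak{D}$ hits the old error'' contribution, while the commutator $\mathfrak{C}$ at this stage is structurally identical to the $k=1$ computation applied to $\mathfrak{D}^{k-1}\Gamma_p$ in place of $\Gamma_p$, plus lower-order terms that fold into the $\mathfrak{D}^{\le k-1}$ part by the inductive hypothesis.

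\textbf{Main obstacle.} The proof is, as the authors indicate for the analogous Proposition \ref{uneq2}, ``direct inspection'', but the genuine difficulty is bookkeeping: one must verify that every commutator term, after schematic reduction via Lemma \ref{comlem} and substitution of null structure equations for derivatives of $tr\chi$, $tr\underline\chi$, $\hat\omega$, lands in the correct slot of the hierarchy — in particular that the coefficient multiplying $\mathfrak{D}^k\overset{(4)}{\Gamma}_p$ in the $4$-direction decays at least like $f_2$ (not $f_1$), since a $f_1$-coefficient would spoil the $r$-weighted estimates of Section \ref{sec:ricciimprove}. This is exactly the point flagged in the proof of Proposition \ref{comsum} (``This is crucial because otherwise terms of the form $f_1\uppsi^\prime_{p^\prime}$ would appear!'') and requires the precise cancellation between the term from differentiating $\frac12 tr\chi_\circ$ and the term from differentiating the weight factor. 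Beyond that, ensuring the boxed anomalous structure of Remark \ref{rem:main} and Proposition \ref{rem:anoterm} is not worsened under commutation — i.e. the boxed $f_{p_1}\overset{(3)}{\Gamma}_{p_2}$ term in $E_4$ stays controlled by the $3$-direction equation after commuting — is a point that needs care but follows because $\mathfrak{D}$ of the boxed term is again of boxed type (involving only $\mathfrak{D}^j\overset{(3)}{\Gamma}$) together with strictly-better-decaying quadratic remainders.
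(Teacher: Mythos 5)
Your proposal is correct and follows essentially the same route as the paper's proof: reduce to $k=1$, compute the commutators of $M\slashed{\nabla}_3$, $r\slashed{\nabla}_4$, $r\slashed{\nabla}$ with $\slashed{\nabla}_4 + c[\overset{(4)}{\Gamma}_p]\, tr\chi$ (resp.\ with $\slashed{\nabla}_3$) via Lemma \ref{comlem}, substitute (\ref{trx4}) when the derivative hits the weight factor, exploit the cancellation that leaves only an $f_2$ (rather than $f_1$) coefficient on $\mathfrak{D}^k\overset{(4)}{\Gamma}_p$, and then induct using the recursive form of the errors. One small imprecision: the terms produced when a derivative hits $tr\chi$ are not all quadratic (e.g.\ $\slashed{\nabla}_3 tr\chi_\circ$ and the piece $c\, r\, \hat{\omega}\, tr\chi$ surviving the cancellation are linear $f_2$-type contributions), but they are admissible in the $\mathfrak{D}^{k-1}$ slot, which allows $f_2$ and $\mathfrak{D}\Gamma_2$ coefficients, so this does not affect the argument.
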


\begin{remark}
Note again that in (\ref{4dirgu}) the terms $\mathfrak{D}^k \overset {(4)}{\Gamma}_p$ and $\mathfrak{D}^{k-1} \overset {(4)}{\Gamma}_p$ only appear with terms decaying like $r^{-2}$, while in the $3$-direction $r^{-1}$ appears.
\end{remark}

\begin{proof}
For the equation in the $3$-direction, this is a straightforward application of Lemma \ref{comlem} using the formulae established in the previous section.
As for the Bianchi equations, in the $4$-direction the identity (\ref{trx4}) is crucial in deriving the correct commutation formula for the null-structure equations. Indeed, if $\Gamma_p$ satisfies
\begin{align}
\slashed{\nabla}_4 \Gamma_p + c_1 \ tr \chi \ \Gamma_p = RHS \, ,
\end{align}
then, in view of (\ref{trx4})
\begin{align}
\left[ r \slashed{\nabla}_4 , \slashed{\nabla}_4  + c_1 tr \chi \right] \Gamma_p = - \slashed{\nabla}_4 r \cdot \slashed{\nabla}_4 \Gamma_p + c_1 \ r \left( \slashed{\nabla}_4  tr \chi \right) \Gamma_p \nonumber \\
= - \frac{1}{2} r \ tr \chi_\circ \left( -c_1 \ tr \chi \Gamma_p + RHS \right) + c_1 r \ \left( - \frac{1}{2} tr \chi \ tr \chi + \hat{\omega} tr \chi  - \|\hat{\chi}\|^2\right) \Gamma_p \nonumber \\
= \left(f_2 + \Gamma_2 + r \Gamma_2 \Gamma_2 \right) \Gamma_p + f_0 \  RHS \, , \nonumber
\end{align}
\begin{align}
\left[  \slashed{\nabla}_3 , \slashed{\nabla}_4  + c_1 tr \chi \right] \Gamma_p =\left(f_2 + \Gamma_2\right) \mathfrak{D}\Gamma_p + \left(f_2 + \mathfrak{D} \Gamma_2 + \Gamma_2 \Gamma_2 - \rho_\circ + \psi_3 \right) \Gamma_p \, , \nonumber
\end{align}
\begin{align}
\left[  r \slashed{\nabla}_A , \slashed{\nabla}_4  + c_1 tr \chi \right] \Gamma_p =\left(f_2 + \Gamma_2\right) \mathfrak{D}\Gamma_p + \left(f_2 + \mathfrak{D} \Gamma_2 + \Gamma_2 + r \Gamma_1 \Gamma_2 + r \psi_\frac{7}{2} \right) \Gamma_p \, , \nonumber
\end{align}
where for the last identity we recall that 
\begin{align}
\left(r \slashed{\nabla}_A tr \chi \right) \Gamma_p = \left(r \slashed{\nabla}_A \left[tr \chi - tr \chi_\circ\right] \right) \Gamma_p = \left( \mathfrak{D} \Gamma_2 \right)  \Gamma_p \, .
\end{align}
Using these formulae and Lemma \ref{comlem}, the commutation formula follows.
\end{proof}
\subsubsection{The role of the redshift under commutation} \label{sec:redshiftrole}
A final remark about the general ``size'' and the signs of the terms on the right hand sides of Propositions \ref{comsum} and \ref{prop:nscommute} is in order. As observed in the proof of Proposition  \ref{comsum}, commutation of (\ref{bic}) with $\slashed{\nabla}_3$ produces a linear term of the form $\hat{\omega}_\circ \slashed{\nabla}_3 \uppsi^\prime_{p^\prime}$ with $\hat{\omega}_\circ=\frac{1}{2M}>0$ because of the boxed ``redshift''-term in Lemma \ref{comlem}. This implies that after $k$ commutations of (\ref{bic}) with $\slashed{\nabla}_3$, there will be a linear term of the form $k \cdot \hat{\omega}_\circ  \slashed{\nabla}_3^k \uppsi^\prime_{p^\prime}$ on the right hand side. In other words, the coefficient of this linear term grows linearly with the number of derivatives taken and it has the wrong sign when integrating towards the past (i.e.~it drives exponential growth). For the analysis, this amplified redshift implies that the exponential decay rate we can impose will depend on (i.e.~grow with) the number of commutations. 
\subsection{The auxiliary equations} \label{auxquant}
In addition to our collection $\Gamma$ of metric- and Ricci-coefficients (cf.~Section \ref{sec:Gamdef}), we introduce explicitly the collection $G=\{G_1,G_2\}$ with members
\[
G_1 = r \slashed{\nabla}_A \left( \slashed{g}_{BC} - \slashed{g}^\circ_{BC} \right) = - r \slashed{\nabla}_A \left(  \slashed{g}^\circ_{BC} \right)  \, ,
\]
\[
G_2 = r \slashed{\nabla}_A \left(tr \underline{\chi} - tr \underline{\chi}_\circ \right) = r \slashed{\nabla}_A \left( tr \underline{\chi}\right) \, .
\]
From (\ref{mei}) we derive
\[
\slashed{\nabla}_3 \left(r \slashed{\nabla}_A \slashed{g}^\circ_{BC} \right) = \sum_{p_1+p_2 \geq p} \left[f \left(\Gamma + \Gamma \Gamma \right)\right]_{p_1} \left(G_{p_2} + \Gamma_{p_2}\right) + 2 r \slashed{\nabla}_A \hat{\underline{\chi}}_{BC} + \Gamma_1 \cdot r \slashed{\nabla} \hat{\underline{\chi}}
\]
and from (\ref{trx3}) 
\[
\slashed{\nabla}_3 \left(r \slashed{\nabla}_A tr \chi\right) =  \sum_{p_1+p_2 \geq p} \left[f \left(\Gamma + \Gamma \Gamma \right)\right]_{p_1} \left(G_{p_2} + \Gamma_{p_2}\right)  - 2  \hat{\underline{\chi}} \cdot
 r \slashed{\nabla}  \hat{\underline{\chi}} \, .
\]
Note that
\begin{align}
\slashed{\nabla}_C \slashed{g}^\circ_{AB} = \left(\slashed{\Gamma}^D_{\phantom{D}CA} - (\slashed{\Gamma}^\circ)^D_{\phantom{D}CA} \right) \slashed{g}^\circ_{DB} +  \left(\slashed{\Gamma}^D_{\phantom{D}CB} - (\slashed{\Gamma}^\circ)^D_{\phantom{D}CB} \right) \slashed{g}^\circ_{DA}
\end{align}
and therefore
\begin{align} \label{chrel}
 c \| \slashed{\nabla}_C \slashed{g}^\circ_{AB} \|_{\slashed{g}^\circ}  \leq \|\slashed{\Gamma}^D_{\phantom{D}CA} - (\slashed{\Gamma}^\circ)^D_{\phantom{D}CA}  \|_{\slashed{g}^\circ} \leq C \| \slashed{\nabla}_C \slashed{g}^\circ_{AB} \|_{\slashed{g}^\circ}  \, .
\end{align}
In view of this, we will sometimes identify $G_1 \equiv r \left(\slashed{\Gamma}^D_{\phantom{D}CA} - (\slashed{\Gamma}^\circ)^D_{\phantom{D}CA}\right)$.  The reason for these auxiliary quantities will become apparent when we address in Section \ref{sec:convergence} the issue of convergence of the solutions. 

\section{The norms} \label{sec:norms}  \label{sec:normse}

Recall $\mathfrak{D} = \{ M\slashed{\nabla}_3, r \slashed{\nabla}_4, r \slashed{\nabla} \}$ and  $\mathfrak{D}_{\nearrow} = \{ r \slashed{\nabla}_4, r \slashed{\nabla} \}$ , $\mathfrak{D}_{\nwarrow} = \{  M\slashed{\nabla}_3, r \slashed{\nabla}\}$. Define the dimensionless weight
\begin{align} \label{wdefn}
w = \frac{r}{2M} \,.
\end{align}

Given a spacetime manifold $\left(\mathcal{M}\left(\tau_1,\tau_2,v_\infty\right), g\right)$ of the form (\ref{maing}), we now define norms for the metric and Ricci coefficients on the one hand, and norms for the curvature components on the other. The norms of the former will simply be $L^2$-norms on the spheres $S^2_{u,v}$ given by (\ref{l2gnorm}): 
\[
\| \mathfrak{D}^i \Gamma_p \|^2_{L^2\left(S^2_{u,v}\right)} = \int_{S^2_{u,v}} \| \mathfrak{D}^i \Gamma_p \|^2 \sqrt{\slashed{g}} d\theta^1 d\theta^2 \, .
\]
Note that this norm is expected to decay like $r^{-2p+2}$ near infinity. The norms for the curvature will be $L^2$-norms on the $3$-dimensional slices of constant $\tau$ and on null-hypersurfaces of constant $u$ and constant $v$ respectively. We define the $k^{th}$-order curvature flux
\begin{align}
F\left[\mathfrak{D}^k{\Psi}\right] \left( \{u\} \times [v_1,v_2] \right) = 2\sum_{i=0}^k \sum_{i-perms}
 \int_{v_1}^{v_2} d\bar{v} \Bigg[ \int_{S^2\left(u,v\right)} \sqrt{\slashed{g}}\nonumber \\ \left\{ w^5 \| \mathfrak{D}^i \alpha \|^2 + w^4 \|\mathfrak{D}^i\beta\|^2 + w^2 \left[ \|\mathfrak{D}^i\rho - \mathfrak{D}^i\rho_\circ\|^2 +\|\mathfrak{D}^i\sigma\|^2 \right] +2 \| \mathfrak{D}^i \underline{\beta}\|^2 \right\} \Bigg] \left(u,\bar{v}\right) \nonumber \, ,
\end{align}
where for each $i$ we sum over all possible combinations of $i^{th}$ derivatives. Here it is implicit that $  \{u\} \times [v_1,v_2] \times S^2_{u,v} \subset \mathcal{M}\left(\tau_1,\tau_2,v_\infty\right)$. Similarly,
\begin{align}
F\left[\mathfrak{D}^k{\Psi}\right]\left( \left[u_1,u_2\right] \times \{v \} \right) =  2\sum_{i=0}^k \sum_{i-perms}
 \int_{u_1}^{u_2} d\bar{u}\ \Omega^2 \left(\bar{u},v\right)\Bigg[ \int_{S^2\left(u,v\right)} \sqrt{\slashed{g}}   \nonumber \\ 
 \left\{ 2w^5 \|\mathfrak{D}^i\beta\|^2 + w^4\left[ \|\mathfrak{D}^i\rho - \mathfrak{D}^i\rho_\circ\|^2 +\|\mathfrak{D}^i\sigma\|^2 \right] + w^2 \| \mathfrak{D}^i \underline{\beta}\|^2 + \|\mathfrak{D}^i \underline{\alpha}\|^2 \right\} \Bigg] \, , \nonumber
\end{align}
for $ \left[u_1,u_2\right] \times \{v \} \times S^2 \subset \mathcal{M}\left(\tau_1,\tau_2,v_\infty\right)$, as well as the energy
\begin{align}
\mathcal{E} \left[\mathfrak{D}^k \Psi\right] \left(\tau,r_1,r_2\right) = \frac{1}{4}\sum_{i=0}^k \sum_{i-perms}
 \int_{r_1}^{r_2} d\bar{r} \int_{S^2\left(u\left(\tau,r\right),v\left(\tau,r\right))\right)} \sqrt{\slashed{g}} 
 \Big\{ w^5 \| \mathfrak{D}^i \alpha \|^2  \nonumber \\
+ w^4 \|\mathfrak{D}^i \beta\|^2 + w^{4-h} \left[ \|\mathfrak{D}^i\rho - \mathfrak{D}^i\rho_\circ\|^2 +\|\mathfrak{D}^i\sigma\|^2 \right] + w^{2-h} \|\mathfrak{D}^i \underline{\beta}\|^2 +w^{-h} \|\mathfrak{D}^i \underline{\alpha}\|^2  \Big\} \,  \nonumber
\end{align}
for $\tau_1 \leq \tau \leq \tau_2$. The factor $\frac{1}{4}$ will be convenient and can be traced back to the estimates in Section \ref{sec:slices}. Recall that $1<h \leq 2$ was fixed in Section \ref{sec:manifold}.
\section{Initial Data and Local Evolution} \label{sec:local}
Consider a $\Sigma_{\tau_0}$ for some $\tau_0$ large to be determined later and its associated region $\mathcal{M} \left(\tau_0,\tau\right)$ for any $\tau > \tau_0$ large. Let $u_\tau = \lim_{v \rightarrow \infty} u \left(\tau,v\right)$ be the $u$-value where the constant $\tau$ hypersurface ``intersects'' $v=\infty$. Similarly, we denote  $v_\tau = \lim_{u \rightarrow \infty} v \left(\tau,u\right)$ the $v$ value of intersection with the horizon. 

We will prescribe scattering data on the hypersurface $\tilde{\mathcal{C}}_{U=0}:=\{U=0\} \times \left[v_{\tau_0},\infty\right) \times S^2$  and on the limiting hypersurface $\tilde{\mathcal{C}}_{v=\infty}:= \left[u_{\tau_0},\infty\right) \times \{v=\infty\} \times S^2$. To make the latter precise, we will define below first a notion of data at null infinity, and second a notion of associated finite data on a hypersurface $\tilde{\mathcal{C}}_{v=v_\infty}:= \left[u_{\tau_0},\infty\right) \times \{v=v_\infty\} \times S^2$, which converges to the desired asymptotic data in the limit $v_\infty \rightarrow \infty$. 

Following \cite{formationofbh}, let us now fix two stereographic coordinate patches $\left(\theta^1, \theta^2\right)$ and $\left(\underline{\theta}^1, \underline{\theta}^2\right)$ on the sphere $S^2_{U=0,v_\tau}$ and Lie-transport these coordinates along $\tilde{\mathcal{C}}_{U=0}$ using $\partial_v \theta^A=0=\partial_v \underline{\theta}^A$, $A=1,2$. (As in \cite{formationofbh}, this choice of coordinates will be particularly convenient when prescribing the seed metric and changing from one chart to another.) The notation $\theta$ will always refer to these particular coordinates. Note that the $S^2_{U=0,v}$-tensor $\slashed{g}$ has components $\slashed{g}_{AB}=\slashed{g}_{\theta^A \theta^B}= \slashed{g} \left(\partial_{\theta^A}, \partial_{\theta^B} \right)$ in the associated coordinate basis. Recall also that $\gamma_{AB}$ denotes the standard metric on the unit sphere.

\subsection{Scattering data sets}
\begin{definition} \label{def:scatdat}
A scattering data set of exponential decay $\tilde{P} \in \mathbb{R}^+$  and exponential regularity $(I \geq 5,J \geq 3) \in \mathbb{N} \times \mathbb{N}$ is given by the following:
\begin{enumerate}
\item A smooth ``seed" prescribed on  $\tilde{\mathcal{C}}_{U=0}$, obtained by prescribing freely a smooth symmetric $S^2_{U=0,v}$-tensor density of weight $-1$, $\hat{\slashed{g}}^{dat\mathcal{H}}_{AB} \left(v, \theta^1, \theta^2\right)$, along $\tilde{\mathcal{C}}_{U=0}$, whose determinant is equal to $1$ and which satisfies the following pointwise bounds on the components in the coordinate basis: For any $j=0, ... , J$ we have in each coordinate patch on the sphere
\begin{align} \label{hoz1}
 \sum_{i=0}^I \sum_{i_1+i_2=i} \Big| \left(\partial_v\right)^j \left(\partial_{\theta^1} \right)^{i_1} \left(\partial_{\theta^2} \right)^{i_2} \left[ \hat{\slashed{g}}^{dat\mathcal{H}}_{AB} - \frac{\gamma_{AB}}{\sqrt{\gamma}} \right] \Big| \leq \tilde{P}^j e^{-{\tilde{P}}\frac{v}{4M}} \, .
\end{align}
\item A smooth ``seed" function ``at infinity'' obtained by prescribing freely a symmetric $2$-tensor $\hat{\slashed{g}}_{AB}^{dat \mathcal{I}} \left(u, \theta^1, \theta^2\right)$, along $ \left[u_{\tau_0}, \infty\right) \times S^2$, which satisfies $\gamma^{AB} \hat{\slashed{g}}_{AB}^{dat \mathcal{I}} = 0$ and for any $j=0, ..., J$ the estimate
\begin{align} \label{oest}
\sum_{i=0}^I  \sum_{i_1+i_2=i}  \Big| \left(\partial_u\right)^j  \left(\partial_{\theta^1} \right)^{i_1} \left(\partial_{\theta^2} \right)^{i_2} \left[ \hat{\slashed{g}}^{dat \mathcal{I}}_{AB} \right] \Big| \leq \tilde{P}^j  e^{-\tilde{P} \frac{u}{4M}} \, 
\end{align}
in each coordinate patch on the sphere.\footnote{It is implicit that we fixed stereographic coordinates on one sphere and Lie-transported them using $\partial_u \theta^A=0$.}
\end{enumerate}
A scattering initial data set will be denoted by the tuple $\left(\hat{\slashed{g}}^{dat\mathcal{H}},   \hat{\slashed{g}}^{dat\mathcal{I}}, \tilde{P},I, J\right)$.
\end{definition}

The reason that we specify a tensor \emph{density} along the horizon and a symmetric traceless tensor at infinity will become clear below. The point is that  $\hat{g}_{AB}^{data\mathcal{I}}$ is a limiting object from which we will construct a family of tensor densities defined on finite null hypersurfaces approaching null infinity $\mathcal{I}^+$.
\subsection{Associated finite scattering data sets}

We will now construct associated finite scattering data sets. 

Before we present the details, let us provide a brief overview. We will pick a slice $\Sigma_{\tau}$ for $\tau>\tau_0+1$ large and a constant $v$ hypersurface $v=v_{\infty}$. On $\Sigma_{\tau}$ the data will be assumed to be trivially Schwarzschild, while on the horizon we will cut-off the seed function to match it smoothly with the Schwarzschild data on the slice $\Sigma_\tau$. On $v=v_\infty$ we will specify again the conformal class of the metric with the deviation from the round metric given to first order in $\frac{1}{r}$ by the data $\hat{\slashed{g}}_{AB}^{dat\mathcal{I}}$.\footnote{The reason for the $r$-weight appearing in the approximate data can be explained as follows: To ensure that the ($r$-independent!) $\hat{\slashed{g}}^{dat\mathcal{I}}_{AB}$ of Definition \ref{def:scatdat} indeed encodes the non-trivial gravitational (radiation) field along null-infinity, the finite data on $v=v_\infty$ has to be $r$-weighted appropriately, so as to produce the correct limit towards null-infinity.}

\begin{figure}[h!]
\[
\input{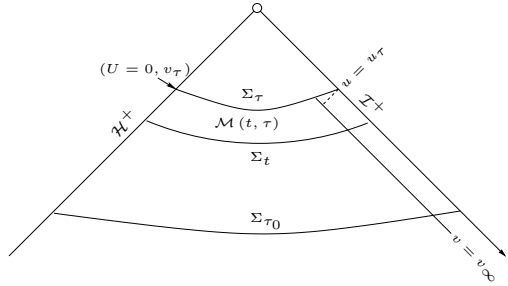}
\]
\caption{The scattering problem} \label{figure1}
\end{figure}

Let now $\Sigma_{\tau}$ be given and $\chi_{\tilde{x}} \left(x\right)$ be a smooth positive interpolating function which is equal to $1$ for $x\leq \tilde{x}-1$ and equal to zero for $x \geq \tilde{x}$. We define the following cut-off version of the data on null-infinity
\begin{align}
\hat{\slashed{g}}^{dat \mathcal{I}_\tau}_{AB} = \chi_{u_\tau} \left(u\right) \hat{\slashed{g}}^{dat \mathcal{I}}_{AB} \, .
\end{align}
Using $\hat{\slashed{g}}^{dat \mathcal{I}_\tau}_{AB}$, we now prescribe smoothly the conformal class of the metric along $\tilde{\mathcal{C}}_{v_\infty}$ by specifying a tensor-density of weight $-1$, $\hat{\slashed{g}}_{AB}^{dat(v_\infty)_\tau}$, having unit determinant and satisfying the following pointwise bounds on the components in coordinates: For $j=0,...,J$
\begin{align} \label{assodata}
\sum_{i=0}^I \Big| \left(\partial_u\right)^j \left(\partial_\theta\right)^i \left[\hat{\slashed{g}}_{AB}^{dat(v_\infty)_\tau} \left(u, \theta\right)  - \frac{\gamma_{AB}}{\sqrt{\gamma}} \left(\theta\right) - \frac{1}{ r\left(u,v_\infty\right)} \frac{1}{\sqrt{\gamma}} \hat{\slashed{g}}^{dat\mathcal{I}_\tau}_{AB} \left(u,\theta\right) \right] \Big| \nonumber \\
\leq \frac{1}{r^2\left(u,v_\infty\right)}\tilde{P}^j  e^{-\tilde{P} \frac{u}{4M}} 
 \ \ \ \ \ \ \textrm{for $u_{\tau_0} \leq u$} \, ,
\end{align}
where we have used the short hand notation $\left(\partial_\theta\right)^i$ for $\sum_{i_1+i_2=i}  \left(\partial_{\theta^1} \right)^{i_1} \left(\partial_{\theta^2} \right)^{i_2}$.

Let us explain (\ref{assodata}). Naively, we would simply define the tensor density  $\hat{\slashed{g}}_{AB}^{dat(v_\infty)_\tau}= \frac{\gamma_{AB}}{\sqrt{\gamma}} \left(\theta\right) + \frac{1}{ r\left(u,v_\infty\right)} \frac{1}{\sqrt{\gamma}} \hat{\slashed{g}}^{dat\mathcal{I}_\tau}_{AB}\left(u,\theta\right)$. However, this tensor density does not have determinant equal to 1. Using that $\gamma^{AB} \hat{\slashed{g}}^{dat\mathcal{I}}_{AB}=0$, one sees that the expression \emph{does} have unit determinant to order $\frac{1}{r}$ and hence that the correction to achieve unit determinant for $\hat{\slashed{g}}_{AB}^{dat(v_\infty)_\tau}$ is indeed of order $\frac{1}{r^2}$. This is what the error on the right hand side of (\ref{assodata}) accounts for. Existence of suitable $\hat{\slashed{g}}_{AB}^{dat(v_\infty)_\tau}$ satisfying (\ref{assodata}) follows from the implicit function theorem by choosing $v_\infty$ sufficiently large.

\begin{remark}
Note that the $\hat{\slashed{g}}_{AB}^{dat(v_\infty)_\tau}$ are not uniquely determined by the scattering data $\hat{\slashed{g}}^{dat\mathcal{I}_\tau}_{AB}$. However, they \emph{are} uniquely determined up to order $\frac{1}{r}$, which completely captures the non-trivial part of the data in the limit as $v_\infty \rightarrow \infty$ as we will see below.
\end{remark}

Similarly, on the horizon we define the following cut-off version of the data:
 For a given $\tau \geq \tau_0+1$ we define
\[
\hat{\slashed{g}}^{dat\mathcal{H}_\tau}_{AB} =  \chi_{v_\tau} \left(v\right) \cdot \hat{\slashed{g}}^{dat\mathcal{H}}_{AB} + \left(1- \chi_{v_\tau} \left(v\right) \right) \hat{\gamma}_{AB}  \ \ \ \ , \ \ \ \hat{\gamma}_{AB} = \frac{1}{\sqrt{\gamma}} \gamma_{AB}
\]
with $v_\tau$ determined by $\tau$ as explained above.\footnote{Recall that $\Sigma_\tau$ determines the quantities $u_{\tau}$ and $v_{\tau}$ as the coordinates of the sphere where $\Sigma_\tau$ intersects the horizon and null infinity respectively. See Figure \ref{figure1}.}  Allowing a different function $ \chi_{v_\tau} \left(v\right)$ for different components, we can arrange that the determinant of $\hat{\slashed{g}}^{dat\mathcal{H}_\tau}_{AB}$ always remains equal to one. 
Moreover, the $\hat{\slashed{g}}^{dat\mathcal{H}_\tau}_{AB}$ also satisfy (\ref{hoz1}) uniformly in $\tau$, provided we allow a constant factor in front of the exponential. 

We finally define a family of associated finite scattering data sets, $D_{\tau,v_\infty}$, as follows:
\begin{definition} \label{def:finitedata}
Given  $\tau \geq \tau_0+1$ and $v_\infty>\tau$, a smooth finite scattering data set associated with a scattering data set $\left(\hat{\slashed{g}}^{dat\mathcal{H}},   \hat{\slashed{g}}^{dat\mathcal{I}}, \tilde{P}, I, J\right)$ is denoted $D_{\tau, v_\infty}$ and consists of the following: 
\begin{itemize}
\item The hypersurface $\Sigma_\tau$, on which the data is exactly Schwarzschildean\footnote{that is to say the data induced on $\Sigma_\tau$ if the metric on $\mathcal{M}$ were exactly Schwarzschild of mass $M$}
\item the hypersurface $\tilde{\mathcal{C}}_{U=0}$ along which the $S^2_{U=0,v}$-tensor density \newline $\hat{\slashed{g}}^{dat\mathcal{H}_\tau}_{AB} \left(v, \theta^1, \theta^2\right)$ is prescribed and satisfies (\ref{hoz1})
\item the hypersurface $\tilde{\mathcal{C}}_{v=v_\infty}$ where $\hat{\slashed{g}}^{dat(v_\infty)_\tau}_{AB} \left(u,\theta^1,\theta^2\right)$ is prescribed and satisfies (\ref{assodata}). 
\end{itemize}
\end{definition}
In the remainder of this section, we outline how to construct all geometric quantities from a smooth finite scattering data set. We follow Christodoulou \cite{formationofbh} closely. Figure \ref{figure1} may be helpful.

We will allow ourselves the following abuse of notation: We will write  $\Omega_\mathcal{K}$, $\eta$, $\beta$, etc.~to denote both the geometric quantities defined from the seed data and the gauge fixing  in the following section (which are functions on the horizon and the hypersurface $v=v_\infty$ only) \emph{and} the spacetime solution arising from this data.
\subsection{Fixing the gauge} \label{sec:gaugefix}
In order to construct the geometry of the cones $\tilde{\mathcal{C}}_{U=0}$ and $\tilde{\mathcal{C}}_{v=\infty}$  from the seeds $\hat{\slashed{g}}^{dat\mathcal{H}_\tau}_{AB}$ and $\hat{\slashed{g}}^{dat\left(v_\infty\right)_\tau}_{AB}$ (by which we simply mean constructing all $\Gamma$ and $\psi$ with some number of their tangential and transversal derivatives), we need to fix a gauge for the lapse $\Omega$ along the respective cones.

On the horizon, following \cite{formationofbh}, we set $\Omega_{\mathcal{K}}=2M \sqrt{e^{-1}}$ on $\tilde{\mathcal{C}}_{U=0}$, which implies $\hat{\omega}_{\mathcal{K}}=0$ there, cf.~(\ref{Kom}). Note that this corresponds to $\frac{\Omega^2_\circ}{\Omega^2}=1$ in Eddington-Finkelstein variables and hence $\hat{\omega}=\hat{\omega}_\circ$. Also $\eta = -\underline{\eta} = \zeta$ on $\tilde{\mathcal{C}}_{U=0}$.

Along the hypersurface $\tilde{\mathcal{C}}_{v=v_\infty}$ we set $\Omega^2=\Omega_\circ^2$ so that again $\eta=-\underline{\eta}=\zeta$ on $\tilde{\mathcal{C}}_{v=v_\infty}$. \newline

Finally, in accord with the fact that the data are supposed to be trivial for $v \geq v_\tau$ on the horizon and for $u \geq u_\tau$ on $v=v_\infty$, we impose the following:

\begin{enumerate}
\item On the spheres $S^2_{U=0,v}$ with $v \geq v_\tau$, we prescribe the full metric $\slashed{g}^{dat\mathcal{H}_\tau}_{AB}$ to equal the round metric on a sphere of radius $r=2M$.  That is to say, setting $\slashed{g}^{dat\mathcal{H}_\tau}_{AB}=\Phi^2  \hat{\slashed{g}}^{dat\mathcal{H}_\tau}_{AB}$ (i.e.~$\Phi^2 = \sqrt{\det \slashed{g}^{dat\mathcal{H}_\tau}_{AB} }$) along $\tilde{\mathcal{C}}_{U=0}$, we impose that for $i=0,1,\dots,I$,
\begin{align} \label{bcphi}
 \left[ \left(\partial_\theta\right)^i \log \left(\frac{\Phi^2}{4M^2 \sqrt{\gamma}} \right) \right] \left(U=0,v\right) = 0  \textrm{ \ \ \ for $v\geq v_\tau$ }\, .
\end{align}
In addition, we impose that $tr \chi = tr \chi_\circ=0$, $\zeta = 0$ and $tr \underline{\chi}= tr \underline{\chi}_\circ=-M^{-1}$ holds on these spheres.\footnote{It suffices to impose this on one sphere and to use the equations to infer it for all others.} With these definitions, the horizon is exactly Schwarzschildean for $v \geq v_\tau$. In particular, all $\Gamma$ and all $\psi$ vanish identically.
\item Similarly, for the spheres $S^2_{u,v_\infty}$ with $u \geq u_\tau$, we impose that their metric $\slashed{g}^{dat(v_\infty)_\tau}_{AB}=\Phi^2  \hat{\slashed{g}}^{dat(v_\infty)_\tau}_{AB}$ equals the round metric of radius $r\left(u,v_\infty\right)$:
\[
\left(\partial_\theta\right)^i  \log \left(\frac{\Phi^2}{r^2 \sqrt{\gamma}} \right) \left(u,v_\infty\right)= 0 \qquad \textrm{for $u\geq u_\tau$ and $i=0,\ldots I$} \, .
\]
Imposing also $tr \chi = tr \chi_\circ=0$, $\zeta = 0$ and $tr \underline{\chi}= tr \underline{\chi}_\circ=-M^{-1}$ one again concludes that the metric is exactly Schwarzschildean  for all $u \geq u_\tau$ along $v=v_\infty$.
\end{enumerate}

\subsection{Determining the geometry: The horizon} \label{sec:hozco}  \label{sec:redsrole}

Given the seeds $\hat{\slashed{g}}^{dat\mathcal{H}_\tau}_{AB}$ and $\hat{\slashed{g}}^{dat\left(v_
\infty\right)_\tau}_{AB}$ and the gauge-choice for $\Omega$ made in section \ref{sec:gaugefix}, we can construct and derive estimates for all geometric quantities on the respective cones using the Einstein (constraint) equations on the cones. By the choice of gauge, only ODEs will have to be solved in this process. 

\begin{proposition} \label{prop:construction}
Given a smooth finite data set $D_{\tau,v_\infty}$ associated with a scattering data set $\left(\hat{\slashed{g}}^{dat\mathcal{H}},   \hat{\slashed{g}}^{dat\mathcal{I}}, \tilde{P}, I, J\right)$ as in Definition \ref{def:finitedata}, and given the gauge of Section \ref{sec:gaugefix}, there exists a unique smooth solution for all $\Gamma$ and $\psi$ along $\tilde{\mathcal{C}}_{U=0}$, such that the Bianchi equations (\ref{bi4}) and the null-structure equations (\ref{ns4}), (\ref{4uhatc}), (\ref{4utrc}) and (\ref{Gauss})-(\ref{elliptic2}) hold along $\tilde{\mathcal{C}}_{U=0}$. Moreover, any $Q \in \{\Gamma, \psi \}$ satisfies the estimate
\begin{align}  \label{beisp}
\sum_{i=0}^{\tilde{I}}    \Big|  \left(\partial_v\right)^j \left(\partial_\theta\right)^i \ \left[Q \right] \left(v,\theta\right) \Big| \leq C \cdot \tilde{P}^{2+j} e^{-\tilde{P}\frac{v}{4M}} \ \ \ \textrm{for $j=0, ... , \tilde{J}$} 
\end{align}
along  $\tilde{\mathcal{C}}_{U=0}$ for some $\tilde{I}$ and $\tilde{J}$ (depending on $I$ and $J$) and a constant $C$ which can be made uniform\footnote{i.e.~not depending on the cut-off $v_\tau$ of the finite data set $D_{\tau,v_\infty}$}, by chosing $\tilde{P}$ sufficiently large (depending on $M$, $I$ and $J$). Finally, the $\tilde{I}$ and $\tilde{J}$ can be read off for each quantity $Q$ in terms of $I$ and $J$ from the table below.
\end{proposition}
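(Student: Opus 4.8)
The plan is to realise Proposition~\ref{prop:construction} as the successive solution of a triangular hierarchy of transport (ordinary differential) equations along the null generators of $\mathcal{H}^+$, coupled to elliptic equations on the spheres $S^2_{U=0,v}$, and then to propagate the exponential bounds of the seed through this hierarchy. Throughout one works in the regular Kruskal frame $(\ref{kruskfr})$, in which $\mathcal{H}^+$ is smooth and, since $r\equiv 2M$ there, \emph{all} Schwarzschild background quantities are \emph{constant} along $\mathcal{H}^+$: e.g.\ $\hat{\omega}_\circ=(2M)^{-1}$, $tr\chi_\circ=0$, $tr\underline{\chi}_\circ=-M^{-1}$ by $(\ref{eq:ex0})$--$(\ref{eq:exs})$. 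The gauge fixing of Section~\ref{sec:gaugefix} sets $\Omega_{\mathcal K}$ (hence $\hat{\omega}$) to its Schwarzschild value, reduces the free data to the conformal class $\hat{\slashed{g}}^{dat\mathcal{H}_\tau}$, and prescribes the full Schwarzschild sphere geometry on every $S^2_{U=0,v}$ with $v\ge v_\tau$, so that the construction is trivial there and one need only integrate backwards from $v=v_\tau$.

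\textbf{Step 1 (reduction and solvability).} The first equation to solve is the Raychaudhuri equation on the generators; written in terms of the conformal factor $\Phi$ (with $\slashed{g}=\Phi^2\hat{\slashed{g}}^{dat\mathcal{H}_\tau}$) it is a linear second-order ODE in $v$ whose source $\|\hat{\chi}\|^2$ depends only on $\partial_v\hat{\slashed{g}}^{dat\mathcal{H}_\tau}$, i.e.\ only on the conformally invariant seed; solving it determines $\Phi$, hence $\slashed{g}$, the area radius $r$, the Gauss curvature $K$ and $\hat{\chi}$. One then solves, on each sphere $S^2_{U=0,v}$, the Codazzi (elliptic) equation for $\zeta$ --- equivalently, via $\zeta=-\underline{\eta}=\eta$ on $\mathcal{H}^+$, for $\eta,\underline{\eta}$ --- and finally the remaining transport equations along the generators: the $\slashed{\nabla}_4$ null-structure equations $(\ref{ns4})$ (together with the $\slashed{\nabla}_4$ equations $(\ref{4uhatc})$, $(\ref{4utrc})$ for $\hat{\underline{\chi}}$ and $tr\underline{\chi}$), the equation $(\ref{beq})$ for $b$, the Bianchi equations $(\ref{bi4})$ for the null curvature components, and the elliptic equations $(\ref{Gauss})$--$(\ref{elliptic2})$. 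The structural point, read off by direct inspection as in Christodoulou~\cite{formationofbh}, is that these equations can be ordered so that the right-hand side of the equation for any given quantity involves only the seed and quantities already determined at an earlier stage (the residual coupling, e.g.\ between $\zeta$ and $\beta$, or between $\hat{\chi}$ and $\alpha$, being handled by treating small blocks simultaneously). Existence, uniqueness and smoothness then follow stage by stage from standard ODE theory (with ``initial'' data the Schwarzschild values at $v=v_\tau$) and from Hodge-type elliptic theory on $(S^2_{U=0,v},\slashed{g})$, where invertibility modulo the finite-dimensional kernels (fixed by the trace/gauge normalisations) uses that $\slashed{g}$ is $C^\infty$-close to round --- the right-hand sides of $(\ref{hoz1})$ being small for $v\ge v_{\tau_0}$ with $\tau_0$ chosen large.

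\textbf{Step 2 (exponential estimates --- the crux).} The bound $(\ref{beisp})$ is obtained by feeding $(\ref{hoz1})$ and (indirectly) $(\ref{assodata})$ into the hierarchy. The model computation is: for a component $Q$ whose $\slashed{\nabla}_4$-transport equation linearises on $\mathcal{H}^+$ to $\partial_vQ-\lambda Q=F$ with $\lambda=c[Q]\,\hat{\omega}_\circ>0$ a \emph{constant} multiple of $\hat{\omega}_\circ=(2M)^{-1}$ --- this sign being precisely the horizon blueshift --- and with $F$ a sum of seed terms and quadratic terms bounded at the given stage by $\tilde{P}^{\,j}e^{-\tilde{P}v/4M}$, integration backwards from $v=v_\tau$ with $Q(v_\tau)=0$ gives
\[
Q(v)=-\int_v^{v_\tau}e^{\lambda(v-v')}F(v')\,dv',\qquad
|\partial_v^{\,j}Q(v)|\ \lesssim\ \frac{\tilde{P}^{\,j}}{\tilde{P}/(4M)-\lambda}\,e^{-\tilde{P}v/4M},
\]
so the seed's exponential decay is reproduced with a \emph{gain} of a small constant $\lesssim\tilde{P}^{-1}$ as soon as $\tilde{P}>4M\lambda$; this gain absorbs the quadratic errors $\Gamma\cdot\Gamma$, $\Gamma\cdot\psi$ in a short bootstrap and, crucially, is uniform in the cutoff location $v_\tau$ (equivalently in $\tau$). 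Commuting with $\partial_v$ costs nothing because the background is constant on $\mathcal{H}^+$; commuting with $\partial_\theta$ is handled with Lemma~\ref{comlem} and the elliptic estimates; the elliptic closure and the passage from $\hat{\chi},\hat{\underline{\chi}},\dots$ to the curvature components (obtained by differentiating once, e.g.\ $\alpha=-\slashed{\nabla}_4\hat{\chi}+\dots$ from $(\ref{xhat})$) account for the extra power $\tilde{P}^{2+j}$ rather than $\tilde{P}^{\,j}$ in $(\ref{beisp})$, and tracking through the triangular hierarchy exactly how many tangential derivatives of the seed each quantity consumes produces the table of $(\tilde{I},\tilde{J})$.

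\textbf{Main obstacle.} Steps~1 and the solvability/regularity parts are essentially routine within Christodoulou's framework; the real work is the bookkeeping of Step~2: making the constant $C$ in $(\ref{beisp})$ independent of $v_\tau$ (which forces the redshift gain to beat the seed decay \emph{uniformly}, and hence pins the quantitative threshold ``$\tilde{P}$ sufficiently large, depending on $M,I,J$'' --- noting, per Section~\ref{sec:redshiftrole}, that each commutation eventually needed to pass to transversal $\slashed{\nabla}_3$-derivatives amplifies $\lambda$ additively by $\hat{\omega}_\circ$), and keeping exact account of the derivative loss at the elliptic and curvature stages so as to justify the stated exponential regularities in the table.
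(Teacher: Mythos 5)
Your proposal follows essentially the same route as the paper's proof: the gauge of Section~\ref{sec:gaugefix} fixes $\hat{\omega}=\hat{\omega}_\circ$ on $\mathcal{H}^+$, the conformal-factor ODE (\ref{phiode}) determines $\Phi$, hence $\slashed{g}$, $tr\chi$ and $\hat{\chi}$ directly from the seed, and all remaining Ricci coefficients and curvature components are obtained by integrating a triangular hierarchy of ODEs in $v$ backwards from the exactly Schwarzschildean region $v\geq v_\tau$, with the blueshift integrating factor $e^{v/2M}$ (coming from $\hat{\omega}_\circ$, amplified under commutation) forcing $\tilde{P}$ to be large so that the seed's decay beats it, and the resulting smallness making the constant uniform in the cutoff $v_\tau$; the table is then derivative bookkeeping, and the curvature components are handled as in \cite{formationofbh, LukRod}, exactly as the paper does.

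One step is mis-described, and as literally written would fail: you propose to obtain $\zeta$ ($=\eta=-\underline{\eta}$ on $\mathcal{H}^+$) by solving the Codazzi equation (\ref{elliptic2}) elliptically on each sphere. That equation is zeroth order in $\eta$ and contains the as-yet-undetermined curvature component $\beta$, so it cannot by itself determine $\zeta$. The paper instead combines the transport equation (\ref{etaeq}) with (\ref{elliptic2}) to \emph{eliminate} $\beta$, yielding the closed ODE $\partial_v\eta_A=-tr\chi\,\eta_A+(\slashed{g}^{-1})^{BC}\slashed{\nabla}_C\hat{\chi}_{BA}-\tfrac12\slashed{\nabla}_A tr\chi$ along the generators, which is then integrated backwards from $v_\tau$ like everything else; the elliptic relations (\ref{Gauss})--(\ref{elliptic2}) subsequently serve to determine and estimate curvature (e.g.\ the Gauss curvature entering the ODEs (\ref{4utrc}), (\ref{4uhatc}) for $tr\underline{\chi}-tr\underline{\chi}_\circ$ and $\hat{\underline{\chi}}$), not to solve for Ricci coefficients. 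Your remark about ``treating small blocks simultaneously'' only gestures at this; with that correction the argument coincides with the paper's.
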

\begin{tabular}{ | c || c | c | c | c | c |  c | c | c | c | c| c|}
  \hline                        
  $\phantom{X}$ & $\hat{\slashed{g}}^{dataH}- \frac{\gamma}{\sqrt{\gamma}}$ & $\log \frac{\Phi^2}{4M^2 \sqrt{\gamma}}$ & $\hat{\chi}$, $tr \chi - tr \chi_\circ$ & $\zeta$, $\eta$, $\underline{\eta}$ & $tr \underline{\chi}-tr \underline{\chi}_\circ$, $\underline{\hat{\chi}}$  \\ \hline
  $\tilde{I}$ & $I$ & $I$ & $I$ & $I-1$ & $I-2$  \\ \hline
  $\tilde{J}$ & $J$ & $J+1$ & $J-1$ & $J$ & $J$ \\ \hline  
\end{tabular}
 $\phantom{X}$ \\ \\ \\ \phantom{x..}
\begin{tabular}{ | c || c | c | c | c | c |  c | c | c | c | c| c|}
  \hline                        
  $\phantom{X}$ &  $\alpha$ & $\beta$ & $\rho-\rho_\circ, \sigma$ & $\underline{\beta}$ & $\underline{\alpha}$ \\ \hline
  $\tilde{I}$ & $I$ & $I-1$ & $I-2$ & $I-3$ & $I-4$  \\ \hline
  $\tilde{J}$ & $J-2$ & $J-1$ & $J-1$ & $J$ & $J$  \\
  \hline  
\end{tabular}

\begin{remark}
The proof of the above proposition as well as that of Proposition \ref{prop:construction1a} will provide a first taste (purely at the level of the data) of the obstructions resulting from the red-shift effect on the horizon. Further important obstructions will appear when trying to propagate the decay imposed on the data to the spacetime, as we will see in Section \ref{sec:bootstrap}. Cf.~also the remarks following Lemma \ref{comlem}.
\end{remark}

\begin{proof}
One derives the following equation for the conformal factor $\Phi$ along $\tilde{\mathcal{C}}_{U=0}$:
\begin{align} \label{phiode}
\frac{\partial^2 \Phi}{\partial v^2} - \frac{1}{2M} \partial_v \Phi + \frac{\Phi}{8} \left(\hat{\slashed{g}}^{-1}_{dat\mathcal{H}_t}\right)^{AC}\left(\hat{\slashed{g}}^{-1}_{dat\mathcal{H}_t}\right)^{BD} \left[ \partial_v \hat{\slashed{g}}^{dat\mathcal{H}_t}_{AB} \right]\left[ \partial_v \hat{\slashed{g}}^{dat\mathcal{H}_t}_{CD} \right] = 0 ,
\end{align}
where $\left[ \hat{\slashed{g}}^{-1}_{dat\mathcal{H}_t} \right]^{AB} \hat{\slashed{g}}^{dat\mathcal{H}_t}_{BC}= \delta^{A}_{\phantom{A}C}$ defines the components of $\hat{\slashed{g}}^{-1}_{dat\mathcal{H}_t}$. Using the boundary condition (\ref{bcphi}) and that $\Phi_{,v} = 0$ on $S^2_{U=0,v_\tau}$, one can integrate the ODE (\ref{phiode}) and obtain:
\begin{align}  \label{cosa}
\sum_{i=0}^I    \Big|  \left(\partial_v\right)^j \left(\partial_\theta\right)^i \ \left[\log \frac{\Phi^2}{4M^2 \sqrt{\gamma}} \right] \Big| \leq C \tilde{P}^j e^{-\tilde{P}\frac{v}{4M} \cdot 2} \ \ \ \textrm{for $j=0, ... , J+1$ .} 
\end{align}
From the equations
\begin{align} \label{trxdet}
tr \chi = \frac{2}{\Phi} \frac{\partial \Phi}{\partial v} \qquad , \qquad
\hat{\chi}_{AB} = \frac{1}{2} \Phi^2 \partial_v \hat{\slashed{g}}^{dat\mathcal{H}_t}_{AB} \, 
\end{align}
one reads off
\begin{align}  \label{cosa2}
\sum_{i=0}^I    \Big|  \left(\partial_v\right)^j \left(\partial_\theta\right)^i \ \left[ tr \chi \right] \Big| \leq C \tilde{P}^{j+1} e^{-\tilde{P}\frac{v}{4M} \cdot 2} \ \ \ \textrm{for $j=0, ... , J$} \ \ \ ;  \nonumber \\
\sum_{i=0}^I    \Big|  \left(\partial_v\right)^j \left(\partial_\theta\right)^i \ \left[ \hat{\chi}_{AB} \right] \Big| \leq C \tilde{P}^{j+1} e^{-\tilde{P}\frac{v}{4M}} \ \ \ \textrm{for $j=0, ... , J-1$.} 
\end{align}
Next one obtains an estimate for the components of $\eta$ using the fact that it vanishes on $S^2_{U=0,v_\tau}$ and the evolution equation 
\[
\partial_v \left( \eta_A\right) = - tr \chi \eta_A  + \left(\slashed{g}^{-1}\right)^{BC} \slashed{\nabla}_C \hat{{\chi}}_{BA} - \frac{1}{2} \slashed{\nabla}_A tr {\chi}  \, ,
\]
which is derived by combining (\ref{etaeq}) with (\ref{elliptic2}) and recalling $\eta=-\underline{\eta}=\zeta$ on $\tilde{\mathcal{C}}_{U=0}$. Integrating this ODE for $\eta$, one derives the exponential decay bound (\ref{beisp}) for $\eta$ from the previous bounds. Combining equations (\ref{4utrc}) and the Gauss equation (\ref{Gauss}) (observing that the Gauss curvature can be expressed in terms of quantities we already have estimates on), we obtain an ODE for $tr \underline{\chi} - tr \underline{\chi}_\circ$.
Similarly, for  $\hat{\underline{\chi}}$, which is governed by equation (\ref{4uhatc}), we obtain
\[
\partial_v \left[ \hat{\underline{\chi}}_{AB} \right] + \hat{\omega}_\circ \hat{\underline{\chi}}_{AB} = -2\slashed{\mathcal{D}}^\star \underline{\eta} - \frac{1}{2} tr \underline{\chi}_\circ \hat{\chi} + ... \, ,
\]
where ``..." indicates quadratic terms. We can write the two resulting ODEs as
\begin{align}
\partial_v \left[ e^{\frac{v}{2M}} \left(tr \underline{\chi} - tr \underline{\chi}_\circ\right) \right] = e^{\frac{v}{2M}} \Big[ 2 \slashed{div} \underline{\eta} - 2\left(K-K_\circ\right) +2 \underline{\eta} \cdot \underline{\eta} \nonumber \\ - tr \underline{\chi}_\circ \left[ \left(\hat{\omega}-\hat{\omega}_\circ \right) +  \left(tr \chi - tr \chi_\circ\right)\right] + \left(tr \underline{\chi} - tr \underline{\chi}_\circ\right) \left(\left(tr \chi - tr \chi_\circ \right) +  \left(\hat{\omega} - \hat{\omega}_0\right) \right) \Big] \nonumber \, ,
\end{align}
\begin{align} \label{ode2}
\partial_v \left[ e^{\frac{v}{2M}} \hat{\underline{\chi}}_{AB} \right] = e^{\frac{v}{2M}}\Big[ -2\slashed{\mathcal{D}}^\star \underline{\eta} - \frac{1}{2} tr \underline{\chi}_\circ \hat{\chi}  - \frac{1}{2} \left( tr \underline{\chi}- tr \underline{\chi}_\circ \right) \hat{\chi}+ ...  \Big]\, ,
\end{align}
and from those -- using the exponential estimates already available -- one proves the exponential decay bound (\ref{beisp}) for both $tr \underline{\chi} - tr \underline{\chi}_\circ$ and $\hat{\underline{\chi}}$. Note that the integrating factor in the above ODEs blows up exponentially as $v \rightarrow \infty$; this forces the exponential decay rate imposed on the quantities in the square brackets on the right hand side to be \emph{stronger} than the surface gravity $\kappa=\hat{\omega}_0 \left(\mathcal{H}\right)=\frac{1}{2M}$, i.e.~$\tilde{P}>1$, in order for the right hand side of the above ODEs to be integrable in absolute value from infinity. This is the celebrated redshift effect, which since we are integrating backwards appears as a blueshift.

With the constraint on the constant $\tilde{P}$ imposed by the surface gravity of the horizon being understood, the estimate (\ref{beisp}) is also obtained for all curvature components using the Bianchi- and null-structure equations, as is carried out in complete detail in \cite{formationofbh, LukRod}.
\end{proof}

\begin{remark}
It would be interesting to find systematic ways of constructing data with seed functions not decaying exponentially by exploiting cancellations in the round bracket on the right hand side of (\ref{ode2}). However, it should be clear that even if this can be achieved (for instance, trivially, by imposing the data on the horizon to be Schwarzschild), it is not automatic that one can solve backwards given non-trivial polynomially decaying data on $\mathcal{I}^+$, precisely because of the aforementioned additional obstructions when propagating the decay.
\end{remark}

In addition to (\ref{beisp}), we can also define and estimate the \emph{transversal} derivatives\footnote{We record the slight abuse of language: The objects $\slashed{\nabla}^k_3 Q$ are defined by imposing the commuted null-structure equations. Only once these objects are propagated into the spacetime do they acquire their interpretation as \emph{transversal} derivatives.} $\slashed{\nabla}^k_3 Q$ of any quantity $Q$ using the commuted null-structure and Bianchi equations and the fact that $Q=0$ on $S^2_{U=0,v_\tau}$. This leads to the estimates (\ref{beispiel}) below, which we have conveniently stated for the $\slashed{g}_{\circ}$-norms.\footnote{Note that in view of $|\slashed{g}^\circ_{AB}| \leq C r^2$ and $|\slashed{g}_\circ^{AB}| \leq C r^{-2}$, lifting and lowering indices in general introduces $r$-weights for the components of an $S^2_{u,v}$-tensor. On the horizon, however, $r$ is bounded and hence (\ref{beispiel}) also holds pointwise for the individual components of $Q$. This will be different when we discuss the cone $v=v_\infty$.}

\begin{proposition} \label{prop:construction1a}
Under the assumptions of Proposition \ref{prop:construction}, there exist $\hat{I}$, $\hat{J}$ and $\hat{K}$ depending only on $I$ and $J$ such that for any $k=0,...,\hat{K}$
\begin{align}  \label{beispiel}
\sum_{i=0}^{\hat{I}}    \Big\| \slashed{\nabla}^k_3 \left(r \slashed{\nabla}_4\right)^j \left(r \slashed{\nabla}\right)^i \ \left[Q \right] \left(v,\theta\right) \Big\|_{\slashed{g}_\circ} \leq C_{\hat{I},\hat{J}, \hat{K}} \cdot \tilde{P}^{2+j} e^{-\tilde{P}\frac{v}{4M}} \ \ \ \textrm{for $j=0, ... , \hat{J}$} 
\end{align}
holds along $\tilde{\mathcal{C}}_{U=0}$. The constant $C_{\hat{I},\hat{J}, \hat{K}}$ can be made uniform (in $v_\tau$) by choosing $\tilde{P}$ large depending on $M$ and on $\hat{K}$. Finally, $\hat{I}$, $\hat{J}$ and $\hat{K}$ can all be made large by choosing $I$, $J$ and $\tilde{P}$ large.
\end{proposition}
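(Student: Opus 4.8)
The plan is to derive (\ref{beispiel}) by induction on the number $k$ of transversal $\slashed{\nabla}_3$-derivatives, using at each stage the \emph{commuted} null structure and Bianchi equations of Propositions \ref{comsum} and \ref{prop:nscommute} restricted to $\tilde{\mathcal{C}}_{U=0}$, where they become a coupled hierarchy of linear ODEs in $v$ for the unknowns $\slashed{\nabla}_3^k(r\slashed{\nabla}_4)^j(r\slashed{\nabla})^i Q$. The base case $k=0$ is exactly Proposition \ref{prop:construction} (together with the observation that on $\mathcal{H}^+$ the quantity $r$ is bounded between $2M$ and a constant, so that the $\slashed{g}_\circ$-norms in (\ref{beispiel}) and the pointwise coordinate bounds (\ref{beisp}) are equivalent, and that the operators $r\slashed{\nabla}_4$, $r\slashed{\nabla}$ only cost finitely many of the $\tilde I$, $\tilde J$ derivatives recorded in the two tables). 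So the content is the inductive step: assuming (\ref{beispiel}) for all orders $\le k-1$ (with some loss in the number of tangential derivatives), prove it for order $k$.

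First I would fix the ODE hierarchy. Given $Q \in \{\Gamma,\psi\}$, the equation for $\slashed{\nabla}_3 Q$ (or for a $\uppsi$ in a Bianchi pair) is schematically $\slashed{\nabla}_3 Q = c_Q \hat\omega_\circ Q + (\text{terms already estimated at order } k-1)$, where $c_Q$ is an explicit constant: it is $0$ for the $\overset{(3)}{\Gamma}$ and a nonzero multiple of the surface gravity for the curvature components via the boxed redshift term of Lemma \ref{comlem}. Applying $\slashed{\nabla}_3^{k-1}$ and the commutation formulas produces, on $\tilde{\mathcal{C}}_{U=0}$ where $\hat\omega_\circ = \hat\omega = (2M)^{-1}$ is constant, an ODE of the form
\begin{align}
\partial_v\!\left[\slashed{\nabla}_3^k Q\right] = a_{Q,k}\,(2M)^{-1}\,\slashed{\nabla}_3^k Q + \mathcal{R}_{Q,k} \nonumber
\end{align}
with $a_{Q,k}$ an explicit constant growing linearly in $k$ (the amplified redshift of Section \ref{sec:redshiftrole}) and $\mathcal{R}_{Q,k}$ a finite sum of products of lower-order quantities — each factor being either an $f_p$, a $\slashed{\nabla}_3^{\le k-1}$-derivative of some $\Gamma$ or $\psi$, or a tangential derivative thereof — hence, by the inductive hypothesis (and Proposition \ref{prop:construction} for the zeroth transversal order), bounded in $\slashed{g}_\circ$-norm by $C\tilde{P}^{2+j}e^{-\tilde P v/(4M)}$, provided $\tilde{P}$ has been chosen large enough. (Quadratic error terms only improve the exponent since $e^{-2\tilde P v/(4M)} \le e^{-\tilde P v/(4M)}$ for $v$ large, after possibly enlarging the constant.) Since $Q = 0$ on $S^2_{U=0,v_\tau}$, and all its derivatives vanish there too by the gauge choices of Section \ref{sec:gaugefix}, I would integrate the ODE \emph{backwards from $v=\infty$}: writing $\slashed{\nabla}_3^k Q(v) = -\int_v^\infty e^{a_{Q,k}(2M)^{-1}(v - \bar v)}\mathcal{R}_{Q,k}(\bar v)\,d\bar v$, the integrating factor $e^{-a_{Q,k}(2M)^{-1}\bar v}$ blows up as $\bar v \to \infty$ at rate $a_{Q,k}/(2M)$, so integrability in absolute value from infinity requires $\tilde{P}/(4M) > a_{Q,k}/(2M)$, i.e. $\tilde{P} > 2 a_{Q,k}$. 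Because $a_{Q,k}$ grows with $k$, this forces $\tilde{P}$ to be chosen large \emph{depending on the target order} $\hat K$ — precisely the statement of the Proposition. Carrying this out component by component, in the same order as in the proof of Proposition \ref{prop:construction} (first the conformal factor and $tr\chi$, $\hat\chi$; then $\eta$; then $tr\underline\chi - tr\underline\chi_\circ$ and $\hat{\underline\chi}$; then the curvature components in the order $\alpha,\beta,(\rho-\rho_\circ,\sigma),\underline\beta,\underline\alpha$), closes the induction, with $\hat I$, $\hat J$, $\hat K$ read off by bookkeeping: each transversal derivative and each tangential derivative consumes a bounded number of the $\tilde I, \tilde J$ from the tables, and one can arrange all three to be as large as desired by taking $I,J$ (and then $\tilde{P}$) large.

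The main obstacle is precisely the bookkeeping of the redshift weights: one must check that for \emph{every} quantity $Q$ the constant $a_{Q,k}$ multiplying the ``linear in $Q$'' term on the right-hand side of the $k$-times $\slashed{\nabla}_3$-commuted equation is genuinely finite and explicit (growing at most linearly in $k$), and that no term in $\mathcal{R}_{Q,k}$ secretly contains $\slashed{\nabla}_3^k$ of a quantity with a coefficient that is only $O(1)$ rather than decaying — otherwise the ODE would not decouple into a triangular system. This is exactly where the structure isolated in Propositions \ref{comsum} and \ref{prop:nscommute} (in particular Remarks \ref{rem:comstructure} and the boxed-term accounting, and the fact that the only non-quadratic terms are those of the Corollaries, all with the correct $f$-weights) does the work: it guarantees that the linear self-interaction is exactly the redshift term with its explicit constant, and that all genuinely coupling terms are either lower-order in $k$ or carry an $f_p$ coefficient. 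On $\mathcal{H}^+$ the $r$-weights are all $O(1)$ so no additional subtlety enters there; the delicate $r$-weight analysis is deferred to the spacetime estimates of Section \ref{sec:bootstrap}. Once the triangular-ODE structure and the explicit $k$-dependence of the redshift constant are in hand, the estimate follows by the same backwards-integration argument as in Proposition \ref{prop:construction}, repeated finitely many times, with $\tilde{P}$ enlarged at the end to dominate $2\max_{Q,k\le \hat K} a_{Q,k}$.
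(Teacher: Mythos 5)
Your proposal follows essentially the same route as the paper's proof: the transversal derivatives are obtained by integrating, in $v$ along $\tilde{\mathcal{C}}_{U=0}$, the $\slashed{\nabla}_3$-commuted $\slashed{\nabla}_4$-transport (null-structure and Bianchi) equations with trivial data for $v\ge v_\tau$, and the one structural point the paper itself highlights -- the amplified redshift term $k\,\hat{\omega}_\circ\,\slashed{\nabla}_3^k Q$ generated by the boxed term of Lemma \ref{comlem}, which forces $\tilde{P}$ to be chosen large depending on $\hat{K}$ -- is exactly the point you isolate. One correction is needed: you have the sign of the linear self-interaction backwards. The commuted equation reads $\slashed{\nabla}_4\big(\slashed{\nabla}_3^k Q\big) + k\,\hat{\omega}_\circ\,\slashed{\nabla}_3^k Q = \slashed{\nabla}_3^k(\mathrm{RHS})$ up to lower-order terms, i.e.\ on the horizon $\partial_v X = -\tfrac{k}{2M}X + \mathcal{R}_{Q,k}$, not $\partial_v X = +\tfrac{k}{2M}X + \mathcal{R}_{Q,k}$ as in your display. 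With your sign, the backwards Duhamel kernel $e^{(a_{Q,k}/2M)(v-\bar v)}$ is bounded for $\bar v \ge v$, so no largeness condition on $\tilde{P}$ would arise -- contradicting the (correct) conclusion you then draw; with the correct sign the kernel is $e^{(k/2M)(\bar v - v)}$, which grows in $\bar v$, and uniform-in-$v_\tau$ integrability from the future requires precisely $\tilde{P}>2k$ up to constants, which is the paper's $\partial_v\big(e^{kv/2M}\slashed{\nabla}_3^k Q\big) = e^{kv/2M}\,\slashed{\nabla}_3^k(\mathrm{RHS})$. Relatedly, since every quantity is propagated along the horizon by its $\slashed{\nabla}_4$-equation, the coefficient $k\,\hat{\omega}_\circ$ appears for every $Q$ after $k$ commutations, so the distinction you draw at $k=0$ between the $\overset{(3)}{\Gamma}$'s and the curvature components plays no role in the commuted hierarchy. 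Because your stated constraint and your appeal to the amplified redshift/blueshift are the correct mechanism, this reads as a transcription slip rather than a missing idea, but as literally written the displayed ODE and kernel do not yield the estimate.
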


\begin{proof}
This follows by patiently integrating the ODEs with trivial (future) data arising from the commuted null-structure and Bianchi equations. While this is straightforward, we point out an important structure: By Lemma \ref{comlem}, every commutation of a $\slashed{\nabla}_4$-equation with $\slashed{\nabla}_3$ produces a positive \emph{linear} term $\omega_\circ \slashed{\nabla}_3$. Indeed, ignoring lower order terms, if 
\[
\slashed{\nabla}_4 Q = RHS
\]
then
\[
\slashed{\nabla}_4 \left(\slashed{\nabla}_3^n Q\right) + n \cdot \omega_\circ  \left(\slashed{\nabla}_3^n Q\right)  =  \slashed{\nabla}_3^n \left( RHS \right)
\]
and hence
\[
\partial_v \left( e^{n \frac{ v}{2M}}  \slashed{\nabla}_3^n Q\right) =  e^{n \frac{ v}{2M}} \cdot \left(\slashed{\nabla}_3^n RHS\right) \, ,
\]
which means that the right hand side needs to decay stronger in $v$ the more transversal derivatives are taken. This explains why $\tilde{P}$ has to be chosen large depending on $\hat{K}$ to make the constant in (\ref{beispiel}) uniform. This ``amplified redshift" under commutation was observed for the wave equation in \cite{mihalisnotes} and is seen here as an ``amplified blueshift" since we are integrating towards the past.
\end{proof}

\subsection{Determining the geometry: The hypersurface $v=v_\infty$} \label{sec:radfields}

For the characteristic data along the surface $v=v_\infty$ we can follow an analogous procedure. However, in view of the limiting procedure applied later ($v_\infty \rightarrow \infty$), weights in the variable $r$ are now important.

\begin{proposition} \label{prop:construction2}
Given a smooth finite data set $D_{\tau,v_\infty}$ associated with a scattering data set $\left(\hat{\slashed{g}}^{dat\mathcal{H}},   \hat{\slashed{g}}^{dat\mathcal{I}}, \tilde{P}, I, J\right)$ as in Definition \ref{def:finitedata}, and given the gauge of Section \ref{sec:gaugefix}, there exists a unique smooth solution for all $\Gamma$ and $\psi$ along $\tilde{\mathcal{C}}_{v=v_\infty}$ such that the Bianchi equations (\ref{bi3}) and the null-structure equations (\ref{ns3}), (\ref{3hatc}), (\ref{3trc}) and (\ref{Gauss})-(\ref{elliptic2}) hold along $\tilde{\mathcal{C}}_{v=v_\infty}$. Moreover, (using commutation) one can define  uniquely and smoothly the transversal derivatives of any $\Gamma$ and $\psi$ along $\tilde{\mathcal{C}}_{v=v_\infty}$ such that in addition the remaining Bianchi (\ref{bi4}) and null-structure equations (\ref{ns4}), (\ref{4uhatc}), (\ref{4utrc}) hold. Finally, there exists $\hat{I}$, $\hat{J}$ and $\hat{K}$ (depending only on $I$ and $J$) such that any $Q_p \in \{\Gamma_p, \psi_p \}$ satisfies the estimate
\begin{align}  \label{beisp2}
\sum_{i=0}^{\hat{I}} r^p   \Big\| \left(r \slashed{\nabla}_4\right)^k \slashed{\nabla}_3^j \left( r \slashed{\nabla}^i \right) \left[ Q_p \right] \left(v,\theta\right) \Big\|_{\slashed{g}_\circ} \leq C \cdot \tilde{P}^{2+j} e^{-\tilde{P}\frac{u}{4M}} 
\end{align}
for $j=0, ... , \hat{J}$ and $k=0, ... , \hat{K}$ along  $\tilde{\mathcal{C}}_{v=v_\infty}$. The constant $C$ depends only on $M$ and $\hat{I}$, $\hat{J}$ and $\hat{K}$.
\end{proposition}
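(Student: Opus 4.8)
\textbf{Proof strategy for Proposition \ref{prop:construction2}.}

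The plan is to mirror the construction carried out on the horizon in Propositions \ref{prop:construction} and \ref{prop:construction1a}, but now keeping careful track of $r$-weights, since along $\tilde{\mathcal{C}}_{v=v_\infty}$ the area radius $r(u,v_\infty)$ is large (of order $v_\infty$) and grows as $u\to-\infty$ toward spacelike infinity. The starting point is the conformal factor equation: writing $\slashed{g}^{dat(v_\infty)_\tau}_{AB}=\Phi^2\hat{\slashed{g}}^{dat(v_\infty)_\tau}_{AB}$ with $\hat{\slashed{g}}^{dat(v_\infty)_\tau}$ of unit determinant, one derives along $v=v_\infty$ the Raychaudhuri-type ODE (the analogue of (\ref{phiode}) in the $u$-direction)
\[
\frac{\partial^2\Phi}{\partial u^2}=-\frac{\Phi}{8}\left(\hat{\slashed{g}}^{-1}_{dat(v_\infty)_\tau}\right)^{AC}\left(\hat{\slashed{g}}^{-1}_{dat(v_\infty)_\tau}\right)^{BD}\left[\partial_u\hat{\slashed{g}}^{dat(v_\infty)_\tau}_{AB}\right]\left[\partial_u\hat{\slashed{g}}^{dat(v_\infty)_\tau}_{CD}\right],
\]
together with the gauge choice $\Omega^2=\Omega_\circ^2$ from Section \ref{sec:gaugefix} and the boundary conditions on the spheres with $u\ge u_\tau$ (where everything is exactly Schwarzschildean). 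Because $\partial_u\hat{\slashed{g}}^{dat(v_\infty)_\tau}$ decays like $r^{-1}e^{-\tilde P u/4M}$ by (\ref{assodata}), the right-hand side is quadratically small; integrating backwards in $u$ from the trivial data at $u=u_\tau$ and using the implicit function theorem / a simple bootstrap, one obtains that $\log(\Phi^2/(r^2\sqrt\gamma))$ and its $\theta$- and $u$-derivatives decay like $r^{-1}e^{-\tilde P u/4M}$, hence $tr\chi-tr\chi_\circ=\overset{(4)}{\Gamma}_2$ and $\hat\chi=\overset{(4)}{\Gamma}_2$ satisfy (\ref{beisp2}) with their assigned weights $p=2$. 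Here one must check that $r$ is essentially a reparametrization of $u$ along $v=v_\infty$ (via (\ref{rdef})) so that "integrating in $u$" and "gaining powers of $r$" are compatible; the monotonicity $\partial_u r<0$ along $v=v_\infty$ and the asymptotics $r\sim v_\infty-u$ make this transparent for $r$ large.

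Next I would propagate through the remaining tangential quantities exactly in the order dictated by the elliptic and transport constraint equations of Appendix \ref{appendix}: $\zeta$ (equal to $-\underline\eta$ by the gauge, and to $\eta$ since $\Omega=\Omega_\circ$) from the Codazzi equation (\ref{elliptic2}) and the transport equation (\ref{etaeq}) in the $e_4$ direction; then $tr\underline\chi-tr\underline\chi_\circ$ from (\ref{3trc}) combined with the Gauss equation (\ref{Gauss}) (whose curvature $K-K_\circ$ is now controlled by the quantities already estimated); then $\hat{\underline\chi}=\overset{(3)}{\Gamma}_1$ from (\ref{3hatc}). At each stage the schematic structure of Proposition \ref{uneq} guarantees that the source terms carry the correct $r$-weights, so the weighted bounds (\ref{beisp2}) are reproduced. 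The curvature components $\alpha,\beta,\rho-\rho_\circ,\sigma,\underline\beta,\underline\alpha$ are then obtained in sequence from the $e_4$-Bianchi equations (\ref{bi3}) and the remaining Bianchi relations, precisely as in \cite{formationofbh,LukRod}; the $r$-weights $p=4,\tfrac72,3,2,1$ of (\ref{stabmindec}) are exactly what is needed to integrate the transport terms from $u=u_\tau$ and recover (\ref{beisp2}). Finally, the transversal derivatives $\slashed{\nabla}_3^j$ and the mixed $(r\slashed{\nabla}_4)^k$ derivatives of all $Q_p$ are defined by \emph{imposing} the $\mathfrak{D}$-commuted equations of Section \ref{sec:commute} (Propositions \ref{comsum}, \ref{prop:nscommute}); since on $v=v_\infty$ these are again ODEs in $u$ once the tangential data are known, unique smooth solutions exist and, by induction on $j$ and $k$, satisfy (\ref{beisp2}) — with the caveat, already emphasized after Lemma \ref{comlem} and in Proposition \ref{prop:construction1a}, that each additional $\slashed{\nabla}_3$-commutation of an $e_4$-equation brings in the linear redshift term $j\,\hat\omega_\circ\slashed{\nabla}_3^{\,j}(\cdot)$ with integrating factor $e^{j v/2M}$-type growth in the backward direction, which forces $\tilde P$ to be chosen large depending on $\hat K$.

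The main obstacle — and the point genuinely new compared with the horizon construction — is bookkeeping the $r$-weights uniformly as $v_\infty\to\infty$, i.e. ensuring the constant $C$ in (\ref{beisp2}) does not degenerate with $v_\infty$. Two things make this work: first, the gauge choice $\Omega^2=\Omega_\circ^2$ along $v=v_\infty$ kills the anomalous $f_1\Gamma_3$-type terms (cf. Proposition \ref{rem:anoterm}) that would otherwise obstruct the weighted integration; second, the hierarchy in Proposition \ref{uneq2} — that $E_4[\uppsi'_{p'}]$ gains a full $r^{-p'-3/2}$ rather than merely $r^{-p'-1}$ — is exactly the "null condition at null infinity" of Section \ref{NSBE} that allows the $u$-integrals $\int_u^{u_\tau} r^{-p-1}\,du'\lesssim r^{-p}$ to close with the assigned weights. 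One also has to verify that the $\slashed{g}_\circ$-norms (rather than $\slashed{g}$-norms) are the natural objects, since raising/lowering indices with $\slashed{g}^\circ_{AB}\sim r^2\gamma_{AB}$ introduces the very $r$-powers that the subscript-$p$ notation is designed to track; but as the metric stays $C^1$-close to Schwarzschild along the data cone this is routine. Modulo these weight computations, which I would not carry out in detail, the proposition follows by the same patient integration of ODEs as Proposition \ref{prop:construction1a}, and I would simply refer to \cite{formationofbh, LukRod} for the curvature bookkeeping and to the table structure of Proposition \ref{prop:construction} (suitably shifted by the $r^p$ weights) for the precise values of $\hat I,\hat J,\hat K$.
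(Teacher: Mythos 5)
Your overall strategy --- mimic the horizon construction of Propositions \ref{prop:construction} and \ref{prop:construction1a} by integrating ODEs along the cone while tracking $r$-weights --- is indeed the paper's approach, and the two issues you flag at the end (uniformity as $v_\infty\to\infty$, the role of the $\slashed{g}_\circ$-normalization) are the right ones. But the execution contains a systematic error: you have carried the tangential/transversal structure of the \emph{horizon} over to $\tilde{\mathcal{C}}_{v=v_\infty}$ without swapping it. On a constant-$v$ hypersurface the tangential directions are $\partial_u$ and the angular directions, so $\slashed{\nabla}_3=\slashed{\nabla}_{\Omega^{-2}\partial_u}$ is \emph{tangential} and $\slashed{\nabla}_4$ is \emph{transversal}. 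Consequently: (i) the quantities determined directly by the seed and the conformal factor are $\hat{\underline{\chi}}=\frac12\Phi^2\,\Omega^{-2}\partial_u\hat{\slashed{g}}^{dat(v_\infty)_\tau}$ and $tr\underline{\chi}$ (cf.~(\ref{polk})), \emph{not} $\hat{\chi}$ and $tr\chi-tr\chi_\circ$ as you claim; the latter are instead obtained by integrating the tangential equations (\ref{3hatc}) and (\ref{3trc}), which are equations for $\hat{\chi}$ and $tr\chi$, not for the underlined quantities as you write; (ii) you cannot determine $\zeta$ from ``the transport equation (\ref{etaeq}) in the $e_4$ direction'' --- that equation transports \emph{off} the hypersurface; the paper integrates the $\slashed{\nabla}_3$-equation for $\underline{\eta}$ (combined with the Codazzi equation (\ref{elliptic})) in $u$; (iii) the transversal derivatives to be defined by imposing the commuted equations are $\slashed{\nabla}_4^k$, not $\slashed{\nabla}_3^j$. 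As written, your sequence of ODEs cannot actually be solved along the cone.

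Two further points. The redshift caveat you import from Proposition \ref{prop:construction1a} (the linear term $j\,\hat\omega_\circ$ forcing $\tilde P$ large depending on $\hat K$) is a horizon phenomenon: along $v=v_\infty$ one has $\hat\omega_\circ=2M/r^2$, which is tiny, so there is no exponentially growing integrating factor, and --- as the statement of Proposition \ref{prop:construction2} reflects, in contrast to Proposition \ref{prop:construction1a} --- no such constraint on $\tilde P$ arises here. Second, your mechanism for closing the $r$-weights, $\int_u^{u_\tau}r^{-p-1}\,du'\lesssim r^{-p}$, is the mechanism for integration in $v$ along an \emph{outgoing} cone (where $dv\sim dr$); along $v=v_\infty$ the radius $r(u,v_\infty)\sim v_\infty-u$ is essentially constant over the relevant $u$-range, the $u$-integration gains only the exponential factor $e^{-\tilde P u/4M}$, and the weights in (\ref{beisp2}) instead come out of the normalization of the equations themselves (e.g.\ dividing $\Phi^2\underline{\eta}\sim r\,e^{-\tilde P u/4M}$ by $\Phi^2\sim r^2$ to get the $r^{-1}$ decay of $\underline{\eta}$), exactly as in the paper's computation.
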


\begin{proof}
The proof mimics that of Propositions \ref{prop:construction} and \ref{prop:construction1a}. We present the estimates for the first few quantities to illustrate the use of $r$-weights. The analogue of (\ref{phiode}) reads
\begin{align}
\frac{1}{\Omega^2} \partial_u \left(r^2\frac{1}{\Omega^2} \partial_u \left[\frac{\Phi}{r} \right]  \right)= \nonumber \\
 - \frac{\Phi}{r} \frac{r^2}{8} \left(\hat{\slashed{g}}^{-1}_{dat(v_\infty)_\tau}\right)^{AC}\left(\hat{\slashed{g}}^{-1}_{dat(v_\infty)_\tau}\right)^{BD} \left[ \frac{1}{\Omega^2} \partial_u \hat{\slashed{g}}^{dat(v_\infty)_\tau}_{AB} \right]\left[ \frac{1}{\Omega^2} \partial_u \hat{\slashed{g}}^{dat(v_\infty)_\tau}_{CD} \right] 
\end{align}
which, in view of $\Phi^2 = r^2 \sqrt{\gamma}$ on $S^2_{u_\tau,v_\infty}$ leads to the estimate
\begin{align} 
\sum_{i=0}^I    \Big| r^2(u,v_\infty) \left(\partial_u\right)^j \left(\partial_\theta\right)^i \ \left[\log \frac{\Phi^2}{r^2 \sqrt{\gamma}} \right] \Big| \leq C \tilde{P}^j e^{-\tilde{P}\frac{u}{4M} \cdot 2} \ \ \ \textrm{for $j=0, ... , J+1$.} \nonumber
\end{align}
Furthermore, in view of
\begin{align} \label{polk}
 \underline{\hat{\chi}}_{AB} = \frac{1 }{2} \Phi^2 \frac{1}{\Omega^2} \partial_u \hat{\slashed{g}}^{dat(v_\infty)_\tau}_{AB} \sim \frac{1}{2} r \partial_u \hat{\slashed{g}}^{dat\mathcal{I}_\tau}_{AB} + \mathcal{O}\left(1\right) \, ,
\end{align}
we have 
\begin{align}
r \| \underline{\hat{\chi}} \|_{\slashed{g}^\circ} &= \sqrt{\frac{1}{4} r^4 \left(\slashed{g}^\circ\right)^{AB} \left(\slashed{g}^\circ\right)^{CD} \partial_u \hat{\slashed{g}}^{dat\mathcal{I}_\tau}_{AC}  \partial_u \hat{\slashed{g}}^{dat\mathcal{I}_\tau}_{BD} } + \mathcal{O}\left(r^{-\frac{1}{2}}\right) \nonumber \\ 
&= \frac{1}{2} \| \partial_u \hat{\slashed{g}}^{dat\mathcal{I}_\tau}\|_\gamma + \mathcal{O}\left(r^{-\frac{1}{2}}\right) \nonumber \, ,
\end{align}
which leads to (\ref{beisp2}) and moreover shows that it is indeed $\hat{\slashed{g}}^{dat\mathcal{I}_\tau}_{AB}$ which incorporates the non-trivial radiative information in the free-data $\hat{\slashed{g}}^{dat(v_\infty)_\tau}_{AB}$. Note also that one could replace $\slashed{g}^\circ$ by $\slashed{g}$ here. Next, for $\underline{\eta}_A$ we find
\[
\frac{1}{\Omega^2} \partial_u \left(\Phi^2 \underline{\eta}_A\right) = \Phi^2 \left[ (\slashed{g}^{-1})^{BC} \slashed{\nabla}_B \hat{\underline{\chi}}_{CA} - \frac{1}{2} \slashed{\nabla}_A \left(tr \underline{\chi} - tr \underline{\chi}_\circ \right)\right] \, .
\]
In our coordinate frame the right hand side is like $\sim r$ (the dominant term coming from $\underline{\hat{\chi}}$), which after integration and exploiting the exponential decay in $u$, yields $\frac{1}{r}$ decay for $\underline{\eta}_A$ and hence $\frac{1}{r^2}$-decay for the norm $\|\underline{\eta}\|_{\slashed{g}^\circ}$. The other quantities are obtained in complete analogy to the case of the horizon, thereby obtaining (\ref{beisp2}) for all $\Gamma_p$ and $\psi_p$ with appropriate $r$-weights.
\end{proof}
\subsection{Radiation fields} \label{sec:radfields2}
The proof of Proposition \ref{prop:construction2} (cf.~(\ref{polk})) suggests that one can isolate the dominant term in a $\frac{1}{r}$ asymptotic expansion for each $\Gamma_p$ and $\psi_p$ and that this term depends only on the scattering data. This leads one to define the following quantities:
\begin{align}
\hat{\underline{\chi}}^{\mathcal{I}}_{AB} \left(u,v,\theta\right)=  \frac{1}{2} r \left(u,v\right) \cdot \partial_u \hat{\slashed{g}}^{dat\mathcal{I}}_{AB} \left(u,\theta\right)
\end{align}
\begin{align}
\left(tr \underline{\chi}- tr \underline{\chi}_\circ\right)^{\mathcal{I}}  \left(u,v,\theta\right) = \frac{1}{r^2 \left(u,v\right)} \int_u^\infty r^2\|\hat{\underline{\chi}}^{\mathcal{I}} \|_{\slashed{g}^\circ}^2 \left(\bar{u},v,\theta \right) d\bar{u}
\end{align}
\begin{align}
\underline{\alpha}_{AB}^{\mathcal{I}} \left(u,v,\theta\right) = - \frac{1}{2} r \left(u,v\right) \cdot \partial_u \partial_u \hat{\slashed{g}}^{dat\mathcal{I}}_{AB} \left(u,\theta\right)
\end{align}
\begin{align}
\underline{\eta}_A^\mathcal{I} \left(u,v,\theta\right)  = -{\eta}_A^\mathcal{I} \left(u,v,\theta\right)  = +\frac{1}{2} r^{-1}  \left(u,v\right) \left(\gamma^{-1}\right)^{BC} \slashed{\nabla}^\circ_B  \hat{\slashed{g}}^{dat\mathcal{I}}_{CA} \left(u,\theta\right)
\end{align}
\begin{align} \label{memory}
\left(tr \chi - tr \chi_\circ \right)^\mathcal{I} \left(u,v,\theta\right) = 0 \textrm{ \ \ \ \ \ as (\ref{3trc}) shows it decays like $r^{-3}$}
\end{align}
\begin{align}
\hat{{\chi}}^{\mathcal{I}}_{AB} \left(u,v,\theta\right) =  -\frac{1}{2}\hat{\slashed{g}}^{dat\mathcal{I}}_{AB} \left(u,\theta\right)
\end{align}
\begin{align}
\underline{\beta}_A^\mathcal{I} \left(u,v,\theta\right) = \slashed{div}^\circ \hat{\underline{\chi}}^{\mathcal{I}}_{AB} \left(u,v,\theta\right)
\end{align}
\begin{align}
\left(\rho-\rho_\circ\right)^\mathcal{I} \left(u,v,\theta\right) = -\slashed{div}^\circ \underline{\eta}^\mathcal{I} + \frac{1}{2} \left(\hat{\chi}^\mathcal{I},\hat{\underline{\chi}}^\mathcal{I}\right) - \frac{1}{4} tr \chi_\circ \left(tr \underline{\chi} - tr \underline{\chi}_0\right)^\mathcal{I}
\end{align}
\begin{align}
\sigma^\mathcal{I}  \left(u,v,\theta\right) = curl^\circ \left({\eta}^\mathcal{I} \left(u,v,\theta\right)\right) -\frac{1}{2}\hat{{\chi}}^{\mathcal{I}} \wedge_\circ \hat{\underline{\chi}}^{\mathcal{I}}
\end{align}
\begin{align}
\beta^\mathcal{I}_A = 0 \qquad \textrm{as (\ref{elliptic2}) shows that it decays like $r^{-4}$}
\end{align}
\begin{align}
\alpha^\mathcal{I}_{AB} = 0 \qquad \textrm{as Bianchi shows that it decays like $r^{-5}$}
\end{align}
\begin{align}
\left(\hat{\omega} - \hat{\omega}_\circ\right)^\mathcal{I} \left(u,v,\theta\right) = -2 \int_u^\infty \left(\rho-\rho_\circ\right)^\mathcal{I} \left(\bar{u},v,\theta\right) d\bar{u}
\end{align}
\begin{align}
(b^\mathcal{I})^A = -\int_u^\infty 4(\eta^\mathcal{I})^A \left(\bar{u},v,\theta\right) d\bar{u}
\end{align}
\begin{align} \label{gclose}
\left(\slashed{g}_{AB}- r^2 \gamma_{AB}\right)^\mathcal{I} = r \slashed{g}_{AB}^{dat\mathcal{I}}
\end{align}
Here all operations of differentiation, contraction and index raising are defined with respect to the round metric $\slashed{g}^\circ = r^2 \gamma$. Note also that their dependence on $v$ is through the function $r\left(u,v\right)$ only.  Similarly, we make the same definition for the cut-off data, i.e.
\[
\hat{\underline{\chi}}^{\mathcal{I}_\tau}_{AB} \left(u,v,\theta\right)=  \frac{1}{2} r \left(u,v\right) \cdot \partial_u \hat{\slashed{g}}^{dat\mathcal{I}_\tau}_{AB} \left(u,\theta\right) \ \ \ \ \ \ etc.
\]
The point of isolating the radiation fields is that we have the following Proposition, which follows from the computations encountered in the proof of Proposition \ref{prop:construction2}.

\begin{proposition} \label{prop:asymptotic}
For each $\Gamma_p$ and $\psi_p$ we have along $v=v_{\infty}$:
\begin{align}
 \Gamma_p = \Gamma_p^{\mathcal{I}_\tau} + Err\left[\Gamma_p\right] \qquad \qquad \psi_p = \psi_p^{\mathcal{I}_\tau} + Err\left[\psi_p\right]
\end{align}
where $\Gamma_p^{\mathcal{I}_\tau},\psi_p^{\mathcal{I}_\tau}$ are $S^2_{u,v}$-tensors determined entirely by the (cut-off) scattering data $\hat{\slashed{g}}^{dat\mathcal{I}_\tau}$. Moreover, along $v=v_\infty$ the estimates
\begin{align}
w^{2p} \| \psi_p^{\mathcal{I}_\tau} \|_{\slashed{g}^\circ}^2 \leq C \tilde{P}^n e^{-\tilde{P}\frac{u}{2M}} \qquad \textrm{and} \qquad  w^{2p+\frac{1}{2}} \| Err\left[\psi_p\right] \|_{\slashed{g}^\circ}^2 &\leq C \tilde{P}^n e^{-\tilde{P}\frac{u}{2M}} \nonumber
\end{align}
hold \underline{uniformly in $\tau$ and $v_\infty$} for constants $C$ and some $n \leq 3$. Note that the error exhibits improved decay in $r$.\footnote{We could replace $\slashed{g}^\circ$ by $\slashed{g}$ in the above estimates in view of (\ref{gclose}).}  Finally,
\begin{align}
w^{2p} \| \psi_p^{\mathcal{I}_\tau} - \psi_p^\mathcal{I} \|_{\slashed{g}^\circ}^2 \leq C \tilde{P}^n e^{-\tilde{P}\frac{u\left(\tau,v_\infty\right)}{2M}}  \, .
\end{align}
The same three estimates hold with $\psi_p$ replaced by $\Gamma_p$ and $\psi_p^{\mathcal{I}_\tau}$ replaced by $\Gamma_p^{\mathcal{I}_\tau}$.
\end{proposition}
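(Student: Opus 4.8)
\textbf{Proof proposal for Proposition \ref{prop:asymptotic}.}

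The plan is to take the construction of geometry along $\tilde{\mathcal{C}}_{v=v_\infty}$ carried out in the proof of Proposition \ref{prop:construction2} and, rather than merely extracting exponential decay bounds, to track the $r\to\infty$ asymptotic expansion of each quantity to one order, identifying the leading term with the explicitly defined radiation field. First I would set up an induction following the same hierarchy used to solve the constraint ODEs on $\tilde{\mathcal{C}}_{v=v_\infty}$: starting from $\Phi/r$, then $\hat{\underline{\chi}}$ from $(\ref{polk})$, then $\underline{\eta}$, $tr\underline{\chi}-tr\underline{\chi}_\circ$, $\hat{\chi}$, and so on through the curvature components via the Bianchi and null-structure equations. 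At each stage, one writes the relevant quantity as the explicit leading term (dictated by the seed $\hat{\slashed{g}}^{dat\mathcal{I}_\tau}$ and the formulae $(\ref{memory})$ and the surrounding displays) plus an error, and one reads off from the ODE — using $\Phi^2 = r^2\sqrt\gamma$ on $S^2_{u_\tau,v_\infty}$, the relations $(\ref{eq:exr})$--$(\ref{eq:exs})$ for the Schwarzschild background values, and $\partial_u r = -\Omega^2/$(something) — that the error picks up an extra power of $r^{-1/2}$ relative to the $p$-weighted norm of the leading term. The key input here is the structural observation already exploited in $(\ref{beisp2})$ that integration in $u$ against the measure $\frac{1}{\Omega^2}\partial_u$ combined with the exponential factor $e^{-\tilde P u/4M}$ produces the stated decay, and that subleading contributions in the ODE right-hand side (the ``$+\mathcal{O}(1)$'' terms in $(\ref{polk})$ and the quadratic terms) are precisely of the size that gives the $w^{1/2}$ improvement for $Err$.

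The definitions of $\Gamma_p^{\mathcal{I}}$ and $\psi_p^{\mathcal{I}}$ given in Section \ref{sec:radfields2} were chosen exactly to be these leading terms, so the bulk of the work is bookkeeping: verifying that the leading-order part of each constraint ODE, when the subleading pieces are dropped, integrates to the closed-form expression displayed (e.g.\ that $(tr\underline{\chi}-tr\underline{\chi}_\circ)^\mathcal{I}$ solves the leading part of $(\ref{3trc})$, that $\underline\beta^\mathcal{I} = \slashed{div}^\circ\hat{\underline{\chi}}^\mathcal{I}$ matches the leading term of the elliptic relation, etc.). The estimate $w^{2p}\|\psi_p^{\mathcal{I}_\tau}\|^2_{\slashed g^\circ} \le C\tilde P^n e^{-\tilde P u/2M}$ then follows immediately from the bound $(\ref{oest})$ on $\hat{\slashed{g}}^{dat\mathcal{I}}$ and the fact that each $\Gamma_p^{\mathcal{I}_\tau}$, $\psi_p^{\mathcal{I}_\tau}$ is a fixed differential-and-integral expression in $\hat{\slashed{g}}^{dat\mathcal{I}_\tau}$ with coefficients that are $r$-powers — the cut-off $\chi_{u_\tau}$ does not affect the estimate since it only decreases the quantity (up to a harmless constant), and crucially all constants are uniform in $\tau$ and $v_\infty$ because the leading terms depend on $v_\infty$ only through $r(u,v_\infty)$ in a scale-invariant way once the $w$-weights are included. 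The $Err$ estimate is then obtained by subtracting.

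For the third estimate — $w^{2p}\|\psi_p^{\mathcal{I}_\tau} - \psi_p^{\mathcal{I}}\|^2_{\slashed g^\circ} \le C\tilde P^n e^{-\tilde P u(\tau,v_\infty)/2M}$ — one simply notes that $\psi_p^{\mathcal{I}_\tau}$ and $\psi_p^{\mathcal{I}}$ are built from the same formulae applied to $\hat{\slashed{g}}^{dat\mathcal{I}_\tau} = \chi_{u_\tau}\hat{\slashed{g}}^{dat\mathcal{I}}$ versus $\hat{\slashed{g}}^{dat\mathcal{I}}$, so their difference is supported in $u \ge u_\tau - 1$, where the exponential bound $(\ref{oest})$ forces everything to be of size $e^{-\tilde P u/4M} \le e^{-\tilde P u_\tau/4M} \cdot e^{-\tilde P(u-u_\tau)/4M}$, and $u_\tau = u(\tau,v_\infty)$ by definition; the integral expressions (for $(tr\underline\chi-tr\underline\chi_\circ)^\mathcal{I}$, $(\hat\omega-\hat\omega_\circ)^\mathcal{I}$, $(b^\mathcal{I})^A$, $(\slashed g-r^2\gamma)^\mathcal{I}$) only integrate forward in $u$, so they inherit this support property and the decay. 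The same arguments apply verbatim to the $\Gamma_p$ because their defining formulae in Section \ref{sec:radfields2} are of exactly the same type. I expect the main obstacle to be the purely organizational one of carrying the induction through all fourteen-odd quantities while keeping the $r$-weights and the distinction between $\slashed g^\circ$- and $\slashed g$-norms straight — there is no conceptual difficulty, since everything reduces to the already-established computations behind $(\ref{beisp2})$, but the verification that each closed-form radiation field genuinely solves the leading-order constraint requires care, particularly for the curvature components where the Bianchi system couples several quantities and one must check that the leading terms are mutually consistent.
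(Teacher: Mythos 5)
Your proposal is correct and is essentially the paper's own argument: the paper simply asserts that the proposition ``follows from the computations encountered in the proof of Proposition \ref{prop:construction2}'', i.e.\ from tracking the leading $1/r$ term of each constraint ODE along $v=v_\infty$ (as in $(\ref{polk})$) and identifying it with the radiation fields of Section \ref{sec:radfields2}, which is exactly what you do, with the uniform bounds coming from $(\ref{oest})$ as you say. The only slight imprecision is your phrase ``inherit this support property'': for the integrated radiation fields (e.g.\ $(tr\underline{\chi}-tr\underline{\chi}_\circ)^{\mathcal{I}}$, $(\hat{\omega}-\hat{\omega}_\circ)^{\mathcal{I}}$, $b^{\mathcal{I}}$) the difference $\psi_p^{\mathcal{I}_\tau}-\psi_p^{\mathcal{I}}$ is not literally supported in $u\ge u_\tau-1$, but it is given by the tail of the $u$-integral over that region, which still carries the factor $e^{-\tilde{P}\frac{u(\tau,v_\infty)}{2M}}$ and so yields the third estimate as claimed.
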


\begin{remark}
These asymptotics are sufficient for our purposes but could be refined. In particular, capturing the $r^{-3}$-asymptotics of (\ref{memory}) from (\ref{3trc}) is related to the celebrated Christodoulou memory effect \cite{Christodouloumemory}.
\end{remark}

\begin{remark}
The radiation fields defined in this section should be directly compared with those in Chapter 17 of \cite{ChristKlei}. 
\end{remark}

\subsection{Estimates for geometric quantities}
Taking into account that $\tau \sim 2v$ along the horizon and $\tau \sim 2u$ along null-infinity, Propositions \ref{prop:construction}, \ref{prop:construction1a} and \ref{prop:construction2} imply the following
\begin{proposition} \label{prop:databounds}
Given any $P \in \mathbb{R}^+$ and $s \in \mathbb{N}$ we can choose parameters $\tilde{P} \in \mathbb{R}^+, I \in \mathbb{N}, J\in \mathbb{N}$ such that for any scattering data set $\left(\hat{\slashed{g}}^{dat\mathcal{H}},   \hat{\slashed{g}}^{dat\mathcal{I}}, \tilde{P}, I, J\right)$, the geometry of an associated finite scattering data set $D_{\tau,v_\infty}$ satisfies the following estimates uniformly in both $\tau$ and $v_\infty$:
\begin{itemize} 
\item On the hypersurface $\tilde{\mathcal{C}}_{U=0}$
\begin{align} \label{hoz3}
\sum_{i=0}^s \sum_{i-perms} \|\mathfrak{D}^i \Gamma_p \|^2_{L^2 \left(S^2_{U=0,v}\right)} \leq  e^{-P \frac{\tau\left(U=0,v\right)}{2M}}  
\end{align}
holds for any $v_{\tau_0} \leq v \leq \infty$ and
\begin{align} \label{hoz4}
F\left[\mathfrak{D}^s \Psi \right] \left(\{U=0\} \times \left[v,v_\tau \right] \right) &\leq  e^{-P \frac{\tau\left(U=0,v\right)}{2M}} \, ,  \\
\sum_{i=0}^{s-1} \sum_{i-perms} \|\mathfrak{D}^i \underline{\alpha} \|^2_{L^4 \left(S^2_{U=0,v}\right)} &\leq  e^{-P \frac{\tau \left(U=0,v\right)}{2M}}. \label{auxab}
\end{align}

\item On the hypersurface $\tilde{\mathcal{C}}_{v=v_\infty}$
\begin{align} \label{inf3}
\sum_{i=0}^s \sum_{i-perms} w^{2p-2} \|\mathfrak{D}^i \Gamma_p \|^2_{L^2 \left(S^2_{u,v=v_\infty}\right)} \leq  e^{-P \frac{\tau\left(u,v_\infty\right)}{2M}} 
\end{align}
holds for any $u_{\tau_0} \leq u \leq \infty$ and 
\begin{align} \label{inf4}
F\left[\mathfrak{D}^s \Psi \right] \left( \left[u,u_\tau \right) \times \{v_\infty\} \right) &\leq  e^{-P \frac{\tau\left(u,v_\infty\right)}{2M}} \, ,  \\
\sum_{i=0}^{s-1} \sum_{i-perms} w^6 \|\mathfrak{D}^i {\alpha} \|^2_{L^4 \left(S^2_{U=0,v}\right)} &\leq e^{-P \frac{\tau\left(u,v_\infty\right)}{2M}}. \label{auxa}
\end{align}
\end{itemize}
\end{proposition}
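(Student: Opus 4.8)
The plan is to read off Proposition \ref{prop:databounds} from the pointwise estimates already established in Propositions \ref{prop:construction}, \ref{prop:construction1a} and \ref{prop:construction2}. Two conversions are needed: first, passing from coordinate-component (or $\slashed{g}_\circ$-norm) bounds to the weighted sphere norms, curvature fluxes and energies of Section \ref{sec:norms}; and second, converting exponential decay in the null coordinates $v$ (along $\tilde{\mathcal{C}}_{U=0}$) and $u$ (along $\tilde{\mathcal{C}}_{v=v_\infty}$) into exponential decay in $\tau$, using that $\tau\sim 2v$ along $\mathcal{H}^+$ and $\tau\sim 2u$ along $\mathcal{I}^+$.

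\textbf{Choice of parameters.} Given $P$ and $s$, I would first fix $I,J$ large enough that Propositions \ref{prop:construction}, \ref{prop:construction1a}, \ref{prop:construction2} supply all the derivatives occurring in (\ref{hoz3})--(\ref{auxa}): besides the operators in $\mathfrak{D}^i$ for $i\le s$ (which, since $\mathfrak{D}=\{M\slashed{\nabla}_3, r\slashed{\nabla}_4, r\slashed{\nabla}\}$, are controlled --- with the $r$-weights --- precisely by the quantities $\slashed{\nabla}_3^k(r\slashed{\nabla}_4)^j(r\slashed{\nabla})^i$ of (\ref{beispiel})/(\ref{beisp2})), one needs one extra derivative in each null direction to bound the flux integrands, and a further angular derivative for the $L^4$ bounds (\ref{auxab}), (\ref{auxa}) --- all of which costs only a fixed number of derivatives, tracked via the tables in Proposition \ref{prop:construction}. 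Then $\tilde{P}$ is chosen large depending on $M$ and $s$, for two reasons: (i) to make the constants $C$ in (\ref{beisp}), (\ref{beispiel}), (\ref{beisp2}) \emph{uniform in the cutoffs} $v_\tau$, $u_\tau$, hence uniform in $\tau$ and $v_\infty$ --- here the amplified blueshift of Section \ref{sec:redshiftrole} forces the number of admissible transversal commutations $\hat{K}$ to grow with $\tilde{P}$, so $s\le\hat{K}$ requires $\tilde P$ large; and (ii) large relative to $P$ so that the exponent matches and the remaining constant can be absorbed into the exponential (using $v\ge v_{\tau_0}$, $u\ge u_{\tau_0}$ bounded below, which is arranged by taking $\tau_0$ sufficiently large).

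\textbf{Norm conversion.} Along $\tilde{\mathcal{C}}_{U=0}$ the function $r$ is bounded, $\slashed{g}^\circ=(2M)^2\gamma$, and the construction keeps $\slashed{g}$ close to $\slashed{g}^\circ$ in $C^I$; hence $\|\mathfrak{D}^i\Gamma_p\|^2_{L^2(S^2_{U=0,v})}\lesssim \sup_{S^2}\|\mathfrak{D}^i\Gamma_p\|^2_{\slashed{g}^\circ}\lesssim (\tilde{P}^{2+j}e^{-\tilde{P}v/4M})^2$ by (\ref{beispiel}), and likewise for the curvature components entering $F[\mathfrak{D}^s\Psi]$ (the $w$-weights in $F$ being bounded on the horizon). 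The flux over $\{U=0\}\times[v,v_\tau]$ is the $\bar v$-integral of an integrand bounded by $Ce^{-\tilde{P}\bar v/2M}$, hence by $\lesssim e^{-\tilde{P}v/2M}$. The $L^4$ bound (\ref{auxab}) follows directly from the pointwise control of $\underline{\alpha}$ and the finiteness of the volume of $S^2$ in $\slashed{g}^\circ$. Along $\tilde{\mathcal{C}}_{v=v_\infty}$ the same argument applies with the $r$-weights carried explicitly: Proposition \ref{prop:construction2} is already stated with the weight $r^p\|\cdot\|_{\slashed{g}^\circ}$, which matches (after squaring and accounting for the $r^2$ in $d\mu_{\slashed{g}^\circ}$) the weight $w^{2p-2}$ in (\ref{inf3}), and the $w^5,w^4,w^2$ weights in the flux $F$ on $v=v_\infty$ are similarly matched by the $r^p$-weights of (\ref{beisp2}); the $\bar u$-integration again produces the boundary term $e^{-\tilde{P}u/2M}$, and (\ref{auxa}) follows as before.

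\textbf{Exponent and main obstacle.} Finally, since $c_1\tau\le 2v\le c_2\tau$ along $\mathcal{H}^+$ and $c_1\tau\le 2u\le c_2\tau$ along $\mathcal{I}^+$ for fixed constants, one has $e^{-\tilde{P}v/2M}\le e^{-(\tilde{P}/2c_2)\tau/2M}$ and similarly in $u$; choosing $\tilde{P}\ge 2c_2 P$ (and large enough for the preceding steps, and so that the constants above are absorbed since $C e^{-\tilde{P}v/2M}\le e^{-(\tilde{P}/2)v/2M}\le e^{-P\tau/2M}$ once $v\ge v_{\tau_0}$ is large) yields all the asserted bounds, uniformly in $\tau$ and $v_\infty$. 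The main obstacle is not analytic but combinatorial: one must verify that a \emph{single} choice of $(\tilde{P},I,J)$ simultaneously supplies enough derivatives for every quantity in (\ref{hoz3})--(\ref{auxa}) (tracking the derivative losses of Proposition \ref{prop:construction} and the $\hat{K}$-dependence of Propositions \ref{prop:construction1a}, \ref{prop:construction2}), renders all constants uniform in the cutoffs despite the blueshift coupling $\hat{K}$ to $\tilde{P}$, and matches the $r$-weights of (\ref{beisp2}) to the $w$-weights defining the fluxes and energies; these constraints must all be reconciled at once.
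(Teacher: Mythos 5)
Your proposal is correct and follows essentially the same route as the paper, which states Proposition \ref{prop:databounds} as a direct consequence of Propositions \ref{prop:construction}, \ref{prop:construction1a} and \ref{prop:construction2} together with the observation that $\tau\sim 2v$ along the horizon and $\tau\sim 2u$ along null infinity, the constants and $r$-weights being absorbed exactly as you describe by taking $\tilde{P}$, $I$, $J$ large depending on $P$, $s$ (and $M$). The only minor remark is that the extra null and angular derivatives you budget for the fluxes and the $L^4$ bounds are not actually needed, since the pointwise bounds of those propositions already control the flux integrands and imply the $L^4$ bounds directly; this is harmless, as it only makes your choice of $I,J$ slightly larger.
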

Note that the larger one chooses $P$ and $s$, the larger one has to choose $\tilde{P}$ and $I,J$. We emphasize that the estimates of Proposition \ref{prop:databounds} hold \emph{uniformly} in $\tau$ and $v_\infty$ for any associated finite scattering data set $D_{\tau,v_\infty}$.

\begin{definition} \label{def:Padm}
Given $P \in \mathbb{R}^+$ and $s \in \mathbb{N}$, we will call a scattering data set $\left(\hat{\slashed{g}}^{dat\mathcal{H}},   \hat{\slashed{g}}^{dat\mathcal{I}}, \tilde{P},I,J\right)$
\underline{$P_s$-admissible} if $I$, $J$ and $\tilde{P}$ are sufficiently large such that the estimates of Proposition \ref{prop:databounds} hold for any associated finite data set $D_{\tau,v_\infty}$.
\end{definition}

\subsection{Local well-posedness}
Let us already remark that from \cite{ChoquetBruhat,Rendall, formationofbh} together with domain of dependence arguments we have:
\begin{theorem} \label{theolocal}
Any finite scattering data set $D_{\tau,v_\infty}$ with $\tau > \tau_0$ determines a unique smooth solution, i.e.~a smooth
Lorentzian metric $g$ expressed as $(\ref{maing})$ satisfying the Einstein vacuum equations (\ref{vacEq}) in $\mathcal{M} \left(t,\tau\right) \cap \{v \leq v_\infty\}$ for some $t < \tau$.
\end{theorem}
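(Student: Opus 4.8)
The plan is to deduce Theorem \ref{theolocal} from the standard local existence theory for the characteristic (and mixed) initial value problem for the vacuum Einstein equations, as developed in \cite{ChoquetBruhat, Rendall} and applied in the setting closest to ours in \cite{formationofbh}. The data configuration for $D_{\tau,v_\infty}$ consists of a spacelike hypersurface $\Sigma_\tau$ (on which the data are exactly Schwarzschildean, hence in particular constraint-satisfying) meeting the two null hypersurfaces $\tilde{\mathcal{C}}_{U=0}$ and $\tilde{\mathcal{C}}_{v=v_\infty}$, along which we have constructed (Sections \ref{sec:gaugefix}--\ref{sec:radfields}, Propositions \ref{prop:construction}, \ref{prop:construction1a}, \ref{prop:construction2}) all the Ricci coefficients, curvature components and their transversal derivatives satisfying the full set of null structure and Bianchi (constraint) equations along the respective cones. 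First I would observe that, since the data on $\Sigma_\tau$ are Schwarzschild and the data on the null hypersurfaces agree with Schwarzschild on the spheres where the cones meet $\Sigma_\tau$ (by the cut-off constructions of Section \ref{sec:local}, where $\chi_{v_\tau}$, $\chi_{u_\tau}$ are engineered to match smoothly, cf.~(\ref{assodata}), (\ref{hoz1})), all corner compatibility conditions are satisfied to infinite order at the two intersection spheres $S^2_{U=0,v_\tau}$ and $S^2_{u_\tau,v_\infty}$. This is precisely what is needed to invoke the smooth local existence and uniqueness results.

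The actual argument then proceeds in the following steps. Step one: invoke the local existence theorem for the characteristic initial value problem of Rendall \cite{Rendall} (building on \cite{ChoquetBruhat}) to produce a smooth vacuum metric in a neighbourhood of the two intersection spheres where the null data meet $\Sigma_\tau$; Step two: invoke Cauchy stability / local existence from the spacelike hypersurface $\Sigma_\tau$ (the classical Choquet-Bruhat--Geroch theory) to produce a smooth vacuum development in a neighbourhood of $\Sigma_\tau$; Step three: patch these together using a domain-of-dependence argument — the vacuum equations are quadratic hyperbolic, finite speed of propagation holds, and uniqueness of the maximal development guarantees the local pieces agree on overlaps. The union is a smooth Lorentzian metric solving $(\ref{vacEq})$ in some region $\mathcal{M}(t,\tau) \cap \{v \le v_\infty\}$ with $t < \tau$. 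Step four: bring the metric into the double null gauge form $(\ref{maing})$. For this one follows \cite{formationofbh}: starting from the null hypersurfaces through $\mathcal{H}^+$ and $v=v_\infty$ as level sets of the optical functions $u$, $v$, one propagates the angular coordinates $(\theta^1,\theta^2)$ by Lie transport along the null generators ($\partial_v \theta^A = 0$ on $\tilde{\mathcal C}_{U=0}$, analogously on $\tilde{\mathcal C}_{v=v_\infty}$, then in the transversal direction), which forces the metric into the form $g = -2\Omega_{\mathcal K}^2(dU\otimes dV + dV\otimes dU) + \slashed g_{AB}(d\theta^A - b_{\mathcal K}^A dV)\otimes(d\theta^B - b_{\mathcal K}^B dV)$, with $b_{\mathcal K}^A = 0$ on the horizon as imposed by the gauge. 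One must check that this coordinate system is non-degenerate in a neighbourhood of the data — this is where $t < \tau$ enters, the existence region possibly shrinking.

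The main obstacle, and the only genuinely non-routine point, is the interface between the general existence theory (phrased abstractly, e.g.\ in wave/harmonic coordinates or in an arbitrary gauge) and the specific double null gauge $(\ref{maing})$ with the prescribed identification of angular coordinates and the normalization $b^A_{\mathcal K} = 0$ on $\mathcal{H}^+$. Establishing that the abstractly-produced solution can indeed be globally (in the small neighbourhood) recast in this gauge — i.e.\ that the optical functions $u$, $v$ extend as genuine smooth functions with everywhere-null, everywhere-transversal level sets, and that the Lie-transported angular coordinates remain a valid chart — requires care, but this has been carried out in complete detail in \cite{formationofbh} (see also the discussion in Section \ref{sec:logic} on reconciling the two descriptions). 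I would therefore present the proof as a citation-driven assembly: the hard analysis is imported, and the content of the argument here is (i) checking the infinite-order corner compatibility from our data construction, (ii) assembling the local pieces via finite speed of propagation and uniqueness, and (iii) noting that the gauge-fixing of \cite{formationofbh} applies verbatim since our data were constructed precisely in that gauge. A remark should note that the loss of derivatives inherent in the characteristic problem (footnote \ref{alosshere}) is harmless here because $I$, $J$ are chosen large, and that the statement is purely local — the global estimates establishing existence all the way down to $\Sigma_{\tau_0}$ are the content of Theorem \ref{theo1}, not of this well-posedness statement.
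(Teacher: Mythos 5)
Your overall assembly---Rendall's characteristic theorem, the Cauchy theory from $\Sigma_\tau$, domain-of-dependence patching, and the passage to the double-null form $(\ref{maing})$ via the implicit function theorem as in Chapter 16.3 of \cite{formationofbh}---is the same as the paper's, which proves this statement in Step 1 of Section \ref{sec:logic}. However, your Step one, as written, cannot be executed: the theorem of Rendall \cite{Rendall} (building on \cite{ChoquetBruhat}) concerns data on two transversally intersecting \emph{null} hypersurfaces, whereas at the two spheres $S^2_{U=0,v_\tau}$ and $S^2_{u_\tau,v_\infty}$ the second hypersurface in your configuration is the \emph{spacelike} surface $\Sigma_\tau$; the two prescribed null data hypersurfaces ($\mathcal{H}^+$ and $v=v_\infty$) never meet each other, so there is no bicharacteristic corner to which the cited theorem applies. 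For the same reason the ordering of your steps is off: the corner problems can only be posed \emph{after} the development of the $\Sigma_\tau$ data is available.

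The missing manoeuvre is exactly the paper's: since the data on $\Sigma_\tau$, on $\mathcal{H}^+\cap\{v\ge v_\tau\}$ and on $\{v=v_\infty\}\cap\{u\ge u_\tau\}$ are exactly Schwarzschild, uniqueness/domain of dependence gives that the past development in the wedge $\{u\ge u_\tau\}\cap\{v\ge v_\tau\}$ is Schwarzschild itself; this induces Schwarzschild characteristic data on the null cones $v=v_\tau$ and $u=u_\tau$ emanating from the two corner spheres, and Rendall's theorem is then applied to the two genuinely bicharacteristic problems ($\mathcal{H}^+$ together with $v=v_\tau$, and $u=u_\tau$ together with $v=v_\infty$), each yielding a smooth solution in a past neighbourhood of its corner sphere, subsequently expressed in the gauge $(\ref{maing})$. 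With this substitution the rest of your plan goes through: corner compatibility is immediate here because the data are exactly Schwarzschild at and beyond the corner spheres, patching is by uniqueness, and the derivative loss is harmless. One further point worth making explicit is that covering the whole slab $\mathcal{M}(t,\tau)\cap\{v\le v_\infty\}$ by the wedge plus the two corner neighbourhoods uses the finiteness of $v_\infty$: near $v=v_\infty$ the slab between $\Sigma_t$ and $\Sigma_\tau$ hugs the cone $u=u_\tau$ over a long $v$-interval, which lies inside the Rendall neighbourhood only once $\tau-t$ is taken small depending on $v_\infty$---consistent with the purely local, non-uniform nature of the statement, the uniform estimates being the content of Theorem \ref{theo1}.
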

We will review this argument in Step 1 of Section \ref{sec:logic}.

\section{The Main Theorems} \label{sec:maintheorem}

\subsection{The full existence theorem}
We begin with a more precise version of the main theorem stated in the introduction specialised to the Schwarzschild case. Recall Definition \ref{def:Padm}.
\begin{theorem} \label{theo:full}
Given any integer $s\geq 3$ and an $M>0$, there exists a $P>0$ such that any $P_s$-admissible scattering data set  $\left(\hat{\slashed{g}}^{dat\mathcal{H}},   \hat{\slashed{g}}^{dat\mathcal{I}}, \tilde{P}, I, J\right)$ gives rise to a spacetime $\left(\mathcal{M} \left(\tau_0,\infty\right),g\right)$ for some $\tau_0<\infty$ with the following properties:
\begin{itemize}
\item The metric $g$ can be globally expressed in the double null-coordinates (\ref{maing}). 

\item The metric converges exponentially in $\tau$ to Schwarzschild with mass $M$ in the following sense: All metric- and Ricci-coefficients $\Gamma_p$ as well as the derivatives $\mathfrak{D}^k\Gamma_p$ for $k\leq s-1$ are in $L^\infty_u L^\infty_v L^2 \left(S^2_{u,v}\right)$ and satisfy for $k=0,1,...,s-1$ the estimate
\[
 \int_{S^2\left(u,v\right)} \| \mathfrak{D}^k \Gamma_p \|^2 \sqrt{\slashed{g}} \ d\theta_1 d\theta_2 \leq
 C_s \cdot e^{-P\frac{\tau}{2M}} \left(\frac{2M}{r}\right)^{2p-2} \left(u,v\right) \, .
\]
The curvature components $\Psi$ satisfy for $k=0,1,...,s-1$ the estimate (\ref{decl2}).
\item We have ${\rm Ric}(g)=0$.\footnote{For $s=3$, this equation is satisfied in $L^\infty_u L^\infty_v L^2 \left(S^2_{u,v}\right)$ with respect to the double-null coordinates.}

\item The metric $g$ realizes the given scattering data on the horizon $\mathcal{H}^+$ and (in the limit) at null-infinity $\mathcal{I}^+$. Both $\mathcal{H}^+$ and $\mathcal{I}^+$ are manifestly 
complete (cf.~\cite{Chrmil}).
\end{itemize}
\end{theorem}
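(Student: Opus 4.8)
\textbf{Proof proposal for Theorem \ref{theo:full}.}

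The plan is to deduce Theorem \ref{theo:full} from two more technical statements about the finite approximate problem: a uniform existence-and-estimates theorem for $\mathcal{M}(\tau_0,\tau,v_\infty)$ (Theorem \ref{theo1}) and a quantitative convergence statement for differences of solutions (Theorem \ref{theo2}). The logical skeleton is as follows. First, given a $P_s$-admissible scattering data set, one invokes Proposition \ref{prop:databounds} to see that, for \emph{every} $\tau \geq \tau_0+1$ and every $v_\infty \gg \tau$, the associated finite data set $D_{\tau,v_\infty}$ (constructed in Section \ref{sec:local}) satisfies the data bounds (\ref{hoz3})--(\ref{auxab}), (\ref{inf3})--(\ref{auxa}) \emph{uniformly} in $\tau$ and $v_\infty$. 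By Theorem \ref{theolocal} each such data set launches a smooth solution, and by the continuity argument of Theorem \ref{theo1} (carried out in Section \ref{sec:bootstrap}) this solution extends all the way down to $\Sigma_{\tau_0}$ and obeys the estimates (\ref{givenithere}), (\ref{ourgoalhere}), i.e.\ the $L^2$-bounds on $\mathfrak{D}^k\Gamma_p$ and the energy/flux bounds on $\mathfrak{D}^k\Psi$, with constants independent of $\tau$ and $v_\infty$ — this requires choosing $P$ large relative to $M$ (and $s$), consistent with the constraint from the surface gravity discussed in Section~\ref{whyexp}.

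Next I would pass to the limit $\tau\to\infty$, $v_\infty\to\infty$. The crudest route is a diagonal/Arzel\`a--Ascoli argument as in \cite{formationofbh}: the uniform estimates on $\Gamma$, $\psi$ and their $\mathfrak{D}$-commuted versions, together with the equations (\ref{ns3})--(\ref{ns4}), (\ref{bi3})--(\ref{bi4}), bound enough derivatives to extract a subsequential limit converging in $C^0_{loc}$ on the fixed ambient manifold $\mathcal{M}(\tau_0,\infty)$, and the limit metric is again of the form (\ref{maing}), solves (\ref{vacEq}) (for $s=3$ in the $L^\infty_u L^\infty_v L^2(S^2_{u,v})$ sense, as stated), and inherits all the uniform bounds by lower semicontinuity of the relevant norms. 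To obtain genuine (not merely subsequential) convergence one instead appeals to Theorem \ref{theo2}: setting $\boldsymbol\Gamma = \Gamma - \Gamma^\dagger$, $\boldsymbol\psi = \psi - \psi^\dagger$ for two approximants with parameters $(\tau,v_\infty)$ and $(\tau^\dagger, v_\infty^\dagger)$, one derives the difference system (whose schematic structure matches Sections~\ref{MOSTBASICSCHEM}--\ref{NSBE}, cf.\ Section~\ref{sec:convergence}), and the resulting estimates — fed by Proposition~\ref{prop:asymptotic}, which controls how the finite data approach the true data at $\mathcal{I}^+$ — show the family is Cauchy, hence converges strongly; the limit is then independent of the subsequence.

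It remains to verify the last three bullet points for this limit spacetime. That the metric globally has the form (\ref{maing}) is automatic since everything lives on the fixed ambient $\mathcal{M}$ with its prescribed differential structure, and $\Omega^2_{\mathcal{K}}>0$, $\slashed g$ Riemannian are preserved in the limit (using the bootstrap closeness (\ref{decl4}) of $\slashed g$ to $\slashed g^\circ$). The exponential decay estimates are just (\ref{givenithere})/(\ref{ourgoalhere}) rewritten in the form displayed, with $w^{2p-2}=(2M/r)^{2p-2}$; note $\mathfrak{D}^k$ for $k\le s-1$ suffices because $s\ge 3$ commutations were used to close (\ref{grontype}). The vacuum equation ${\rm Ric}(g)=0$ holds because each approximant solves it (Theorem~\ref{theolocal}) and the equation passes to the limit in the stated topology. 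That $g$ realizes the scattering data on $\mathcal{H}^+$ and (as a limit) on $\mathcal{I}^+$ follows from the construction of the data in Sections~\ref{sec:hozco}--\ref{sec:radfields2} together with Proposition~\ref{prop:asymptotic}, which identifies the radiation fields $\Gamma_p^{\mathcal{I}}$, $\psi_p^{\mathcal{I}}$ as the traces of the solution at $\mathcal{I}^+$; completeness of $\mathcal{H}^+$ ($u=\infty$, $v\in(-\infty,\infty)$ with the Kruskal-normalized frame regular there) and of $\mathcal{I}^+$ is read off from the asymptotic behaviour of the conformal factors in the gauge (\ref{maing})--(\ref{maing2}), following \cite{Chrmil}.

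\textbf{Main obstacle.} The genuinely hard work is entirely inside Theorem~\ref{theo1}: closing the global bootstrap in $\mathcal{M}(\tau_*,\tau,v_\infty)$, i.e.\ showing that the weighted curvature energies $\mathcal{E}[\mathfrak{D}^3\Psi]$ and the transport quantities $r^{p-1}\|\mathfrak{D}^k\Gamma_p\|_{L^2}$ improve their bootstrap constants. This is where the $p$-hierarchy / ``null condition'' near $\mathcal{I}^+$ (Sections~\ref{NSBE}--\ref{Herethebianchi}), the blue-shift sign near $\mathcal{H}^+$ forcing $P$ large, and the amplified-redshift-under-commutation (Section~\ref{sec:redshiftrole}) all have to be reconciled simultaneously through the Gronwall estimate (\ref{grontype}) and the Sobolev step (\ref{sobolevineq}). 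Given Theorems~\ref{theo1} and~\ref{theo2}, however, the deduction of Theorem~\ref{theo:full} sketched above is essentially a bookkeeping of limits.
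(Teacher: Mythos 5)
Your proposal is correct and follows essentially the same route as the paper: Theorem \ref{theo:full} is deduced from the uniform estimates of Theorem \ref{theo1} (closed by the bootstrap of Section \ref{sec:bootstrap}) together with the quantitative convergence of Theorem \ref{theo2} (difference estimates renormalised by the radiation fields, with Proposition \ref{prop:asymptotic} and (\ref{realizedata}) giving the attainment of the scattering data), with the Arzel\`a--Ascoli extraction noted only as the cruder alternative — exactly as in the paper. Your identification of the genuinely hard part (closing the bootstrap in Theorem \ref{theo1}, where the $p$-hierarchy, blueshift, and amplified redshift under commutation must be reconciled) also matches the paper's logic.
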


We will prove Theorem \ref{theo:full} for $s=3$. This is the minimum regularity to close the problem via naive energy estimates. As the proof will reveal, once the estimates have closed, one easily obtains the theorem for arbitrary $s$ by further commutation. The last item of Theorem \ref{theo:full} is made precise by the estimate (\ref{realizedata}) of Theorem \ref{theo2}.

\begin{remark} \label{rem:Pdep}
The constant $P$ is expected to grow at least linearly with $s$ because the commuted redshift puts certain obstructions on the decay rate as explained in Sections \ref{sec:redshiftrole} and \ref{sec:hozco}. However, an additional structure in the Bianchi equation may allow one to show that $P$ is independent of $s$ and hence the existence of $C^\infty$-solutions. This will be investigated in future work.
\end{remark}

Theorem \ref{theo:full} will be proven by an analysis of the solutions arising from approximate scattering data associated with the scattering data.

\subsection{The ``finite" existence theorem}
Recall the region $\mathcal{M} \left(\tau_1,\tau_2, v_{\infty} \right)$ defined in (\ref{finregion}).

\begin{theorem} \label{theo1}
Given any integer $s \geq 3$ and $M>0$, there exists a $P>0$ such that the following statement holds:

 For any $P_s$-admissible scattering data set  $\left(\hat{\slashed{g}}^{dat\mathcal{H}},   \hat{\slashed{g}}^{dat\mathcal{I}}, \tilde{P}, I, J\right)$, there exists a $\tau_0$ such that any approximate scattering data set $D_{\tau,v_\infty}$ with $\tau > \tau_0$ gives rise to a unique smooth Lorentzian metric $g_\tau$ satisfying the vacuum Einstein equations in all of $\mathcal{M} \left(\tau_0,\tau,v_{\infty}\right)$. The metric $g_\tau$ can be globally expressed in the double null-coordinates (\ref{maing}) with the associated geometric quantities $\left(\Gamma,\psi\right)$ of $g_\tau$ satisfying the uniform estimates (\ref{decl}), (\ref{decl2}) and (\ref{decl3}) of Section \ref{sec:logic} in $\mathcal{M} \left(\tau_0,\tau,v_{\infty}\right)$.
\end{theorem}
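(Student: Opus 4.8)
\textbf{Proof strategy for Theorem \ref{theo1}.}

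The plan is to run a continuity argument in the parameter $\tau_*$ on the nested regions $\mathcal{M}(\tau_*,\tau,v_\infty)$, starting from the future endpoint $\tau_*=\tau$ where the data are exactly Schwarzschild (so the solution is trivially Schwarzschild by uniqueness, and all renormalised quantities vanish), and propagating the uniform estimates backwards until $\tau_*=\tau_0$. The argument has two logically distinct components which must be carefully interleaved, exactly as in \cite{formationofbh}: \emph{(i)} an existence/regularity component, invoking Theorem \ref{theolocal} (i.e.\ \cite{ChoquetBruhat, Rendall} plus domain-of-dependence patching) to guarantee that a smooth vacuum metric exists on some region, together with the fact that such a metric can be put into the double null gauge $(\ref{maing})$ and hence satisfies the renormalised null structure and Bianchi equations of Section \ref{sec:setting}; and \emph{(ii)} an a priori estimates component, in which one shows that the geometric quantities $(\Gamma,\psi)$ of any such solution, \emph{on the region where it exists}, obey the bootstrap bounds uniformly in $\tau$ and $v_\infty$. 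The bootstrap set will be a collection of inequalities of the schematic form $(\ref{givenithere})$ for $r^{p-1}\|\mathfrak{D}^k\Gamma_p\|_{L^2(u,v)}$ and $(\ref{ourgoalhere})$ for $\mathcal{E}[\mathfrak{D}^k\Psi](\tau_*)$ and the null fluxes, plus the $L^\infty$ curvature bounds flagged in footnote \ref{talkingboot} and controlled through the Sobolev inequality $(\ref{sobolevineq})$; these are encoded in Definition \ref{def:Adef}. The detailed logical scaffolding — how (i) and (ii) together yield an open-closed-nonempty subset of $[\tau_0,\tau]$ — is the content of Section \ref{sec:logic}.

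The heart of the matter is then the improvement of the bootstrap assumptions, carried out in two coupled steps in the order dictated by Section \ref{THEMAIN}. \emph{First}, the energy estimates for curvature: multiply each commuted Bianchi pair $(\mathfrak{D}^k\uppsi_p,\mathfrak{D}^k\uppsi'_{p'})$, cf.\ Proposition \ref{comsum}, by $r^{2q}\mathfrak{D}^k\uppsi_p$ and $r^{2q}\mathfrak{D}^k\uppsi'_{p'}$ respectively with the weights $q$ chosen to kill the borderline $tr\chi$ terms as in Remark \ref{rem:reno}, integrate by parts over $\mathcal{M}(\tau_*,\tau,v_\infty)$, and exploit the divergence structure of $\slashed{\mathcal{D}}$ to obtain the identity $(\ref{thebulkter3})$. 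The data fluxes on $\mathcal{H}^+$, $v=v_\infty$, $\Sigma_\tau$ are bounded by Proposition \ref{prop:databounds}; the bulk error terms $(\ref{Bterm})$ are estimated via $(\ref{grontype})$ — here one uses crucially (a) the sign of the redshift/blueshift contribution of the $\hat\omega_\circ$ term near $\mathcal{H}^+$, which is favourable for an estimate provided $P$ is taken large relative to the surface gravity, and (b) the ``null condition'' structure of Section \ref{Herethebianchi}, in particular Remark \ref{rem:comstructure}, which ensures that the dangerous $\int f_1\|\mathfrak{D}^i\uppsi'_{p'}\|^2$ terms only ever appear paired against $\mathfrak{D}^i\uppsi_p$ in a flux controlled on constant-$u$ surfaces, so that integration in $u$ against the exponential weight $e^{-P\tau_*}$ produces the needed smallness $\epsilon$. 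A Gronwall argument in $\tau_*$ then upgrades the resulting inequality to $(\ref{ourgoalhere})$. \emph{Second}, the transport estimates for $\Gamma$: integrate the commuted null structure equations of Proposition \ref{prop:nscommute} along constant-$u$ and constant-$v$ null hypersurfaces after taking $L^2$ of the spheres, using the curvature fluxes $F_u,F_v$ just bounded to control the linear $\psi$-terms via Cauchy–Schwarz, and using the geometric relations $\Omega^2_{\mathcal{EF}}du_*\sim d\tau$, $dv\lesssim r^h d\tau$, $\partial_v r\le\tfrac12$ to evaluate the integrals $(\ref{integral1})$–$(\ref{integral2})$, producing the factors $(\ref{epshere})$–$(\ref{epsthere})$ and thereby improving $(\ref{givenithere})$. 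The two steps close because each feeds the other: the $\Gamma$-bounds supply the $L^\infty$ prefactors in $(\ref{grontype})$ through Sobolev, while the curvature fluxes supply the forcing in the transport estimates. Propagating this to all $k\le s$ (with $s=3$ sufficing for the first closure, higher $s$ by further commutation at the cost of a larger $P$, cf.\ Remark \ref{rem:Pdep}) gives the uniform estimates claimed in Section \ref{sec:logic}.

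I expect the main obstacle to be the bookkeeping that makes the two-step scheme actually a closed loop rather than a circular argument — precisely, arranging the bootstrap hierarchy and the numerology of $r$-weights and commutation orders so that every error term in $(\ref{grontype})$ and in the $\Gamma$-transport estimates is \emph{strictly} subleading (either by a power of $r$ beyond the $p$-hierarchy, giving convergence of the $v$-integral, or by a factor $e^{-P\tau_*/2}$ from a bootstrapped quantity, giving the smallness $\epsilon$), while simultaneously controlling the geometric quantities $\Omega^2_{\mathcal{EF}}$, the Sobolev constants on $S^2_{u,v}$, and the near-closeness of $\slashed g$ to $\slashed g^\circ$ that these estimates presuppose. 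A secondary but genuine difficulty is the interface between the low-regularity estimate framework and the smooth existence theory invoked through Theorem \ref{theolocal}: one must ensure the double null gauge is globally realisable on the region of existence and that the loss of derivatives inherent in the characteristic problem (footnote \ref{alosshere}) is absorbed by taking $I,J$ in the data large enough, as quantified in Proposition \ref{prop:construction} and Definition \ref{def:Padm}.
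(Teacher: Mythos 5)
Your proposal follows essentially the same route as the paper: a continuity/bootstrap argument in $\tau_*$ built on Definition \ref{def:Adef}, with existence handled via Theorem \ref{theolocal} and the gauge/patching scheme of Section \ref{sec:logic}, and the bootstrap improved by first running the $r$-weighted energy estimates for the commuted Bianchi pairs (Proposition \ref{prop:improve1}, using the redshift sign, the structure of Remark \ref{rem:comstructure}, and the exponential weight to gain smallness) and then the null transport estimates for $\Gamma$ (Proposition \ref{prop:improve2}, via the integration lemmas and $\Omega^2 du\sim d\tau$, $dv\lesssim r^h d\tau$), closed by taking $P$ and $\tau_0$ large. The coupling you describe — $\Gamma$-bounds feeding the $L^\infty$ prefactors through Sobolev on the spheres and cones, curvature fluxes feeding the transport forcing — is exactly how the paper closes the loop, so no gap to report.
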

Theorem \ref{theo1} will be proven in Section \ref{sec:bootstrap}.
\subsection{The convergence theorem}
Given Theorem \ref{theo1}, pick a monotonically increasing sequence $\tau_n \rightarrow \infty$
and consider the associated sequence of solutions $\left(\Gamma,\psi\right)_n$ arising from the scattering data set $D_{\tau_n, (v_\infty)_n}$ where $\left(v_\infty\right)_n = (\tau_n)^2$.

For $s$ sufficiently large, given the uniform bounds of Theorem \ref{theo1}, one may apply the Arzela-Ascoli Theorem to conclude the existence of a convergent subsequence and hence the existence of a limiting solution in all of $\mathcal{M}\left(\tau_0,\infty\right)$. This is the way Christodoulou proceeds in the last chapter of \cite{formationofbh} and indeed this already provides a satisfactory version of Theorem \ref{theo:full}.

As discussed already in Section~\ref{difrences},
with potential applications in mind, we will establish a stronger statement below, namely that the above sequence is Cauchy in a suitable space and converges to an ``ultimately Schwarzschildean" spacetime \cite{Holzegelspin2} in $\mathcal{M}\left(\tau_0,\infty\right)$:

\begin{theorem} \label{theo2}
The sequence of solutions $\left(\Gamma,\psi\right)_n$ converges for $\tau_n \rightarrow \infty$ to a vacuum spacetime $\left(\mathcal{M} \left(\tau_0,\infty\right), g_{lim} \right)$ with associated geometric quantities $\left(\Gamma,\psi\right)_{lim}$ such that for $k=0, \ldots, s-1$ we have for any $\left(u,v,\theta_1,\theta_2\right) \in \mathcal{M} \left(\tau_0,\infty\right)$:
\begin{equation}
w^{2p-2} \int_{S^2_{u,v}} \| \mathfrak{D}^k \left(\Gamma_p\right)_{n} - \mathfrak{D}^k \left( \Gamma_p \right)_{lim} \|^2 \sqrt{\slashed{g}} d\theta_1 d\theta_2 \rightarrow 0 \, .
\end{equation}
Moreover, the limiting $\left(\psi,\Gamma\right)_{lim}$ satisfy the estimates (\ref{decl}), (\ref{decl2}) and (\ref{decl3}) in $\mathcal{M} \left(\tau_0,\infty\right)$ for $k=0,1,\ldots, s-1$.  Finally, we have for any fixed $u$
\begin{equation} \label{realizedata}
\lim_{v \rightarrow \infty} w^{2p-2} \int_{S^2_{u,v}} \|  \mathfrak{D}^k \left( \Gamma_p \right)_{lim} - \mathfrak{D}^k \left(\Gamma_p\right)^\mathcal{I} \|^2 \sqrt{\slashed{g}} d\theta_1 d\theta_2 = 0 \, .
\end{equation}
\end{theorem}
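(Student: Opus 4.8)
The plan is to prove Theorem \ref{theo2} in three stages, corresponding to the three displayed statements: (i) the sequence $(\Gamma,\psi)_n$ is Cauchy in the norms of Section \ref{sec:norms}; (ii) the limit inherits the uniform bounds; (iii) the limit realizes the scattering data at $\mathcal{I}^+$ in the sense of $(\ref{realizedata})$. First I would set up the difference quantities $\boldsymbol{\Gamma}=\Gamma_n-\Gamma_m$, $\boldsymbol{\psi}=\psi_n-\psi_m$ for two members of the sequence (with approximation parameters $\tau_n,(v_\infty)_n$ and $\tau_m,(v_\infty)_m$, say $m<n$), restricted to the common region $\mathcal{M}(\tau_0,\tau_m,(v_\infty)_m)$. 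Because both metrics are expressed in the fixed ambient double null gauge of Section \ref{sec:manifold}, these differences are honest tensors on the same manifold, and subtracting the renormalised null structure and Bianchi equations of Propositions \ref{uneq}, \ref{uneq2} (and their commuted versions \ref{comsum}, \ref{prop:nscommute}) yields a linear-in-$(\boldsymbol{\Gamma},\boldsymbol{\psi})$ system with the same schematic structure: Sch-background coefficients $f_p$, plus terms of the form $\Gamma\cdot\boldsymbol{\Gamma}$, $\psi\cdot\boldsymbol{\psi}$, $\Gamma\cdot\boldsymbol{\psi}$ etc., where the undifferenced $\Gamma,\psi$ factors are already controlled (uniformly, exponentially) by Theorem \ref{theo1}. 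The footnote \ref{footdif} flags the one genuine new ingredient: a subcollection $\mathbf{G}$ of commuted differences will be estimated using the \emph{elliptic} null structure equations (\ref{elliptic}) rather than a further commutation, so that $(\ref{grontype})$-type estimates close at the same differentiability order $s$ without commuting $s+1$ times.

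The core of stage (i) is then to re-run the entire bootstrap machinery of Section \ref{sec:bootstrap}—the curvature energy estimates of Section \ref{EEfCu} and the transport estimates for $\boldsymbol{\Gamma}$ of Section \ref{HGFGa}—but now for the linear difference system, with data. The data for the difference system is itself small and exponentially decaying: on $\tilde{\mathcal{C}}_{U=0}$ and on the overlap of $\{v=(v_\infty)_m\}$ it is controlled by the $\tau$-uniformity of Proposition \ref{prop:databounds} and by Proposition \ref{prop:asymptotic} (which quantifies, via $w^{2p}\|\psi_p^{\mathcal{I}_\tau}-\psi_p^{\mathcal{I}}\|^2\lesssim \tilde P^n e^{-\tilde P u(\tau,v_\infty)/2M}$, how the cut-off data converge as $v_\infty\to\infty$); on $\Sigma_{\tau_m}$ the data of $g_m$ is exactly Schwarzschild while $g_n$ satisfies $(\ref{ourgoalhere})$ there, so the difference is bounded by $e^{-P\tau_m/2M}$. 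The same Gronwall argument as in Section \ref{THEMAIN} then produces
\[
\mathcal{E}[\mathfrak{D}^k\boldsymbol{\Psi}](\tau_*)+F_{u,v}[\mathfrak{D}^k\boldsymbol{\Psi}](\tau_*)+ w^{2p-2}\|\mathfrak{D}^k\boldsymbol{\Gamma}_p\|^2_{L^2(u,v)}\lesssim e^{-P\tau_m/2M}
\]
uniformly in $n$, for $k\le s-1$ and $\tau_0\le\tau_*\le\tau_m$. Since $\tau_m\to\infty$, this shows $(\Gamma,\psi)_n$ is Cauchy in each of the spaces $L^\infty_uL^\infty_vL^2(S^2_{u,v})$ appearing in the statement, defining the limit $(\Gamma,\psi)_{lim}$ and a limiting metric $g_{lim}$; passing to the limit in the equations (using the already-established uniform bounds to handle the nonlinear terms) gives $\mathrm{Ric}(g_{lim})=0$, and lower-semicontinuity of the norms under this convergence gives stage (ii), i.e. the bounds $(\ref{decl})$, $(\ref{decl2})$, $(\ref{decl3})$ for the limit.

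For stage (iii), the statement $(\ref{realizedata})$, I would fix $u$ and exploit that the limiting solution's restriction to $\{v=v_\infty\}$ converges—as $v_\infty\to\infty$—to the radiation fields $\Gamma_p^{\mathcal{I}},\psi_p^{\mathcal{I}}$ of Section \ref{sec:radfields2}. Concretely, for the approximate solution $g_n$ with $(v_\infty)_n=\tau_n^2$, Proposition \ref{prop:asymptotic} gives $w^{2p}\|(\Gamma_p)_n|_{v=(v_\infty)_n}-\Gamma_p^{\mathcal{I}_{\tau_n}}\|^2\lesssim e^{-\tilde P u/2M}$ with an error improving in $r$, and $\Gamma_p^{\mathcal{I}_{\tau_n}}\to\Gamma_p^{\mathcal{I}}$ as $\tau_n\to\infty$; combining this with the interior transport estimate $(\ref{givenithere})$/$(\ref{epsthere})$—which propagates $r^{p-1}\|\mathfrak{D}^k\boldsymbol{\Gamma}_p\|$ control \emph{inward} from $v=v_\infty$ with an $\epsilon r^{-p}e^{-P\tau_*/2M}$ factor—shows that for fixed $u$ the quantity $w^{2p-2}\int_{S^2_{u,v}}\|\mathfrak{D}^k(\Gamma_p)_{lim}-\mathfrak{D}^k(\Gamma_p)^{\mathcal{I}}\|^2\sqrt{\slashed g}\,d\theta_1d\theta_2$ is bounded by terms tending to $0$ as $v\to\infty$. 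The main obstacle, as the authors themselves emphasize, is the borderline nature of the $r$-weights: one must verify that the null condition structure isolated in Section \ref{NSBE} survives the passage to the \emph{difference} system—in particular that the boxed borderline terms $f_{p_1}\overset{(3)}{\Gamma}_{p_2}$ in $(\ref{moreprecisely1})$ and the coupling structure of Remark \ref{rem:comstructure} (only $r^{-2}$ coefficients on the boxed $\uppsi'_{p'}$ curvature terms) are reproduced for $\boldsymbol{\Gamma},\boldsymbol{\psi}$, so that the $v$-integrations near infinity do not diverge logarithmically and the $\epsilon$-smallness from large $P$ is still available to absorb the borderline contributions. The elliptic improvement for the subcollection $\mathbf{G}$ via $(\ref{elliptic})$ is the device that keeps this at order $s$ rather than $s+1$, and checking that it meshes with the energy scheme is the part requiring genuine care.
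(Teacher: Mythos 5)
Your Steps 2--4 (difference equations with the same schematic structure, the elliptic estimate for the subcollection $\mathbf{G}$/$\hat{\underline{\boldsymbol\chi}}$ to avoid a further commutation, the Gronwall/bootstrap scheme for the differences, and lower semicontinuity for the limit) do match the paper's Section \ref{sec:convergence}. But there is a genuine gap at the very start of your stage (i): you treat the restriction of $g_n$ to the hypersurface $\{v=(v_\infty)_m\}$ as if it were controllable "data", citing Propositions \ref{prop:databounds} and \ref{prop:asymptotic}. Those propositions only concern the \emph{prescribed} data on each solution's own data hypersurface; the geometry that $g_n$ induces on $\{v=(v_\infty)_m\}$ is obtained by evolving inward through the slab $\mathcal{U}=\mathcal{M}(\tau_0,\tau_m)\cap\{(v_\infty)_m\le v\le(v_\infty)_n\}$, and the only bound you have there from Theorem \ref{theo1} is the uniform one $\lesssim e^{-P\tau/2M}$, which carries \emph{no smallness in $m$ or $n$}. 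Feeding that into your Gronwall argument reproduces the bound already known for each individual solution and yields no Cauchy property at all. This is precisely why the paper's proof has a separate Step 1 (Propositions \ref{prop:reno}, \ref{prop:reno2} and Corollary \ref{cor:diffestfinal}): one renormalises the equations in $\mathcal{U}$ by subtracting the radiation fields of Section \ref{sec:radfields2} (and, for the non-renormalised pair $(\alpha,\beta)$, runs the energy estimate with weight $w^{5-\delta}$ instead of $w^5$), so that the renormalised quantities gain an extra power of $r$; since $r\gtrsim (v_\infty)_m=\tau_m^2$ on the slab, this converts into the factor $\tau_m^{-2\delta}$ that tends to zero, and only then does the triangle inequality give a difference ``datum'' on $\{v=(v_\infty)_m\}$ that is small as $m\to\infty$.

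The same omission affects your stage (iii). The transport estimates $(\ref{givenithere})$, $(\ref{epsthere})$ propagate the \emph{un}renormalised $\Gamma_p$ and only give the bound $e^{-P\tau/2M}$, which says nothing about the weighted distance to $\Gamma_p^{\mathcal{I}}$ tending to zero as $v\to\infty$ at fixed $u$; to obtain $(\ref{realizedata})$ one must transport the quantities $\Gamma_p-\Gamma_p^{\mathcal{I}}$, $\psi_p-\psi_p^{\mathcal{I}}$ themselves, using the cancellations in the renormalised equations (e.g.\ $\slashed{\nabla}_4\underline{\alpha}^{\mathcal I}+\frac12 tr\chi\,\underline{\alpha}^{\mathcal I}$ being of order $r^{-3}$), which is exactly the content of Propositions \ref{prop:reno} and \ref{prop:reno2}. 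So the missing idea is not the preservation of the null condition under differencing (which you correctly flag and which does go through), but the radiation-field renormalisation in the slab between successive approximating hypersurfaces; without it neither the Cauchy property nor $(\ref{realizedata})$ follows from the ingredients you list.
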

In particular, Theorem \ref{theo2} can be applied to provide a uniqueness statement \emph{within the class of exponentially decaying solutions} (see however the discussion in Section \ref{sec:uniq}).
 Theorem \ref{theo2} is addressed in Section \ref{sec:convergence}. Note that Theorems \ref{theo1} and \ref{theo2} imply Theorem \ref{theo:full}, with the last statement of Theorem \ref{theo:full} being understood as the statement (\ref{realizedata}).

\section{Proof of Theorem \ref{theo1}} \label{sec:bootstrap}

\subsection{The logic of the proof} \label{sec:logic}
The proof of Theorem \ref{theo1} proceeds by a continuity argument. Let us define the following dimensionless constant:
\begin{align} \label{Cmaxdef}
C_{max} = 1+ 2^s \max_{i \leq s} \Big| r \ \mathfrak{D}^i \left( tr \chi_\circ \right) \Big|=  1+2^s \max_{i \leq s} \Big| 2r \ \mathfrak{D}^i \left(\frac{1-\frac{2M}{r}}{r} \right) \Big|
\end{align}
where we maximize over all expressions that can arise from applying up to $s$ derivatives from the collection $\mathfrak{D}$ to $ tr \chi_\circ$. Recall that $s$ denotes the number of frame-derivatives that we are commuting with and that we set $s=3$ for the remainder of the proof.
\begin{definition} \label{def:Adef}
Let $\mathcal{A} \subset\left[\tau_0,\tau_f\right) $ denote the set of real numbers $t \in \left[\tau_0,\tau_f\right)$ such that
\begin{enumerate}
\item There exists a smooth solution (\ref{maing}) of the vacuum Einstein equations (\ref{vacEq}) in canonical double-null coordinates $\left(u,v,\theta^1, \theta^2\right)$ in $\mathcal{M} \left(t,\tau_f,v_\infty\right)$ which realizes the data prescribed on the horizon $U=0$ and on $v=v_\infty$.
\item The following estimates hold in $\mathcal{M} \left(t,\tau_f,v_\infty\right)$ for the geometric quantities $\Gamma_p$ and $\psi_p$ of the solution. For $k=0,1,2,3$:
\begin{enumerate}
\item  Bootstrap Assumption on the Ricci-coefficients:
\begin{align} \label{decl}
  \int_{S^2\left(u,v\right)} \| \mathfrak{D}^k \Gamma_p \|^2 \sqrt{\slashed{g}} \ d\theta_1 d\theta_2 \leq
 C^2_{max} \cdot e^{-P\frac{\tau}{2M}} \left(\frac{2M}{r}\right)^{2p-2} \left(u,v\right) \, .
\end{align}
\item Bootstrap Assumption on Curvature\footnote{Recall that $\mathcal{M} \left(t,\tau_f,v_\infty\right)$ determines the quantities $u_{fut}$ and $v_{fut}$ in (\ref{decl2}) via definition (\ref{ufutdef}).}:
\begin{align} \label{decl2}
F\left[\mathfrak{D}^k \Psi\right] \left( \{u\} \times [v,v_{fut}] \right) + F\left[\mathfrak{D}^k\Psi\right] \left( \left[u,u_{fut}\right] \times \{v \} \right)
\leq 4 \cdot  e^{-P\frac{\tau}{2M}}
\end{align}
\begin{align} \label{decl3}
\mathcal{E}\left[\mathfrak{D}^k \Psi\right] \left(\tau,r_1=2M,r_2=r \left(\tau,v_\infty\right)\right) \leq 4 \cdot e^{-P\frac{\tau}{2M}} \, \, .
\end{align}
\item Auxiliary Bootstrap assumption: For $i=0$ and $i=1$,
\begin{align} \label{decl4}
\| \mathfrak{D}^i \slashed{g}_{AB} - \mathfrak{D}^i \slashed{g}_{AB}^\circ \|_{L^\infty \left(S^2\left(u,v\right)\right)} + \| r^2 K - 1 \|_{L^\infty \left(S^2\left(u,v\right)\right)}  \leq \epsilon \frac{2M}{r \left(u,v\right)}
\end{align}
where $K$ denotes the Gauss curvature of $\left(S^2\left(u,v\right), \slashed{g} \right)$ and $\epsilon=1/100$.
\end{enumerate}
\end{enumerate}
\end{definition}

\begin{remark} \label{rem:sc}
Note that the norms in (\ref{decl2}) and (\ref{decl3}) already incorporate the radial weights for the different $\psi_p$, so that the right hand side of (\ref{decl2}) does not contain any $r$-weights. Bootstrap assumption (2c) ensures uniform control over the Sobolev constants and the isoperimetric constant on $\left(S^2\left(u,v\right), \slashed{g} \right)$.
\end{remark}

Our goal is to show that $\mathcal{A} = \left[\tau_0,\tau_f\right)$. This will be achieved by first showing that there exists an $\epsilon>0$ such that $\left[\tau_f-\epsilon,\tau_f\right) \subset \mathcal{A}$ and then showing that if $\left[t,\tau_f\right) \subset \mathcal{A}$ for some $t<\tau_f$, then there exists an $\epsilon>0$ such that $\left[t-\epsilon,\tau_f\right) \subset \mathcal{A}$. 

For the second step, the hardest part of the analysis is to establish the following
\begin{proposition}[Improving the bootstrap assumptions] \label{prop:explain}
The constants $P$ and $\tau_0$ can be chosen sufficiently large such that the following statement holds: If $t \in \mathcal{A}$, then the estimates (\ref{decl})--(\ref{decl4}) actually hold in $\mathcal{M} \left(t,\tau_f,v_\infty\right)$ with a factor $\frac{3}{4}$ on their right hand sides.
\end{proposition}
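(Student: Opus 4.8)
The plan is to prove Proposition~\ref{prop:explain} via a bootstrap (continuity) argument structured in two stages: first the transport estimates for the Ricci coefficients $\Gamma$, then the weighted energy estimates for the curvature $\psi$, closing the loop through the Sobolev inequality $(\ref{sobolevineq})$. Throughout, the assumption $t\in\mathcal{A}$ gives us the bootstrap bounds $(\ref{decl})$--$(\ref{decl4})$ in $\mathcal{M}(t,\tau_f,v_\infty)$, and we must recover them with $\tfrac34$ on the right-hand side. The geometric smallness assumption $(\ref{decl4})$ is used at the outset to control the Sobolev and isoperimetric constants on the spheres $(S^2_{u,v},\slashed g)$ uniformly, so that all the functional inequalities we invoke have constants depending only on $M$; this assumption is itself reproved at the end by integrating $(\ref{mei})$ and the Gauss equation $(\ref{Gauss})$, exploiting the factor $e^{-P\tau/2M}$ and a large choice of $\tau_0$ to beat the fixed $\epsilon=1/100$.

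\emph{Step 1: Global transport estimates for $\Gamma$.} First I would integrate the commuted null structure equations of Proposition~\ref{prop:nscommute}, in the schematic form $(\ref{moreprecisely0})$--$(\ref{moreprecisely1})$, after taking $L^2$ norms over the spheres $S^2_{u,v}$. For the $\overset{(3)}{\Gamma}_p$ equations one integrates in $u$ from the data hypersurfaces $\mathcal{H}^+$, $v=v_\infty$, $\Sigma_\tau$ (where the contributions are bounded by $(\ref{Gdata1})$--$(\ref{Gdata2})$) towards the past; the $\psi_p$ terms on the right are handled by Cauchy--Schwarz against the null curvature fluxes, which are controlled by $(\ref{decl2})$, and the nonlinear $\Gamma\cdot\Gamma$ and $f\Gamma$ terms are handled by the bootstrap assumption $(\ref{decl})$ together with the $L^\infty$ Sobolev control $(\ref{sobolevineq})$. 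The key quantitative point, as previewed in Section~\ref{HGFGa}, is that integration in $u$ produces the factor $(\ref{integral1})$, which using $\Omega^2_{\mathcal{EF}}\,du\sim d\tau$ and $dv\lesssim r^h d\tau$ yields the gain $(\ref{epshere})$, i.e.\ $\epsilon\,e^{-P\tau_*(u,v)}$ with $\epsilon\to0$ as $P\to\infty$; for the $e_4$-equations one integrates in $v$, exploiting the extra $r$-decay of $E_4$ and using $(\ref{epshere})$ for the borderline boxed term $f_{p_1}\overset{(3)}{\Gamma}_{p_2}$ of Proposition~\ref{rem:anoterm}, producing $(\ref{epsthere})$. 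Crucially the sign of the linear redshift term (the $\hat\omega_\circ$ coefficient, $(\ref{3omh})$, $(\ref{4cofa})$, and the boxed term in Lemma~\ref{comlem}) drives \emph{decay} when integrating backwards provided $P$ is chosen below the threshold set by the surface gravity $\kappa=(2M)^{-1}$; this is what forces $P$ large but finite and bounded by $\kappa$ times a constant depending on $s$ (cf.\ Section~\ref{sec:redshiftrole}). This gives $r^{p-1}\|\mathfrak{D}^k\Gamma_p\|_{L^2(u,v)}(\tau_*)\lesssim \epsilon\,e^{-P\tau_*/2M}$, which improves $(\ref{decl})$ to $\tfrac34$ once $P$ is large enough that $\epsilon C_{max}<\tfrac34$.

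\emph{Step 2: Weighted energy estimates for curvature.} Next I would derive the energy identity $(\ref{thebulkter})$ for $\mathcal{E}[\mathfrak{D}^3\Psi]$ by multiplying the commuted Bianchi pairs $(\ref{cobi1})$--$(\ref{cobi2})$ by $r^q\uppsi$, $r^q\uppsi'$ and integrating by parts over $\mathcal{M}(\tau_*,\tau_f,v_\infty)$, choosing the weights $q$ exactly so as to cancel the $\gamma_4\,tr\chi$-term as in Remark~\ref{rem:reno}. The data flux terms vanish on $\Sigma_\tau$ and are $\lesssim e^{-P\tau_*/2M}$ on $\mathcal{H}^+$ and $v=v_\infty$ by Proposition~\ref{prop:databounds} (i.e.\ $(\ref{hoz4})$, $(\ref{inf4})$). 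The bulk terms $(\ref{Bterm})$ are then estimated by $(\ref{grontype})$: the first line reduces to the (essentially elliptic/zeroth-order) estimate $(\ref{BASIKO})$ type bound giving $\int_{\tau_*}^{\tau_f}\mathcal{E}[\mathfrak{D}^3\Psi]\,d\tau'$; the second line, $\int f_2 r^{2p-2}\|\mathfrak{D}^k\Gamma_p\|^2$, is bounded using Step~1 by $\int \epsilon^2 e^{-P\tau'/2M}d\tau'\lesssim \epsilon^2 e^{-P\tau_*/2M}$; the third (cubic) line is bounded by $(\sup r^p\|\mathfrak{D}^1\Gamma_p\|_{L^\infty}+\cdots)\int\mathcal{E}[\mathfrak{D}^3\Psi]\lesssim\epsilon\int\mathcal{E}[\mathfrak{D}^3\Psi]$, again via $(\ref{sobolevineq})$ and Step~1. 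The essential structural fact, Remark~\ref{rem:comstructure}, is that the dangerous $\|\mathfrak{D}^i\uppsi'_{p'}\|^2$ terms (controlled only on constant-$v$ fluxes) appear in $E_4$ only with $r^{-2}$ coefficients, hence are $v$-integrable up to infinity, while in $E_3$ they come multiplied by $\mathfrak{D}^i\uppsi_p$ and are absorbed into the constant-$u$ flux (bounded in $(\ref{decl2})$), whose exponential decay in $u$ is then harvested. Gronwall's inequality applied to the resulting $\mathcal{E}[\mathfrak{D}^3\Psi](\tau_*)\lesssim e^{-P\tau_*/2M}+\int_{\tau_*}^{\tau_f}\mathcal{E}[\mathfrak{D}^3\Psi]d\tau'$, analogous to $(\ref{thebulkterforwave3})$, gives $(\ref{ourgoalhere})$ with any prescribed small constant provided $P$ exceeds the implicit constant; the same argument applied to the fluxes gives the analogous $(\ref{thebulkter3})$-type estimate, yielding $(\ref{decl2})$ and $(\ref{decl3})$ with $\tfrac34$.

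\emph{The main obstacle.} I expect the hardest part to be the bookkeeping that makes $(\ref{grontype})$ hold with \emph{all} weights consistent with the $p$-hierarchy simultaneously --- that is, verifying that after full $\mathfrak{D}^3$-commutation every single term produced by Propositions~\ref{comsum} and~\ref{prop:nscommute} (the error expressions $(\ref{3dirbu})$--$(\ref{4dirbu})$, $(\ref{lin3term})$--$(\ref{lin4term})$, and the $\Lambda_1,\Lambda_2$ terms) fits into the scheme: either quadratic-in-decaying-quantities with enough $r$-weight to be $v$-integrable, or linear with a redshift sign of the correct orientation, or absorbable into a constant-$u$ flux. The subtle interplay is between the \emph{amplified} redshift under commutation (Section~\ref{sec:redshiftrole}, forcing $P$ to grow with $s$) and the requirement that $P$ stay small enough for the data estimates of Proposition~\ref{prop:databounds} to have been arranged; one must check these two constraints on $P$ are compatible, which is exactly why $\tilde P$ (and hence $I,J$) in the notion of $P_s$-admissibility were taken large depending on $P$ and $s$. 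A secondary technical point is the interface between the low-regularity well-posedness theory used to \emph{produce} the solution (Theorem~\ref{theolocal}, via \cite{ChoquetBruhat,Rendall,formationofbh}) and the gauge-adapted estimates; the openness half of the continuity argument ($[\tau_f-\epsilon,\tau_f)\subset\mathcal{A}$ and propagation of $\mathcal{A}$) is handled by the domain-of-dependence framework of \cite{formationofbh} as indicated in Step~1 of Section~\ref{sec:logic}, and is routine once the quantitative estimates above are in hand.
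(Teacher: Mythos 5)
Your overall architecture matches the paper's: auxiliary Sobolev control from (\ref{decl4}), weighted energy identities for the Bianchi pairs with the weights chosen to make the borderline $tr\chi$-terms into $f_2$ coefficients (Remark~\ref{rem:comstructure}), global transport estimates for $\Gamma$ with the $1/P$ gains of Lemmas~\ref{lem:3dir}--\ref{lem:4dir}, and a Gronwall closure. (You run the two stages in the opposite order to the paper --- the paper improves curvature first and then feeds the \emph{improved} fluxes and the improved $\overset{(3)}{\Gamma}$-bound into the Ricci transport estimates, in particular into the anomalous boxed term of Proposition~\ref{rem:anoterm}; your order can be made to work but yields different constants.) However, there are two genuine errors. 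First, your account of the redshift is inverted. You assert that the linear redshift term ``drives decay when integrating backwards provided $P$ is chosen below the threshold set by the surface gravity,'' i.e.\ that $P$ is bounded \emph{above} by $\kappa$ times a constant. In fact, integrating backwards the redshift is a \emph{blueshift}: the integrating factors $e^{n v/2M}$ in (\ref{ode2}) and in Proposition~\ref{prop:construction1a} blow up, the sign of $\hat\omega_\circ$ in the boxed term of Lemma~\ref{comlem} drives exponential \emph{growth}, and this growth is amplified linearly in the number of $\slashed\nabla_3$-commutations (Section~\ref{sec:redshiftrole}). The constraint is therefore a \emph{lower} bound: $P$ must exceed the surface gravity times a constant growing with $s$, and more generally must exceed every implicit constant $C_M$ so that the bulk terms, estimated with no sign information whatsoever, are dominated via $\int_{\tau_*}^{\tau_f}\mathcal{E}\,d\tau'\lesssim P^{-1}e^{-P\tau_*/2M}$. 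There is no upper bound on $P$, and no good sign is exploited anywhere in the proof.

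Second, your closing mechanism for (\ref{decl}) does not work as written. You claim Step~1 yields $r^{p-1}\|\mathfrak{D}^k\Gamma_p\|_{L^2(u,v)}\lesssim\epsilon\,e^{-P\tau_*/2M}$ with a uniformly small constant, and close by requiring ``$\epsilon C_{max}<\tfrac34$.'' But the transport integral is anchored at the data hypersurfaces and at the curvature fluxes: the contributions from (\ref{hoz3}), (\ref{inf3}) and from (\ref{decl2}) carry absolute $O(1)$ constants (the $1$ and the $2$, respectively, in the paper's $(4+C_M/P+C_Me^{-P\tau_0/2M})$ of Proposition~\ref{prop:improve2}), and only the bulk contributions acquire the $1/P$ or $e^{-P\tau_0/2M}$ smallness. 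Moreover the anomalous term $f_{p_1}\overset{(3)}{\Gamma}_{p_2}$ in the $\slashed\nabla_4$-equations produces, via Cauchy--Schwarz against the bootstrap bound, a cross term of size $(\tfrac2p+\tfrac4P)\,C_{max}\sqrt{4+\cdots}$ that is \emph{not} small in $P$ (the $2/p$ is fixed). The improvement therefore rests on the explicit hierarchy of constants: the bootstrap constant is $C_{max}^2$ with $C_{max}$ the fixed large constant (\ref{Cmaxdef}), and one must verify the closing inequality $4+\tfrac{C_M}{P}+C_Me^{-P\tau_0/2M}+2C_{max}\sqrt{4+\cdots}\le\tfrac34 C_{max}^2$ of Section~\ref{sec:bootstrap}, which holds because the left side grows only like $C_{max}$ while the right side grows like $C_{max}^2$. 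Without this accounting --- which is precisely the content of the anomalous-term estimate (\ref{boxest})--(\ref{boxest3}) and of ``Closing the bootstrap'' --- the factor $\tfrac34$ is not obtained.
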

The above Proposition will follow from Propositions \ref{prop:improve1} and  \ref{prop:improve2} of Section \ref{sec:curvimprove}. Let us finally explain how the above remarks and Proposition \ref{prop:explain} imply Theorem \ref{theo1}.

\emph{Step 1.} The fact that $\mathcal{A}$ is non-empty can be inferred as follows. By domain of dependence, in the region $\mathcal{M}\left(\tau_0,\tau_f\right) \cap \{u \geq u_\tau \} \cap \{v \geq v_\tau\}$ Schwarzschild is a development of the data on $\Sigma_{\tau_f}$. This induces smooth Schwarzschildean data on the null-boundary components $u=u_\tau$ and $v=v_\tau$ respectively of the latter region. By Rendall's theorem \cite{Rendall}, a smooth solution to (\ref{vacEq}) of the two resulting characteristic problems (with data on $\mathcal{H}^+$ and $v=v_\tau$, and $u=u_\tau$ and $v=v_\infty$ respectively) exists and can be defined on a manifold whose future boundary contains a neighborhood of the spheres $S^2_{U=0,v_\tau}$ (in $\mathcal{H}^+ \cup \{v=v_\tau\}$) and $S^2_{u_\tau,v_\infty}$ (in $\{v=v_\tau\} \cup \{u=u_\tau\}$) respectively. One can express that solution in the canonical double-null coordinates using the implicit function theorem (possibly shrinking the neighborhood) so that it exists in the small shaded regions below
\[
\input{step1.pstex_t}
\]
Finally, for sufficiently small $\epsilon$ the bootstrap assumptions hold in  $\mathcal{M}\left(\tau_f-\epsilon,\tau_f\right)$ by continuity. See Chapter 16.3 of \cite{formationofbh}, where the transformation from harmonic coordinates to double-null coordinates is carried out in complete detail in a related setting.

\emph{Step 2.} To proceed with the continuity argument given Proposition \ref{prop:explain}, we assume that $\left[t,\tau_f\right) \subset \mathcal{A}$ for some $t<\tau_f$. From the estimates (\ref{decl}), (\ref{decl2}) and (\ref{decl3}) in that region, we can obtain higher regularity bounds by further commutation, the equations being essentially linear at this stage (cf.~Chapter 16.2 of \cite{formationofbh}). In this way we obtain uniform bounds for all derivatives in $\mathcal{M}\left(t,\tau_f\right)$ and conclude that the solution extends smoothly to $\Sigma_t$ producing a smooth initial data set for the Einstein equations on $\Sigma_t$.
Using the compactness of $\Sigma_t \cap \{v\leq v_\infty\}$ and applying Choquet-Bruhat \cite{ChoquetBruhat}, we obtain a smooth solution in $\mathcal{M}\left(t-\epsilon,\tau_f\right) \cap \{u > u_t \} \cap \{v > v_t\}$ for some $\epsilon>0$, which moreover extends smoothly to the boundary hypersurfaces $u=u_t$ and $v=v_t$ where it induces smooth characteristic data. Using again Rendall's theorem in a neighbourhood of the spheres $S^2_{U=0,v_t}$ and $S^2_{u_t,v_\infty}$ we obtain a smooth solution in $\mathcal{M}\left(t-\epsilon,\tau_f\right)$ for some $\epsilon>0$. Applying the implicit function theorem (shrinking $\epsilon$ if necessary), we express this solution in double-null coordinates and attach it smoothly to the solution in  $\mathcal{M}\left(t,\tau_f\right)$ (cf.~Chapter 16.3 of \cite{formationofbh}). 
\[
\input{step2.pstex_t}
\]
Finally, by Proposition \ref{prop:explain} and continuity, the bootstrap assumptions continue to hold in this region (perhaps by shrinking $\epsilon>0$ once more).

\subsection{Improving the auxiliary bootstrap assumptions} \label{sec:booot}
We will now begin the proof of Proposition \ref{prop:explain}.  For any given $\tau_0$ large, let $\tau_f > \tau_0$ and consider the (past) development of the finite data set $D_{\tau_f,v_\infty}$. We define the bootstrap region
\begin{align}
\mathcal{B}  &= \mathcal{M} \left(\overline{\tau},\tau_f,v_\infty\right)   \
\end{align}
for $\overline{\tau} = \inf \mathcal{A}$ defined in Definition \ref{def:Adef}.
\subsubsection{Sobolev inequalities on $S^2_{u,v}$} \label{sec:sob1}

\begin{lemma} \label{lem:sob}
Consider the Riemannian manifold $\left(S^2_{\left(u,v\right)}, \slashed{g} \right)$ with $\slashed{g}$ satisfying the bootstrap assumptions (\ref{decl4}). Then we have for any integrable function $f$ whose derivative is integrable in $S^2_{u,v}$, the isoperimetric inequality
\begin{align} \label{isoperi}
\int_{S^2_{(u,v)}} \left(f-\bar{f}\right)^2 \sqrt{\slashed{g}} d\theta^1 d\theta^2 \leq c_{iso}\left(\int_{S^2_{(u,v)}} |r\slashed{\nabla} f|  \sqrt{\slashed{g}} d\theta^1 d\theta^2 \right)^2 \, .
\end{align}
Moreover, for any $S^2_{(u,v)}$-tensor $\xi$ the Sobolev inequalties
\begin{align}
\left[ \frac{1}{r^2} \int_{S^2_{(u,v)}} |\xi|^4 \sqrt{\slashed{g}} d\theta^1 d\theta^2 \right]^\frac{1}{2} \leq c_{sob} \left[\frac{1}{r^2} \int_{S^2_{(u,v)}} \left( |r \slashed{\nabla} \xi|^2 + |\xi|^2 \right) \sqrt{\slashed{g}} d\theta^1 d\theta^2 \right] \, , \nonumber
\end{align}
\begin{align}
\sup_{S^2_{(u,v)}} |\xi|^2 \leq c_{sob}  \left[\frac{1}{r^2} \int_{S^2_{(u,v)}} \left( |r \slashed{\nabla} \xi|^4 + |\xi|^4 \right) \sqrt{\slashed{g}} d\theta^1 d\theta^2 \right]^\frac{1}{2} \nonumber
\end{align}
hold. Here $c_{iso}$ and $c_{sob}$ are dimensionless constants whose value is close to their value for the round metric.
\end{lemma}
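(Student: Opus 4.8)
The statement is a standard collection of Sobolev-type inequalities on a two-sphere, and the plan is to reduce everything to the corresponding inequalities for the \emph{round} metric $\slashed{g}^\circ = r^2\gamma$ via the closeness assumption $(\ref{decl4})$. First I would observe that, rescaling $\slashed{g}^\circ = r^2\gamma$ to the unit sphere, all the claimed inequalities hold with absolute (dimensionless) constants $c_{iso}^\circ$, $c_{sob}^\circ$ for $\slashed{g}^\circ$ itself: the isoperimetric inequality $(\ref{isoperi})$ is the classical Poincar\'e--Wirtinger inequality on $(S^2,\gamma)$ after noting that $r$ is constant on each $S^2_{u,v}$ and cancels between the two sides, and the two Sobolev embeddings $H^1 \hookrightarrow L^4$ and $W^{1,4}\hookrightarrow L^\infty$ on a closed surface are classical, again with the $r$-weights arranged precisely so that both sides scale the same way (this is why the weights $r\slashed\nabla$ and $\frac{1}{r^2}\int$ appear). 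The point of writing them in the dimensionless form above is exactly that the constants are then $r$-independent.

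\textbf{Passing from $\slashed{g}^\circ$ to $\slashed{g}$.} The main step is then a perturbation argument. From the bootstrap assumption $(\ref{decl4})$, both the zeroth and first coordinate derivatives of $\slashed{g}_{AB} - \slashed{g}^\circ_{AB}$ are controlled by $\epsilon \, \frac{2M}{r}$ in $L^\infty(S^2_{u,v})$, with $\epsilon = 1/100$; in particular $\slashed{g}$ and $\slashed{g}^\circ$ are uniformly $C^1$-close (after the natural $r^{-2}$ rescaling to the unit sphere). Consequently: (i) the volume forms satisfy $\tfrac12 \sqrt{\slashed{g}^\circ} \le \sqrt{\slashed{g}} \le 2\sqrt{\slashed{g}^\circ}$, say, so $L^q$ norms with respect to the two metrics are comparable up to a factor close to $1$; (ii) the pointwise fibre norms $|\xi|_{\slashed g}$ and $|\xi|_{\slashed g^\circ}$ on tensors of fixed rank are comparable, the comparison constant being a continuous function of $\slashed g (\slashed g^\circ)^{-1}$, hence close to $1$; and (iii) the covariant derivatives differ by a Christoffel-difference term $(\slashed\Gamma - \slashed\Gamma^\circ)\cdot\xi$, and $r|\slashed\Gamma - \slashed\Gamma^\circ|$ is controlled by $r\slashed\nabla^\circ(\slashed g - \slashed g^\circ)$, which by $(\ref{decl4})$ (recall also $(\ref{chrel})$) is $\lesssim \epsilon$. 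Therefore $|r\slashed\nabla\xi|$ and $|r\slashed\nabla^\circ\xi|$ agree up to an $O(\epsilon)$ error which is absorbed into the $|\xi|^2$ (resp.\ $|\xi|^4$) term already present on the right-hand sides. Chaining these three comparisons with the round-metric inequalities produces $(\ref{isoperi})$ and the two Sobolev inequalities for $\slashed g$ with constants $c_{iso} = c_{iso}^\circ(1 + O(\epsilon))$ and $c_{sob} = c_{sob}^\circ(1+O(\epsilon))$, which are the claimed ``close to their value for the round metric'' bounds since $\epsilon = 1/100$.

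\textbf{Alternative / complementary route and the main obstacle.} An alternative is to quote the general Sobolev inequalities on a surface in terms of the isoperimetric constant alone (as in Chapter 5 of \cite{ChristKlei}): once $(\ref{isoperi})$ is established with $c_{iso}$ near the round value, the $L^4$ and $L^\infty$ Sobolev inequalities follow mechanically with constants depending only on $c_{iso}$, and $c_{iso}$ in turn is controlled by the uniform bound $\|r^2 K - 1\|_{L^\infty} \le \epsilon \frac{2M}{r}$ on the Gauss curvature together with a lower bound on the area, both of which are consequences of $(\ref{decl4})$ and of $r \ge 2M$. I would present the proof along these lines: establish $(\ref{isoperi})$ by a direct comparison with the round sphere using the volume-form and gradient comparisons above, and then invoke the standard machinery to upgrade to the two tensorial Sobolev inequalities. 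The only genuinely delicate point — and the step I expect to require the most care — is bookkeeping the $r$-weights and the rank-dependent fibre-norm comparisons so that the $O(\epsilon)$ errors genuinely land on lower-order terms already present on the right-hand side, rather than contaminating the principal gradient term; but given the explicit smallness $\epsilon = 1/100$ and the scale-invariant way the inequalities are stated, this is routine.
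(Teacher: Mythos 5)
Your proposal is correct, but it is worth noting that the paper itself gives no argument at all: its ``proof'' is a one-line citation to Section 5.2 of \cite{formationofbh}, where these inequalities are derived in a harder setting from control of the isoperimetric constant alone (this is essentially your ``alternative route'': prove $(\ref{isoperi})$ first, then obtain the $L^4$ and $L^\infty$ estimates mechanically from $c_{iso}$, the area normalisation, and Kato-type reductions such as applying the isoperimetric inequality to powers of $\|\xi\|$, so that no comparison of Christoffel symbols or curvature is ever needed for the tensorial statements). Your primary route — a direct perturbation off the round metric $\slashed{g}^\circ=r^2\gamma$, comparing volume forms, fibre norms and connections, with the Christoffel difference controlled via $(\ref{chrel})$ and the $i=1$ case of $(\ref{decl4})$ — is a genuinely different and more self-contained argument, and it is sound: the $r$-weights are indeed scale-invariant, the round-sphere constants are absolute, and for $(\ref{isoperi})$ only $C^0$-closeness is needed (for a scalar the gradient involves only $\slashed{g}^{-1}$; use that the $\slashed{g}$-mean minimises the $L^2$ distance to constants to switch between means). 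What the two approaches buy is different: yours exploits the strong $C^1$-closeness of $(\ref{decl4})$ to make everything elementary and yields constants explicitly of the form $(1+O(\epsilon))$ times the round values, whereas the cited Christodoulou argument needs far less control on $\slashed{g}$ (no closeness to the round metric, only isoperimetric/area control), which is precisely the point of the paper's remark that the reference proves the result ``with less control on $\slashed{g}$ than assumed above''. The one place to be slightly careful in your write-up is the tensorial case: either carry out the frame/connection comparison as you sketch (absorbing the $O(\epsilon)\,\|\xi\|$ Christoffel error into the zeroth-order term, which works since $\epsilon=1/100$), or sidestep it entirely by applying the scalar inequalities to $\|\xi\\|_{\slashed{g}}$ together with $|\slashed{\nabla}\|\xi\|\,|\le\|\slashed{\nabla}\xi\|$, as in the cited reference.
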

\begin{proof}
This is standard. For instance, see Section 5.2  \cite{formationofbh}, where the above is proved in a more intricate setting with less control on $\slashed{g}$ than assumed above.
\end{proof}
Combining the last two estimates of Lemma \ref{lem:sob} yields an $L^\infty$ estimate for $\xi$ in terms of the $H^2$ norm of $\xi$.  

In conjunction with Lemma \ref{lem:sob}, the bootstrap assumption (\ref{decl}) implies that for sufficiently large $\tau_0$, we have 
for $k=2,3$ and fixed $u,v$ the estimate
\begin{align} \label{l4r}
\left[\frac{1}{r^{2}} \int_{S^2\left(u,v\right)} \| \mathfrak{D}^{k-1} \Gamma_p \|^4 \sqrt{\slashed{g}} \ d\theta_1 d\theta_2 \right]^\frac{1}{2} \leq  \frac{c_{sob}}{r^2} \cdot \left( \textrm{RHS of (\ref{decl})} \right)
\end{align}
and also the pointwise bound
\begin{align} \label{l4r2}
 \| \mathfrak{D}^{k-2} \Gamma_p \|^2_{L^\infty\left(S^2\left(u,v\right)\right)} := \sup_{\theta \in S^2\left(u,v\right)} \| \mathfrak{D}^{k-2} \Gamma_p \|^2 \leq \frac{c_{sob}}{r^2} \cdot \left( \textrm{RHS of (\ref{decl})} \right) \, .
\end{align}
By choosing $\tau_0$ sufficiently large, (\ref{l4r2}) immediately improves the metric-part of the auxiliary bootstrap assumption (\ref{decl4}).
\subsubsection{Sobolev inequalities on null-cones} \label{sec:sob2}
The control of the null-fluxes (\ref{decl2}) provides estimates on the $2$-spheres foliating the cones via Sobolev inequalities. The latter are derived along the lines of Chapter 10 of \cite{formationofbh} and we now provide a suitable version below. Recall the dimensionless weight $w=r/2M$ from (\ref{wdefn}).

\begin{lemma} \label{lem:ncs}
For any $\left(u,v,\theta_1,\theta_2\right) \in \mathcal{B}$ we have the following Sobolev inequalities along null-cones lying in $\mathcal{B}$:
\begin{align} \label{sobv}
\sup_{v\leq v_{fut}} \| w^{q} \xi \|_{L^4\left(S^2_{u,v}\right)}  \leq & c_{sob} \Big[\| w^{q} \xi \|_{L^4\left(S^2_{u,v_{fut}}\right)}  \\
&+ \Big\{ \int_{v}^{v_{fut}} \int_{S^2_{u,\bar{v}}} d\bar{v} w^{2q-2}  d\mu_{\slashed{g}} \left( \|\mathfrak{D}_{\nearrow} \xi\|^2 + \|\xi\|^2\right) \Big\}^\frac{1}{2} \Big] \nonumber
\end{align}
and
\begin{align} \label{sobu}
\sup_{u\leq u_{fut}} \| w^{q-\frac{1}{2}} \xi \|_{L^4\left(S^2_{u,v}\right)} \leq & c_{sob} \Big[ \| w^{q-\frac{1}{2}} \xi \|_{L^4\left(S^2_{u_{fut},v}\right)} \\ & + \Big\{ \int_{u}^{u_{fut}} \int_{S^2_{\bar{u},v}} d\bar{u} d\mu_{\slashed{g}} w^{2q-2}  \left( \|\mathfrak{D}_{\nwarrow} \xi\|^2 + \|\xi\|^2\right)  \Big\}^\frac{1}{2} \Big] \, . \nonumber
\end{align}
Note that only tangential derivatives to the cone appear on the right hand side.
\end{lemma}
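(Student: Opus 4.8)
The proof proposal for Lemma \ref{lem:ncs} (the null-cone Sobolev inequalities) follows.

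\medskip

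\noindent\textbf{Plan of proof.} The plan is to reduce both inequalities to the standard $S^2$-Sobolev estimate of Lemma \ref{lem:sob} together with a one-dimensional fundamental theorem of calculus along the generators of the relevant null cone. First I would fix a constant-$u$ cone and work with $\zeta := w^q \xi$, regarded as an $S^2_{u,v}$-tensor depending on the parameter $v$. For each fixed $v \le v_{fut}$ the two-dimensional Sobolev embedding $L^4(S^2_{u,v}) \hookrightarrow H^1(S^2_{u,v})$ from Lemma \ref{lem:sob} gives
\[
\| \zeta \|_{L^4(S^2_{u,v})}^2 \lesssim \frac{1}{r^2}\int_{S^2_{u,v}}\left( |r\slashed\nabla \zeta|^2 + |\zeta|^2 \right) d\mu_{\slashed g},
\]
where the implicit constant is controlled uniformly in $\mathcal{B}$ because the bootstrap assumption (\ref{decl4}) controls the isoperimetric constant of $(S^2_{u,v},\slashed g)$. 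It then suffices to bound the right-hand side, as a function of $v$, by its value at $v=v_{fut}$ plus an integrated-in-$v$ remainder. The key point, as in Chapter 10 of \cite{formationofbh}, is that the quantity $Q(v) := \frac{1}{r^2}\int_{S^2_{u,v}}\big(|r\slashed\nabla\zeta|^2+|\zeta|^2\big)\,d\mu_{\slashed g}$ satisfies, upon differentiating in $v$ and using the transport formula for the volume form ($\slashed\nabla_4 d\mu_{\slashed g} = tr\chi\, d\mu_{\slashed g}$) together with the commutator $[\slashed\nabla_4,\slashed\nabla]$ from Lemma \ref{comlem},
\[
\left| \partial_v Q(v) \right| \lesssim \frac{1}{r^2}\int_{S^2_{u,v}} \left( |r\slashed\nabla_4(r\slashed\nabla\zeta)|\,|r\slashed\nabla\zeta| + |r\slashed\nabla_4\zeta|\,|\zeta| + \text{(error)}\,|\zeta|^2 \right) d\mu_{\slashed g} + \frac{1}{r}Q(v),
\]
and here the factor $\slashed\nabla_4 r$ and the $tr\chi$ terms combine with the chosen weight $w^{q-1}$ (giving rise to the $w^{2q-2}$ in the statement) so that no positive $r$-power is left over; the error terms, built from $\Gamma_2$, $\psi_{7/2}$ and the Gauss curvature, are bounded via (\ref{decl4}) and the bootstrap assumptions. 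Integrating this differential inequality from $v$ to $v_{fut}$ and applying Cauchy--Schwarz to pass from $\|\slashed\nabla_4(r\slashed\nabla\zeta)\|\|r\slashed\nabla\zeta\|$ to $\tfrac12(\|\mathfrak{D}_\nearrow\zeta\|^2+\|\zeta\|^2)$ yields (\ref{sobv}). The estimate (\ref{sobu}) is proven identically, integrating instead along the generators of a constant-$v$ cone in the $\slashed\nabla_3$ direction, with the weight shifted to $w^{q-1/2}$ to absorb the factor $\Omega^2$ appearing in the constant-$u$ flux and the behaviour of $r$ along ingoing cones (where $\partial_u r$ need not be small but the relevant $r$-weights are arranged to compensate).

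\medskip

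\noindent\textbf{Main obstacle.} The essential difficulty is bookkeeping of the $r$-weights: one must verify that the weight $w^q$ (resp.\ $w^{q-1/2}$) is exactly the one for which the $tr\chi_\circ$ contribution coming from $\partial_v(w^{2q-2}\,d\mu_{\slashed g})$ and from $[\slashed\nabla_4,\slashed\nabla]$ cancels the leading $r$-growth, leaving only the manifestly integrable $w^{2q-2}$ bulk density and the harmless lower-order $\frac1r Q$ term handled by Gr\"onwall. This is precisely the same ``null condition''-type weight matching that governs the energy estimates of Section \ref{NSBE}, and once it is checked the rest is routine. A secondary (but genuinely minor) point is ensuring that the anomalous $\slashed\nabla_4$ terms in the commuted structure of $\Gamma$ and $\psi$ (the boxed terms of Propositions \ref{uneq}, \ref{uneq2}, \ref{comsum}) do not spoil the weight count when $\xi$ is itself a commuted quantity $\mathfrak{D}^i\psi$ or $\mathfrak{D}^i\Gamma$; but since Lemma \ref{lem:ncs} is stated for an abstract tensor $\xi$ with $\mathfrak{D}_\nearrow\xi$, $\mathfrak{D}_\nwarrow\xi$ appearing on the right, no structural input about the equations is needed here, and the lemma is purely a consequence of geometry and the bootstrap control (\ref{decl4}).
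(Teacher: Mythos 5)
There is a genuine gap: your reduction loses a derivative and therefore cannot produce the right-hand sides of (\ref{sobv})--(\ref{sobu}) as stated. By applying the per-sphere embedding $L^4(S^2_{u,v})\hookrightarrow H^1(S^2_{u,v})$ first and then transporting the sphere quantity $Q(v)=\frac{1}{r^2}\int_{S^2_{u,v}}\left(\|r\slashed{\nabla}\zeta\|^2+\|\zeta\|^2\right)d\mu_{\slashed{g}}$ along the cone, you are forced to differentiate $\int\|r\slashed{\nabla}\zeta\|^2$ in $v$, which inevitably produces $\slashed{\nabla}_4\left(r\slashed{\nabla}\zeta\right)$, a \emph{second}-order derivative of $\xi$; this appears explicitly in your displayed bound for $|\partial_v Q(v)|$. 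The subsequent step you invoke, passing by Cauchy--Schwarz from $\|\slashed{\nabla}_4(r\slashed{\nabla}\zeta)\|\,\|r\slashed{\nabla}\zeta\|$ to $\tfrac12\left(\|\mathfrak{D}_{\nearrow}\zeta\|^2+\|\zeta\|^2\right)$, is false: $\slashed{\nabla}_4(r\slashed{\nabla}\zeta)$ is a member of $\mathfrak{D}^2\zeta$, not of $\mathfrak{D}_{\nearrow}\zeta$, and no integration by parts or commutation (Lemma \ref{comlem}) converts it into a first-order quantity. At best your argument proves a version of the lemma with second-order tangential derivatives in the bulk term, which is strictly weaker than the statement (whose whole point is that an $L^4$ bound on a quantity is obtained from the $L^2$ flux of only \emph{one} more tangential derivative; this numerology is what lets the paper deduce $L^4$ control of $\mathfrak{D}^{s-1}\psi$ from the $s$-commuted curvature fluxes).

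The idea you are missing is that one must transport the $L^4$ quantity itself, never the $H^1$ sphere norm. The paper computes $\frac{d}{dv}\int_{S^2_{u,v}} w^{4q}\|\xi\|^4 d\mu_{\slashed{g}}$, whose integrand involves only $\slashed{\nabla}_4\xi$ (plus the zeroth-order terms $\frac{4q}{r}\slashed{\nabla}_4 r+tr\chi$, bounded pointwise via (\ref{l4r2})); Cauchy--Schwarz then yields a factor $\left(\int\int w^{2q-2}\|r\slashed{\nabla}_4\xi\|^2\right)^{1/2}$ times $\left(\int\int w^{6q}\|\xi\|^6\right)^{1/2}$, and the $L^6$ bulk is controlled by applying the isoperimetric inequality (\ref{isoperi}) to $\|\xi\|^3$ — an operation that costs only \emph{first} angular derivatives, since $\slashed{\nabla}\|\xi\|^3\sim\|\xi\|^2\slashed{\nabla}\xi$ — giving $\int\int w^{6q}\|\xi\|^6\lesssim\left(\sup_v\int w^{4q}\|\xi\|^4\right)\int\int w^{2q-2}\left(\|r\slashed{\nabla}\xi\|^2+\|\xi\|^2\right)$. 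Combining and absorbing the supremum closes the estimate with only $\mathfrak{D}_{\nearrow}\xi$ (respectively $\mathfrak{D}_{\nwarrow}\xi$, with the measure $d\bar u$ and the shifted weight $w^{q-\frac12}$) on the right. Your concluding remark that the lemma requires only (\ref{decl4}) and no structural input from the equations is correct in spirit — the paper additionally uses only pointwise control of $r\,tr\chi$, $r\,tr\underline{\chi}$ — but your emphasis on a delicate ``null-condition'' weight cancellation is misplaced: the real obstruction is the derivative count, and your proposed route does not overcome it.
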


\begin{proof}
Applying the isoperimetric inequality (\ref{isoperi}) on $\left(S^2_{u,v}, \slashed{g}\right)$ to $|\xi|^3$, we derive
\[
\int_{S^2_{(u,v)}} \|\xi \|^6 d\mu_{\slashed{g}} \leq  \frac{c}{r^2} \left[\int_{S^2_{(u,v)}} \|\xi \|^4 d\mu_{\slashed{g}} \right] \left[\int_{S^2_{(u,v)}} \left( \|r \slashed{\nabla} \xi \|^2 + \|\xi\|^2 \right) d\mu_{\slashed{g}} \right] 
\]
for a uniform (in $u$ and $v$) constant $c$ (cf.~Remark \ref{rem:sc}) and any $S^2_{(u,v)}$-tensor $\xi$. From the above we obtain
\begin{align}
{4M^2} \int_{v}^{v_{fut}} d\bar{v} \int_{S^2_{(u,\bar{v})}}  w^{6q} \|\xi \|^6 d\mu_{\slashed{g}} \nonumber \\
\leq c \left[\sup_v \int_{S^2_{(u,v)}} w^{4q} \|\xi \|^4 d\mu_{\slashed{g}} \right] \left[\int_{v}^{v_{fut}} d\bar{v} \int_{S^2_{(u,\bar{v})}} w^{2q-2} \left( \|r \slashed{\nabla} \xi \|^2 + \|\xi\|^2 \right) d\mu_{\slashed{g}} \right] \nonumber
\end{align}
and
\begin{align}
{4M^2} \int_{u}^{u_{fut}} d\bar{u} \int_{S^2_{(\bar{u},v)}}  w^{6q-2} \|\xi \|^6 d\mu_{\slashed{g}} \nonumber \\
\leq c \left[\sup_u \int_{S^2_{(u,v)}} w^{4q-2} \|\xi \|^4 d\mu_{\slashed{g}} \right] \left[\int_{u}^{u_{fut}} d\bar{u} \int_{S^2_{(\bar{u},v)}} w^{2q-2} \left( \|r \slashed{\nabla} \xi \|^2 + \|\xi\|^2 \right) d\mu_{\slashed{g}} \right]. \nonumber
\end{align}
For the first, we note
\[
\frac{d}{dv} \int_{S^2_{(u,v)}} w^{4q} \|\xi \|^4 d\mu_{\slashed{g}} =  \int_{S^2_{(u,v)}} w^{4q} \left[ 4\|\xi\|^2 \xi \cdot \slashed{\nabla}_4 \xi + \|\xi\|^4 \left( \frac{4q}{r} \slashed{\nabla}_4 r  + tr \chi \right) \right]  d\mu_{\slashed{g}} \, ,
\]
which implies (upon using the pointwise bound for $tr \chi$ available through (\ref{l4r2}))
\begin{align}
\int_{S^2_{(u,v)}} w^{4q} \|\xi \|^4 d\mu_{\slashed{g}}  \leq \int_{S^2_{(u,v_{fut})}} w^{4q} \|\xi \|^4 d\mu_{\slashed{g}} \nonumber \\
+ c \sqrt{\int_{v}^{v_{fut}} d\bar{v} \int_{S^2_{(u,\bar{v})}} w^{2q-2} \left( \|r \slashed{\nabla}_4 \xi \|^2 + \|\xi\|^2 \right) d\mu_{\slashed{g}}} \sqrt{\int_{v}^{v_{fut}} d\bar{v} \int_{S^2_{(u,\bar{v})}}  w^{6q} \|\xi \|^6 d\mu_{\slashed{g}}} \nonumber .
\end{align}
Similarly, in the $u$-direction one has
\begin{align}
\int_{S^2_{(u,v)}} w^{4q-2} \|\xi \|^4 d\mu_{\slashed{g}}  \leq \int_{S^2_{(u_{fut},v)}} w^{4q-2} \|\xi \|^4 d\mu_{\slashed{g}} \nonumber \\
+ c\sqrt{\int_{u}^{u_{fut}} d\bar{u} \int_{S^2_{(\bar{u},v)}} w^{2q-2} \left( \| \slashed{\nabla}_3 \xi \|^2 + \|\xi\|^2 \right) d\mu_{\slashed{g}}} \sqrt{\int_{u}^{u_{fut}} d\bar{u} \int_{S^2_{(\bar{u},v)}}  w^{6q-2} \|\xi \|^6 d\mu_{\slashed{g}}} . \nonumber
\end{align}
Combining these inequalities yields the estimates of the Lemma.
\end{proof}

\begin{remark}
Besides the bootstrap assumption (\ref{decl4}), the above proof uses only a pointwise bound on $r \cdot tr \chi$ and $r \cdot tr \underline{\chi}$ available through (\ref{l4r2}).
\end{remark}

We now combine (\ref{sobu}) and (\ref{sobv}) with the bootstrap assumption on the fluxes (\ref{decl2}). Note that since all curvature components $\xi=\psi_p$ are trivial on $S^2_{U=0,v_\tau}$ we can apply (\ref{sobv}) on $\tilde{\mathcal{C}}_{U=0}$ to obtain an $L^4$ bound for $s-1$ derivatives of all $\psi_p$ \emph{except} $\underline{\alpha}$ from the $L^2$-curvature flux of $s$ derivatives along $\mathcal{\tilde{C}}_{U=0}$ . For the missing $\underline{\alpha}$ we have (\ref{auxab}) along $\mathcal{\tilde{C}}_{U=0}$. Similarly on $\tilde{\mathcal{C}}_{v=v_\infty}$  we can apply (\ref{sobu}), which together with (\ref{auxa}) finally yields $L^4$ bounds for $s-1$ derivatives of all $\psi_p$ on $\tilde{\mathcal{C}}_{v=v_\infty}$ taking into account that all $\xi$ vanish at $S^2_{u_\tau,v_\infty}$. Since the future cone from any sphere of the bootstrap region intersects $\tilde{\mathcal{C}}_{U=0} \cup \Sigma_\tau \cup \tilde{\mathcal{C}}_{v=v_\infty}$, we can control the $L^4$-norm of any $\psi_p$ from the fluxes and the data and hence in conjunction with the bootstrap assumptions (\ref{decl2}) this yields for $k=1,2,3$
\begin{align} \label{l4c}
\left[r^{2} w^{4p-4} \int_{S^2\left(u,v\right)}  \| \mathfrak{D}^{k-1}  \psi_p \|^4 \sqrt{\slashed{g}} \ d\theta_1 d\theta_2 \right]^\frac{1}{2} \leq c_{sob} \cdot \textrm{RHS of (\ref{decl2})} \, .
\end{align}
Applying the second inequality of Lemma \ref{lem:sob}, we also obtain the pointwise bound 
\begin{align}
w^{2p-2} \|  \mathfrak{D}^{k-2} \psi_p \|^2_{L^\infty\left(S^2\left(u,v\right)\right)} \leq \frac{c_{sob}}{r^2} \cdot \left( \textrm{RHS of (\ref{decl2})} \right) \, 
\end{align}
for $k=2,3$. In particular, this justifies the notation $\psi_p$ introduced in (\ref{stabmindec}).

As an immediate corollary, using equation (\ref{Gauss}) we improve (by choosing $\tau_0$ large) the bootstrap assumption (\ref{decl4}) on the Gauss curvature. We conclude that the auxiliary bootstrap assumption (2c) has already been improved.

\subsubsection{Geometry of the slices $\Sigma_\tau$}  \label{sec:slices}

Recall the slices $\Sigma_\tau$ defined at the end of Section \ref{sec:manifold}. We now obtain estimates for the normal and the induced volume element on $\Sigma_\tau$. Let us define the constant
\[
C_h := 2 \max_{r \in \left[9/4M,8M\right]} \Big[ \left(\frac{r}{2M}\right)^h \frac{1}{1-\frac{2M}{r}} \Big] \, .
\]

\begin{lemma}
The slices $\Sigma_\tau \cap \mathcal{B}$ satisfy
\[
C_h^{-1} \left(\frac{2M}{r}\right)^h \leq -g\left(\nabla \tau, \nabla \tau\right) \leq C_h \left(\frac{2M}{r}\right)^h \, .
\]
Similarly, we have in $(\tau, r, \theta^1, \theta^2)$-coordinates, the estimate 
\[
C_h^{-1} \left(\frac{2M}{r}\right)^{\frac{h}{2}} \sqrt{\slashed{g}} \leq \sqrt{g_{\Sigma_\tau}} \leq C_h  \left(\frac{2M}{r}\right)^{\frac{h}{2}}\sqrt{\slashed{g}}
\]
for the induced volume element. Denoting the unit normal to the $\Sigma_\tau$ by $n_{\Sigma_\tau}$ we also have
\begin{align} \label{useful}
\begin{split}
\frac{1}{4} \leq \sqrt{g_{\Sigma_\tau}} \cdot g \left(e_3, n_{\Sigma_\tau} \right) &\leq 9 \, , \\
\frac{1}{4}  \left(\frac{2M}{r}\right)^{h} \leq \sqrt{g_{\Sigma_\tau}} \cdot g \left(e_4, n_{\Sigma_\tau} \right) &\leq 16  \left(\frac{2M}{r}\right)^{h} \, .
\end{split}
\end{align}
\end{lemma}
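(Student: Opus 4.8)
The plan is to reduce the whole statement to explicit computations with the time function $\tau = u+v+f_{smooth}(r^\star)$ and the algebraic form $(\ref{maing2})$ of $g$, using the bootstrap assumptions only to control the ratio $\Omega_\circ^2/\Omega^2$. Throughout write $f':=f'_{smooth}(r^\star)$, regarded as a known function of $r$ via $r^\star=r^\star(r)$; from $(\ref{rdef})$ one has $\frac{dr^\star}{dr}=(1-2M/r)^{-1}=(\Omega_\circ^2)^{-1}$. Since $\tau$ depends only on $(u,v)$, $d\tau=(1-f')\,du+(1+f')\,dv$ and $e_A\tau=0$, hence $e_3^{\mathcal{EF}}\tau=\frac{1}{\Omega^2}\partial_u\tau=\frac{1-f'}{\Omega^2}$ and $e_4^{\mathcal{EF}}\tau=\partial_v\tau=1+f'$. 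Using $(\ref{frefdef})$ and the frame relations, $g^{-1}=-\frac12(e_3^{\mathcal{EF}}\otimes e_4^{\mathcal{EF}}+e_4^{\mathcal{EF}}\otimes e_3^{\mathcal{EF}})+\slashed{g}^{AB}e_A\otimes e_B$, so $g(\nabla\tau,\nabla\tau)=-(e_3^{\mathcal{EF}}\tau)(e_4^{\mathcal{EF}}\tau)=-\frac{1-(f')^2}{\Omega^2}$. Writing this as $\frac{1-(f')^2}{\Omega_\circ^2}\cdot\frac{\Omega_\circ^2}{\Omega^2}$, the first factor for $f_{piece}$ equals $(2M/r)^h$ on $\{r<\frac{9}{4}M\}\cup\{r>8M\}$ and equals $(\Omega_\circ^2)^{-1}$ on $\{\frac{9}{4}M\le r\le 8M\}$, and by direct inspection it lies in $[C_h^{-1},C_h](2M/r)^h$ in both cases (using that $(2M/r)^h$, $\Omega_\circ^2$ and $1$ are mutually comparable where $r$ is bounded, and that the factor $2$ in the definition of $C_h$ leaves room); the mollification is constructed precisely so that $f_{smooth}$ preserves these bounds (cf.\ the footnote after $(\ref{fpiece})$). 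The second factor is as close to $1$ as we wish once $\tau_0$ is large, since $\frac{1}{2M}(\frac{\Omega_\circ^2}{\Omega^2}-1)$ is one of the $\Gamma_2$ and is controlled pointwise by $(\ref{l4r2})$. This gives the first estimate; in particular $1-(f')^2>0$ off $\mathcal{H}^+$, so $\Sigma_\tau$ is spacelike, and the bound extends to $\mathcal{H}^+$ because there $1-(f')^2$ and $\Omega^2$ both vanish to the order of $\Omega_\circ^2$.

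For the volume element, pass to $(\tau,r,\theta^1,\theta^2)$-coordinates, in which $\frac{\partial u}{\partial\tau}=\frac{\partial v}{\partial\tau}=\frac12$, $\frac{\partial u}{\partial r}=-\frac{1+f'}{2\Omega_\circ^2}$, $\frac{\partial v}{\partial r}=\frac{1-f'}{2\Omega_\circ^2}$. Restricting $g$ to the span of $\partial_r|_\tau=\frac{\partial u}{\partial r}\partial_u+\frac{\partial v}{\partial r}\partial_v$ and the $e_A=\partial_{\theta^A}$, and reading off the components of $(\ref{maing2})$ ($g_{uu}=0$, $g_{uv}=-2\Omega^2$, $g_{vv}=\slashed{g}_{AB}b^Ab^B$, $g_{uA}=0$, $g_{vA}=-\slashed{g}_{AB}b^B$), one gets $(g_{\Sigma_\tau})_{AB}=\slashed{g}_{AB}$, $(g_{\Sigma_\tau})_{rA}=-\frac{1-f'}{2\Omega_\circ^2}\slashed{g}_{AB}b^B$, and $(g_{\Sigma_\tau})_{rr}=\frac{(1-(f')^2)\Omega^2}{(\Omega_\circ^2)^2}+\big(\frac{1-f'}{2\Omega_\circ^2}\big)^2\slashed{g}_{AB}b^Ab^B$. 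Computing $\det g_{\Sigma_\tau}$ by the Schur complement, the shift-dependent terms cancel exactly, leaving $\det g_{\Sigma_\tau}=\det\slashed{g}\cdot\frac{(1-(f')^2)\Omega^2}{(\Omega_\circ^2)^2}$, i.e.\ $\sqrt{g_{\Sigma_\tau}}=\frac{\Omega}{\Omega_\circ}\sqrt{\frac{1-(f')^2}{\Omega_\circ^2}}\,\sqrt{\slashed{g}}$ as densities; the second estimate then follows from the bound on $\frac{1-(f')^2}{\Omega_\circ^2}$ proved above together with $\Omega/\Omega_\circ\approx1$.

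For the unit normal, take $n_{\Sigma_\tau}=\pm\nabla\tau/\sqrt{-g(\nabla\tau,\nabla\tau)}$ (the normalisation is forced; the orientation sign is immaterial for two-sided bounds). From $g(e_3^{\mathcal{EF}},\nabla\tau)=e_3^{\mathcal{EF}}\tau=\frac{1-f'}{\Omega^2}$, $g(e_4^{\mathcal{EF}},\nabla\tau)=1+f'$ and $\sqrt{-g(\nabla\tau,\nabla\tau)}=\frac{\sqrt{1-(f')^2}}{\Omega}$ one gets $g(e_3^{\mathcal{EF}},n_{\Sigma_\tau})=\pm\frac{1}{\Omega}\sqrt{\frac{1-f'}{1+f'}}$ and $g(e_4^{\mathcal{EF}},n_{\Sigma_\tau})=\pm\Omega\sqrt{\frac{1+f'}{1-f'}}$ (consistent with $g(e_3,n)g(e_4,n)=1$, as $n$ is unit). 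Multiplying by the density ratio $\sqrt{g_{\Sigma_\tau}}/\sqrt{\slashed{g}}=\frac{\Omega}{\Omega_\circ^2}\sqrt{1-(f')^2}$ of the previous step (which is the sense in which $\sqrt{g_{\Sigma_\tau}}$ enters these last two inequalities), the factors of $\Omega$ cancel in the first product, giving the spacetime-independent $\frac{1-f'}{\Omega_\circ^2}$, while the second becomes $\frac{\Omega^2}{\Omega_\circ^2}\,(1+f')$. It then remains to observe that $\frac{1-f'}{\Omega_\circ^2}$ and $(2M/r)^{-h}(1+f')$ are each bounded above and below by constants — as one checks by inspecting $f_{smooth}$: near $\mathcal{H}^+$, $1-f'=1-\sqrt{1-(2M/r)^h\Omega_\circ^2}\sim\frac12(2M/r)^h\Omega_\circ^2$ while $1+f'\to2$; near $\mathcal{I}^+$ (where $f'=-\sqrt{1-(2M/r)^h\Omega_\circ^2}$) the roles are interchanged; and on $\frac{9}{4}M\le r\le 8M$ and the mollification layers $|f'|<1$ uniformly while $\Omega_\circ^2$ and $(2M/r)^h$ are comparable to constants — and then use $\Omega^2/\Omega_\circ^2\approx1$. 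This pins down the constants $\frac14,9,\frac14,16$, with $\tau_0$ taken large.

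\textbf{Main obstacle.} There is no deep difficulty here: once the inverse-metric/frame identity is available the computation is elementary and the shift $b$ drops out of the relevant determinants. The only delicate point is the bookkeeping of the piecewise-and-mollified structure of $f_{smooth}$ needed to produce the precise constants (in particular verifying that $1-(f')^2$ stays comparable to $(2M/r)^h\Omega_\circ^2$ across the interpolation layers), together with the companion check that the $O(2M/r)$ deviation of $\Omega^2/\Omega_\circ^2$ from $1$ — controlled by the already (partially) improved bootstrap assumptions via $(\ref{l4r2})$ — does not spoil these constants, which is exactly what forces $\tau_0$ to be taken sufficiently large.
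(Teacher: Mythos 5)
Your proof is correct and follows essentially the same route as the paper's: the paper likewise reduces everything to the explicit formula $d\tau_{piece}=\left[1-\lambda k(r)\right]du+\left[1+\lambda k(r)\right]dv$ for the (piecewise) Schwarzschild time function and then invokes the already-improved bootstrap closeness to the background metric, leaving the remaining elementary computations (and the mollification bookkeeping) implicit. Your write-up simply carries out those computations explicitly — in particular observing the exact cancellation of the shift $b$ in the induced volume form, so that only $\Omega^2/\Omega_\circ^2\approx 1$ is needed from the bootstrap — which is a sharpening of detail rather than a different argument.
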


\begin{proof}
It suffices to prove the above estimates (with slightly better constants) for Schwarzschild itself, since the bootstrap assumptions imply in particular uniform $C^1$-closeness estimates of the spacetime-metric (recall we already improved bootstrap assumption (\ref{decl4})). Moreover, one may first establish the above for $\tau_{piece}=u+v+f_{piece}$ instead of $\tau$ and constant $\frac{3}{4}C_h$ instead of $C_h$. Finally, one concludes that the same bounds hold with constant $C_h$ for the appropriately mollified $\tau=u+v+f_{smooth}$. To derive the estimates for $\tau_{piece}$, note that with $k\left(r\right) = \sqrt{1-\left(\frac{2M}{r}\right)^h\left(1-\frac{2M}{r}\right)}$ we have (recalling (\ref{fpiece})) the formula 
\begin{align}
d \left(u+v+f_{piece}\left(r^\star\right) \right) = \left[1- \lambda k\left(r\right) \right] du + \left[1+ \lambda  k\left(r\right) \right] dv 
\end{align}
where $\lambda = 1$ for $r < 9/4M$, $\lambda=0$ for $9/4M < r < 8M$ and $\lambda = -1$ for $r> 8M$. From this all the above bounds easily follow (with better constants).
\end{proof}
\subsection{Improving the assumptions for curvature} \label{sec:curvimprove}
In this section we improve bootstrap assumptions (\ref{decl2}) and (\ref{decl3}).
\subsubsection{The key proposition}
\begin{proposition} \label{prop:improve1}
For any point $\left(u,v, \theta_1,\theta_2\right)$ in the bootstrap region $\mathcal{B}$, we have, for any $k=0,1,2,3$ the estimate
\begin{align} 
F\left[\mathfrak{D}^k \psi\right] \left( \{u\} \times [v,v_{fut}] \right) + F\left[\mathfrak{D}^k\psi\right] \left( \left[u,u_{fut}\right] \times \{v \} \right) \nonumber \\
\leq \left(2+\frac{C_M}{P} + C_M e^{-\frac{P}{2M} \tau_0} \right) e^{-\frac{P}{2M}\tau\left(u,v\right)}
\end{align}
and also
\begin{align} \label{splo}
\mathcal{E}\left[\mathfrak{D}^k \psi\right] \left(\tau,2M,r\left(\tau,v_\infty\right)\right) \leq \left(2+\frac{C_M}{P} + C_M e^{-\frac{P}{2M} \tau_0} \right) e^{-\frac{P}{2M}\tau\left(u,v\right)} .
\end{align}
Moreover, these estimates hold independently of the size of $\tau_f > \tau_0$ defining the bootstrap region.
\end{proposition}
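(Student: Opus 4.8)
\textbf{Proof strategy for Proposition \ref{prop:improve1}.}
The plan is to derive a weighted energy identity for each Bianchi pair $\left(\uppsi_p, \uppsi'_{p'}\right)$, commuted by an arbitrary $k$-tuple $\mathfrak{D}^k$ with $k\le 3$, in the region $\mathcal{M}\left(\tau,\tau_f,v_\infty\right)\cap J^+\left(\Sigma_\tau\right)$, and to read off from it the identity $(\ref{thebulkter})$ previewed in the introduction. First I would multiply the $\slashed{\nabla}_3$-equation $(\ref{cobi1})$ by $w^{2q}\mathfrak{D}^k\uppsi_p$ and the renormalised $\slashed{\nabla}_4$-equation (with the $tr\chi$-term absorbed via Remark \ref{rem:reno}) by $w^{2q}\mathfrak{D}^k\uppsi'_{p'}$, for the weight exponents $q$ read off from the definitions of $F$, $\mathcal{E}$ in Section \ref{sec:normse}. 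Integrating by parts over $\mathcal{M}\left(\tau,\tau_f,v_\infty\right)$ against the volume form, the divergence structure of the angular operators $\slashed{\mathcal{D}}$ causes the $\slashed{\mathcal{D}}\uppsi'_{p'}$ and $\slashed{\mathcal{D}}\uppsi_p$ principal terms to cancel (up to lower-order angular commutators already controlled by $(\ref{decl4})$ and the Gauss equation $(\ref{Gauss})$), and one is left with: boundary fluxes on $\Sigma_\tau$, on the constant-$u$ and constant-$v$ null hypersurfaces (these are precisely $\mathcal{E}$, $F_u$, $F_v$), a boundary contribution on $\mathcal{H}^+$ and on $v=v_\infty$ (controlled by data via $(\ref{decl2})$ on $\Sigma_{\tau_f}$, where the energy vanishes, and by Proposition \ref{prop:databounds} on $\mathcal{H}^+$, $v=v_\infty$, each $\lesssim e^{-P\tau/2M}$), plus a spacetime bulk term. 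The key point, exactly as in the warm-up $(\ref{thebulkterforwave})$, is that the $w$-weights are chosen so that near $\mathcal{H}^+$ the ``linear'' part of the bulk term has a \emph{good} (negative, redshift) sign with coefficient proportional to the surface gravity $\hat\omega_\circ=1/2M$, hence can be dropped; near $\mathcal{I}^+$ the extra $r$-decay built into the $p$-hierarchy (the $O_{p'+\frac32}$ in $(\ref{moreprecbian2})$, and the structure of $E_3,E_4$ in Proposition \ref{comsum}) makes all remaining terms integrable in $v$ up to $v_\infty$.

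The next step is to estimate the bulk term $(\ref{Bterm})$. I would split it according to Remark \ref{rem:comstructure} and the Corollary after Proposition \ref{comsum}: the genuinely quadratic-in-decaying-quantities terms are handled by Cauchy--Schwarz against the fluxes, exploiting the exponential decay $e^{-P\tau/2M}$ of one factor. The terms $\int_{\mathcal{M}} f_1\|\mathfrak{D}^k\uppsi'_{p'}\|^2$ — which would logarithmically diverge in $v$ if integrated against the $v$-flux — instead appear (by $(\ref{3dirbu})$) multiplied by $\mathfrak{D}^k\uppsi_p$, whose flux lives on constant-$u$ cones; integrating first in $u$ one gains the factor $(\ref{integral1})$, which by $\Omega^2_{\mathcal{EF}}du\sim d\tau$ and the bootstrap bound $(\ref{decl2})$ produces $\epsilon\, e^{-P\tau/2M}$ with $\epsilon\to 0$ as $P\to\infty$. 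The terms of the form $f_2\, r^{2p-2}\|\mathfrak{D}^k\Gamma_p\|^2$ (from the $f\Gamma$ inhomogeneities in renormalised Bianchi) are controlled directly by the bootstrap assumption $(\ref{decl})$ on the Ricci coefficients, integrated in $\tau$: $\int f_2\, r^{2p-2}\|\mathfrak{D}^k\Gamma_p\|^2 \lesssim C_{max}^2\int_{\tau}^{\tau_f} e^{-P\tau'/2M}d\tau' \lesssim \frac{C_M}{P}e^{-P\tau/2M}$. The cubic terms, via the Sobolev inequalities of Lemmas \ref{lem:sob} and \ref{lem:ncs} (which convert the $L^\infty$ and $L^4$ prefactors into already-established $L^2$ quantities bounded by $e^{-P\tau/2M}$ times a small constant), contribute $\lesssim C_M e^{-P\tau_0/2M}\,\sup e^{-P\tau'/2M}$ or are absorbed by $\int_\tau^{\tau_f}(\cdots)d\tau'$. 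Collecting, one arrives at
\begin{align}
\nonumber
&\mathcal{E}\left[\mathfrak{D}^k\psi\right](\tau)+F_u\left[\mathfrak{D}^k\psi\right](\tau)+F_v\left[\mathfrak{D}^k\psi\right](\tau)\\
\nonumber
&\qquad\lesssim e^{-\frac{P}{2M}\tau}+\left(\frac{C_M}{P}+C_Me^{-\frac{P}{2M}\tau_0}\right)\sup_{\tau\le\tau'\le\tau_f}e^{-\frac{P}{2M}\tau'}\cdot(\cdots)+\int_\tau^{\tau_f}\mathcal{E}\left[\mathfrak{D}^k\psi\right](\tau')\,d\tau',
\end{align}
to which a Gronwall argument in $\tau$ (run backwards from $\tau_f$, where all quantities vanish) is applied, choosing $P$ large enough that the Gronwall constant is beaten; this yields the stated bound $\left(2+\frac{C_M}{P}+C_Me^{-\frac{P}{2M}\tau_0}\right)e^{-\frac{P}{2M}\tau}$, uniformly in $\tau_f$.

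The main obstacle — and the point where the ``null condition'' of Sections \ref{NSBE}--\ref{Herethebianchi} is genuinely used — is the careful bookkeeping of the $r$-weights in the bulk term near $\mathcal{I}^+$: one must verify term-by-term, using Propositions \ref{uneq2}, \ref{comsum} and the structure summarized in Remark \ref{rem:comstructure}, that every contribution other than the ones controlled directly by the $\Sigma_\tau$-energy integral comes with strictly more than the borderline $r$-weight, so that $\int^{v_\infty}(\cdots)\,dv$ converges with a constant independent of $v_\infty$ (and with a genuine smallness factor for the would-be borderline terms, obtained from the $\slashed{\nabla}_3$-flux and the largeness of $P$ as in $(\ref{epsthere})$). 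A secondary, but essential, subtlety is that the cubic and quadratic error estimates presuppose the $L^4$ and $L^\infty$ control of $\mathfrak{D}^{\le k-1}\psi$ and $\mathfrak{D}^{\le k-1}\Gamma$ furnished by Sections \ref{sec:sob1}--\ref{sec:sob2}, which in turn rest on the (already improved) auxiliary bootstrap assumption $(\ref{decl4})$; so the logical order within the section matters. Once this is in place, $(\ref{splo})$ follows from the same identity by taking the $\Sigma_\tau$-boundary term rather than the null-flux terms, using the geometry of the slices from Section \ref{sec:slices} to relate the $\mathcal{E}$-weights ($w^{4-h}$ etc.) to the flux weights.
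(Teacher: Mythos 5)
Your overall route is the paper's own: weighted energy identities for the commuted Bianchi pairs with weights $w^5,w^4,w^2,w^0$, data fluxes controlled by Proposition \ref{prop:databounds}, a vanishing contribution on $\Sigma_{\tau_f}$, and bulk terms split into linear-in-curvature, $f\Gamma$ and cubic pieces handled with the bootstrap assumptions and the Sobolev estimates of Sections \ref{sec:sob1}--\ref{sec:sob2}. But one step, as you state it, is wrong: near $\mathcal{H}^+$ the linear bulk terms do \emph{not} have a favourable sign allowing you to drop them. Since one integrates towards the past, the redshift appears as a \emph{blueshift}: the $\hat{\omega}$-terms, and the $k\,\hat{\omega}_\circ$ terms generated by $\slashed{\nabla}_3$-commutation (Lemma \ref{comlem}, Section \ref{sec:redshiftrole}), enter the identities (\ref{pi1})--(\ref{pi4}) with the unfavourable sign and drive exponential growth backwards; this is precisely why $P$ is constrained by the surface gravity, already in the warm-up identity (\ref{thebulkterforwave}), where the paper stresses that the bulk term near $\mathcal{H}^+$ is negative and must be absorbed, not discarded. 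In the paper these ``second line'' terms $f_1 w^q\|\mathfrak{D}^k\uppsi_p\|^2$, $f_2 w^q\|\mathfrak{D}^k\uppsi'_{p'}\|^2$ are bounded by $C_M\int_\tau^{\tau_f}\mathcal{E}\,d\tilde\tau$ and the curvature bootstrap assumption (\ref{decl3}) is inserted directly, gaining the factor $C_M/P$ that yields the explicit constant $2+\frac{C_M}{P}+C_Me^{-\frac{P}{2M}\tau_0}$ improving the bootstrap constant $4$; no Gronwall argument is used. Your concluding backwards Gronwall would accomplish the same absorption, but only if \emph{all} such linear terms are kept inside the $\int_\tau^{\tau_f}\mathcal{E}\,d\tau'$ integrand rather than discarded by the nonexistent good sign; with that correction, and with the weight bookkeeping you describe (in particular Remark \ref{rem:comstructure} for the would-be borderline $f_1\|\mathfrak{D}^i\uppsi'_{p'}\|^2$ terms), the argument closes as in the paper.

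Two smaller points. First, integrating the identities over $\mathcal{M}(\tau,\tau_f,v_\infty)$ produces only the boundary terms on $\Sigma_\tau$, $\Sigma_{\tau_f}$, $\mathcal{H}^+$ and $v=v_\infty$, i.e.\ the energy estimate (\ref{splo}); the null fluxes through $\{u\}\times[v,v_{fut}]$ and $[u,u_{fut}]\times\{v\}$ are obtained by repeating the integration over the characteristic region $D(u,v)=J^+(S^2_{u,v})\cap\mathcal{M}(\tau,\tau_f,v_\infty)$, as the paper does, not as boundary terms of $\mathcal{M}(\tau,\tau_f,v_\infty)\cap J^+(\Sigma_\tau)$, which is the same region. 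Second, the $w$-weights are irrelevant at the horizon (where $w\simeq 1$); their role is entirely at $\mathcal{I}^+$, where, as you correctly note, they turn the borderline $tr\chi$ contributions into $f_2$ coefficients in (\ref{pi2})--(\ref{pi4}).
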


\begin{proof}
Let us assign the following weights to each Bianchi pair 
\begin{align}
\left(\alpha,\beta\right) \textrm{ \ \ has weight \ \ } & q = 5 \, , \nonumber \\
\left(\beta, \left[ \rho-\rho_\circ, \sigma \right] \right) \textrm{ \ \ has weight \ \ } & q = 4 \, , \nonumber \\
\left(\left[ \rho-\rho_\circ, \sigma \right] , \underline{\beta} \right) \textrm{ \ \ has weight \ \ } & q = 2 \, , \nonumber \\
\left(\underline{\beta}, \underline{\alpha} \right) \textrm{ \ \ has weight \ \ } & q = 0 \, .
\end{align}

In the following, we will use the short hand notation
\begin{align}
\hat{\rho} = \rho - \rho_\circ \ \ \ , \ \ \ \left(\hat{\rho},\sigma\right) = \left(\slashed{g} \hat{\rho} - \slashed{\epsilon} \sigma\right) \ \ \ , \ \ \ \|\mathfrak{D}^k \left(\hat{\rho},\sigma\right)\|^2 = \|\mathfrak{D}^k \hat{\rho}\|^2 + \|\mathfrak{D}^k \sigma\|^2.
\end{align}

Contracting the equations of each Bianchi pair with its weighted curvature component over all indices (i.e.~(\ref{cobi1}) with $w^q \mathfrak{D}^k\uppsi_p$ and (\ref{cobi2}) with $w^q \mathfrak{D}^k\uppsi^\prime_{p^\prime}$ respectively) we derive the following identities:
\begin{align} \label{pi1}
\nabla_a \left( \|\mathfrak{D}^k \alpha\|^2 w^5 \left[e_3\right]^a \right) + \nabla_a \left(2 \|\mathfrak{D}^k\beta\|^2 w^5 \left[e_4\right]^a \right) -4 \Omega^{-2} \slashed{\nabla}_A Q_1^A \nonumber \\
= f_1 w^5 \|\mathfrak{D}^k\alpha\|^2 + f_1 w^5 \|\mathfrak{D}^k\beta\|^2 -4 w^5 \left(\eta + \underline{\eta}\right) \cdot  \mathfrak{D}^k\alpha \cdot  \mathfrak{D}^k\beta \nonumber \\
+2 E_3\left[\mathfrak{D}^k\alpha\right] \cdot \mathfrak{D}^k\alpha \ w^5 + 4 E_4\left[\mathfrak{D}^k\beta\right] \cdot \mathfrak{D}^k\beta \ w^5 \, ,
\end{align}
\begin{align}  \label{pi2}
\nabla_a \left( \|\mathfrak{D}^k \beta\|^2 w^4 \left[e_3\right]^a \right) + \nabla_a \left( \|\mathfrak{D}^k\left(\hat{\rho},\sigma\right)\|^2 w^4 \left[e_4\right]^a \right) -2 \Omega^{-2} \slashed{\nabla}_A Q_2^A \nonumber \\
= f_1 w^4 \|\mathfrak{D}^k\beta\|^2 + \boxed{f_2} \ w^4 \|\mathfrak{D}^k\left(\hat{\rho},\sigma\right)\|^2 -2w^4 \left(\eta + \underline{\eta}\right) \cdot  \mathfrak{D}^k\beta \cdot  \mathfrak{D}^k\left(\hat{\rho},\sigma\right) \nonumber \\
+ 2E_3\left[\mathfrak{D}^k\beta\right] \cdot  \mathfrak{D}^k\beta \ w^4 +  2E_4\left[\mathfrak{D}^k\left(\hat{\rho},\sigma\right)\right] \cdot \mathfrak{D}^k\left(\hat{\rho},\sigma\right) \ w^4  \, ,
\end{align}
\begin{align}  \label{pi3}
\nabla_a \left( \|\mathfrak{D}^k \left(\hat{\rho},\sigma\right)\|^2 w^2 \left[e_3\right]^a \right) + \nabla_a \left( \|\mathfrak{D}^k \underline{\beta} \|^2 w^2 \left[e_4\right]^a \right) +2 \Omega^{-2} \slashed{\nabla}_A Q_3^A \nonumber \\
= f_1 w^2 \|\mathfrak{D}^k\left(\hat{\rho},\sigma\right)\|^2 + \boxed{f_2} \ w^2 \|\mathfrak{D}^k \underline{\beta}\|^2 + 2w^2 \left(\eta + \underline{\eta}\right) \cdot  \mathfrak{D}^k\left(\hat{\rho},\sigma\right) \cdot \mathfrak{D}^k \underline{\beta} \nonumber \\
+ 2E_3\left[\mathfrak{D}^k\left(\hat{\rho},\sigma\right)\right]  \cdot  \mathfrak{D}^k\left(\hat{\rho},\sigma\right) \ w^2 + 2 E_4\left[\mathfrak{D}^k \underline{\beta}\right] \cdot  \mathfrak{D}^k \underline{\beta} \ w^2 \, ,
\end{align}
\begin{align}   \label{pi4}
\nabla_a \left(2 \|\mathfrak{D}^k \underline{\beta} \|^2  \left[e_3\right]^a \right) + \nabla_a \left( \|\mathfrak{D}^k\underline{\alpha}\|^2  \left[e_4\right]^a \right) +4 \Omega^{-2} \slashed{\nabla}_A Q_4^A \nonumber \\
= f_1  \|\mathfrak{D}^k \underline{\beta}\|^2 + \boxed{f_2} \   \|\mathfrak{D}^k\underline{\alpha}\|^2 +  4\left(\eta + \underline{\eta}\right) \cdot  \mathfrak{D}^k \underline{\beta} \cdot   \mathfrak{D}^k\underline{\alpha} \nonumber \\
+ 4E_3\left[\mathfrak{D}^k \underline{\beta}\right]  \cdot  \mathfrak{D}^k\underline{\beta} \  + 2E_4\left[\mathfrak{D}^k\underline{\alpha}\right]\cdot  \mathfrak{D}^k\underline{\alpha}  \, ,
\end{align}
with 
\[
Q_1^A = w^5 \Omega^2 [\mathfrak{D}^k\alpha \cdot  \mathfrak{D}^k\beta]^A \qquad , \qquad
Q_4^A = \Omega^2 [\mathfrak{D}^k\underline{\alpha} \cdot  \mathfrak{D}^k\underline{\beta}]^A
\]
\[
Q_2^A = w^4 \Omega^2 [\mathfrak{D}^k\beta \cdot  \mathfrak{D}^k\left(\hat{\rho},\sigma\right)]^A \qquad , \qquad
Q_3^A = w^2 \Omega^2 [\mathfrak{D}^k\underline{\beta} \cdot  \mathfrak{D}^k\left(\hat{\rho},\sigma\right)]^A \, .
\]
Here the $\cdot$ denotes the obvious contraction of $S^2_{u,v}$-tensors: For instance, if $\mathfrak{D}^k$ contains $i\leq k$ angular operators, then (cf.~Section \ref{sec:commute})
\[
\left(\eta +\underline{\eta}\right) \cdot \mathfrak{D}^k\alpha \cdot  \mathfrak{D}^k\beta = \left(\eta +\underline{\eta}\right)^A \left(\mathfrak{D}^k\alpha\right)_{C_1...C_iAB} \left( \mathfrak{D}^k\beta \right)^{C_1...C_iB} \, .
\]
To derive (\ref{pi1})--(\ref{pi4}) recall ($e_3 = e_3^{\mathcal{E}\mathcal{F}}$ and $e_4 = e_4^{\mathcal{E}\mathcal{F}}$) 
\[
\nabla_a \left(e^{\mathcal{E}\mathcal{F}}_3\right)^a = tr \underline{\chi} \ \ \ \ \ , \ \ \ \ \ \nabla_a \left(e^{\mathcal{E}\mathcal{F}}_4\right)^a = tr \chi + \hat{\omega} \, ,
\]
as well as
\[
\left(e_3\right)^a \nabla_a r^n = n r^{n-1} e_3 \left(r\right) = \frac{n}{r} r^n \frac{-\Omega^2_\circ}{\Omega^2} 
\]
\[
\left(e_4\right)^a \nabla_a r^n = n r^{n-1} e_4 \left(r\right) = \frac{n}{r} r^n \frac{1}{2} r \cdot tr \chi_\circ 
\]
and apply Proposition \ref{comsum}.
\begin{remark} The crucial observation here is that -- due to the careful choice of weights -- the boxed term is $f_2$ and not $f_1$ in (\ref{pi2})--(\ref{pi4}).\footnote{This would also be the case for (\ref{pi1}), if we had chosen the weight $q=6$ instead of $q=5$ which would be necessary to obtain the characteristic peeling decay of curvature along null-infinity. However, the estimates of course close with less decay.} We illustrate this remark with one particular example, say the $\underline{\beta}$-term in equation (\ref{pi3}) and $k=0$: Collecting all terms which contribute on the right hand side, we find
\begin{align}
\| \underline{\beta}\|^2 r^2 \left[ -2 tr \chi  -2 \hat{\omega} + \frac{2}{r} \frac{1}{2} r \cdot tr \chi_\circ +  tr \chi \right]
=  \| \underline{\beta}\|^2 r^2 \left( f_2 + \Gamma_2\right) \, .
\end{align}
The cubic term $\| \underline{\beta}\|^2 w^2 \Gamma_2$ may be absorbed into the error $E_4\left[\mathfrak{D}^k \underline{\beta}\right]\cdot \mathfrak{D}^k \underline{\beta} \ r^2$.
\end{remark}
\vspace{.2cm}
\noindent After integration with respect to the spacetime volume form $2\Omega^2 \sqrt{\slashed{g}}du dv d\theta_1 d\theta_2$ and summation of (\ref{pi1})--(\ref{pi4}) in $\mathcal{M}\left(\tau,\tau_f,v_\infty\right)$ for $\tau\geq \bar{\tau}$ (and finally, summation over all $k$ and all permutations of derivatives of length $k$) we find:
\begin{enumerate}
\item The terms proportional to $\Omega^{-2} \slashed{\nabla}_A Q_i^A$ in the first lines
of $(\ref{pi1})$--$(\ref{pi4})$ vanish.
\item The other (boundary) terms in the first line will produce terms on $\Sigma_{\tau}$ and $\Sigma_{\tau_f}$ (denoted $T_1$ and $T_2$ respectively) as well as fluxes on $U=0$ and $v=v_\infty$. Clearly, $T_2 = 0$, since the metric is Schwarzschild there and all $\psi$ vanish. The fluxes produced satisfy
\begin{align}
F\left[\mathfrak{D}^k \psi\right] \left(\{U=0\} \times \left(v_\tau,v_{\tau_f} \right) \right) + F\left[\mathfrak{D}^k \psi\right] \left( \left(u_\tau,u_{\tau_f} \right) \times \{v_\infty\} \right) 
\leq 2 e^{-\frac{P}{2M} \tau} \nonumber
\end{align}
as follows from the initial conditions (\ref{hoz4}) and (\ref{inf4}). To compute the boundary term on $\Sigma_\tau$, $T_1$, we recall the estimate (\ref{useful}), which provides
\begin{align}
 T_1
\geq \sum_{i=0}^k \sum_{i-perms}
 \int_{2M}^{r\left(\tau,v_\infty\right)} dr \int_{S^2_{u,v}} \sqrt{\slashed{g}} d\theta^1 d\theta^2 \Bigg[ \nonumber \\
\| \mathfrak{D}^i \alpha \|^2 \frac{1}{4}w^5   
+ \frac{1}{4} w^{-h}  \|\mathfrak{D}^i \underline{\alpha}\|^2  
 +\| \mathfrak{D}^i \beta\|^2 \frac{1}{4}w^4 \nonumber \\ 
 + \| \mathfrak{D}^i \left(\hat{\rho}, \sigma\right) \|^2   \frac{1}{4} w^{4-h}
 + \|\mathfrak{D}^i \underline{\beta} \|^2   \frac{1}{4} w^{2-h} \Bigg] \geq 
\mathcal{E} \left[\mathfrak{D}^k \psi\right] \left(\tau,2M,r\left(\tau,v_\infty\right)\right) \, . \nonumber
\end{align}
\item The terms in the second line of (\ref{pi1})--(\ref{pi4}) can be estimated
\begin{align}
\int_{\mathcal{M} \left(\tau,\tau_f,v_\infty\right)} \textrm{$2^{nd}$-line} \leq C_M \int_{\tau}^{\tau_f} d\tilde{\tau} \ \mathcal{E} \left[\mathfrak{D}^k \psi\right] \left(\tilde{\tau}, 2M, r\left(v_\infty\right)\right) 
\leq \frac{C_M}{P}  e^{-\frac{P}{2M} \tau} \nonumber
\end{align}
with the second step following from the bootstrap assumption.
\item For the terms in the third line, we will prove in Section \ref{errorest}
\begin{align} \label{missing}
\int_{\mathcal{M} \left(\tau,\tau_f,v_\infty\right)} \textrm{$3^{rd}$-line}
\leq \left(\frac{C_M}{P} + C_M e^{-\frac{P}{2M}\tau_0} \right) e^{-\frac{P}{2M} \tau} \, . 
\end{align}
The second term on the right will arise from cubic errors, the first from linear spacetime errors.
\end{enumerate}
After proving (\ref{missing}) in Section \ref{errorest}, this provides the second estimate of the Proposition. To obtain the first, one repeats the proof, now integrating over 
\[
D\left(u,v\right) = J^+ \left(S^2_{u,v}\right) \cap \mathcal{M} \left(\tau,\tau_f,v_\infty\right) \, .
\]
The terms in the first line then produce the desired fluxes, while the spacetime error can be estimated as before (now using the part of the $\Sigma_\tau$ slices lying in $D\left(u,v\right)$).
\end{proof}

\subsubsection{Estimating spacetime errors} \label{errorest}
In this section, we prove the estimate (\ref{missing}) for the error-terms in the third line of (\ref{pi1})--(\ref{pi4}). For each Bianchi pair $\left(\uppsi_p,  \uppsi^\prime_{p^\prime} \right)$, these terms are of the following form:
\begin{align}
 \int_{\mathcal{M} \left(\tau,\tau_f,v_\infty\right)} E_3 \left[\mathfrak{D}^k \uppsi_p \right] \cdot \left(w^q \mathfrak{D}^k \uppsi_p \right) \Omega^2 \sqrt{\slashed{g}} du dv d\theta^1 d\theta^2
\end{align}
and
\begin{align}
 \int_{\mathcal{M} \left(\tau,\tau_f,v_\infty\right)} E_4 \left[\mathfrak{D}^k \uppsi^\prime_{p^\prime} \right] \cdot \left(w^q \mathfrak{D}^k \uppsi^\prime_{p^\prime} \right) \Omega^2 \sqrt{\slashed{g}} d\tau dr^\star d\theta^1 d\theta^2 \, ,
\end{align}
where $\mathcal{M}\left(\tau,\tau_f,v_\infty\right)$ is replaced by $D\left(u,v\right)$ in case the energy estimate is applied in the characteristic region $D\left(u,v\right)$. From Cauchy-Schwarz,
\begin{align}
 \int_{\mathcal{M} \left(\tau,\tau_f,v_\infty\right)} E_3 \left[\mathfrak{D}^k \uppsi_p \right] \cdot \left(w^q \mathfrak{D}^k \uppsi_p \right) \Omega^2 \sqrt{\slashed{g}} d\tau dr^\star d\theta^1 d\theta^2 \nonumber \\
\leq \int_{\mathcal{M} \left(\tau,\tau_f,v_\infty\right)} \left( \| E_3 \left[\mathfrak{D}^k \uppsi_p \right]\|^2 w^q + \| \mathfrak{D}^k \uppsi_p \|^2 w^q \right)  \sqrt{\slashed{g}} d\tau dr d\theta^1 d\theta^2  = \boxed{1A} + \boxed {1B} \nonumber
\end{align}
and
\begin{align}
 \int_{\mathcal{M} \left(\tau,\tau_f,v_\infty\right)} E_4 \left[\mathfrak{D}^k \uppsi^\prime_{p^\prime} \right] \cdot\left(w^q \mathfrak{D}^k \uppsi^\prime_{p^\prime} \right) \Omega^2 \sqrt{\slashed{g}} d\tau dr^\star d\theta^1 d\theta^2  \leq  
 \int_{\mathcal{M} \left(\tau,\tau_f,v_\infty\right)} \nonumber \\ \Big( w^h \| E_4 \left[\mathfrak{D}^k \uppsi^\prime_{p^\prime} \right]\|^2 w^q +w^{-h} \| \mathfrak{D}^k \uppsi^\prime_{p^\prime} \|^2 w^q \Big)   \sqrt{\slashed{g}} d\tau dr d\theta^1 d\theta^2 
= \boxed{2A} + \boxed {2B} \, . \nonumber
\end{align}
\begin{remark}
Recall Remark \ref{rem:comstructure}: It is precisely here that the important structure (the improved decay and special structure of $E_4 \left[\mathfrak{D}^k \uppsi^\prime_{p^\prime} \right]$) in the Bianchi equations is exploited. Otherwise, the term $\boxed{2A}$ could not be handled.
\end{remark}
For the terms $\boxed{1B}$ and $\boxed{2B}$ we see by inspection that
\begin{align}
 \int_{\mathcal{M} \left(\tau,\tau_f,v_\infty\right)} \boxed {1B} + \boxed {2B} \leq  C \int_{\tau}^{\tau_f} d\tilde{\tau} \ \mathcal{E} \left[\mathfrak{D}^k \psi\right] \left(\tilde{\tau}, 2M, r\left(v_\infty\right)\right) \leq \frac{C_M}{P} e^{-\frac{P}{2M} \tau}   \, . \nonumber
\end{align}
For the terms $\boxed{1A}$ and $\boxed{2A}$  we turn to the expressions for $E_3$ and $E_4$ collected in Proposition \ref{comsum} and the expressions (\ref{lin3term}) and (\ref{lin4term}). 

\paragraph{Linear terms.}
We first handle the two ``linear" contributions (\ref{lin3term}) and (\ref{lin4term}) and of those, we first handle the curvature terms. It is not hard to see\footnote{Recall Remark \ref{rem:comstructure} which is crucial here as it provides the necessary decay for the $\psi^\prime_{p^\prime}$-terms in $E_4$.} that
\begin{align}
 \int_{\mathcal{M}_\mathcal{B}\left(\tau,\tau_f\right)}  \| E_3 \left[\mathfrak{D}^k \uppsi_p \right]\|^2 w^q +  w^h \| E_4 \left[\mathfrak{D}^k \uppsi^\prime_{p^\prime} \right]\|^2 w^q \Bigg|_{lin,curv} \nonumber \\
\leq C \sum_{i=0}^k \int_{\mathcal{M}_\mathcal{B}\left(\tau,\tau_f\right)}  w^q \| \mathfrak{D}^i \uppsi_p \|^2 + w^q w^{-h} \|\mathfrak{D}^i \uppsi^\prime_{p^\prime}\|^2 \, ,
\end{align}
which can be estimated as previously.
Secondly, for the ``linear" $\Gamma$-term in (\ref{lin3term}) and (\ref{lin4term}) we have (say for $E_4$)
\[
\sum_{i=0}^k \int_{\mathcal{M}_\mathcal{B} \left(t,t_f\right)} w^q r^2 (f_3)^2 \|\mathfrak{D}^i \Gamma_{min(p^\prime,2)}\|^2  \Omega^2 \sqrt{\slashed{g}} du dv d\theta^1 d\theta^2 \leq \frac{C_M}{P} e^{-\frac{P}{2M} \tau}  \, ,
\]
which follows using the bootstrap assumption (\ref{decl}) on $\Gamma$ and taking into account the strong decay in $r$ which allows us to integrate in both $r$ and $\tau$.

\paragraph{Cubic terms.}
For the cubic (or higher) terms in $\boxed{1A}$ and $\boxed{2A}$ smallness will always arise from $\tau_0$ being large, as we can always estimate one of the terms pointwise and exploit that $\tau_0$ is large. This means that for those terms we only need to check 
\begin{itemize}
\item whether the \emph{decay} in $r$ is sufficient for $\boxed{1A}$ and $\boxed{2A}$ to be \emph{integrable}
\item whether the \emph{regularity} is sufficient to control all the terms of $\boxed{1A}$ and $\boxed{2A}$ from the (purely $L^2$) bootstrap assumptions via Sobolev embedding.
\end{itemize}
Let us start by computing the decay in $r$. The decay is sufficient, if the overall decay of the integrand (including the volume element) is at least $r^{-1-\delta}$. Recall also that taking derivatives of type $\mathfrak{D}$ does not change the $r$-decay of a quantity. Therefore, from Proposition \ref{comsum} we read off\footnote{Note that focusing on \emph{decay} we can simply think of $\Lambda_1 \sim \frac{\epsilon}{r}$ and $\Lambda_2 \sim \frac{\epsilon}{r^2}$.}
\begin{equation}
\begin{split}
\|E_3\left[\mathfrak{D}^k \alpha \right]\|^2 r^5 r^2 &\lesssim r^{-2\left(\frac{7}{2}+1\right)+5 + 2} = r^{-2} \, , \\
\|E_3\left[\mathfrak{D}^k \beta \right]\|^2 r^4 r^2 &\lesssim r^{-2 \cdot 4 +4 + 2}= r^{-2} \, , \\
\|E_3\left[\mathfrak{D}^k \left(\hat{\rho},\sigma\right) \right]\|^2 r^2 r^2 &\lesssim r^{-2 \cdot 3 +2 + 2} = r^{-2} \, , \\
\|E_3\left[\mathfrak{D}^k \underline{\beta} \right] \|^2 r^0 r^2 &\lesssim r^{-2 \cdot 2 +0 + 2} = r^{-2} \, .
\end{split}
\end{equation}
Turning to the $4$-direction and keeping in mind that all linear terms have already been dealt with, we have the following decay rates:
\begin{equation}
\begin{split}
\|E_4\left[\mathfrak{D}^k \beta \right]\|^2 r^5  r^2  r^h \lesssim r^{-2\left(\frac{7}{2}+2\right)+5 + 2 +h}= r^{-4+h} \, ,  \\
\|E_4\left[\mathfrak{D}^k \left(\hat{\rho},\sigma\right) \right]\|^2 r^4 r^2 r^h \lesssim r^{-2 \left(3+ 2\right) +4 + 2 + h} = r^{-4+h} \, ,  \\
\|E_4\left[\mathfrak{D}^k \underline{\beta}  \right]\|^2 r^2 r^2 r^h \lesssim r^{-2 \left(2+ 2\right) +2 + 2 +h}= r^{-4+h} \, ,  \\
\|E_4\left[\mathfrak{D}^k \underline{\alpha} \right]\|^2 r^0 r^2 r^h  \lesssim r^{-2 \left(1+2\right) +0 + 2 +h}= r^{-4+h} \, ,
\end{split}
\end{equation}
which provides sufficient $r$-decay. This establishes that -- at least in terms of $r$-decay -- all terms in $\boxed{1A}$ and $\boxed{2A}$ can be estimated.

In the second step, we address the regularity. Note that the error is of the form
\begin{equation}
\begin{split}
E_i \left[\psi_p\right] &= \psi \left(f + \Gamma\right) + f \Gamma \, , \\
E_i \left[\mathfrak {D}\psi_p\right] &= \mathfrak{D}\psi \left(f + \Gamma\right) + \psi \mathfrak{D}\Gamma + f \mathfrak{D} \Gamma + \textrm{l.o.t.} \, ,
 \\
E_i \left[\mathfrak {D}^2\psi_p\right] &= \mathfrak{D}^2\psi \left(f + \Gamma\right) + \mathfrak{D} \psi \mathfrak{D}\Gamma + f \mathfrak{D}^2 \Gamma + \textrm{l.o.t.} \, ,
 \\
E_i \left[\mathfrak {D}^3\psi_p\right] &= \mathfrak{D}^3\psi \left(f + \Gamma\right) + \mathfrak{D}^2 \psi \mathfrak{D}\Gamma +   \mathfrak{D} \psi \mathfrak{D}^2 \Gamma+ f \mathfrak{D}^3 \Gamma + \textrm{l.o.t.} 
\end{split}
\end{equation}
Clearly, we only need to look at the worst term $E_i \left[\mathfrak {D}^3\psi_p\right]$ for our regularity considerations. Since we control the curvature fluxes up to three derivatives in $L^2$ on the spacelike slices, and the $\Gamma$'s in $L^2$ on the spheres $S^2_{u,v}$, the first and the fourth term are easily controlled. 
For the second and third we note
\[
\| \mathfrak{D}^2 \psi \mathfrak{D}\Gamma \|^2_{L^2\left(S^2_{u,v}\right)} \lesssim \| \mathfrak{D}\Gamma \|^2_{L^\infty\left(S^2_{u,v}\right)}\| \mathfrak{D}^2 \psi \|^2_{L^2\left(S^2_{u,v}\right)} \lesssim e^{-\frac{P}{2M} \tau\left(u,v\right)}\| \mathfrak{D}^2 \psi \|^2_{L^2\left(S^2_{u,v}\right)}
\]
\[
\| \mathfrak{D} \psi \mathfrak{D}^2\Gamma \|^2_{L^2\left(S^2_{u,v}\right)} \lesssim \| \mathfrak{D}\psi \|^2_{L^\infty\left(S^2_{u,v}\right)}\| \mathfrak{D}^2 \Gamma \|^2_{L^2\left(S^2_{u,v}\right)} \lesssim e^{-\frac{P}{2M} \tau\left(u,v\right)}\| \mathfrak{D}^2 \Gamma \|^2_{L^2\left(S^2_{u,v}\right)}
\]
using the Sobolev inequalities of Sections \ref{sec:sob1} and \ref{sec:sob2}, in particular (\ref{l4r2}) and the equation below (\ref{l4c}). Hence after integration in $u$ and $v$, these terms can be estimated from the $L^2$-bootstrap assumptions.

\begin{remark}
It is possible to close the estimates with $s=2$, i.e.~commuting the equations only twice and using the $L^4$-estimates (\ref{l4r}) and (\ref{l4c}) on the error-term. While this would save one derivative, it would make the error-estimates more complicated which is why we work with $s\geq 3$.
\end{remark}

\subsection{Improving the assumptions on the Ricci-coefficients}  \label{sec:ricciimprove}
With the assumptions on the curvature fluxes improved, we can turn to improving the bootstrap assumption (\ref{decl}) on the Ricci-coefficients. The Ricci-coefficients are estimated via the transport equations along the null-directions. Before proving the key proposition, we derive two elementary Lemmas in Sections \ref{3dir} and \ref{4dir} revealing the ``gain'' obtained from integration in null-directions. Recall the definition of $u_{fut} \left(v\right)$ and $v_{fut}\left(u\right)$ in (\ref{ufutdef}).

\subsubsection{Integration in the $3$-direction} \label{3dir}
\begin{lemma} \label{lem:3dir}
Let $\Delta_p$ be a quantity satisfying 
\begin{equation} 
|\Delta_p \left(\hat{u}, \hat{v}, \theta^1, \theta^2 \right)| \leq e^{-P\frac{\tau}{2M}} \cdot \frac{\left(2M\right)^p}{r^p} \left(\hat{u},\hat{v}\right)
\end{equation}
for any spacetime point $\left(\hat{u},\hat{v}, \theta^1, \theta^2 \right) \in \mathcal{B}$ of our domain with $P\geq 2$. Then the estimate
\begin{equation*}
\int_{u}^{u_{fut}\left(v\right)}  |\Delta_p| \Omega^2 \left(\tilde{u},v,\theta^1, \theta^2 \right) d\tilde{u} \leq 4\cdot M \cdot \frac{1}{P}  e^{-\frac{P}{2M}\tau\left(u,v\right)} \frac{\left(2M\right)^p}{r^p}
\end{equation*}
holds for $\left(u,v, \theta^1, \theta^2 \right) \in \mathcal{B}$.
\end{lemma}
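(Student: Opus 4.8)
The statement is a pointwise estimate obtained by integrating a transport equation along the ingoing null direction $e_3^{\mathcal{E}\mathcal{F}}=\frac{1}{\Omega^2}\partial_u$. The key observation is that the integral $\int_u^{u_{fut}}(\cdot)\,\Omega^2\,d\tilde u$ is really an integral with respect to the parameter $\tau$ along the ingoing cone, since (as recorded in Section~\ref{THEMAIN} of the introduction and made precise via the geometry of the slices $\Sigma_\tau$) one has $\Omega^2\,du \sim d\tau$ along constant-$v$ hypersurfaces. So the plan is: first convert the $u$-integral into a $\tau$-integral, then use the hypothesised exponential decay $e^{-P\tau/2M}$ in $\tau$ to perform the integration explicitly, producing the gain of the factor $\frac1P$.

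\textbf{Step 1: control of $\Omega^2\,du$ in terms of $d\tau$.} Along a constant-$v$ hypersurface, $\tau$ is a smooth monotone function of $u$. Using the exactly Schwarzschild computation (valid up to a small multiplicative error by the bootstrap assumption (\ref{decl4}) of $C^1$-closeness, cf.~Lemma in Section~\ref{sec:slices}), one checks that $\frac{\partial\tau}{\partial u}\Big|_v = 1-\lambda k(r)$ up to lower-order corrections, and that $\Omega^2 \sim \Omega^2_{\circ}=1-\tfrac{2M}{r}$; combining these, $\Omega^2\,du \le c\,d\tau$ along the cone for a constant $c$ close to $1$. The precise constant is what will eventually determine the numerical factor $4$ on the right-hand side; near the horizon the relation degenerates in neither direction because $\tau$ extends smoothly and monotonically to $\mathcal H^+$.

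\textbf{Step 2: integration.} Since the $r$-weight $\frac{(2M)^p}{r^p}$ is \emph{monotone} in $u$ along a constant-$v$ slice ($r$ is monotone in $u$ at fixed $v$ — indeed $r$ decreases towards the horizon, so $\frac{(2M)^p}{r^p}$ is largest at the point $(u,v)$ among all $(\tilde u,v)$ with $\tilde u\ge u$), we may pull $\frac{(2M)^p}{r^p}(u,v)$ out of the integral as an upper bound. We are then left with
\[
\int_u^{u_{fut}(v)} e^{-P\tau(\tilde u,v)/2M}\,\Omega^2(\tilde u,v,\theta)\,d\tilde u
\le c\int_{\tau(u,v)}^{\infty} e^{-P\tilde\tau/2M}\,d\tilde\tau
= \frac{2Mc}{P}\,e^{-P\tau(u,v)/2M}.
\]
Choosing the constant $c$ and absorbing it, together with the bound $P\ge 2$ and the pulled-out $r$-weight, into the numerical factor gives the claimed $4\cdot M\cdot\frac1P\,e^{-P\tau(u,v)/2M}\,\frac{(2M)^p}{r^p}$.

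\textbf{Main obstacle.} The only genuinely non-trivial point is Step~1 — getting a clean, \emph{uniform} (in particular, valid all the way up to the horizon $\mathcal H^+$ where $u=\infty$ and up to null infinity where $v\to v_\infty$) comparison between $\Omega^2\,du$ and $d\tau$ with an explicit constant, and verifying that the bootstrap assumption (\ref{decl4}) indeed suffices to reduce this to the exact Schwarzschild computation. One must also be slightly careful that the monotonicity of $r$ in $u$ at fixed $v$ is used in the correct direction and that the degeneration of $\Omega^2\to0$ at the horizon is exactly compensated by the corresponding behaviour of $\partial\tau/\partial u$, so that no logarithmic loss creeps in; this is precisely the content of the Schwarzschildean identity $d(u+v+f_{piece}) = [1-\lambda k(r)]du+[1+\lambda k(r)]dv$ from Section~\ref{sec:slices}, evaluated on $\lambda=1$ near the horizon. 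Everything else is a one-line integration of an exponential.
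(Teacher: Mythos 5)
There is a genuine gap in Step~2: you use the monotonicity of $r$ in the wrong direction. Along a constant-$v$ hypersurface, $r$ \emph{decreases} as $\tilde u$ increases (towards the horizon $u=\infty$, where $r\to 2M$), as you yourself note. Hence for $\tilde u\ge u$ one has $r(\tilde u,v)\le r(u,v)$, so $\frac{(2M)^p}{r^p}(\tilde u,v)\ge\frac{(2M)^p}{r^p}(u,v)$: the weight is \emph{smallest}, not largest, at the starting point $(u,v)$ of the integration, and it cannot be pulled out of the integral as an upper bound. The error is not cosmetic: the ratio $\bigl(r(u,v)/r(\tilde u,v)\bigr)^p$ can be as large as $\bigl(r(u,v)/2M\bigr)^p$, which is unbounded as $(u,v)$ approaches null infinity, so if you discard the weight inside the integral you lose exactly the factor $\frac{(2M)^p}{r^p}(u,v)$ that the lemma asserts on the right-hand side. (Your "main obstacle" paragraph even warns that the monotonicity must be used in the correct direction — and then Step~2 uses it in the incorrect one.)

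The repair is to keep the weight inside the integral and exploit that $r$ varies slowly in $\tau$ compared to the decay scale $2M/P$ of the exponential. Concretely (this is what the paper does): after converting to a $\tau$-integral via $\Omega^2\,du\le 2\,d\tau$, write $e^{-P\tilde\tau/2M}=-\frac{2M}{P}\partial_{\tilde\tau}e^{-P\tilde\tau/2M}$ and integrate by parts. The boundary term at $\tilde\tau=\tau(u,v)$ produces the desired $\frac{4M}{P}e^{-P\tau/2M}\frac{(2M)^p}{r^p}(u,v)$, while the remaining bulk term carries a factor $\partial_{\tilde\tau}(r^{-p})\sim r^{-p-1}r_\tau$; using $|r_\tau|\le 2$ and $r\ge 2M$ this bulk term is at most $\frac{C}{P}$ times the original integral and is absorbed — this absorption is precisely where the hypothesis $P\ge 2$ is needed, which your argument never genuinely uses. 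A minor further point: the uniform comparison along constant-$v$ cones is $\Omega^2\,du\le 2\,d\tau$ (the factor $2$ coming from the near-horizon behaviour $\partial\tau/\partial u=1-k(r)\approx\tfrac12\Omega_\circ^2$), not a constant "close to $1$"; your Step~1 is otherwise the right reduction.
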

\begin{remark}
In applications, $\Delta_p$ will typically be an (appropriately $r$-weighted) $L^2\left(S^2 \left(u,v\right)\right)$ norm on the $\Gamma$'s.
\end{remark}
\begin{proof}
Note that along hypersurfaces of constant $v$ we have the global estimate $ \Omega^2 du \leq 2 d\tau$. Therefore, with $\tau_{(fut)}$ determined by $u_{(fut)}$ and $v$,
\begin{align}
\int_{u}^{u_{fut}(v)}  |\Delta_p| \Omega^2 d\tilde{u} \leq  \int_{\tau}^{\tau_{fut}} \left[e^{-P\frac{\tau}{2M}} \right] \cdot \frac{\left(2M\right)^p}{r^p} 2 d\tau
\nonumber \\
 = \int_{\tau}^{\tau_{fut}} -\frac{2M}{P} \partial_\tau \left[e^{-P\frac{\tau}{2M}} \right] \cdot \frac{\left(2M\right)^p}{r^p} 2 d\tau \nonumber \\
\leq \frac{4M}{P} \left[e^{-P\frac{\tau}{2M}} \right] \cdot \frac{\left(2M\right)^p}{r^p} \left(u,\hat{v}\right) + \int_{\tau}^{\tau_{fut}} -\frac{4M}{P} \left[e^{-P\frac{\tau}{2M}} \right] \cdot \frac{\left(2M\right)^p}{r^{p+1}} r_\tau \ d\tau \, .
\end{align}
Using that $|r_\tau|\leq 2$, $r \geq 2M$ we can absorb the last term by the second term in the first line to find
\[
\int_{u}^{u_{fut}\left(v\right)}  |\Delta_p| \Omega^2 du \leq M \frac{1}{P} \frac{4}{2-\frac{2}{P}}  \left[e^{-P\frac{\tau}{2M}} \right] \cdot \frac{\left(2M\right)^p}{r^p} \left(u,{v}\right) \, .
\] 

\end{proof}
\subsubsection{Integration in the $4$-direction} \label{4dir}
For the transport in the (backwards) four-direction we can derive a similar estimate; however, here we will ``lose'' one power of $r$ when integrating from null-infinity.
\begin{lemma} \label{lem:4dir}
Let $\Delta_p$ ($p\geq 0$) be a quantity satisfying 
\begin{equation} 
|\Delta_p \left(\hat{u}, \hat{v}, \theta^1, \theta^2 \right)| \leq e^{-P\frac{\tau}{2M}} \cdot \frac{\left(2M\right)^p}{r^p} \left(\hat{u},\hat{v}\right)
\end{equation}
in $\mathcal{B}$. Then, for $h \leq \tilde{h} \leq 2$, we have the estimate
\begin{equation} \label{vrefe}
\int_{v}^{v_{fut}\left(u\right)}  \frac{M^{\tilde{h}-1}}{r^{\tilde{h}}} \cdot \Delta_p dv \leq  2 \cdot 4^{h} \frac{1}{P} \frac{(2M)^{p}}{r^{p}} e^{-P\frac{\tau}{2M}}  \left(u,v\right) \, .
\end{equation}
Moreover, we have for $p > 0$ the estimate
\begin{equation} \label{vrefe2}
\int_{v}^{v_{fut}\left(u\right)}  \frac{1}{r} \cdot \Delta_p dv \leq  \left(\frac{2}{p} + \frac{4}{P} \right) \frac{(2M)^{p}}{r^{p}} e^{-P\frac{\tau}{2M}}  \left(u,v\right) \, .
\end{equation}
\end{lemma}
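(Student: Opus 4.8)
The plan is to follow the same scheme as in Lemma~\ref{lem:3dir}, but now tracking the $r$-weights carefully, since integration towards null infinity (in the $4$-direction, at fixed $u$) costs one power of $r$. Two geometric inputs are used throughout. First, along a constant-$u$ segment one has $\partial_v\tau=1+\lambda k(r)>0$ with $k(r)=\sqrt{1-(2M/r)^h(1-2M/r)}$ as in Section~\ref{sec:slices} (the mollification of $\tau$ from Section~\ref{sec:manifold} changes this only by a fixed bounded factor); since $1-k(r)\asymp(2M/r)^h$ for large $r$ while $\partial_v\tau\geq 1$ for $r\leq 8M$, this yields both that $\tau$ is monotone increasing in $v$ at fixed $u$ and the global bound $dv\leq C_h\,(r/2M)^h\,d\tau$ for a constant $C_h$ depending only on $h$ (and $M$), which one can check satisfies $2C_h/2^h\leq 2\cdot 4^h$ for all $1<h\le 2$. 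Second, $\partial_v r=1-\tfrac{2M}{r}$, which is $\geq\tfrac12$ for $r\geq 4M$ and always positive for $r>2M$; hence $r$ is monotone increasing in $v$ at fixed $u$ and $dv\leq 2\,dr$ on $\{r\geq 4M\}$. Consequently, for every $\tilde v\geq v$ on the segment one has $r^{-p}(u,\tilde v)\leq r^{-p}(u,v)$ (for $p\ge0$), $e^{-P\tau(u,\tilde v)/2M}\leq e^{-P\tau(u,v)/2M}$, and $\int_{\tau(u,v)}^{\tau(u,v_{fut})}e^{-P\tau'/2M}\,d\tau'\leq\tfrac{2M}{P}e^{-P\tau(u,v)/2M}$.

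\emph{Estimate \eqref{vrefe}.} Change variables from $v$ to $\tau$ using $dv\leq C_h(r/2M)^h\,d\tau$, so that $\tfrac{M^{\tilde h-1}}{r^{\tilde h}}|\Delta_p|\,dv$ is bounded by $C_h\,\tfrac{M^{\tilde h-1}}{r^{\tilde h}}(r/2M)^h\,e^{-P\tau/2M}(2M)^p r^{-p}\,d\tau$. Since $\tilde h\geq h$ and $r\geq 2M$,
\[
\tfrac{M^{\tilde h-1}}{r^{\tilde h}}\Bigl(\tfrac{r}{2M}\Bigr)^{h}=\tfrac{M^{\tilde h-1}}{(2M)^h}\,r^{\,h-\tilde h}\leq\tfrac{M^{\tilde h-1}}{(2M)^h}(2M)^{\,h-\tilde h}=\tfrac{1}{2^{\tilde h}M}\,.
\]
Pulling out $(2M/r)^p(u,v)$ by monotonicity and integrating the exponential gives a bound $\tfrac{2C_h}{2^{\tilde h}P}(2M/r)^p e^{-P\tau/2M}$; using $2C_h/2^{\tilde h}\leq 2C_h/2^{h}\leq 2\cdot 4^h$ yields \eqref{vrefe}. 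Note $p\geq 0$ is used only through the monotonicity of $r^{-p}$.

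\emph{Estimate \eqref{vrefe2}.} Here the weight $r^{-1}$ is borderline — it cannot absorb the factor $(r/2M)^h$ produced by the $d\tau$-conversion — so one splits the $v$-range at $r=4M$. On the far part $\{r\geq 4M\}$, use $dv\leq 2\,dr$ and $e^{-P\tau/2M}\leq e^{-P\tau(u,v)/2M}$ to get, using $p>0$ for integrability,
\[
\int_{\{r\ge4M\}}\tfrac1r\,\tfrac{(2M)^p}{r^p}e^{-P\tau/2M}\,dv\;\leq\;2(2M)^p e^{-P\tau(u,v)/2M}\!\!\int_{\max(r(u,v),4M)}^{\infty}\!\!\tfrac{dr}{r^{p+1}}\;\leq\;\tfrac{2}{p}\,\tfrac{(2M)^p}{r^p}(u,v)\,e^{-P\tau/2M}\,,
\]
the last step comparing $(4M)^{-p}$ with $r(u,v)^{-p}$ when $r(u,v)<4M$. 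On the near part $\{2M\leq r\leq 4M\}$, where $r$ is bounded and $\partial_v\tau$ is bounded below (so $dv\leq 2\,d\tau$), bound $(2M)^p r^{-p-1}(u,\tilde v)\leq\tfrac{1}{2M}(2M/r)^p(u,v)$ by monotonicity and integrate the exponential to obtain a contribution $\leq\tfrac{2}{P}(2M/r)^p e^{-P\tau/2M}$. Adding the two pieces gives $\bigl(\tfrac2p+\tfrac2P\bigr)(2M/r)^p e^{-P\tau/2M}$, which is dominated by the right-hand side of \eqref{vrefe2}, the remaining slack in the constants absorbing the effect of the mollification of $\tau$.

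\emph{The main obstacle.} Everything is routine bookkeeping except the treatment of \eqref{vrefe2}: there one must \emph{not} use the crude relation $dv\lesssim r^h\,d\tau$, but instead exploit that $dv\sim dr$ near infinity together with the honest $r^{-p-1}$ decay of the weighted quantity $r^{-1}\Delta_p$. This is precisely the ``loss of one power of $r$'' when integrating from null infinity that is flagged just before the statement, and it is the reason $p>0$ is required in \eqref{vrefe2} while $p\geq 0$ suffices in \eqref{vrefe}.
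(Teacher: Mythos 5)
Your proposal is correct and follows essentially the same route as the paper's proof: for \eqref{vrefe} one converts $dv$ to $d\tau$ via $dv\lesssim (r/2M)^h\,d\tau$, absorbs the weights using $\tilde h\geq h$ and $r\geq 2M$, and integrates the exponential; for \eqref{vrefe2} one splits into a far region where $dv\lesssim dr$ and the genuine $r^{-p-1}$ decay (hence $p>0$) is exploited, and a bounded-$r$ region handled by the exponential decay in $\tau$. The only differences are cosmetic (you split at $r=4M$ where the paper uses $8M$, and you track the constants slightly more explicitly).
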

\begin{remark}
The estimate (\ref{vrefe2}) will become relevant for the anomalous boxed term in Proposition \ref{uneq}. 
\end{remark}
\begin{proof}
Note that along constant $u$ hypersurface we have the global estimate $dv \leq 2 \cdot 4^h \left( \frac{r}{2M}\right)^{h} d\tau$. Therefore,
\begin{align} \label{im1}
\int_{v}^{v_{fut}\left(u\right)} \frac{(2M)^{\tilde{h}-1}}{r^{\tilde{h}}} \Delta_p \left(u,\tilde{v}\right) d\tilde{v} \leq  2 \cdot 4^h  \int_{v}^{v_{fut}} \frac{(2M)^{p-1}}{r^{p}}  e^{-P\frac{\tau}{2M}} \left(u,\tilde{v}\right) d\tau \nonumber \\
\leq   2 \cdot 4^h  \frac{(2M)^{p-1}}{r^{p}} \left(u,v\right) \int_{v}^{v_{fut}}  e^{-P\frac{\tau}{2M}} d\tau \leq  2 \cdot 4^h \frac{1}{P} \frac{(2M)^{p}}{r^{p}} e^{-P\frac{\tau}{2M}}  \left(u,v\right)\, , 
\end{align}
which is (\ref{vrefe}). For the second statement we consider first the case that the $v$-value we are integrating up to, say $v^\star$, satisfies $r\left(u,v^\star\right) \geq 8M$. In this region $1 \geq 2r_v$ and since moreover $\tau$ increases in $v$ we obtain
 \begin{align} \label{pu}
\int_{v}^{v_{fut}} \frac{1}{r} \Delta_p \left(u,\tilde{v}\right) d\tilde{v} \leq  e^{-P\frac{\tau}{2M}} \left(u,v\right)  \int_{v^\star}^{v_{fut}} (2M)^p \frac{2r_v}{r^{p+1}}  dv
\end{align}
establishing (\ref{vrefe2}) for the region considered. In case that the $v$ value lies in the region $r\left(u,v^\star\right) \leq 8M$ we have to add to (\ref{pu}) an additional contribution. However, in this region $dv \leq 2 d\tau$ and one exploits the exponential decay as previously.
\end{proof}

\subsubsection{The key proposition}

\begin{proposition} \label{prop:improve2}
For any point $\left(u,v, \theta_1,\theta_2\right)$ in the bootstrap region $\mathcal{B}$, we have, for any $k=0,1,2,3$ the estimates
\begin{align} \label{yuh}
w^{2p-2} \| \mathfrak{D}^k \overset{(3)}{\Gamma}_p \|_{L^2\left(S^2\left(u,v\right)\right)} 
\leq \left(4+\frac{C_M}{P} + C_M e^{-\frac{P}{2M} \tau_0} \right) e^{-\frac{P}{2M}\tau\left(u,v\right)}
\end{align}
and 
\begin{align} \label{yah}
w^{2p-2} \| \mathfrak{D}^k \overset{(4)}{\Gamma}_p \|_{L^2\left(S^2\left(u,v\right)\right)} 
\leq  \left(4+\frac{C_M}{P} + C_M e^{-\frac{P}{2M} \tau_0} \right) e^{-\frac{P}{2M}\tau\left(u,v\right)} \nonumber \\
+ \left(\frac{2}{p} + \frac{4}{P}\right) \sqrt{C_{max}^2} \sqrt{4+ \frac{C_M}{P} + C_M e^{-\frac{P}{2M} \tau_0}} e^{-P\frac{\tau}{2M}} \, .
\end{align}
\end{proposition}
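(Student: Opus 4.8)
The strategy is to integrate the commuted null-structure equations of Proposition~\ref{prop:nscommute} along the null directions, using the two elementary integration Lemmas~\ref{lem:3dir} and~\ref{lem:4dir} to convert pointwise/$L^2(S^2_{u,v})$ bounds on the right-hand sides into the desired $r$-weighted bounds. First I would fix $k \in \{0,1,2,3\}$ and a $k$-tuple $\mathfrak{D}^k$, and consider separately a $\overset{(3)}{\Gamma}_p$ and a $\overset{(4)}{\Gamma}_p$. For the $3$-direction, I take the equation $\slashed{\nabla}_3(\mathfrak{D}^k \overset{(3)}{\Gamma}_p) = E_3[\mathfrak{D}^k \overset{(3)}{\Gamma}_p]$, contract with $w^{2p-2}\mathfrak{D}^k\overset{(3)}{\Gamma}_p$ over all indices, and integrate over $S^2_{u,v}$ to obtain a transport inequality for $w^{2p-2}\|\mathfrak{D}^k\overset{(3)}{\Gamma}_p\|^2_{L^2(S^2_{u,v})}$ along the $\partial_u$ direction (being careful that $\slashed{\nabla}_3 = \frac{1}{\Omega^2}\partial_u$ and that $\partial_u$ of the volume form produces a $tr\underline\chi$ term, controlled pointwise via (\ref{l4r2}) and the Schwarzschild value $tr\underline\chi_\circ = -2/r$, which combines correctly with the $w$-weight). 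The data contribute on $\tilde{\mathcal{C}}_{U=0}$ via (\ref{hoz3}) (bound $e^{-P\tau/2M}$) and on $\Sigma_\tau$ trivially via (\ref{Gdata2}). The right-hand side terms of $E_3[\mathfrak{D}^k\overset{(3)}{\Gamma}_p]$ are of three types: the linear curvature term $\psi_p$ (giving, after Cauchy–Schwarz in $u$, a contribution estimable by the null flux $F_u[\mathfrak{D}^k\Psi]$, already bounded in Proposition~\ref{prop:improve1} by $\lesssim e^{-P\tau/2M}$); the linear coefficient terms $(f_{p_1}+\Gamma_{p_1})\Gamma_{p_2}$ and the amplified-redshift-type linear terms from commutation (where the bootstrap assumption (\ref{decl}) applied to the other factor yields $\lesssim e^{-P\tau/2M}w^{-(2p-2)}$ up to a constant that is absorbed by choosing $P$ large — this is the $C_M/P$ term); and the genuinely quadratic terms $\Gamma\cdot\Gamma$, $\Gamma\cdot\psi$ (where estimating one factor pointwise via Sobolev on the spheres, Sections~\ref{sec:sob1}–\ref{sec:sob2}, contributes the $e^{-P\tau_0/2M}$ smallness factor). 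Applying Lemma~\ref{lem:3dir} to the $L^1(du,\Omega^2)$ integral of the resulting majorant $\Delta_{2p-2}$ then yields (\ref{yuh}), with the $4$ on the right coming from the (square root of the) boundary data $4\cdot e^{-P\tau/2M}$ in (\ref{decl3})/(\ref{decl2}) as propagated.

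For the $4$-direction, I use the renormalized form of Remark~\ref{rem:reno}, $\slashed{\nabla}_4(r^{2c[\overset{(4)}{\Gamma}_p]}\mathfrak{D}^k\overset{(4)}{\Gamma}_p) = r^{2c[\overset{(4)}{\Gamma}_p]}E_4[\mathfrak{D}^k\overset{(4)}{\Gamma}_p]$, so that the borderline $tr\chi$-weight term is eliminated. Again I contract with the weighted quantity and integrate over $S^2_{u,v}$ to get a transport inequality along $\partial_v + b^A e_A$, with data now on $\tilde{\mathcal{C}}_{v=v_\infty}$ via (\ref{inf3}) and on $\Sigma_\tau$ trivially. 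The right-hand side $E_4[\mathfrak{D}^k\overset{(4)}{\Gamma}_p]$ of (\ref{4dirgu}), combined with (\ref{eq:aux1}), splits into: the non-borderline terms, which all carry at least one extra power of $r$ (i.e. are $O_{p+3/2}$-type or better after weighting), so that Lemma~\ref{lem:4dir} estimate (\ref{vrefe}) applies with the $r^{-\tilde h}$ weight (using $dv \lesssim r^h d\tau$) and produces a clean $\frac{1}{P}\,w^{-(2p-2)}e^{-P\tau/2M}$ contribution; and the \emph{anomalous boxed term} $\sum_{p_1+p_2=p+1}f_{p_1}\overset{(3)}{\Gamma}_{p_2}$, which by Proposition~\ref{rem:anoterm} is explicitly $\tfrac12 tr\chi_\circ\,\underline\eta$ (for $\eta$) or $-\tfrac{1}{2M}(\hat\omega-\hat\omega_\circ)$ (for $\Omega^2_\circ/\Omega^2 - 1$), i.e. an $f_1$ times a $\overset{(3)}{\Gamma}_{p_2}$ that satisfies a $3$-equation and hence has \emph{already} been bounded by (\ref{yuh}). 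For this term only the weaker estimate (\ref{vrefe2}) of Lemma~\ref{lem:4dir} is available — integrating $\frac{1}{r}$ against a quantity decaying like $r^{-p_2}$ gives a $(\frac{2}{p}+\frac{4}{P})$ prefactor rather than a $\frac{1}{P}$ one — which is precisely the origin of the extra second line in (\ref{yah}), where the $\sqrt{C_{max}^2}$ factor accounts for the constant in the already-established bound on $\overset{(3)}{\Gamma}_p$ and $C_{max}$ is from (\ref{Cmaxdef}) controlling $r\,\mathfrak{D}^i(tr\chi_\circ)$. The linear commutator terms involving $\mathfrak{D}\Gamma_2\cdot(\cdots)$ and the $f_1\Gamma_3 = tr\chi_\circ(\hat\omega-\hat\omega_\circ)$ term are handled either by the $1/r$ estimate (if borderline) or by (\ref{vrefe}) (if they carry extra decay), and the genuinely quadratic/cubic terms again produce the $e^{-P\tau_0/2M}$ smallness by Sobolev.

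\textbf{Main obstacle.} The delicate point is the treatment of the anomalous boxed term in the $4$-direction, because it is the one place where the naive integration in $v$ does \emph{not} gain a factor of $1/P$ (the coefficient is $\sim 1/r$, not $\sim 1/r^{1+\delta}$), so it cannot be absorbed by merely taking $P$ large. One must instead exploit that the quantity multiplying $f_1$ in this term is a $\overset{(3)}{\Gamma}$ — already estimated independently by (\ref{yuh}) via the $3$-direction argument — and that the $v$-integral of $r^{-1}$ against an $r^{-p}$-decaying quantity with $p>0$ converges absolutely with the explicit constant $\frac{2}{p}$ from (\ref{vrefe2}); this is exactly why the proof must be organized so that the $3$-direction estimate (\ref{yuh}) is proved \emph{first} and then fed into (\ref{yah}). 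A secondary subtlety is the bookkeeping of the amplified-redshift linear terms generated by repeated $\slashed{\nabla}_3$-commutation (cf. Section~\ref{sec:redshiftrole}): their coefficients grow like $k\,\hat\omega_\circ$, so closing the bootstrap forces $P$ to be chosen large \emph{depending on} $s=3$ (and the implicit constant $C_M$), which is consistent with the statement of Theorem~\ref{theo1} and Remark~\ref{rem:Pdep}, but must be tracked carefully to ensure all the $C_M/P$ contributions are genuinely absorbed into the improvement from $4$ to $\tfrac34\cdot(\text{bootstrap bound})$ required by Proposition~\ref{prop:explain}. Once both (\ref{yuh}) and (\ref{yah}) are in hand, combining them with (\ref{Cmaxdef}) and choosing $P$, $\tau_0$ large gives the improved constant $\tfrac{3}{4}C_{max}^2$ on the right-hand side of (\ref{decl}), completing the improvement of the Ricci-coefficient bootstrap assumption.
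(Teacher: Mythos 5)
Your proposal is correct and follows essentially the same route as the paper: integrate the commuted transport equations of Proposition \ref{prop:nscommute} over the spheres (with the renormalisation of Remark \ref{rem:reno} in the $4$-direction), use Lemmas \ref{lem:3dir} and \ref{lem:4dir} together with the already-improved curvature fluxes of Proposition \ref{prop:improve1} and the Sobolev/$\tau_0$-smallness for the quadratic terms, and treat the anomalous boxed term (and $f_1\Gamma_3$) by Cauchy--Schwarz with the $1/r$-weighted estimate (\ref{vrefe2}), feeding in the previously established $3$-direction bound (\ref{yuh}) -- exactly the ordering the paper uses. The only (cosmetic) discrepancy is in the bookkeeping of constants: in the paper the $\sqrt{C_{max}^2}$ in (\ref{yah}) is tied to the bootstrap assumption (\ref{decl}) on the $\overset{(4)}{\Gamma}_p$-factor (with $C_{max}$ of (\ref{Cmaxdef}) also bounding the $\mathfrak{D}$-derivatives of the $f_{p_1}$ coefficient), the $\sqrt{4+\cdots}$ factor comes from (\ref{yuh}) applied to the $\overset{(3)}{\Gamma}_{p_2}$-factor, and the $4$ in (\ref{yuh}) is assembled as $1$ (data) plus $2$ (improved flux) plus small terms rather than from the data bound alone.
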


\begin{proof}
Note that for a scalar function $f$ we have the identities (cf.~Lemma \ref{lem:ncs})
\[
\partial_u \left[ \int_{S^2_{u,v}} f \sqrt{\slashed{g}} d\theta^1 d\theta^2 \right] =  \int_{S^2_{u,v}}  \left[ \slashed{\nabla}_3 f + tr \underline{\chi} f \right] \Omega^2 \sqrt{\slashed{g}} d\theta^1 d\theta^2 \, ,
\]
\[
\partial_v \left[ \int_{S^2_{u,v}} f \sqrt{\slashed{g}} d\theta^1 d\theta^2 \right] =  \int_{S^2_{u,v}}  \left[ \slashed{\nabla}_4 f + tr {\chi} f \right]  \sqrt{\slashed{g}} d\theta^1 d\theta^2 \, .
\]
From the first identity it follows from Proposition \ref{prop:nscommute} that
\begin{align} \label{udi} 
\partial_u \left[ \int_{S^2 \left(u,v\right)} w^{2p-2} \| \mathfrak{D}^k \overset{(3)}{\Gamma}_p\|^2 \sqrt{\slashed{g}} d\theta_1 d\theta_2\right] \nonumber \\ =  \int_{S^2 \left(u,v\right)}2  \left( E_3[\mathfrak{D}^k \overset{(3)}{\Gamma}_p] + f_1 \mathfrak{D}^k \overset{(3)}{\Gamma}_p\right) \cdot w^{2p-2} \mathfrak{D}^k \overset{(3)}{\Gamma}_p \ \Omega^2 \sqrt{\slashed{g}} d\theta_1 d\theta_2 \, .
\end{align}
Upon integration this yields
\begin{align}
w^{2p-2} \left(u,v\right) \| \mathfrak{D}^k \overset{(3)}{\Gamma}_p\|^2_{L^2\left(S\left(u,v\right)\right)} \leq w^{2p-2} \left(u_{fut},v\right) \| \mathfrak{D}^k \overset{(3)}{\Gamma}_p\|^2_{L^2\left(S\left(u_{fut},v\right)\right)}  \nonumber \\
+ \int_u^{u_{fut}} d\bar{u} \int_{S^2 \left(\bar{u},v\right)}2 \Big|  \left( E_3 [\mathfrak{D}^k \overset{(3)}{\Gamma}_p] + f_1 \mathfrak{D}^k \overset{(3)}{\Gamma}_p\right) \cdot w^{2p-2} \mathfrak{D}^k \overset{(3)}{\Gamma}_p\Big|  \Omega^2 \sqrt{\slashed{g}} d\theta_1 d\theta_2  \nonumber \, .
\end{align}
For the second term in the second line we can insert the bootstrap assumption and apply Lemma \ref{lem:3dir} which produces
\begin{align} \label{ste1}
\int_u^{u_{fut}} \int_{S^2 \left(u,v\right)}  \|\mathfrak{D}^k\overset{(3)}{\Gamma}_p\|^2 w^{2p-2}   \Omega^2 \sqrt{\slashed{g}} du d\theta_1 d\theta_2 \leq \frac{C_M}{P} e^{-P\frac{\tau}{2M}} \left(u,v\right)  \, .
\end{align}
Moreover, inspecting the various terms of $E_3 [\mathfrak{D}^k \overset{(3)}{\Gamma}_p]$ we see that the resulting integrand in (\ref{udi}) either (after applying Sobolev inequalities as in the previous section) satisfies the assumptions of Lemma \ref{lem:3dir}, or it is the curvature flux, which we already improved in the previous section.\footnote{We will provide more details on how to control the error when discussing the $4$-direction which is more difficult since careful track of the $r$-weights has to be kept.} Hence
\begin{align} 
\int_{S^2 \left(u,v\right)}  \|\mathfrak{D}^k \overset{(3)}{\Gamma}_p \|^2 w^{2p-2}  \sqrt{\slashed{g}} d\theta_1 d\theta_2 \leq \left(1 + 2 + \frac{C_M}{P} + C_M e^{-\frac{P}{2M}\tau_0} \right)  e^{-\frac{P}{2M} \tau} \, \nonumber
\end{align}
with the $1$ coming from the data and the other terms from the already improved curvature flux and the terms to which Lemma \ref{lem:3dir} has been applied. In particular, $C_M$ depends on the number of (linear) terms involved. This proves (\ref{yuh}).


For the $\mathfrak{D}^k\overset{(4)}{\Gamma}_p$ we have from Proposition \ref{prop:nscommute} and the renormalization of Remark \ref{rem:reno}
\begin{align} \label{fot}
\frac{1}{2} \partial_v  \left[ \int_{S^2 \left(u,v\right)} w^{-2} w^{4c[\overset{(4)}{\Gamma}_p]}  \| \mathfrak{D}^k \overset{(4)}{\Gamma}_p\|^2 \sqrt{\slashed{g}} d\theta_1 d\theta_2\right] = \nonumber \\
\int_{S^2 \left(u,v\right)} \left(\tilde{f}_2 + \Gamma_2\right) w^{-2} w^{4c[\overset{(4)}{\Gamma}_p]}  \| \mathfrak{D}^k \overset{(4)}{\Gamma}_p\|^2 \sqrt{\slashed{g}} d\theta_1 d\theta_2 \nonumber \\ + \int_{S^2 \left(u,v\right)}  E_4 [\mathfrak{D}^k \overset{(4)}{\Gamma}_p]  \ w^{-2} w^{4c[\overset{(4)}{\Gamma}_p]} \mathfrak{D}^k \overset{(4)}{\Gamma}_p \ \sqrt{\slashed{g}} d\theta_1 d\theta_2 \, .
\end{align}
Since $2c[\overset{(4)}{\Gamma}_p] \leq p$ for the $\overset{(4)}{\Gamma}_p$ involved, we can, after integration, apply Lemma \ref{lem:4dir} to the term in the second line. For the term in the third line we recall from Proposition \ref{prop:nscommute} and (\ref{eq:aux1}) that
\begin{align} \label{e4stru}
E_4 [\mathfrak{D}^k\overset{(4)}{\Gamma}_p] = \mathfrak{D}^k \left(\textrm{boxed in (\ref{eq:aux1})}\right) + \mathfrak{D}^k \left(f_1 \Gamma_3\right) \nonumber \\
+ \sum_{i=0}^k \left( f_2 \mathfrak{D}^i \Gamma_{p} \right) + \mathfrak{D}^k \psi_{p+\frac{3}{2}} + \textrm{q.t.} \, ,
\end{align}
with the last term denoting quadratic terms. The latter are very easy to handle and we will not treat them explicitly. (They exhibit sufficient decay in $r$ (as is manifest from the fact that $E_4$ gains a power of at least $\frac{3}{2}$ in radial decay) and will moreover always contain a smallness factor arising from $\tau_0$ large in view of their quadratic nature. See also the discussion at the end of Section \ref{sec:curvimprove}.) 

Let us  ignore the anomalous contribution from the boxed term as well as the term $\mathfrak{D}^k \left( f_1 \Gamma_3\right)$ in (\ref{e4stru}) for the moment. Then, for the sum-term in (\ref{e4stru}) we can apply Cauchy's inequality to the third line of (\ref{fot}) so as to produce a term as the one in the second line of (\ref{fot}) and the expression
\begin{align} \label{gt}
\int_{v}^{v_{fut}} dv \int_{S^2 \left(u,v\right)} \|  f_2 \mathfrak{D}^i\Gamma_p \|^2 w^{4c[\overset{(4)}{\Gamma}_p]}  \sqrt{\slashed{g}} d\theta_1 d\theta_2 \leq \frac{C_M}{P} e^{-P\frac{\tau}{2M}} \left(\frac{2M}{r}\right)^{2p-4c[\overset{(4)}{\Gamma}_p]} \, . 
\end{align}
This last inequality holds for all $i = 0, ... , k$ because Lemma \ref{lem:4dir} applies to the left hand side (note $2c[\overset{(4)}{\Gamma}_p] \leq p$ and $w^2 (f_2)^2 \sim w^{-2}$). On the other hand, for the curvature term in (\ref{e4stru}) applying again Cauchy's inequality to (\ref{fot}) yields
\begin{align} 
\int_{v}^{v_{fut}} dv \int_{S^2 \left(u,v\right)} \| \mathfrak{D}^k \psi_{p+\frac{3}{2}} \|^2 w^{4c\left[\Gamma_p\right]}  \sqrt{\slashed{g}} d\theta_1 d\theta_2 \nonumber \\
 \leq \left(2+ \frac{C_M}{P} +  C_M e^{-P\frac{\tau_0}{2M}}\right) e^{-P\frac{\tau}{2M}} \left(\frac{2M}{r}\right)^{2p-4c[\overset{(4)}{\Gamma}_p]}   \, .  \nonumber
\end{align}
Note that in this direction one can always take decaying $r$-weights out of the integral. 

We now turn to the contribution from the anomalous boxed term in (\ref{e4stru}) as well as the term $\mathfrak{D}^k \left( f_1 \Gamma_3\right)$ respectively (the latter appearing only for the $\mathfrak{D}^k \left(tr \chi - tr \chi_\circ\right)$ equation).

Explicitly, the three terms to consider are (cf.~Remark \ref{rem:anoterm}):
\begin{align} \label{boxest}
\int_v^{v_{fut}} dv \int_{S^2 \left(u,v\right)}  w^{4c\left[\eta\right]=2}  \left[ \mathfrak{D}^k \left(\frac{1}{2} tr \chi_\circ \underline{\eta} \right) \right] \cdot \left[ \mathfrak{D}^k \eta \right] w^{-2} \sqrt{\slashed{g}} d\theta_1 d\theta_2 \, \,
\end{align}
and
\begin{align} \label{boxest2}
\int_v^{v_{fut}} dv \int_{S^2 \left(u,v\right)} w^0 \frac{1}{2M} \left[ \mathfrak{D}^k \left(\frac{\Omega^2_\circ}{\Omega^2}-1 \right)  \right] \frac{1}{2M} \left[ \mathfrak{D}^k \left( \hat{\omega}-\hat{\omega}_\circ \right) \right] w^{-2} \sqrt{\slashed{g}} d\theta_1 d\theta_2
\end{align}
as well as 
\begin{align} \label{boxest3}
\int_v^{v_{fut}} dv \int_{S^2 \left(u,v\right)} w^{2} \left[ \mathfrak{D}^k \left( tr \chi_\circ \left(\hat{\omega}- \hat{\omega}_\circ \right) \right) \right] \left[ \mathfrak{D}^k \left(tr \chi - tr \chi_\circ\right) \right]\sqrt{\slashed{g}} d\theta_1 d\theta_2 \, .
\end{align}
Going back to our unifying notation, we have to estimate for $p_1+p_2=p$ (note that we either have $p_1=1$ or $p_1=0$) the expression
\begin{align}
 \int_{v}^{v_{fut}} dv \int_{S^2 \left(u,v\right)} \mathfrak{D}^k \left( f_{p_1}  \overset{(3)}{\Gamma}_{p_2} \right) \cdot w^{-2} w^{4c[\overset{(4)}{\Gamma}_p]} \mathfrak{D}^k \overset{(4)}{\Gamma}_p \ \sqrt{\slashed{g}} d\theta_1 d\theta_2 \nonumber \\
\leq \frac{C_{max}}{8} \sum_{8 \ terms}  \int_{v}^{v_{fut}} dv \int_{S^2 \left(u,v\right)} \frac{(2M)^{p_1-1}}{r^{p_1}} w^{2p-2}  \| \mathfrak{D}^i  \overset{(3)}{\Gamma}_{p_2} \| \| \mathfrak{D}^k \overset{(4)}{\Gamma}_p \| \sqrt{\slashed{g}} d\theta_1 d\theta_2 \nonumber
\end{align}
with $f_{p_1}$ as in (\ref{boxest})--(\ref{boxest3}).
This estimate follows from the fact that $2c[\overset{(4)}{\Gamma}_p] \leq p$ as well as the definition of the constant $C_{max}$. For $k=3$ the maximum number of terms that the round bracket can produce is eight and each comes with an $i \in \{0,1,2,3\}$.

Applying Cauchy-Schwarz to the previous expression yields
\begin{align}
 &\leq \frac{1}{8} \sum \sqrt{\int_{v}^{v_{fut}} dv \frac{1}{r} w^{2p-2} \| \mathfrak{D}^k \overset{(4)}{\Gamma}_p \|^2_{L^2\left(S^2(u,v)\right)}}\sqrt{\int_{v}^{v_{fut}} dv \frac{1}{r} w^{2p_2-2} \| \mathfrak{D}^i \overset{(3)}{\Gamma}_{p_2} \|^2_{L^2\left(S^2(u,v)\right)}} \nonumber \\
&\leq \left(\frac{2}{p} + \frac{C}{P}\right) \sqrt{C_{max}^2} \sqrt{4+ \frac{C_M}{P}+C_M e^{-\frac{P}{2M}\tau_0}} e^{-P\frac{\tau}{2M}} \, .
\end{align}
This follows after inserting the bootstrap assumption for the first root and the already established estimate (\ref{yuh}) for the second, and applying the estimate (\ref{vrefe2}) of Lemma \ref{lem:4dir} to both of these terms.
\end{proof}
\subsection{Closing the bootstrap}
Choosing $P$ and $\tau_0$ sufficiently large we can improve the bootstrap assumptions on curvature (\ref{decl2}) and (\ref{decl3}) by a factor of $\frac{3}{4}$ from Proposition \ref{prop:improve1}. For the Ricci-coefficients, Proposition \ref{prop:improve2} improves the bootstrap assumption (\ref{decl}) by a factor of $\frac{3}{4}$ provided that
\[
4+ \frac{C_M}{P} + C_M e^{-\frac{P}{2M}\tau_0} + 2 C_{max} \sqrt{4+ \frac{C_M}{P} + C_M e^{-\frac{P}{2M}\tau_0}} \leq \frac{3}{4} C_{max}^2
\]
which is indeed true for sufficiently large $P$ and $\tau_0$. Proposition \ref{prop:explain} is proven.

We finally note that once the bootstrap has closed we can improve the radial decay of some of the quantities further. In particular, using (\ref{3trc}) one shows that $tr \chi - tr \chi_\circ$ decays in fact like $r^{-3}$. Also, the full peeling decay for $\beta$ (i.e.~$r^{-4}$) and $\alpha$ (i.e.~$r^{-5}$) could be retrieved, as previously mentioned.
\section{Proof of Theorem \ref{theo2}: The convergence} \label{sec:convergence}
\subsection{Overview over the proof}
In this section we turn to the proof of Theorem \ref{theo2}. For this, we will need to consider differences of metric-, Ricci- and curvature components of two solutions $g_{\tau_1}$ and $g_{\tau_2}$ arising via Theorem \ref{theo1} from different data sets $D_{\tau_1,(v_\infty)_1}$ and $D_{\tau_2,(v_\infty)_2}$, $\tau_i \geq \tau_0+1$. 

Let $\tau_n > \tau_0$ be a monotonically increasing sequence of numbers with $\tau_n \rightarrow \infty$ and $\left(v_\infty\right)_n = (\tau_n)^2$. Let $\left(\Gamma,\psi\right)_n$ denote the solution arising from the approximate scattering data set $D_{\tau_n,\left(v_\infty\right)_n}$ via Theorem \ref{theo1}. The geometric situation is depicted in the figure below.
\[
\input{convergence.pstex_t}
\]
\\
Let $\mathcal{M}_{\tau,\bar{v}}:=  \mathcal{M}\left(\tau_0,\tau, \bar{v} \right)$. Clearly, for any $\tau>\tau_0$ and any $\bar{v}>v_{\tau_0}$, there is an $N$ such that the sequence of solutions $\left(\Gamma,\psi\right)_n$ for $n\geq N$ is defined in $\mathcal{M}_{\tau,\bar{v}}$. We claim that the sequence $\left(\Gamma,\psi\right)_n$ converges (in a suitable space) in $\mathcal{M}_{\tau,\bar{v}}$ for any $\tau,\bar{v}$. The convergence theorem will be established below along the following lines:
\begin{enumerate}[label=Step \arabic{enumi}.,ref=Step \arabic{enumi}]
\item Consider the solution  $\left(\Gamma, \psi\right)_{n+k}$ in the region $\mathcal{U}=\mathcal{M}\left(\tau_0,\tau_n\right) \cap \{ \left(v_\infty\right)_n \leq v\leq \left(v_\infty\right)_{n+k}\}$. Prove estimates for the solution which capture in a quantitative way that both curvature and Ricci-coefficients induced on $(v_\infty)_n$ are still close to the scattering data imposed on $v=(v_{\infty})_{n+k}$. This requires renormalization with the radiation fields of Section \ref{sec:radfields} (corresponding to subtracting the first term in an asymptotic expansion in $r^{-1}$ from each $\Gamma_p$ and $\psi_p$ so that the weighted \emph{difference} indeed goes to zero in the limit).

\item Derive the general equations (null-structure and Bianchi) for the differences of two solutions $\left(\Gamma,\psi\right)$ and $\left(\Gamma^\dagger,\psi^\dagger\right)$.

\item Use Step 2 to estimate the difference of two solutions $\left(\Gamma,\psi\right)_n$,$\left(\Gamma,\psi\right)_{n+k}$ in the region where they are both defined, that is in $\mathcal{M}\left(\tau_0,\tau_n\right) \cap \{v\leq \left(v_\infty\right)_n\}$. More precisely, prove $L^2\left(S^2\left(u,v\right)\right)$ estimates for the difference of $\Gamma$'s and curvature flux estimates for the difference of the $\psi$'s that depend solely on the ``data" on $\Sigma_{\tau_n}$, the horizon and the hypersurface $v=\left(v_\infty\right)_n$. Since all these go to zero as $n\rightarrow \infty$ (the latter by Step 1), convergence in (a weighted) $L^2 \left(S^2_{u,v}\right)$ follows. This step loses one derivative due to the quasi-linear nature of the equations, cf.~Remark \ref{rem:loss}. 

\item Either by commuting $s-1$ times and repeating the above steps, or by simply appealing to embedding theorems (recall that $s$ derivatives of the $\Gamma_n$ are uniformly bounded in $L^2_{S^2_{u,v}}$), one infers in particular that $s-1$ derivatives of the $\Gamma_n$ also converge in $L^2_{S^2_{u,v}}$. This yields the statement of the Theorem. Note that for $s=3$, this regularity implies that ${\rm Ric}\left(g\right)$ is indeed defined in $L^\infty_u L^\infty_v L^2_{S^2_{u,v}}$. Finally, Propositions \ref{prop:reno} and \ref{prop:reno2} yield (\ref{realizedata}).
\end{enumerate}

\subsection{Step 1: Controlling the flux on $v=(v_\infty)_n$}
In this section, we estimate the solution $\left(\Gamma, \psi\right)_{n+k}$ in the region $\mathcal{U}=\mathcal{M}\left(\tau_0,\tau_n\right) \cap \{ \left(v_\infty\right)_n \leq v\leq \left(v_\infty\right)_{n+k}\}$. Let us fix $0<\delta<1$ small.

\subsubsection{Estimates for curvature}
To obtain suitable estimates, we renormalize the Bianchi equations by subtracting from each curvature component its radiative term defined in Section \ref{sec:radfields2}. The quantity thereby obtained is expected to go to zero in the limit as $r \rightarrow \infty$ which makes it useful in the analysis. We present the renormalization process in detail for one Bianchi pair. For the other pairs, the computation is entirely analogous, with the exception of the $(\alpha, \beta)$-pair, which we will not renormalize.\footnote{Instead, the estimate for this pair will be applied with a slightly weaker weight gaining a convergence factor. Cf.~footnote \ref{foot:extra}.} We write
\begin{align}
\slashed{\nabla}_4 \left(\underline{\alpha} - \underline{\alpha}^{\mathcal{I}} \right) + \frac{1}{2} tr \chi \left(\underline{\alpha} -\underline{\alpha}^{\mathcal{I}} \right) + 2 \hat{\omega} \left(\underline{\alpha}  -\underline{\alpha}^{\mathcal{I}} \right) &=  2 \slashed{\mathcal{D}}_2^\star \left(\underline{\beta}-\underline{\beta}^{\mathcal{I}}  \right) \nonumber \\
+ E_4 \left[\underline{\alpha}\right] - \left( \slashed{\nabla}_4 \underline{\alpha}^{\mathcal{I}} + \frac{1}{2} tr \chi  \underline{\alpha}^{\mathcal{I}} \right) -2\hat{\omega} \underline{\alpha}^{\mathcal{I}}  + 2 \slashed{\mathcal{D}}_2^\star\underline{\beta}^{\mathcal{I}}  \, .
\end{align}
Note that the terms in the last line are all exponentially decaying in $\tau$, and decaying like $\frac{1}{r^3}$ because of a cancellation in the round bracket of the last line:
\[
\| \slashed{\nabla}_4 \underline{\alpha}^{\mathcal{I}} + \frac{1}{2} tr \chi  \underline{\alpha}^{\mathcal{I}} \| = \| \partial_v \left(-r \partial_u \partial_u \hat{\slashed{g}}^{dat\mathcal{I}}_{AB} \right) - \frac{1}{2} tr \chi  \underline{\alpha}^{\mathcal{I}}_{AB}\| + \frac{C}{r^3}e^{-P\frac{\tau}{2M}} \leq \frac{C}{r^3}e^{-P\frac{\tau}{2M}} \, .
\]
Similarly, we write the corresponding equation
\begin{align}
\slashed{\nabla}_3  \left(\underline{\beta} -\underline{\beta}^{\mathcal{I}}  \right) + 2 tr \underline{\chi}   \left(\underline{\beta} -\underline{\beta}^{\mathcal{I}} \right) = - \slashed{div}\left(\underline{\alpha}  - \underline{\alpha}^{\mathcal{I}} \right) \nonumber \\
 + E_3 \left[ \underline{\beta}\right] + \slashed{div}  \underline{\alpha}^{\mathcal{I}} - \left(\slashed{\nabla}_3 \underline{\beta}^{\mathcal{I}} + 2 tr \underline{\chi}\underline{\beta}^{\mathcal{I}} \right) \, .
\end{align}
This time there is no cancellation in the last line, which decays only like $r^{-2}$. This is not problematic, however, as the energy estimate contracts this equation with $\left(\underline{\beta} -\underline{\beta}^{\mathcal{I}}  \right)$ which decays (at least) like $r^{-2}$ as well. More precisely, applying the energy estimate in the region $\mathcal{U}$ to the Bianchi-pair $\left(\underline{\alpha},\underline{\beta}\right)$, we easily see that the spacetime-terms satisfy (the volume form being $\Omega^2 \sqrt{\slashed{g}} du dv d\theta_1 d\theta_2 \sim \sqrt{\slashed{g}} du dv d\theta_1 d\theta_2$ by Theorem \ref{theo1} and the fact that $r$ is large)
\[
\int_{\mathcal{U}} \Big\{ E_4 \left[\underline{\alpha}\right] - \left( \slashed{\nabla}_4 \underline{\alpha}^{\mathcal{I}} + \frac{1}{2} tr \chi  \underline{\alpha}^{\mathcal{I}} \right) -2\hat{\omega} \underline{\alpha}^{\mathcal{I}}  + 2 \slashed{\mathcal{D}}_2^\star\underline{\beta}^{\mathcal{I}} \Big\}\left(\underline{\alpha} - \underline{\alpha}^{\mathcal{I}}  \right)  \leq \frac{C}{\tau_n} e^{-\frac{P}{2M}\tau} \, ,
\]
\[
\int_{\mathcal{U}} \Big\{ E_3 \left[ \underline{\beta}\right] + \slashed{div}  \underline{\alpha}^{\mathcal{I}} - \left(\slashed{\nabla}_3 \underline{\beta}^{\mathcal{I}} + 2 tr \underline{\chi}\underline{\beta}^{\mathcal{I}} \right)\Big\} \left(\underline{\beta} -\underline{\beta}^{\mathcal{I}} \right)  \leq \frac{C}{\tau_n} e^{-\frac{P}{2M}\tau}
\]
using only the uniform estimates promised by Theorem \ref{theo2} and with the gain arising from the additional $r$-weight gained by the renormalization.

The equations for the other Bianchi pairs are renormalized and estimated analogously, cf.~(\ref{pi2}) and (\ref{pi3}). The only subtelty arises in the pair $\left(\alpha,\beta\right)$, which we will not renormalize.\footnote{The subtelty is caused by the fact that we are not imposing the full peeling decay, i.e.~we did not apply (\ref{pi1}) with weight $q=6$. \label{foot:extra}} Instead, we apply the estimate corresponding to (\ref{pi1}) with weight $5-\delta$. In view of 
\[
\int_{\mathcal{U}} f_1 w^{5-\delta} \|\mathfrak{D}^k \beta\|^2 \leq C \frac{1}{r^\delta\left(\tau_n,(v_\infty)_n\right)} e^{-\frac{P}{2M}\tau} \leq \frac{C}{\tau^{2\delta}_n} e^{-\frac{P}{2M}\tau}
\]
as $(v_\infty)_n = \tau_n^2$ we finally obtain (applying Proposition \ref{prop:asymptotic} to control the boundary terms on $\Sigma_{\tau_n} \cap \mathcal{U}$ and $\{v=(v_\infty)_n\} \cap \mathcal{U}$):

\begin{proposition} \label{prop:reno}
Let $\left(\Gamma, \psi\right)_{n+k}$ be a solution arising from Theorem \ref{theo1} and consider the solution in the region $\mathcal{U}$. Denoting the curvature components of the solution by $\beta \equiv \beta_{(n+k)}, ..., \underline{\alpha} \equiv \underline{\alpha}_{n+k}$ we have for for any $(v_\infty)_n \leq v \leq (v_\infty)_{n+k}$ and $u\left(\tau_0,v\right) \leq u \leq u\left(\tau_n, v\right)$ the estimate
\begin{align} \label{mdC}
\int_{u}^{u\left(\tau_n, v\right)} du \Big\{ \|\beta \|^2 w^{5-\delta} + w^4 \left(\left(\rho-\rho_\circ-\rho^{\mathcal{I}}\right)^2  +\left(\sigma-\sigma^{\mathcal{I}} \right)^2  \right) \nonumber \\
+ w^2  \| \underline{\beta} - \underline{\beta}^\mathcal{I} \|^2 +  \|\underline{\alpha} -\underline{\alpha}^{\mathcal{I}} \|^2 \Big\} \sqrt{\slashed{g}} d\theta^1 d\theta^2 \leq \frac{C}{\tau^{2\delta}_n} e^{-\frac{P}{2M}\tau\left(u,v\right)}
\end{align}
\end{proposition}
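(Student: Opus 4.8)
\textbf{Proof strategy for Proposition \ref{prop:reno}.}

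The plan is to run the weighted energy estimate of Proposition \ref{prop:improve1} (i.e.\ the divergence identities (\ref{pi1})--(\ref{pi4})) in the slab $\mathcal{U} = \mathcal{M}\left(\tau_0,\tau_n\right)\cap\{(v_\infty)_n\le v\le (v_\infty)_{n+k}\}$ rather than in the full bootstrap region, but applied to the \emph{renormalised} Bianchi system obtained by subtracting from each curvature component the corresponding radiation field $\psi_p^{\mathcal{I}}$ of Section \ref{sec:radfields2}. First I would write, for each of the three Bianchi pairs $(\beta,(\hat\rho,\sigma))$, $((\hat\rho,\sigma),\underline\beta)$, $(\underline\beta,\underline\alpha)$, the evolution equations for the differences $\uppsi_p-\uppsi_p^{\mathcal{I}}$, $\uppsi'_{p'}-\uppsi'^{\mathcal{I}}_{p'}$, exactly as displayed above for $(\underline\alpha-\underline\alpha^{\mathcal{I}},\underline\beta-\underline\beta^{\mathcal{I}})$: the principal part and the $tr\chi$/$\hat\omega$ transport terms reproduce the original structure, while the error picks up extra terms of the form $E_i[\uppsi_p]$ together with ``defect'' terms $\slashed{\nabla}_4\psi_p^{\mathcal{I}}+(\text{coeff})\,\psi_p^{\mathcal{I}}-\slashed{\mathcal{D}}\psi_{p'}^{\mathcal{I}}$ measuring how far the radiation fields fail to solve the homogeneous Bianchi equations. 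The $(\alpha,\beta)$ pair is handled without renormalisation, but applying (\ref{pi1}) with the reduced weight $q=5-\delta$ instead of $q=5$, so that every $f_1 w^{5-\delta}\|\mathfrak D^k\beta\|^2$ bulk term carries a factor $r^{-\delta}$.

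The key computational input, already indicated in the excerpt, is that these defect terms are small in two senses: (i) by the explicit definitions (\ref{memory})ff.\ of the radiation fields, there is a \emph{cancellation} of the would-be $r^{-2}$ part in $\slashed{\nabla}_4\underline\alpha^{\mathcal{I}}+\tfrac12 tr\chi\,\underline\alpha^{\mathcal{I}}$ (and analogously for the $\rho,\sigma$ entries), leaving $O(r^{-3})$, whereas the un-cancelled ones like $\slashed{\nabla}_3\underline\beta^{\mathcal{I}}+2tr\underline\chi\,\underline\beta^{\mathcal{I}}$ are only $O(r^{-2})$ but get paired in the energy estimate against $\underline\beta-\underline\beta^{\mathcal{I}}$, which is itself $O(r^{-2})$, so the product is $r$-integrable with room to spare; (ii) by Proposition \ref{prop:asymptotic}, the quantities $\psi_p-\psi_p^{\mathcal{I}_\tau}$ on $\{v=(v_\infty)_n\}$ and $\psi_p^{\mathcal{I}_\tau}-\psi_p^{\mathcal{I}}$ decay exponentially in $\tau$, and $\psi^{\mathcal{I}}$ carries the extra factor $r^{-1/2}$ or better in the weighted norm, which over the slab $\mathcal{U}$ — whose inner radius is $r(\tau_n,(v_\infty)_n)\gtrsim v_\infty\sim\tau_n^2$ since $(v_\infty)_n=\tau_n^2$ — produces the gain $\tau_n^{-2\delta}$ (for the un-renormalised $\beta$ term one uses the $r^{-\delta}$ from the reduced weight directly: $r^{-\delta}(\tau_n,(v_\infty)_n)\lesssim\tau_n^{-2\delta}$). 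Concretely, after integrating (\ref{pi1})--(\ref{pi4}) over $D(u,v)\cap\mathcal{U}$, the boundary contributions on $\Sigma_{\tau_n}\cap\mathcal{U}$ and on $\{v=(v_\infty)_n\}\cap\mathcal{U}$ are controlled by Propositions \ref{prop:asymptotic} and \ref{prop:construction2}, the bulk error terms split into the ``old'' $E_i$ contributions — bounded exactly as in Section \ref{errorest} using the uniform estimates of Theorem \ref{theo1}, which already supply a factor $\tau_0^{-1}$ or $P^{-1}$ from the quadratic/higher structure — plus the defect terms bounded as above by $C\tau_n^{-2\delta}e^{-P\tau/(2M)}$; a Gronwall argument in $u$ (the flux on $\{v=\text{const}\}$ being the controlled object) then closes and yields (\ref{mdC}).

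The main obstacle I expect is bookkeeping the $r$-weights in the defect terms so that \emph{every} term is strictly $r$-integrable at infinity with a surplus power that converts to the $\tau_n^{-2\delta}$ gain: this is exactly the place where the non-renormalisation of $(\alpha,\beta)$ forces the weight reduction $q=5\mapsto 5-\delta$ and where one must check that the cancellations in the definitions of $(tr\underline\chi-tr\underline\chi_\circ)^{\mathcal{I}}$, $(\rho-\rho_\circ)^{\mathcal{I}}$, $\sigma^{\mathcal{I}}$, $(\hat\omega-\hat\omega_\circ)^{\mathcal{I}}$ are genuine, i.e.\ that $\psi_p^{\mathcal{I}}$ really is the leading $r^{-1}$ coefficient so that $\slashed{\nabla}_4\psi_p^{\mathcal{I}}+(\text{coeff})\psi_p^{\mathcal{I}}-\slashed{\mathcal{D}}\psi_{p'}^{\mathcal{I}}$ gains the advertised power — this is implicit in the proof of Proposition \ref{prop:construction2} but must be traced component by component. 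The rest is a routine re-run of the Section \ref{sec:curvimprove} argument on a shifted region, and needs no new ideas.
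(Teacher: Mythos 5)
Your strategy coincides with the paper's own proof: renormalise the three lower Bianchi pairs by the radiation fields of Section \ref{sec:radfields2}, exploit the cancellation that makes the $\slashed{\nabla}_4$ defect terms $O(r^{-3})$ while pairing the $O(r^{-2})$ $\slashed{\nabla}_3$ defects against quantities of the same decay, leave $(\alpha,\beta)$ unrenormalised but run (\ref{pi1}) with weight $5-\delta$ so that $r^{-\delta}(\tau_n,(v_\infty)_n)\lesssim\tau_n^{-2\delta}$ from $(v_\infty)_n=\tau_n^2$, control the boundary contributions by Proposition \ref{prop:asymptotic}, and absorb the bulk errors using the uniform estimates of Theorem \ref{theo1}. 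The only cosmetic deviation is your concluding Gronwall step, which the paper does not need because in $\mathcal{U}$ the radius is uniformly of size $\gtrsim\tau_n^{2}$ and the (linear-in-controlled-quantities) error terms are already small by the uniform bounds.
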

\begin{remark}
Note that the right hand side of (\ref{mdC}) goes to zero as $n \rightarrow \infty$. Note also that in view of  Proposition \ref{prop:asymptotic}, the same estimate holds replacing $\sigma^\mathcal{I}$ by $\sigma^{\mathcal{I}_{\tau_{n}}}$ etc.
\end{remark}

\subsubsection{Estimates for the Ricci coefficients}
We next turn to the Ricci coefficients $\Gamma_p$. 
\begin{proposition} \label{prop:reno2}
Under the assumptions of Proposition \ref{prop:reno}, we have in $\mathcal{U}$ the estimate
\begin{align} \label{mdG}
w^{2p-2}\| \left(\Gamma_p\right)_{n+k} -\Gamma_p^{\mathcal{I}} \|^2_{L^2_{\slashed{g}} \left(S^2(u,v)\right)} \leq  \frac{C}{\tau^{2\delta}_n} e^{-\frac{P}{2M}\tau\left(u,v\right)} \, .
\end{align}
In particular, the right hand sides goes to zero as $n\rightarrow \infty$.
\end{proposition}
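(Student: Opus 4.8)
The goal is to prove Proposition \ref{prop:reno2}, i.e.~that in the region $\mathcal{U}$ the Ricci coefficients of the solution $\left(\Gamma,\psi\right)_{n+k}$, once the radiation field $\Gamma_p^{\mathcal{I}}$ is subtracted, are controlled by the same small quantity $\tau_n^{-2\delta} e^{-P\tau/2M}$ that already controls the renormalised curvature in Proposition \ref{prop:reno}. The strategy is the exact analogue of the proof of Proposition \ref{prop:improve2}, but run with the renormalised variable $\Gamma_p - \Gamma_p^{\mathcal{I}}$ in place of $\Gamma_p$ and over the region $\mathcal{U}$ rather than the bootstrap region. First I would record, for each $\Gamma_p$ in the collection $\Gamma$, the transport equation satisfied by the difference $\Gamma_p - \Gamma_p^{\mathcal{I}}$: starting from the schematic equations $(\ref{ns3})$--$(\ref{ns4})$ of Proposition \ref{uneq}, one subtracts off the transport equation formally satisfied by $\Gamma_p^{\mathcal{I}}$ (which, by the very definitions in Section \ref{sec:radfields2}, are built so as to solve the leading-order part of these transport equations). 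This produces an equation of the form $\slashed{\nabla}_3 \left(\Gamma_p - \Gamma_p^{\mathcal{I}}\right) = E_3\left[\Gamma_p\right] + \left(\text{defect term}\right)$ in the $3$-direction, and the analogous renormalised $\slashed{\nabla}_4$-equation in the $4$-direction, where the defect term is precisely the error one makes by replacing the full $\slashed{\nabla}_3$ or $\slashed{\nabla}_4$ derivative of $\Gamma_p^{\mathcal{I}}$ by the leading term and by using $tr\underline\chi_\circ$, $tr\chi_\circ$, etc.~in place of the true quantities; as in the curvature case this defect decays one extra power in $r$ relative to what the naive $r$-weight would suggest (this is the same cancellation exploited in the displayed estimates for $\|\slashed\nabla_4\underline\alpha^{\mathcal{I}} + \tfrac12 tr\chi\,\underline\alpha^{\mathcal{I}}\|$ above).

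Second, I would integrate these transport equations after taking the $L^2\left(S^2_{u,v}\right)$ norm on the spheres, exactly as in Section \ref{sec:ricciimprove}. For the $\overset{(3)}{\Gamma}_p$-components one integrates in the $u$-direction from the portion of $\Sigma_{\tau_n}$ (where the data is Schwarzschildean, so $\Gamma_p - \Gamma_p^{\mathcal{I}} = -\Gamma_p^{\mathcal{I}}$ is already of size $\tau_n^{-2\delta}e^{-P\tau/2M}$ by Proposition \ref{prop:asymptotic} combined with the $r$-largeness in $\mathcal{U}$) or from the horizon, and applies Lemma \ref{lem:3dir} to the error terms. For the $\overset{(4)}{\Gamma}_p$-components one integrates in the $v$-direction, starting from $v=(v_\infty)_{n+k}$ where the data is Schwarzschildean, and applies Lemma \ref{lem:4dir}, using the key structural fact (Remark \ref{rem:main}, Proposition \ref{rem:anoterm}) that the right-hand side $E_4$ improves the $r$-weight by more than one power except for the anomalous boxed term $f_{p_1}\overset{(3)}{\Gamma}_{p_2}$, which is handled by Cauchy-Schwarz and the estimate $(\ref{vrefe2})$ together with the already-established control on $\overset{(3)}{\Gamma}_p - \overset{(3)}{\Gamma}_p^{\mathcal{I}}$. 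The curvature terms $\psi_p$ appearing linearly in $E_3$, $E_4$ are controlled using Proposition \ref{prop:reno}: crucially, the renormalisation there was designed so that the relevant combinations $\psi_p - \psi_p^{\mathcal{I}}$ — which are the ones that genuinely appear once $\Gamma_p^{\mathcal{I}}$ has been subtracted, since $\Gamma_p^{\mathcal{I}}$ solves the transport equation with $\psi_p^{\mathcal{I}}$ on the right — carry the extra $\tau_n^{-2\delta}$ convergence factor and the extra $r$-power. The quadratic and higher error terms $\Gamma\cdot\Gamma$, $\Gamma\cdot\psi$ contribute a factor that is either the small $\tau_n^{-2\delta}$ coming from the renormalised differences or a smallness factor from $\tau_0$ large (since the uniform bounds of Theorem \ref{theo1} make one factor pointwise small), and are integrated using the decay in $r$ just as at the end of Section \ref{sec:curvimprove}.

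Third, I would assemble these one-variable estimates: since every future-directed characteristic from a sphere $S^2_{u,v}\subset\mathcal{U}$ terminates on $\Sigma_{\tau_n}\cup\mathcal{H}^+\cup\{v=(v_\infty)_{n+k}\}$, all boundary contributions are of size $\tau_n^{-2\delta}e^{-P\tau/2M}$, and the spacetime error contributions have been shown to be of the same size, so a short Gronwall argument (or simply the fact that all the absorbed terms already carry the small prefactor) closes the estimate $(\ref{mdG})$. The main obstacle, as in the curvature case, is the bookkeeping around the anomalous boxed term in the $\slashed\nabla_4$-equation for $\eta$ and for $\tfrac{1}{2M}\left(\tfrac{\Omega_\circ^2}{\Omega^2}-1\right)$ (and the term $f_1\Gamma_3 = tr\chi_\circ(\hat\omega-\hat\omega_\circ)$ in the $tr\chi-tr\chi_\circ$ equation): these are the only terms that do not gain more than one power of $r$, so they must be paired against $\overset{(3)}{\Gamma}$-components that have already been estimated, and one must verify that $2c[\overset{(4)}{\Gamma}_p]\le p$ so that the weight bookkeeping in Lemma \ref{lem:4dir} is legitimate — exactly the same delicate point that arose in Proposition \ref{prop:improve2}. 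Since the structural content is identical to the bootstrap argument of Section \ref{sec:bootstrap} and the only new input is the uniform convergence factor $\tau_n^{-2\delta}$ supplied by Propositions \ref{prop:reno} and \ref{prop:asymptotic}, I would present the proof by referring back to Section \ref{sec:ricciimprove} and indicating only the modifications, rather than repeating the computation in full.
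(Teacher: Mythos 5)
Your proposal follows essentially the same route as the paper: renormalise each transport equation by the corresponding radiation field, exploit the extra power of $r$ gained in the resulting defect term, integrate via Lemmas \ref{lem:3dir} and \ref{lem:4dir} using Proposition \ref{prop:reno} for the curvature terms and Proposition \ref{prop:asymptotic} for the boundary terms, and treat the anomalous borderline terms by first closing the estimates for the $\overset{(3)}{\Gamma}_p$-components. Two small slips that do not affect the argument: the data on $v=(v_\infty)_{n+k}$ is \emph{not} Schwarzschildean in the $u$-range relevant to $\mathcal{U}$ (it is the cut-off scattering data, controlled there precisely by Proposition \ref{prop:asymptotic}), and $\mathcal{U}$ does not meet the horizon, since $v\ge (v_\infty)_n=\tau_n^2$ there.
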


\begin{proof}
Let us first consider the equations in the $3$-direction \emph{involving curvature}, i.e.~the equations for $\hat{\omega}-\hat{\omega}_\circ$, $\underline{\eta}$ and $\hat{\underline{\chi}}$. They are easily seen to be of the form
\[
\slashed{\nabla}_3 \left(\Gamma_p - \Gamma^\mathcal{I}_p\right) = \psi_p - \psi_p^\mathcal{I} + \mathcal{O}\left(\frac{e^{-\frac{P}{2M}\tau}}{r^{p+1}}\right)
\]
after renormalization. Applying Lemma \ref{lem:3dir} and using Proposition \ref{prop:reno} for the curvature flux-term that arises (and again Proposition \ref{prop:asymptotic} to control the boundary term), one establishes the estimate (\ref{mdG}) for these $\Gamma_p$. Next, considering the equation for $\left(tr \underline{\chi} - tr \underline{\chi}_\circ\right) - \left(tr \underline{\chi} - tr \underline{\chi}_\circ\right)^\mathcal{I}$, and $\slashed{g}_{AB} = \slashed{g}_{AB}^\circ - r \hat{\slashed{g}}^{dat\mathcal{I}}_{AB}$, we also obtain (\ref{mdG}) for those quantities, using the fact that we already obtained the estimate (\ref{mdG}) for $\hat{\underline{\chi}}- \hat{\underline{\chi}}^\mathcal{I}$ in the first step. Turning to the equations in the $4$-direction\footnote{Note that (\ref{beq}) is the only equation in the $3$-direction which has not yet been considered.} we see that
\[
\slashed{\nabla}_4 \left(\hat{\chi}-\hat{\chi}^\mathcal{I}\right) + tr \chi  \left(\hat{\chi}-\hat{\chi}^\mathcal{I}\right) =  \mathcal{O}\left(\frac{e^{-\frac{P}{2M}\tau}}{r^{\frac{7}{2}}}\right)
\]
\[
\slashed{\nabla}_4 \left(\eta -{\eta}^{\mathcal{I}}\right) = -\frac{3}{4} tr \chi  \left(\eta -{\eta}^{\mathcal{I}}\right) + \frac{1}{4} tr \chi \left(\underline{\eta} - \underline{\eta}^\mathcal{I}\right) + \frac{1}{4} tr \chi \left(\eta + \underline{\eta}\right) - \beta +  \mathcal{O}\left(\frac{e^{-\frac{P}{2M}\tau}}{r^{4}}\right)
\]
In view of $\|\eta + \underline{\eta}\|$ decaying like $r^{-3}$ and the $\left(\underline{\eta} -\underline{\eta}^{\mathcal{I}}\right)$-term already satisfying (\ref{mdG}), the estimate (\ref{mdG}) also follows for these quantities. Similarly for (\ref{4cofa}) and (\ref{trx4}). The latter does not need to be renormalized because the right hand side already decays more than two powers in $r$ better. Finally, from (\ref{beq}) one obtains the smallness estimate for the renormalized $b$ in view of the right hand side already satisfying (\ref{mdG}).
\end{proof}

\subsubsection{The final estimate}
The two previous propositions yield the important
\begin{corollary} \label{cor:diffestfinal}
Let $\left(\Gamma, \psi\right)_{n}$, $\left(\Gamma, \psi\right)_{n+k}$ be two solutions arising from Theorem \ref{theo1}. Then, on $v=\left(v_\infty\right)_n$, their difference satisfies for any $u\left(\tau_0, (v_\infty)_n\right) \leq u \leq u\left(\tau_n, (v_\infty)_n\right)$
\begin{align} 
\int_{u}^{u\left(\tau_n, (v_\infty)_n\right)} du \Big\{ \|\boldsymbol\beta\|^2 w^{5-\delta} + w^4 \left( \boldsymbol\rho^2 +\boldsymbol\sigma^2 \right)
+ w^2 \| \underline{\boldsymbol\beta} \|^2 +  \|\underline{\boldsymbol\alpha}\|^2 \Big\} \sqrt{\slashed{g}} d\theta^1 d\theta^2 \nonumber \\
\leq \frac{C}{\tau^{2\delta}_n} e^{-\frac{P}{2M}\tau\left(u,\left(v_\infty\right)_n\right)} \, , \nonumber
\end{align}
where $\boldsymbol \psi_p = \left(\psi_p\right)_{n+k} - \left(\psi_p\right)_{n}$ as well as
\begin{align}
w^{2p-2}\| \left(\Gamma_p\right)_{n+k} -  \left(\Gamma_p\right)_{n} \|^2_{L^2_{\slashed{g}} \left(S^2(u,v)\right)} \leq  \frac{C}{\tau^{2\delta}_n} e^{-\frac{P}{2M}\tau\left(u,\left(v_\infty\right)_n\right)} \,.
\end{align}
In particular, the right hand sides go to zero as $n \rightarrow \infty$.
\end{corollary}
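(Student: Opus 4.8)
\textbf{Proof plan for Corollary \ref{cor:diffestfinal}.}

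The statement is essentially a combination of Propositions \ref{prop:reno} and \ref{prop:reno2} with a triangle-inequality argument, so the plan is to carefully unpack that combination. The key observation is that both solutions $(\Gamma,\psi)_n$ and $(\Gamma,\psi)_{n+k}$ are defined on $v=(v_\infty)_n$, that $(\Gamma,\psi)_n$ attains \emph{exactly} the cut-off approximate scattering data $\hat{\slashed{g}}^{dat(v_\infty)_{\tau_n}}_{AB}$ there (in the sense of Proposition \ref{prop:construction2}), and that $(\Gamma,\psi)_{n+k}$ is a solution on the larger region $\mathcal{M}\left(\tau_0,\tau_n\right)\cap\{v\le(v_\infty)_{n+k}\}$ whose restriction to $v=(v_\infty)_n$ is controlled by Propositions \ref{prop:reno} and \ref{prop:reno2}. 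The first step is therefore to record that for the solution $(\Gamma,\psi)_{n+k}$ considered in the region $\mathcal{U}$, Proposition \ref{prop:reno} gives, on $v=(v_\infty)_n\subset\partial\mathcal{U}$,
\[
\int_{u}^{u(\tau_n,(v_\infty)_n)}\!\!\! du\,\Big\{\|\beta_{n+k}\|^2 w^{5-\delta}+w^4\big((\rho-\rho_\circ-\rho^{\mathcal{I}})^2+(\sigma-\sigma^{\mathcal{I}})^2\big)+w^2\|\underline{\beta}_{n+k}-\underline{\beta}^{\mathcal{I}}\|^2+\|\underline{\alpha}_{n+k}-\underline{\alpha}^{\mathcal{I}}\|^2\Big\}\sqrt{\slashed{g}}\,d\theta^1 d\theta^2\le \frac{C}{\tau_n^{2\delta}}e^{-\frac{P}{2M}\tau},
\]
and Proposition \ref{prop:reno2} gives $w^{2p-2}\|(\Gamma_p)_{n+k}-\Gamma_p^{\mathcal{I}}\|^2_{L^2_{\slashed{g}}(S^2(u,v))}\le \frac{C}{\tau_n^{2\delta}}e^{-\frac{P}{2M}\tau}$ on the same hypersurface, using $(v_\infty)_n=\tau_n^2$.

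The second step is to obtain the analogous bound with the radiation fields $\psi^{\mathcal{I}},\Gamma^{\mathcal{I}}$ replaced by the cut-off radiation fields $\psi^{\mathcal{I}_{\tau_n}},\Gamma^{\mathcal{I}_{\tau_n}}$, which is immediate from the third estimate of Proposition \ref{prop:asymptotic} (and the Remark following Proposition \ref{prop:reno}): the discrepancy $w^{2p}\|\psi_p^{\mathcal{I}_{\tau_n}}-\psi_p^{\mathcal{I}}\|^2_{\slashed{g}^\circ}\le C\tilde{P}^n e^{-\tilde{P}u(\tau_n,v_\infty)/2M}$ is itself super-exponentially small in $\tau_n$ and hence absorbed into $C\tau_n^{-2\delta}e^{-\frac{P}{2M}\tau}$. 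One then invokes Proposition \ref{prop:construction2} / Proposition \ref{prop:asymptotic} for the solution $(\Gamma,\psi)_n$: on $v=(v_\infty)_n$ the induced geometry of $(\Gamma,\psi)_n$ is exactly the one built from the cut-off seed $\hat{\slashed{g}}^{dat\mathcal{I}_{\tau_n}}$, so $(\psi_p)_n=\psi_p^{\mathcal{I}_{\tau_n}}+Err[\psi_p]_n$ and $(\Gamma_p)_n=\Gamma_p^{\mathcal{I}_{\tau_n}}+Err[\Gamma_p]_n$ with the error terms satisfying the improved-in-$r$ bound $w^{2p+\frac12}\|Err\|^2_{\slashed{g}^\circ}\le C\tilde{P}^n e^{-\tilde{P}u/2M}$. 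Since $u\gtrsim\tau_n$ on the portion of $v=(v_\infty)_n$ lying in $\mathcal{M}(\tau_0,\tau_n)$ is \emph{false} — rather $u$ ranges down to $u(\tau_0,(v_\infty)_n)$ — one must instead keep the genuine $w^{-1/2}\sim r^{-1/2}\lesssim (v_\infty)_n^{-1/2}=\tau_n^{-1}$ gain in $r$ coming from the $Err$ terms together with the uniform exponential-in-$\tau$ decay from Theorem \ref{theo1}; this is exactly the same mechanism already used in the proof of Proposition \ref{prop:reno}, and yields $w^{2p-2}\|(\Gamma_p)_n-\Gamma_p^{\mathcal{I}_{\tau_n}}\|^2_{L^2_{\slashed{g}}(S^2(u,v))}\le C\tau_n^{-2\delta}e^{-\frac{P}{2M}\tau}$ on $v=(v_\infty)_n$, and likewise for the curvature flux integrand of $(\psi_p)_n$.

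The third and final step is the triangle inequality. Writing $\boldsymbol\psi_p=(\psi_p)_{n+k}-(\psi_p)_n$, one has pointwise
\[
\|\boldsymbol\psi_p\|\le \|(\psi_p)_{n+k}-\psi_p^{\mathcal{I}_{\tau_n}}\|+\|(\psi_p)_n-\psi_p^{\mathcal{I}_{\tau_n}}\|,
\]
so that $\|\boldsymbol\psi_p\|^2\le 2\|(\psi_p)_{n+k}-\psi_p^{\mathcal{I}_{\tau_n}}\|^2+2\|(\psi_p)_n-\psi_p^{\mathcal{I}_{\tau_n}}\|^2$; integrating against the weights $w^{5-\delta},w^4,w^2,w^0$ over $u\in[u,u(\tau_n,(v_\infty)_n)]\times S^2(u,v)$ and using Steps 1--2 for each of the two terms gives the first displayed bound of the Corollary, with the weight $5-\delta$ on $\boldsymbol\beta$ arising precisely because the $(\alpha,\beta)$ pair was estimated with weight $5-\delta$ rather than renormalized. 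The same argument with $\|(\Gamma_p)_{n+k}-\Gamma_p^{\mathcal{I}_{\tau_n}}\|$ and $\|(\Gamma_p)_n-\Gamma_p^{\mathcal{I}_{\tau_n}}\|$, integrated in $L^2_{\slashed{g}}(S^2(u,v))$ and weighted by $w^{2p-2}$, gives the second estimate. Finally $\tau(u,(v_\infty)_n)$ is bounded below on the relevant range of $u$, and $C\tau_n^{-2\delta}\to0$, so both right-hand sides tend to zero as $n\to\infty$, as claimed. The only point requiring care — and the ``main obstacle'' in the sense of bookkeeping — is keeping the $r$-weights consistent across the three pieces (solution $n$, solution $n+k$, radiation field), so that in every term one genuinely picks up either a negative power of $r(\tau_n,(v_\infty)_n)\sim\tau_n^2$ (from renormalization or from the $Err$ terms or from the $5-\delta$ weight) \emph{and} the uniform factor $e^{-\frac{P}{2M}\tau}$ from Theorem \ref{theo1}; once the weight hierarchy of (\ref{decl})--(\ref{decl3}) and Proposition \ref{prop:asymptotic} is respected, each spacetime and boundary error is integrable and the $\tau_n^{-2\delta}$ gain is uniform in $k$.
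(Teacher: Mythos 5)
Your proposal is correct and follows essentially the same route as the paper: the paper's proof simply observes that, by construction of the approximate scattering data on $v=(v_\infty)_n$, the estimates of Propositions \ref{prop:reno} and \ref{prop:reno2} (via Proposition \ref{prop:asymptotic}) also hold for the solution $(\Gamma,\psi)_n$ on that hypersurface, and then applies the triangle inequality. Your Steps 1--3 are just a more detailed unpacking of this, including the cut-off radiation fields and the $r\sim\tau_n^2$ gain from the $Err$-terms, which the paper leaves implicit.
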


\begin{proof}
Note that by construction of the approximate scattering data on $v=\left(v_\infty\right)_n$, the estimates (\ref{mdC}) and (\ref{mdG}) also hold for the the solution  $\left(\Gamma, \psi\right)_n$ on $v=\left(v_\infty\right)_n$. Then apply the triangle inequality.
\end{proof}
\subsection{Step 2: The equations for differences}
Let $g$ and $g^\dagger$ be two metrics arising from Theorem \ref{theo1}.
We define
\[
\boldsymbol\Gamma_p = \Gamma_p -\Gamma^\dagger_p \ \ \ \  and  \ \ \ \ \boldsymbol\psi_p = \psi_p - \psi^\dagger_p \ \ \ \  and  \ \ \ \ \boldsymbol G_p = G_p - G^\dagger_p
\]

\begin{remark} Note that with this definition $\hat{\underline{\boldsymbol\chi}}_{BC}=\hat{\chi}_{BC} - \hat{\chi}^\dagger_{BC}$ is not traceless with respect to $\slashed{g}$ but that the difference between this quantity and its $\slashed{g}$-traceless part is quadratic. Note also that $tr {\boldsymbol\chi}= tr \chi - tr {\chi}^\dagger$ is the difference of the traces as functions, \emph{not} the $\slashed{g}$-trace of the difference of the $\chi$-tensors. 
\end{remark}

When taking differences of the evolution equations, we will see in particular differences of the inverse metrics and differences of the Christoffel symbols $\slashed{\Gamma}$ on $S^2_{u,v}$.  Since $\slashed{g}$ and $\slashed{g}^\dagger$ are both invertible (in fact, close to the round metric), we have the matrix identity
\[
\slashed{g}^{-1} - \left(\slashed{g}^\dagger\right)^{-1} = - \slashed{g}^{-1} \left(\slashed{g} - \slashed{g}^\dagger \right)  \left(\slashed{g}^\dagger\right)^{-1}
\]
and hence
\[
\| \slashed{g}^{-1} - \left(\slashed{g}^\dagger\right)^{-1} \| \leq C\| \slashed{g} - \slashed{g}^\dagger \| \, .
\]
In view of this, we will allow ourselves to write $\boldsymbol \Gamma_1$ also to denote the difference $\slashed{g}^{-1} - \left(\slashed{g}^\dagger\right)^{-1}$. Similarly, in view of (\ref{chrel}) we will incorporate the difference $r \boldsymbol{\slashed{\Gamma}} = r \left(\slashed{\Gamma} - \slashed{\Gamma}^\dagger\right)$ into $\boldsymbol G_1$ when employing schematic notation.

Since both metrics $g$ and $g^\dagger$ are smooth and \emph{uniformly} close to Schwarzschild in the sense that by Theorem \ref{theo1} we have
\begin{align} \label{uni1}
\sum_{i=0}^k \sum_{i-perms}  \ w^{2p-2} \int_{S^2\left(u,v\right)} \| \mathfrak{D}^i \Gamma_p\|_{\slashed{g}}^2 \sqrt{\slashed{g}} d\theta^1 d\theta^2 \leq \epsilon \, 
\end{align}
\begin{align} \label{uni2}
F\left[\mathfrak{D}^k \Psi\right] \left( \{u\} \times [v,v_{fut}] \right) + F\left[\mathfrak{D}^k\psi\right] \left( \left[u,u_{fut}\right] \times \{v \}\right) \leq \epsilon \, 
\end{align}
for any $\left(u,v,\theta^1,\theta^2\right) \in \mathcal{M}\left(\tau_0,\tau_n\right) \cap \{v\leq \left(v_\infty\right)_n\} $ and 
for an $\epsilon$ that can be made small by choosing $\tau_0$ large, we have in particular uniform pointwise bounds on all quantities up to the level of three derivatives of the metric. With this in mind, we define $\lambda_{p_1}$ to denote a quantity which is schematically of the form\footnote{Here any contraction is taken with respect to the metric $g$. Note that the difference between contracting with $g$ and contracting with $\tilde{g}$ is small. One could also contract with $\tilde{g}$ and easily estimate the resulting errors.}
\begin{align}
\lambda_{p_1} = f_{p_1} +{\Gamma}^{(\dagger)}_{p_1} + (f \Gamma \cdot  {\Gamma}^{(\dagger)})_{p_1} + \psi_{p_1} + r^{p_1-1} \slashed{\Gamma} + \mathfrak{D} \Gamma_{p_1} \, .
\end{align}
In view of the above remark we have the uniform bound
\[
r^{p_1} \| \lambda_{p_1} \|_{L^\infty_{\slashed{g}}\left(S^2_{u,v}\right)} \leq C \, .
\]
The following proposition is the analogue of Proposition \ref{uneq} for differences:
\begin{proposition} \label{maind}
Let $g$ and $g^\dagger$ be two metrics arising from Theorem \ref{theo1}. Let $\Gamma_p$, $\Gamma^\dagger_p$ denote the respective collection of metric and Ricci coefficients (i.e.~$S^2_{u,v}-tensors$) and $\psi_p$, $\tilde{\psi}_p$ the collection of curvature components (also $S^2_{u,v}$-tensors). Then the differences $\boldsymbol\Gamma_p = \Gamma_p - \Gamma^\dagger_p$ and $\boldsymbol\psi_p = \psi_p - \psi^\dagger_p$ satisfy
\[
\slashed{\nabla}_3 \overset{(3)}{\boldsymbol\Gamma}_p =  \sum_{p_1+p_2\geq p} \lambda_{p_1} \boldsymbol\Gamma_{p_2} + \boldsymbol\psi_p \, ,
\]
\[
\slashed{\nabla}_4 \overset{(4)}{\boldsymbol\Gamma}_p + c [\overset{(4)}{\boldsymbol\Gamma}_p] tr \chi  \overset{(4)}{\boldsymbol\Gamma}_p =   \boxed{ \sum_{p_1+p_2\geq p+1} \lambda_{p_1} \overset{(3)}{\boldsymbol\Gamma}_{p_2}}  + \sum_{p_1+p_2\geq p+2} \lambda_{p_1}{\boldsymbol\Gamma}_{p_2} + \lambda_{p+2} \boldsymbol{\slashed{\Gamma}}+   \boldsymbol\psi_{p+\frac{3}{2}} \, .
\]
\end{proposition}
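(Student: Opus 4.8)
\textbf{Proof proposal for Proposition \ref{maind}.}

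The plan is to take the schematic null-structure equations in the form provided by Proposition \ref{uneq}, namely $\slashed{\nabla}_3 \overset{(3)}{\Gamma}_p = E_3[\overset{(3)}{\Gamma}_p]$ and $\slashed{\nabla}_4 \overset{(4)}{\Gamma}_p + c[\overset{(4)}{\Gamma}_p] tr\chi \cdot \overset{(4)}{\Gamma}_p = E_4[\overset{(4)}{\Gamma}_p]$, write down the same equations for the daggered solution $g^\dagger$, and subtract. The only subtlety is that the two equations are, strictly speaking, posed on two different metrics $\slashed{g}$ and $\slashed{g}^\dagger$, so that the angular covariant derivatives $\slashed{\nabla}_3$, $\slashed{\nabla}_4$ (which involve $\underline{\chi}$, $\chi$, $b$ and the Christoffel symbols $\slashed{\Gamma}$ via Lemma \ref{teco}) do not literally coincide. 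The first step is therefore to fix, once and for all, that all operators $\slashed{\nabla}_3$, $\slashed{\nabla}_4$, $\slashed{\nabla}$ appearing in the difference system are those of the \emph{unprimed} metric $g$; then for a daggered quantity $\xi^\dagger$ one writes $\slashed{\nabla}_3 \xi^\dagger = \slashed{\nabla}_3^\dagger \xi^\dagger + (\slashed{\nabla}_3 - \slashed{\nabla}_3^\dagger)\xi^\dagger$, and the correction term is, using Lemma \ref{teco} together with the matrix identities for $\slashed{g}^{-1}-(\slashed{g}^\dagger)^{-1}$ and $r\boldsymbol{\slashed{\Gamma}}$ recorded just before the proposition, a sum of products of $\boldsymbol\Gamma$ (or $\boldsymbol G_1 \equiv r\boldsymbol{\slashed{\Gamma}}$) with $\mathfrak{D}\xi^\dagger$, the latter being uniformly bounded by Theorem \ref{theo1}. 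Such terms are precisely of the advertised type: they contribute $\lambda_{p_1}\boldsymbol\Gamma_{p_2}$ or $\lambda_{p+2}\boldsymbol{\slashed{\Gamma}}$ after accounting for $r$-weights, since $r^{p_1}\|\lambda_{p_1}\|_{L^\infty}\le C$ holds by the uniform closeness estimates (\ref{uni1})--(\ref{uni2}) — indeed the definition of $\lambda_{p_1}$ was tailored to absorb exactly $f_{p_1}$, $\Gamma_{p_1}^{(\dagger)}$, quadratic combinations $(f\Gamma\cdot\Gamma^\dagger)_{p_1}$, curvature $\psi_{p_1}$, Christoffel differences $r^{p_1-1}\slashed{\Gamma}$, and $\mathfrak{D}\Gamma_{p_1}$.

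The second step is the genuinely algebraic one: inserting $Q = Q^\dagger + \boldsymbol Q$ into the right-hand sides $E_3$, $E_4$ of Proposition \ref{uneq}, $(\ref{basice3e4})$--$(\ref{eq:aux1})$, and collecting. A term $f_{p_1}\Gamma_{p_2}$ contributes $f_{p_1}\boldsymbol\Gamma_{p_2}$, which is $\lambda_{p_1}\boldsymbol\Gamma_{p_2}$. A quadratic term $\Gamma_{p_1}\Gamma_{p_2}$ contributes, by the product rule for differences, $\Gamma_{p_1}\boldsymbol\Gamma_{p_2} + \boldsymbol\Gamma_{p_1}\Gamma_{p_2}^\dagger$, both of which are of the form $\lambda_{p_1}\boldsymbol\Gamma_{p_2}$ (using $\Gamma$ or $\Gamma^\dagger$ inside $\lambda$). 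A curvature term $\psi_p$ contributes $\boldsymbol\psi_p$, and $\psi_{\ge p+\frac32}$ contributes $\boldsymbol\psi_{p+\frac32}$. The boxed anomalous term $\sum_{p_1+p_2=p+1}f_{p_1}\overset{(3)}{\Gamma}_{p_2}$ of $(\ref{eq:aux1})$ contributes $\sum f_{p_1}\overset{(3)}{\boldsymbol\Gamma}_{p_2}$, i.e.~the boxed term $\sum_{p_1+p_2\ge p+1}\lambda_{p_1}\overset{(3)}{\boldsymbol\Gamma}_{p_2}$ in the statement — and one should note, as in Proposition \ref{rem:anoterm}, that for the $\eta$ and $\Omega$-equations this really is just $\tfrac12 tr\chi_\circ \underline{\boldsymbol\eta}$ and $-\tfrac{1}{2M}(\widehat{\boldsymbol\omega}-\widehat{\boldsymbol\omega}_\circ)$ respectively, hence genuinely linear in $\boldsymbol\Gamma$. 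The $tr\chi\cdot\overset{(4)}{\Gamma}_p$ term on the left, when differenced, yields $tr\chi\cdot\overset{(4)}{\boldsymbol\Gamma}_p + \boldsymbol{tr\chi}\cdot(\overset{(4)}{\Gamma}_p)^\dagger$; the first stays on the left-hand side as written, the second is $\overset{(4)}{\boldsymbol\Gamma}_{2}\cdot(\overset{(4)}{\Gamma}_p)^\dagger$, i.e.~$\lambda_{p_1}\boldsymbol\Gamma_{p_2}$ with the appropriate weights, and since it is quadratic in decaying quantities it lands in the $\sum_{p_1+p_2\ge p+2}$ sum. One must also note that the difference $\boldsymbol{\hat{\underline\chi}}$ fails to be $\slashed{g}$-traceless, but the trace part is quadratic (a product of $\boldsymbol\Gamma_1$ with $\Gamma_1^\dagger$), hence again absorbed. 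Finally one checks that the $r$-weight bookkeeping is consistent: every term retains the weight index dictated by the Commutation Principle, with the quadratic and Christoffel-correction terms genuinely improving by the required power, so the $\ge p+2$ and $\lambda_{p+2}\boldsymbol{\slashed{\Gamma}}$ labels are correct.

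I do not expect a serious obstacle here; the proposition is, like Propositions \ref{uneq} and \ref{uneq2}, established by \emph{direct inspection} once the difference calculus is set up, and the proof will read ``collecting terms as above, and using $(\ref{uni1})$--$(\ref{uni2})$ to guarantee $r^{p_1}\|\lambda_{p_1}\|_{L^\infty}\le C$, yields the claimed schematic form; the general case follows identically from Section \ref{bisec} and Section \ref{sec:rc}.'' The one place requiring a little care — and the closest thing to a ``hard part'' — is the treatment of the angular-derivative correction $(\slashed{\nabla} - \slashed{\nabla}^\dagger)$, because this is where the auxiliary quantity $\boldsymbol G_1 = r\boldsymbol{\slashed{\Gamma}}$ must be introduced on equal footing with $\boldsymbol\Gamma$ (this is exactly the role anticipated for $G$ in Section \ref{auxquant}): one has to verify that $r\boldsymbol{\slashed{\Gamma}}$ can always be bounded in terms of $\boldsymbol\Gamma_1 = \boldsymbol{\slashed{g}}$ and its angular derivative via $(\ref{chrel})$, so that no derivative of $\boldsymbol{\slashed{g}}$ beyond what is controlled is actually needed, and that the ensuing loss of one derivative (relative to the uncommuted estimates for a single solution) is the only loss incurred — consistent with Remark \ref{rem:loss} and Step 3 of the convergence argument.
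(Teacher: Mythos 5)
Your proposal is correct and follows essentially the same route as the paper: subtract the two transport equations, express the connection discrepancy $(\slashed{\nabla}^\dagger_{3,4}-\slashed{\nabla}_{3,4})\Gamma_p^\dagger$ via Lemma \ref{teco} (with the metric-inverse and Christoffel-difference identities), and absorb all coefficients into $\lambda_{p_1}$ — including the derivative terms, which is exactly why $\mathfrak{D}\Gamma_{p_1}$ and $\psi_{p_1}$ appear in the definition of $\lambda_{p_1}$ (the paper phrases this as "re-inserting the transport equation for $\partial_u\Gamma_p^\dagger$", which is the same cancellation of lapse factors you invoke when writing the correction as $\boldsymbol\Gamma\cdot\mathfrak{D}\xi^\dagger$). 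The term-collection, anomalous boxed term, $tr\chi$-difference, tracelessness remark and $r$-weight bookkeeping all match the paper's (very brief) argument.
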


\begin{proof}
Let us drop the superscripts $(3)$ and $(4)$ during the proof.
Subtracting the two transport equations in the $3$-direction we obtain
\[
\slashed{\nabla}_3 \left(\Gamma_p -\Gamma^\dagger_p\right) = \left(\slashed{\nabla}^\dagger_3 - \slashed{\nabla}_3 \right)   \Gamma^\dagger_p + \sum_{p_1+p_2\geq p} \left(f_{p_1} + \Gamma^{(\dagger)}_{p_1}\right) \boldsymbol\Gamma_{p_2} + \boldsymbol\psi_p \, .
\]
For the first term, we have from Lemma \ref{teco} the expression
\[
\left(\slashed{\nabla}^\dagger_3 - \slashed{\nabla}_3 \right) \Gamma^\dagger_p = \left(\frac{1}{(\Omega^\dagger)^2} - \frac{1}{\Omega^2} \right) \cdot \partial_u \Gamma^\dagger_p + \left[ \left((g^\dagger)^{-1}\right) \underline{\chi}^\dagger - \left({g}^{-1}\right) {\underline{\chi}} \right] \Gamma^\dagger_p \, .
\]
Finally, we can insert back the transport equation for $\partial_u \Gamma^\dagger_p$ and write the square bracket as a sum of differences of $\Gamma$'s leading to the first equation of the Proposition. 

In the $4$-direction we proceed similarly. This time we are led to consider
\[
\left(\slashed{\nabla}^\dagger_4 - \slashed{\nabla}_4 \right)   \Gamma^\dagger_p = \left(b^\dagger -b \right) \partial_{\theta}  \Gamma^\dagger_p + \left[ \left((g^\dagger)^{-1}\right) {\chi}^\dagger - \left({g}^{-1}\right) {{\chi}} \right] \Gamma^\dagger_p + \left[ b^\dagger \slashed{\Gamma}^\dagger - b \slashed{\Gamma} \right] \Gamma^\dagger_p \, .
\]
Again, we can write the square-brackets as a sum of differences which yields the second equation of the Proposition.
\end{proof}

In view of the fact that the difference of the Christoffel-symbols appears in the equations for the difference, we will also need the equations for the difference of the auxiliary quantities $\boldsymbol{G}_i = G_i - G^\dagger_i$ defined in Section \ref{auxquant}.

\begin{proposition}
We have the equations
\begin{align} \label{auxd1}
\slashed{\nabla}_3 \boldsymbol{G}_1 = \slashed{\nabla}_3 \left(r\slashed{\nabla}_A \slashed{g}^\circ_{BC} - r \slashed{\nabla}^\dagger_A \slashed{g}^\circ_{BC}  \right) = \sum_{p_1+p_2\geq 1} \lambda_{p_1} \left(\boldsymbol\Gamma_{p_2} + \boldsymbol G_{p_2}\right) + 2 r \slashed{\nabla}_A\hat{\underline{ \boldsymbol\chi}}_{BC}
\end{align}
\begin{align} \label{auxd2}
\slashed{\nabla}_3 \boldsymbol{G}_2 = \slashed{\nabla}_3 \left(r\slashed{\nabla}_A tr \underline{\chi} - r{\slashed{\nabla}}_A tr \underline{\chi}^\dagger  \right) = \sum_{p_1+p_2\geq 2} \lambda_{p_1} \left(\boldsymbol\Gamma_{p_2} + \boldsymbol G_{p_2}\right) + 2 r \hat{\underline{\chi}} \slashed{\nabla}_A \hat{\underline{ \boldsymbol\chi}} \, . 
\end{align}
In addition, we have the elliptic equation for the difference
\begin{align} \label{ellipticd}
\left(\slashed{g}^{-1}\right)^{AB} \slashed{\nabla}_A \hat{\underline{\boldsymbol\chi}}_{BC} = \frac{1}{2} \slashed{\nabla}_A tr \underline{\boldsymbol\chi}  + \underline{\boldsymbol \beta} + \sum_{p_1+p_2\geq 2}  \lambda_{p_1} \boldsymbol \Gamma_{p_2} +  \hat{\underline{\chi}} \boldsymbol{\slashed{\Gamma}}
\end{align}

\end{proposition}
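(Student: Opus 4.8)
\textbf{Proof strategy for the difference equations (Proposition, equations (\ref{auxd1})--(\ref{ellipticd})).}

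The plan is to derive all three equations by directly subtracting the corresponding equations for $g$ and $g^\dagger$ and carefully tracking which terms in the difference are genuinely ``linear'' (schematically $\lambda_{p_1}\boldsymbol\Gamma_{p_2}$ or $\lambda_{p_1}\boldsymbol G_{p_2}$, i.e.\ a bounded-in-$L^\infty$ background or background-like factor $\lambda$ times a difference quantity) and which carry an explicit angular derivative of $\hat{\underline{\boldsymbol\chi}}$ that we are choosing \emph{not} to absorb into $\boldsymbol G$. The key structural input, already recorded in Section~\ref{auxquant}, is the pair of transport equations
\[
\slashed{\nabla}_3 \left(r \slashed{\nabla}_A \slashed{g}^\circ_{BC} \right) = \sum_{p_1+p_2 \geq p} \left[f \left(\Gamma + \Gamma \Gamma \right)\right]_{p_1} \left(G_{p_2} + \Gamma_{p_2}\right) + 2 r \slashed{\nabla}_A \hat{\underline{\chi}}_{BC} + \Gamma_1 \cdot r \slashed{\nabla} \hat{\underline{\chi}}
\]
and the analogous one for $r\slashed{\nabla}_A tr\chi$ (which, recalling the identification $G_2 \equiv r\slashed\nabla_A tr\underline\chi$, is the relevant one here), together with the elliptic equation (\ref{elliptic}) of the appendix relating $\slashed{\rm div}\,\hat{\underline\chi}$ to $\slashed\nabla tr\underline\chi$ and $\underline\beta$. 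First I would write the $g$-equation and the $g^\dagger$-equation side by side and subtract. The top-order term $2r\slashed\nabla_A\hat{\underline\chi}_{BC}$ produces exactly $2r\slashed\nabla_A\hat{\underline{\boldsymbol\chi}}_{BC}$ plus a commutator-type error $2r(\slashed\nabla_A - \slashed\nabla^\dagger_A)\hat{\underline\chi}^\dagger_{BC}$; the latter is a difference of Christoffel symbols times $\hat{\underline\chi}^\dagger$, hence of the form $\boldsymbol{\slashed\Gamma}\cdot\Gamma_1^{(\dagger)}$, which (via (\ref{chrel}) and the convention identifying $r\boldsymbol{\slashed\Gamma}$ with part of $\boldsymbol G_1$) fits the schematic $\lambda_{p_1}(\boldsymbol\Gamma_{p_2}+\boldsymbol G_{p_2})$ with $p_1+p_2\geq 1$. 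Likewise the operator difference $(\slashed\nabla_3 - \slashed\nabla^\dagger_3)$ acting on $r\slashed\nabla^\dagger_A\slashed g^\circ_{BC}$ contributes $(\Omega^{-2} - (\Omega^\dagger)^{-2})\partial_u(\ldots)$ and $(\slashed g^{-1}\underline\chi - (\slashed g^\dagger)^{-1}\underline\chi^\dagger)\cdot(\ldots)$, and after reinserting the transport equation for $\partial_u$ of the relevant quantity these are all schematically $\lambda_{p_1}(\boldsymbol\Gamma_{p_2}+\boldsymbol G_{p_2})$. Finally the genuine right-hand side of the original equation, being a sum of $f(\Gamma+\Gamma\Gamma)(G+\Gamma)$ terms, produces under subtraction (by the usual ``difference of a product'' identity, using the uniform bounds (\ref{uni1})) exactly such schematic terms with the stated weight. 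For (\ref{auxd2}) the only change is that the inhomogeneity $-2\hat{\underline\chi}\cdot r\slashed\nabla\hat{\underline\chi}$ of the $r\slashed\nabla_A tr\underline\chi$ equation is quadratic in $\hat{\underline\chi}$, so its difference is $\hat{\underline\chi}\cdot r\slashed\nabla\hat{\underline{\boldsymbol\chi}} + (r\slashed\nabla\hat{\underline\chi}^\dagger)\cdot\hat{\underline{\boldsymbol\chi}}$; the first is the displayed $2r\hat{\underline\chi}\slashed\nabla_A\hat{\underline{\boldsymbol\chi}}$ (up to a harmless relabelling of the coefficient) and the second is $\lambda_{p_1}\boldsymbol\Gamma_{p_2}$ with $p_1+p_2\geq 2$ since $r\slashed\nabla\hat{\underline\chi}^\dagger$ is a bounded $\lambda_1$ and $\hat{\underline{\boldsymbol\chi}}=\boldsymbol\Gamma_1$.

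For the elliptic equation (\ref{ellipticd}) I would start from the Codazzi-type identity (\ref{elliptic}) of Appendix~\ref{appendix}, which schematically reads $\slashed{\rm div}\,\hat{\underline\chi} = \tfrac12\slashed\nabla tr\underline\chi + \underline\beta + \Gamma\cdot\Gamma$ (with one of the $\Gamma$-factors possibly being $\hat{\underline\chi}$ itself, and there being a lower-order $\slashed\Gamma$-type piece hidden in the $\slashed{\rm div}$). Subtracting the $g^\dagger$-version, the divergence on the left becomes $(\slashed g^{-1})^{AB}\slashed\nabla_A\hat{\underline{\boldsymbol\chi}}_{BC}$ plus an operator-difference error; that error is a difference of inverse metrics times $\slashed\nabla\hat{\underline\chi}^\dagger$ (which is $\lambda_1\boldsymbol\Gamma_1$-type) together with the Christoffel-difference term $\hat{\underline\chi}^\dagger\cdot\boldsymbol{\slashed\Gamma}$, displayed as $\hat{\underline\chi}\,\boldsymbol{\slashed\Gamma}$ up to quadratic corrections. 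The right-hand side contributes $\tfrac12\slashed\nabla_A tr\underline{\boldsymbol\chi} + \underline{\boldsymbol\beta}$ directly, and the quadratic $\Gamma\cdot\Gamma$ contributes $\sum_{p_1+p_2\geq 2}\lambda_{p_1}\boldsymbol\Gamma_{p_2}$ after expanding the difference of a product and using the uniform pointwise bounds. Collecting everything gives precisely the stated form.

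The main obstacle — though it is conceptual bookkeeping rather than a genuine difficulty — is to make sure that no term of weight \emph{lower} than advertised sneaks in: in particular that the operator-difference errors, which always involve $\partial_u$, $\partial_v$, or $\partial_\theta$ of $\Gamma^\dagger_p$ or of $\hat{\underline\chi}^\dagger$, really can be re-expressed (by reinserting the evolution/elliptic equations for $\Gamma^\dagger$) as products of $\lambda$'s with \emph{difference} quantities $\boldsymbol\Gamma$ or $\boldsymbol G$, and never as bare un-differenced quantities. This is exactly why the auxiliary collection $\boldsymbol G = \{\boldsymbol G_1,\boldsymbol G_2\}$ is introduced: the differenced Christoffel symbols $r\boldsymbol{\slashed\Gamma}$ are genuinely first-order in $\slashed g - \slashed g^\dagger$ and would cost a derivative if treated naively, but the transport equations (\ref{auxd1})--(\ref{auxd2}) together with the elliptic improvement (\ref{ellipticd}) let us close for them at the same regularity level as for $\boldsymbol\Gamma$, $\boldsymbol\psi$. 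The only place one must be slightly careful about weights is the $\lambda_{p+2}\boldsymbol{\slashed\Gamma}$ term in Proposition~\ref{maind} (equivalently $\lambda_{p_1}\boldsymbol G_{p_2}$ with $p_2=1$): one checks using $r^{p_1}\|\lambda_{p_1}\|_{L^\infty}\leq C$ that the product has the claimed $p+2$ weight, which is precisely what is needed for the subsequent Grönwall/energy argument of Section~\ref{sec:convergence} to absorb it.
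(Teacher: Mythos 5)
Your proposal is correct and follows essentially the same route as the paper: the paper's own (very terse) proof simply takes differences of the transport equations of Section~\ref{auxquant} and of the elliptic equation (\ref{elliptic}), expands the operator differences via Lemma~\ref{teco}, and reinserts the evolution equations exactly as in the proof of Proposition~\ref{maind} — which is precisely what you spell out, including the correct bookkeeping for the Christoffel-difference terms absorbed into $\boldsymbol G_1$ and the splitting of the quadratic inhomogeneity in (\ref{auxd2}).
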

\begin{remark}
Note that in view of the estimate (\ref{chrel}) we identify $\boldsymbol{G_1} = r  \boldsymbol{\slashed{\Gamma}} $.
\end{remark}
\begin{proof}
This follows by taking the differences of the equations derived in Section \ref{auxquant} and using Lemma \ref{teco}, the computation proceeding in complete analogy to that of the previous proposition.
\end{proof}

Finally, we need the equations for curvature differences, i.e.~the analogue of Proposition \ref{uneq2}. For this we define (in analogy with $\lambda_{p_1}$) a quantity which depends on three derivatives of the metric. In particular, it depends on first derivatives of the curvature components: 
\[
\kappa_{p_1} = f_{p_1} + \Gamma_{p_1} +(f \Gamma \Gamma^{(\dagger)} )_{p_1} + \psi^\dagger_{p_1} +  \mathfrak{D} \psi^\dagger_{p_1}
\]
In any case, since both $g$ and $g^\prime$ are uniformly close to Schwarzschild  in the sense of (\ref{uni1}) and (\ref{uni2}), we have from Sobolev 
\[
r^{p_1} \| \kappa_{p_1} \|_{L^\infty_{\slashed{g}} \left(S^2_{u,v} \right)} \leq C \, .
\]

\begin{proposition} \label{prop:bianchidiff}
The Bianchi pairs of curvature differences $\left(\boldsymbol\uppsi_p,\boldsymbol\uppsi_{p^\prime}^\prime \right)$ satisfy
\[
\slashed{\nabla}_3 \boldsymbol\uppsi_p = \slashed{\mathcal{D}} \boldsymbol\uppsi^\prime_{p^\prime} + E_3 \left[\boldsymbol\psi_p\right] 
\]
with
\begin{align}
E_3 \left[\boldsymbol\uppsi_p\right] =  \sum_{p_1+p_2\geq p} \left( \lambda_{p_1} \boldsymbol\psi_{p_2} + \kappa_{p_1+1}  \boldsymbol\Gamma_{p_2} \right) + \lambda_{p^\prime}  \boldsymbol{\slashed{\Gamma}}
\end{align}
and 
\[
\slashed{\nabla}_4 \boldsymbol\uppsi^\prime_{p^\prime} + \gamma_4 \left(\boldsymbol\uppsi^\prime_{p^\prime}\right)  tr \chi \ \boldsymbol\uppsi^\prime_{p^\prime}  = \slashed{\mathcal{D}} \boldsymbol\uppsi_p + E_{4} \left[\boldsymbol\uppsi^\prime_{p^\prime}\right]
\]
with
\begin{align}
E_{4} \left[\boldsymbol\uppsi^\prime_{p^\prime}\right] =  \sum_{p_1+p_2\geq p^\prime + \frac{3}{2}} \left( \lambda_{p_1} \boldsymbol\psi_{p_2} + \kappa_{p_1}  \boldsymbol\Gamma_{p_2} \right) + \lambda_{p}  \boldsymbol{\slashed{\Gamma}}  \, .
\end{align}
\end{proposition}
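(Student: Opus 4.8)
The statement to be proven is Proposition~\ref{prop:bianchidiff}, which gives the schematic form of the Bianchi equations satisfied by the differences $\boldsymbol\uppsi_p = \uppsi_p - \uppsi^\dagger_p$ of curvature components of two solutions $g$, $g^\dagger$ arising from Theorem~\ref{theo1}. The approach is exactly parallel to the proof of Proposition~\ref{maind} for the difference of the null structure equations: one subtracts the two copies of the (uncommuted) Bianchi equations from Proposition~\ref{uneq2}, carefully tracks the terms that arise because the \emph{operators} $\slashed{\nabla}_3$, $\slashed{\nabla}_4$, $\slashed{\mathcal{D}}$ themselves differ between the two solutions, and repackages everything into the schematic buckets $\lambda_{p_1}$, $\kappa_{p_1}$ defined immediately before the proposition.

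First I would fix a Bianchi pair $(\uppsi_p, \uppsi^\prime_{p^\prime})$ and write out
\[
\slashed{\nabla}_3 \uppsi_p = \slashed{\mathcal{D}}\uppsi^\prime_{p^\prime} + E_3[\uppsi_p], \qquad
\slashed{\nabla}^\dagger_3 \uppsi^\dagger_p = \slashed{\mathcal{D}}^\dagger\uppsi^{\prime\dagger}_{p^\prime} + E_3[\uppsi^\dagger_p],
\]
using Proposition~\ref{uneq2}, and subtract. The ``honest'' curvature difference $\slashed{\nabla}_3\boldsymbol\uppsi_p$ is produced by adding and subtracting $\slashed{\nabla}_3\uppsi^\dagger_p$, so the error splits into $(\slashed{\nabla}_3 - \slashed{\nabla}^\dagger_3)\uppsi^\dagger_p$, the difference of the angular-operator terms $\slashed{\mathcal{D}}\uppsi^\prime_{p^\prime} - \slashed{\mathcal{D}}^\dagger\uppsi^{\prime\dagger}_{p^\prime}$, and $E_3[\uppsi_p] - E_3[\uppsi^\dagger_p]$. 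For the first, Lemma~\ref{teco} gives
\[
(\slashed{\nabla}_3 - \slashed{\nabla}^\dagger_3)\uppsi^\dagger_p = \Big(\tfrac{1}{\Omega^2} - \tfrac{1}{(\Omega^\dagger)^2}\Big)\partial_u\uppsi^\dagger_p + \big[(\slashed{g}^{-1})\underline\chi - ((\slashed{g}^\dagger)^{-1})\underline\chi^\dagger\big]\uppsi^\dagger_p,
\]
and one reinserts the transport equation for $\partial_u\uppsi^\dagger_p$ (which is $\slashed{\nabla}^\dagger_3\uppsi^\dagger_p$ plus lower-order $\underline\chi^\dagger\uppsi^\dagger$ terms, hence bounded by $\kappa$ and $\psi^\dagger$ data), writes the square bracket as a sum of $\boldsymbol\Gamma_{p_2}$'s (using the matrix identity $\slashed{g}^{-1}-(\slashed{g}^\dagger)^{-1} = -\slashed{g}^{-1}(\slashed{g}-\slashed{g}^\dagger)(\slashed{g}^\dagger)^{-1}$ quoted just before the proposition), and absorbs the conformal factor difference into $\boldsymbol\Gamma_3$. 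For the difference of angular operators one uses the analogue of the last formula of Lemma~\ref{teco}: $\slashed{\nabla}_A - \slashed{\nabla}^\dagger_A$ acting on $\uppsi^{\prime\dagger}$ produces $(\slashed{\Gamma} - \slashed{\Gamma}^\dagger)\uppsi^{\prime\dagger}$, which is precisely where the term $\lambda_{p^\prime}\boldsymbol{\slashed{\Gamma}}$ comes from (with the weight $p^\prime$ dictated by the decay of $\uppsi^{\prime\dagger}$, since the coefficient is $\psi^\dagger_{p^\prime}$). Finally $E_3[\uppsi_p] - E_3[\uppsi^\dagger_p]$ is expanded term by term from~(\ref{e3e}): each bilinear $\Gamma\cdot\psi$ or $f\Gamma$ or $f\uppsi$ is turned into a difference by the standard $ab - a^\dagger b^\dagger = a\boldsymbol{b} + \boldsymbol{a}b^\dagger$ trick, producing exactly the bucket $\sum_{p_1+p_2\ge p}(\lambda_{p_1}\boldsymbol\psi_{p_2} + \kappa_{p_1+1}\boldsymbol\Gamma_{p_2})$ — the shift to $\kappa_{p_1+1}$ in the $\boldsymbol\Gamma$-slot accounting for the fact that $f_3\Gamma_2$-type terms in $E_3$ have a coefficient one power of $r$ better than a generic $\lambda$. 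The $\slashed{\nabla}_4$ equation is handled identically, now using the $b^\dagger - b$ and $b^\dagger\slashed{\Gamma}^\dagger - b\slashed{\Gamma}$ contributions from Lemma~\ref{teco}'s $\slashed{\nabla}_4$ formula and the error structure~(\ref{e4e}); the weight $p^\prime + \tfrac32$ in the sum is inherited from Proposition~\ref{uneq2}, and the $\gamma_4(\uppsi^\prime_{p^\prime})tr\chi$ term on the left is produced by adding and subtracting $\gamma_4 tr\chi^\dagger\uppsi^{\prime\dagger}_{p^\prime}$, the residual $\gamma_4 tr\boldsymbol\chi\,\uppsi^{\prime\dagger}$ being absorbed into $E_4[\boldsymbol\uppsi^\prime_{p^\prime}]$ as a $\lambda_1\boldsymbol\Gamma_{p^\prime}$-term.

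The main obstacle — really the only thing requiring care rather than bookkeeping — is verifying that all error terms land in the \emph{correct} weight class, i.e.\ that the $r$-decay advertised in the buckets is genuinely respected, in particular that no term worse than $\lambda_{p_1}\boldsymbol\psi_{p_2}$ with $p_1+p_2 \ge p^\prime + \tfrac32$ appears in $E_4[\boldsymbol\uppsi^\prime_{p^\prime}]$ (cf.\ the discussion in Remark~\ref{rem:comstructure} and the cancellations exploited there). This is where the $\mathfrak{D}\Gamma$-terms hiding inside $\lambda_{p_1}$ and the $\mathfrak{D}\psi^\dagger$-terms inside $\kappa_{p_1}$ must be checked to have the claimed uniform $r$-weighted $L^\infty$ bounds $r^{p_1}\|\lambda_{p_1}\|_{L^\infty} \le C$, $r^{p_1}\|\kappa_{p_1}\|_{L^\infty} \le C$, which follow from the uniform estimates~(\ref{uni1})--(\ref{uni2}) of Theorem~\ref{theo1} together with the Sobolev inequalities of Lemma~\ref{lem:sob}; here the loss of one derivative relative to Proposition~\ref{uneq2} (since $\kappa$ contains $\mathfrak{D}\psi^\dagger$) is unavoidable and harmless, as noted in Remark~\ref{rem:loss} and in Step~3 of the overview. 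Once these weight checks are done, the proposition follows by direct inspection of all equations in Section~\ref{bisec}, exactly as in the one-solution case.
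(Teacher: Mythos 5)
Your proposal is correct and takes essentially the same route as the paper's own proof: subtract the two copies of the Bianchi system, compute the differences of the operators $\slashed{\nabla}_3$, $\slashed{\nabla}_4$, $\slashed{\mathcal{D}}$ via Lemma \ref{teco} (this is where the $\boldsymbol{\slashed{\Gamma}}$-terms and the inverse-metric-difference terms appear, hence the derivative loss recorded in $\kappa$), reinsert the transport equations, and rewrite all square brackets as sums of differences before checking the $r$-weights. Only two of your bucket labels are off, though harmlessly: the conformal-factor difference is a $\boldsymbol\Gamma_2$ (not $\boldsymbol\Gamma_3$), and the residual $\gamma_4\, tr\boldsymbol\chi\, \uppsi^{\prime\dagger}_{p^\prime}$ is a $\lambda_{p^\prime}\boldsymbol\Gamma_{2}$-term of total weight $p^\prime+2\geq p^\prime+\tfrac{3}{2}$ rather than a $\lambda_1\boldsymbol\Gamma_{p^\prime}$-term (which would not fit the stated bucket) — in both cases the correct labels land the terms inside the claimed sums, so the argument stands.
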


\begin{remark} \label{rem:loss}
Note the loss of derivatives as the error depends on derivatives of curvature via the  $\kappa_{p_1}$.  Note also that in comparison with Proposition \ref{uneq2}, the $\boldsymbol{\slashed{\Gamma}}$-terms on the right hand side are new.
\end{remark}

\begin{proof}
For the $3$-equation we will have to estimate
\[
\left(\slashed{\nabla}^\dagger_3 - \slashed{\nabla}_3 \right)  \uppsi^\dagger_p =\left(\frac{1}{(\Omega^\dagger)^2} - \frac{1}{\Omega^2} \right) \cdot \partial_u \uppsi^\dagger_p + \left[ \left((g^\dagger)^{-1}\right) \underline{\chi}^\dagger - \left({g}^{-1}\right) {\underline{\chi}} \right] \uppsi^\dagger_p
\]
and (if the angular operator is $\slashed{div}$, for instance)
\[
\left[(g^\dagger)^{-1}\slashed{\nabla}^\dagger - {g}^{-1} {\slashed{\nabla}} \right] \uppsi_{p^\prime}^{\prime}= \left(\tilde{g}^{-1} - {g}^{-1} \right) \slashed{\nabla} \uppsi_{p^\prime}^{\prime} + {g}^{-1} \boldsymbol{\slashed{\Gamma}}  \uppsi_{p^\prime}^{\prime} \, .
\] 
The other angular operators are similar. In the $4$ direction we have to estimate
\[
\left(\slashed{\nabla}^\dagger_4 - \slashed{\nabla}_4 \right)  \uppsi_{p^\prime}^{\prime} = \left(b^\dagger -b \right) \partial_{\theta} \uppsi_{p^\prime}^{\prime} + \left[ \left((g^\dagger)^{-1}\right) {\chi}^\dagger - \left({g}^{-1}\right) {{\chi}} \right] \uppsi_{p^\prime}^{\prime} + \left[b^\dagger \slashed{\Gamma}^\dagger - b \slashed{\Gamma} \right] \uppsi_{p^\prime}^{\prime} \, ,
\]
which after writing the square brackets as sums of differences is of the desired form.
\end{proof}
\subsection{Step 3: Estimating differences} \label{sec:diffest}
We now redo the estimates proven for the individual solutions in Section \ref{sec:bootstrap}, for the differences of solutions. The idea is to bootstrap a slightly weaker exponential decay (replacing $P$ by $P/2$) and to use the gain as a smallness factor for the convergence.

\begin{proposition} \label{prop:conv}
Let $\left(\Gamma,\psi\right)_n$ be the (smooth) Einstein metric arising from the initial data set $D_{\tau_n,(v_\infty)_n}$ and $\left(\Gamma,\psi\right)_{n+k}$ be the (smooth) Einstein metric arising from $D_{\tau_{n+k},\left(v_\infty\right)_{n+k}}$. 
Then, the differences of the metric- and Ricci-coefficients satisfy for any $\left(u,v,\theta_1,\theta_2\right) \in \mathcal{M}\left(\tau_0,\infty\right) \cap \{v \leq (v_\infty)_n \}$ the estimates
\begin{align} \label{boot1}
w^{2p-2} \int_{S^2_{u,v}} \left[ \|\boldsymbol\Gamma_{p}\|^2 + \|\boldsymbol G_p\|^2 \right] \sqrt{\slashed{g}} d\theta_1 d\theta_2 \leq C  \frac{1}{\tau_n^{2\delta}} \cdot e^{-\frac{P}{4M}\tau\left(u,v\right)} \, ,
\end{align}
\begin{align} \label{boot2}
\int_{u}^{u_{fut}} d\bar{u} \ w^2 \int_{S^2_{\bar{u},v}} \left[ \| \slashed{\nabla} \boldsymbol{\hat{\underline{\chi}}} \|^2 \right] \sqrt{\slashed{g}} d\theta_1 d\theta_2 \leq  C  \frac{1}{\tau_n^{2\delta}}  \cdot e^{-\frac{P}{4M}\tau\left(u,v\right)}  \, .
\end{align}
The difference of curvature satisfies
\begin{align} \label{boot3}
F_\delta\left[\boldsymbol \psi\right] \left(\{u\} \times \left[v,v_{fut}\right] \right) + F_\delta\left[\boldsymbol \psi\right] \left( \left[u,u_{fut}\right] \times \{v\} \right) \leq C  \frac{1}{\tau_n^{2\delta}} \cdot e^{-\frac{P}{4M}\tau\left(u,v\right)}
\end{align}
and
\begin{align} \label{boot4}
\mathcal{E}_\delta\left[\boldsymbol \psi \right] \left(\tau, 2M,r\left(\tau,(v_\infty)_n\right) \right) \leq  C  \frac{1}{\tau_n^{2\delta}}  \cdot e^{-\frac{P}{4M}\tau\left(u,v\right)} \, 
\end{align}
for uniform constant $C$ depending only on $M$. Here $F_\delta$ and $\mathcal{E}_\delta$ are the familiar norms defined in Section \ref{sec:norms} except that the weight $w^5$ is replaced by $w^{5-\delta}$ everywhere. 
\end{proposition}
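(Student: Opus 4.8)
\textbf{Proof strategy for Proposition \ref{prop:conv}.}
The plan is to run a continuity argument in $\tau$ entirely parallel to the one used to prove Theorem \ref{theo1} (Sections \ref{sec:logic}--\ref{sec:ricciimprove}), but now applied to the system for differences derived in Propositions \ref{maind}--\ref{prop:bianchidiff}, and bootstrapping the \emph{weaker} rate $e^{-P\tau/4M}$ with the extra small factor $\tau_n^{-2\delta}$ on the right-hand sides of (\ref{boot1})--(\ref{boot4}). First I would set up the bootstrap region $\mathcal{M}(\bar\tau,\tau_n,(v_\infty)_n)$ and assume (\ref{boot1})--(\ref{boot4}) hold there with a suitable large constant in place of $C$; the non-emptiness near $\tau=\tau_n$ follows from the fact that both solutions are exactly Schwarzschildean on $\Sigma_{\tau_n}$ (so the difference and all its derivatives vanish there) together with local well-posedness as in Step 1 of Section \ref{sec:logic}. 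The Sobolev machinery of Sections \ref{sec:sob1}--\ref{sec:sob2} carries over verbatim, since the auxiliary closeness (\ref{decl4}) for each individual solution is already available from Theorem \ref{theo1}, so one controls $L^\infty$ and $L^4$ norms of $\boldsymbol\Gamma$, $\boldsymbol G$ and $\boldsymbol\psi$ on spheres and cones from the bootstrapped $L^2$ quantities.

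The heart of the argument is to re-derive the two key propositions. For curvature I would contract the difference Bianchi equations of Proposition \ref{prop:bianchidiff}, with the same weights $q=5-\delta,4,2,0$ assigned to the four Bianchi pairs, against $w^q\boldsymbol\uppsi_p$ and $w^q\boldsymbol\uppsi'_{p'}$, and integrate over $\mathcal{M}(\tau,\tau_n,(v_\infty)_n)$ exactly as in the proof of Proposition \ref{prop:improve1}. The boundary contributions on $\Sigma_{\tau_n}$ vanish (Schwarzschild there), the horizon flux vanishes since both solutions realise the \emph{same} horizon data, and the flux on $v=(v_\infty)_n$ is bounded by $C\tau_n^{-2\delta}e^{-P\tau/2M}$ by Corollary \ref{cor:diffestfinal}; this is precisely where the $\tau_n^{-2\delta}$ smallness and the use of the weaker weight $w^{5-\delta}$ for the $(\alpha,\beta)$-pair enter. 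The error terms split, as in Section \ref{errorest}, into (i) terms quadratic in the bootstrapped difference quantities with $f$- or $\Gamma$-coefficients, which are Gronwall-type and absorbed by choosing $P$ large, (ii) the new $\boldsymbol{\slashed\Gamma}$-terms $\lambda_p\boldsymbol{\slashed\Gamma}$ which by the identification $\boldsymbol G_1\equiv r\boldsymbol{\slashed\Gamma}$ and the $r$-weights of $\lambda_p$ are controlled by the $\boldsymbol G_1$ part of (\ref{boot1}), and (iii) genuinely quadratic-in-smallness terms involving $\kappa_{p_1}\boldsymbol\Gamma_{p_2}$ or $\kappa_{p_1}\boldsymbol\psi_{p_2}$, where the uniform $L^\infty$ bound $r^{p_1}\|\kappa_{p_1}\|\le C$ together with the $L^\infty$ closeness $\epsilon$ of the backgrounds supplies the smallness. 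For the Ricci and auxiliary coefficients I would integrate the transport equations of Proposition \ref{maind} and equations (\ref{auxd1})--(\ref{auxd2}) along the $3$- and $4$-directions using Lemmas \ref{lem:3dir} and \ref{lem:4dir} exactly as in Proposition \ref{prop:improve2}; the anomalous boxed terms $\sum\lambda_{p_1}\overset{(3)}{\boldsymbol\Gamma}_{p_2}$ are handled by (\ref{vrefe2}) and Cauchy--Schwarz as before, and the data on $v=(v_\infty)_n$ and $\Sigma_{\tau_n}$ enter through Corollary \ref{cor:diffestfinal} (again with the $\tau_n^{-2\delta}$ gain). The estimate (\ref{boot2}) for $\slashed\nabla\boldsymbol{\hat{\underline\chi}}$ along ingoing cones is obtained by using the elliptic equation (\ref{ellipticd}) to trade the angular derivative of $\boldsymbol{\hat{\underline\chi}}$ for $\slashed\nabla tr\boldsymbol{\underline\chi}$, $\underline{\boldsymbol\beta}$ and lower-order difference terms — this is the role of the auxiliary quantities $\boldsymbol G$ and is what allows closing without a further commutation, as flagged in footnote \ref{footdif}.

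\textbf{The main obstacle.} I expect the delicate point to be the same structural one highlighted in Remark \ref{rem:comstructure}: in the difference Bianchi equations the coefficient of $\boldsymbol\psi'_{p'}$ in the $4$-direction error $E_4[\boldsymbol\uppsi'_{p'}]$ must decay like $r^{-2}$ (not $r^{-1}$), for otherwise the spacetime integral $\int f_1\|\boldsymbol\psi'_{p'}\|^2$ diverges logarithmically in $v$ near $(v_\infty)_n$, since the flux of $\boldsymbol\psi'_{p'}$ is only controlled on constant-$v$ slices. One must therefore check that the renormalisation structure of Proposition \ref{prop:bianchidiff} (the sum in $E_4$ starting at $p'+\tfrac32$) is genuinely preserved for differences, and in particular that the new $\boldsymbol{\slashed\Gamma}$-type terms do not destroy it; this is precisely what the form $\lambda_p\boldsymbol{\slashed\Gamma}$ with the $r^{-p}$ weight on $\lambda_p$ guarantees, but verifying it term by term against the boxed terms of Section \ref{bisec} requires care. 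Once (\ref{boot1})--(\ref{boot4}) are improved by a definite factor on $\mathcal{M}(\bar\tau,\tau_n,(v_\infty)_n)$, openness of the continuity set follows from local well-posedness (Choquet-Bruhat and Rendall as in Step 2 of Section \ref{sec:logic}) applied to both metrics, and the bootstrap closes, establishing the proposition with a constant $C$ depending only on $M$.
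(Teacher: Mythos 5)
Your overall strategy coincides with the paper's: bootstrap (\ref{boot1})--(\ref{boot4}) at the relaxed rate $e^{-\frac{P}{4M}\tau}$ with the $\tau_n^{-2\delta}$ prefactor, improve the curvature quantities through the energy identity for the difference Bianchi system of Proposition \ref{prop:bianchidiff} with the $(\alpha,\beta)$-weight lowered to $5-\delta$, absorb the new $\lambda\,\boldsymbol{\slashed{\Gamma}}$-terms through the $\boldsymbol G_1$ part of (\ref{boot1}), recover (\ref{boot2}) from the elliptic relation (\ref{ellipticd}), and then integrate the difference transport equations for $\boldsymbol G$ and $\boldsymbol\Gamma$ via Lemmas \ref{lem:3dir} and \ref{lem:4dir} --- this is exactly Steps 1--4 of the paper's proof, including the use of Corollary \ref{cor:diffestfinal} for the flux on $v=(v_\infty)_n$.

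However, your accounting of the remaining boundary/data terms contains a genuine error. It is not true that both solutions are Schwarzschildean on $\Sigma_{\tau_n}$, nor that they realise the same horizon data: $(\Gamma,\psi)_{n+k}$ is trivial only for $\tau\geq\tau_{n+k}$, and its horizon seed is cut off at $v_{\tau_{n+k}}$ rather than $v_{\tau_n}$, so the difference vanishes neither on $\Sigma_{\tau_n}$ nor on the portion of $\mathcal{H}^+$ beyond $v_{\tau_n}-1$. The correct observation --- and the very reason the relaxed rate $P/2$ is bootstrapped instead of $P$ --- is that in $\mathcal{M}(\tau_n,\tau_{n+k},(v_\infty)_n)$ and on the corresponding horizon portion one of the two solutions is exactly Schwarzschild, so the difference there equals the geometric quantities of $(\Gamma,\psi)_{n+k}$ themselves, which by Theorem \ref{theo1} decay like $e^{-\frac{P}{2M}\tau}$; since $e^{-\frac{P}{2M}\tau}\leq \tau_n^{-2\delta}e^{-\frac{P}{4M}\tau}$ for $\tau\geq\tau_n$ with $\tau_n$ large, the estimates (\ref{boot1})--(\ref{boot4}) hold in that region and supply the future boundary terms (with a fixed constant $C_0$) for the bootstrap in $\mathcal{M}(\tau_0,\tau_n,(v_\infty)_n)$. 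With this repair your argument closes as in the paper. A smaller remark: no appeal to Choquet-Bruhat/Rendall is needed for openness here --- both metrics already exist globally in the region under consideration, so the continuity argument concerns only the estimates, and openness follows from the strict improvement of the constant together with continuity of the norms.
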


\begin{proof}
Note that in $\mathcal{M} \left(\tau_{n+k}, \infty, (v_\infty)_n \right)$ both metrics are exactly Schwarzschild and hence the statement of the proposition holds trivially.  The above estimates also hold in $\mathcal{M} \left(\tau_n, \tau_{n+k},(v_\infty)_n \right)$ as one of the solutions is exactly Schwarzschild there, and hence the estimates follow from the bounds already established for individual solutions in Theorem \ref{theo1}.

Finally, the above estimates also hold on the horizon, in view of the fact that the data vanishes up to $\tau_{n}-1$, and on $v=(v_\infty)_n$, the latter following from Corollary \ref{cor:diffestfinal}.  Let us now fix $C_0$, such that all of the above hold in $\mathcal{M} \left(\tau_n, \infty\right)$ as well as on the horizon and $v=(v_\infty)_n$. The difficulty is to extend the bounds to the region $\mathcal{M} \left(\tau_0, \tau_n,(v_{\infty})_n \right)$.
We define the bootstrap region
\begin{align} \label{ellipticco}
&\mathcal{B}  = \mathcal{M} \left(\overline{\tau},\tau_n\right) \cap \{ v \leq (v_{\infty})_n \} \ \ \ \textrm{with} \nonumber \\
\overline{\tau} &= \inf_{\tau_0 < t < \tau_n} \Big\{ \textrm{estimates (\ref{boot1})-(\ref{boot4}) hold in $\mathcal{M}\left(t, \tau_n,(v_{\infty})_n\right)$ with $C= C_{max}^2 C_0$}  \Big\} \nonumber \, .
\end{align}
The region is clearly closed and also non-empty by continuity of the above norms. We will now show that actually the estimates of the Proposition hold in $\mathcal{B}$ with $C<C_{max}^2 C_0$, which implies that $\mathcal{B}$ is open, hence $\mathcal{B}=\mathcal{M}\left(\tau_0, \tau_n\right) \cap \{ v \leq (v_{\infty})_n \}$ and therefore the Proposition. As may be anticipated by the reader, these estimates proceed in exactly the same way as for the quantities themselves. Therefore, we will sketch the remainder of the proof and focus on the only novel feature: The additional quantity $\boldsymbol{\slashed{\Gamma}}$ appearing in the equations of Proposition \ref{prop:bianchidiff}.

\emph{Step 1.} We improve (\ref{boot3}) and (\ref{boot4}) from the Bianchi equations of Proposition \ref{prop:bianchidiff} repeating the analysis of section \ref{sec:curvimprove}. The only difference is that we now apply (\ref{pi1}) with weight $5-\delta$ instead of $5$ and that we need to control the additional $\boldsymbol{\slashed{\Gamma}}$-term, which has not appeared previously. The latter term is easily seen to exhibit sufficient radial decay and is hence estimated using the bootstrap assumption (\ref{boot1}) (recall $\|r  \boldsymbol{\slashed{\Gamma}} \| = \|\boldsymbol G_1\|$). This gains a factor of $\frac{1}{P}$ after spacetime integration and hence improves the estimates (\ref{boot3}) and (\ref{boot4}).\footnote{A more detailed analysis reveals that the relevant term is actually cubic, so one could also exploit the fact that $\tau_0$ is large.}

\emph{Step 2.} With the assumptions on the curvature fluxes improved, we derive the following elliptic estimate from (\ref{ellipticd}):
\begin{align}
\int_{S^2_{u,v}}\| \slashed{\nabla} \boldsymbol{\hat{\underline{ \chi}}} \|^2 \leq \int_{S^2_{u,v}}   \Big[ \|\slashed{\nabla} tr \underline{\boldsymbol \chi}\|^2  + | \underline{\boldsymbol \beta} |^2 + \frac{C}{r^{4}} \|\boldsymbol G_1\|^2 +C \sum_{p_1+p_2\geq 2} r^{2p_1} \|\boldsymbol \Gamma_{p_2}\|^2  \Big] ,
\end{align}
where the area form $\sqrt{\slashed{g}}d\theta_1 d\theta_2$ is implicit. Upon integrating this in $u$ with appropriate $r$-weight to match the curvature flux (i.e.~``multiplying" with $\int du \Omega^2 w^2$), we see that in view of the fact that the $\boldsymbol {\underline{\beta}}$-curvature flux has been improved in \emph{Step 1} and the fact that integrating $\boldsymbol \Gamma$ and $\boldsymbol G$ in $u$ gains a $\frac{1}{P}$-factor via Lemma \ref{lem:3dir}, the estimate (\ref{boot2}) can be improved.

\emph{Step 3.} To improve the $\boldsymbol G_i$-parts of (\ref{boot1}) we integrate the transport equations (\ref{auxd1}) and (\ref{auxd2}) using Lemma \ref{lem:3dir}. Note that the last term in both of these equations has just been improved in \emph{Step 2}, while the other terms are handled precisely as before.

\emph{Step 4.} Finally, to improve the $\boldsymbol \Gamma_p$-part of (\ref{boot1}) we integrate the transport equations of Proposition \ref{maind} precisely as we did for the individual solutions. The anomalous terms are handled precisely as before.
\end{proof}

\begin{remark}
The coupling of hyperbolic and elliptic estimates in the above proof has its origin in the monumental proof of the stability of Minkowski space \cite{ChristKlei} and has been further developed in \cite{KlaiRod, Lydia, LukRod, formationofbh, l2bounded}.
\end{remark}

\section{Spacetimes asymptotically settling down to Kerr} \label{sec:thekerrcase}
In this section we describe the modifications required to treat the more general case of spacetimes asymptotically settling down to a Kerr metric. If we want to stick to our convenient set-up of integrating along characteristics from the horizon and null-infinity, the first step is to express the Kerr-metric in a double-null coordinate system (\ref{maing}). While this is already much more complicated than in the Schwarzschild case, the hard work has already been carried out by Pretorius and Israel \cite{Pretorius}. We briefly summarize their construction. 

\subsection{Step 1: Kerr-metric in double null coordinates}
In Boyer-Lindquist coordinates we have (away from the pole at $\theta = 0$)
\[
g_{Kerr} = \frac{\Sigma}{\Delta} dr^2 + \Sigma d\theta^2 +R^2 \sin^2 \theta d\phi^2 - \frac{4ma r \sin^2 \theta}{\Sigma} d\phi dt - \left(1-\frac{2mr}{\Sigma}\right) dt^2
\]
with
\[
\Sigma = r^2 + a^2 \cos^2 \theta \ \ \ , \ \ \ R^2 = r^2 +a^2 + \frac{2ma^2 r \sin^2 \theta}{\Sigma} \ \ \ , \ \ \ \Delta= r^2 -2mr +a^2 \, .
\]
In \cite{Pretorius}, the eikonal equation is solved to determine a tortoise coordinate $r_\star \left(r,\theta\right)$ and a coordinate $\lambda \left(r,\theta\right)$ such that in particular the hypersurfaces
\[
2u = t- r_\star \ \ \ , \ \ \ 2v = t+r_\star
\]
are characteristic. The relation between Boyer-Lindquist $\left(t,r,\theta,\phi\right)$-coordinates and the new $\left(t,r^\star,\lambda,\phi\right)$ or (the trivially related)  $\left(u,v,\lambda,\phi\right)$-coordinates is implicit. In the new coordinates, the Kerr-metric can be expressed as
\begin{align} \label{gken}
g_{Kerr} = -4\frac{\Delta}{R^2} du dv + \frac{L^2}{R^2} d\lambda^2 + R^2 \sin^2 \theta \left(d\phi - \omega_B \left(du + dv\right)\right)^2 \, ,
\end{align}
where $\omega_B = \frac{2mar}{\Sigma R^2}$, $L$ is a function of $\lambda, u$ and $v$. Finally, any $r$ and $\theta$ appearing in (\ref{gken}) is now to be interpreted as a function of $r^\star$ and $\lambda$: $r\left(r^\star=v-u,\lambda\right)$ and $\theta\left(r^\star=v-u,\lambda\right)$. All these relations are implicit but at least the asymptotic behavior (large $r$) of the metric quantities can be obtained. For Schwarzschild, one has the relations $L=\frac{r^2}{\sin 2\theta}$ and $\lambda = \sin^2 \theta$ recovering the familiar Eddington-Finkelstein form of the metric. To express the metric in the form (\ref{maing}), we do the coordinate transformation
\[
\phi = \tilde{\phi} + h\left(u,v,\lambda\right) \, ,
\]
where we require only that $\frac{\partial h}{\partial u} = \omega_B \left(u,v,\lambda\right)$ and $\frac{\partial h}{\partial v} = -\omega_B \left(u,v,\lambda\right)$.\footnote{In other words, $h$ depends only on $r_\star$ and $\lambda$ just as $\omega_B$ depends only on $r^\star$ and $\lambda$.} This allows us to express the metric as
\[
g_{Kerr} = -4\Omega_{Kerr}^2 du dv + \slashed{g}^{Kerr}_{CD} \left(d\theta^C - b_{Kerr}^C dV\right) \left(d\theta^D - b_{Kerr}^D dV\right)
\]
with $\theta^1 = \theta^\lambda =  \lambda$ and $\theta^2 = \theta^{\tilde{\phi}} = \tilde{\phi}$ and
\begin{align}
\Omega_{Kerr}^2 = \frac{\Delta}{R^2} \ \ \ , \ \ \ b_{Kerr}^\lambda = 0 \ \ \ , \ \ \ b_{Kerr}^{\tilde{\phi}} = 2\omega_B  \ \ \ , \ \ \ \slashed{g}^{Kerr}_{\tilde{\phi} \tilde{\phi}} = R^2 \sin^2 \theta\nonumber \\
\slashed{g}^{Kerr}_{\lambda \lambda} = \frac{L^2}{R^2} + \left(\frac{\partial h}{\partial \lambda}\right)^2 R^2 \sin^2 \theta \ \ \ , \ \ \ \slashed{g}^{Kerr}_{\lambda \tilde{\phi}} =2 \left(\frac{\partial h}{\partial \lambda}\right) R^2 \sin^2 \theta \, .
\end{align}
Note that $\omega_B \rightarrow \frac{a}{r_+^2 + a^2}$ becomes constant as $r \rightarrow r_+$, the largest root of $\Delta_-$.\footnote{Observe that now $b_{Kerr} \neq 0$ on the horizon $\mathcal{H}^+$, which differs from our previous choice of gauge in Section \ref{sec:gaugefix}. To achieve $b=0$ one could do the coordinate transformation $\tilde{\phi} = \tilde{\phi}^\prime + \frac{a}{r_+^2 +a^2}v$. However, this coordinate system is not very intuitive near infinity as the new $b$ does not decay at infinity. Consequently, one may do the local theory in the $b_{hoz}=0$-system and then change to the convenient system above in which $b_{hoz}= \left(0,\frac{a}{r_+^2 +a^2}\right)$.} We will refer to the above as the Eddington-Finkelstein form of the Kerr metric. From here, we could define the regular Kruskal coordinates as in \cite{Pretorius}, i.e.~set
\[
U = -e^{-\kappa u} \qquad \textrm{and} \qquad V = e^{\kappa v} \qquad \textrm{where} \qquad \kappa = \frac{\sqrt{m^2-a^2}}{r_+^2+a^2}
\]
is the surface gravity of the horizon. For the extremal case, where $\Delta_-$ has double zero at $r_+$ and $v-u \sim -\left(r-r_+\right)^{-1}$ near the horizon (instead of $v-u \sim \log \left(r-r_+\right)$ in the non-extremal case), one may set $U=\arctan u -\frac{\pi}{2}$ and $V=-\arccot v$ to obtain regular double null (Kruskal-type) coordinates.

\subsection{Step 2: Renormalisation} \label{sec:thekerrcasere}
One now proceeds as in the Schwarzschild case, i.e.~one fixes the differentiable manifold $\mathcal{M}_{M,a}$ of Kerr with its coordinate atlas $\left(u,v,\lambda,\tilde{\phi}\right)$ and equips $\mathcal{M}_{M,a}$ (or appropriate subsets thereof) with metrics $g$ of the form (\ref{maing}) arising from an appropriate initial value formulation with data on the horizon and at infinity. As before, one employs the regular frame $e_3 = \frac{1}{\Omega^2}\partial_u$, $e_4 = \partial_v + b^A \partial_A$, $e_1 = \partial_\lambda$, $e_2 = \partial_{\tilde{\phi}}$ to define the Ricci- and curvature components. The fact that $\Omega$, $\slashed{g}$ and $b$ are known for Kerr as functions of $u,v,\lambda$ then allows for comparison (``renormalization'') of the actual metric $g$ with the Kerr-metric we are converging to. 

We express both the Bianchi- and the null-structure equations, as equations for \emph{decaying} quantities. This is computationally more involved, since many more quantities are non-vanishing for Kerr -- especially since we are not working in the algebraically special frame of Kerr but in a frame arising from a double-null foliation! The key, however, is that we do not need explicit expressions for the values of $\alpha_{Kerr}$, $\beta_{Kerr}$ etc. It suffices to know that these quantities are uniformly bounded with all (regular frame) derivatives, satisfy the familiar radial decay (\ref{stabmindec}) at infinity and that they obey the Bianchi equations ``at the lowest order". For instance, using Lemma \ref{teco}, the Bianchi equation for $\beta$ may be written (ignoring quadratic terms for simplicity)
\begin{align}
\slashed{\nabla}_4 \left( \beta - \beta_{Kerr} \right) + 2 tr \chi  \left( \beta - \beta_{Kerr} \right) - \hat{\omega} \left( \beta - \beta_{Kerr} \right) = \slashed{div} \left(\alpha - \alpha_{Kerr} \right) \nonumber \\ +  \left(\underline{\eta} - \underline{\eta}_{Kerr} + 2 \zeta   - 2\underline{\zeta}_{Kerr} \right) \cdot \alpha_{Kerr} + \left(\underline{\eta}_{Kerr}+ 2 \zeta_{Kerr} \right) \cdot \left(\alpha - \alpha_{Kerr} \right)  \nonumber \\
- 2 \left(tr \chi - tr \chi_{Kerr} \right) \beta_{Kerr} + \left(\hat{\omega}-\hat{\omega}_{Kerr}\right)\beta_{Kerr} + \left(\slashed{g}^{-1} - \slashed{g}^{-1}_{Kerr} \right) \slashed{\nabla} \alpha_{Kerr}\nonumber \\ 
+ \slashed{g}^{-1}_{Kerr} \left(\slashed{\Gamma} - \slashed{\Gamma}_{Kerr} \right) \alpha_{Kerr}+ \left(b-b_{Kerr} \right) \slashed{\nabla} \beta_{Kerr} + \left(\slashed{g}^{-1} - \slashed{g}^{-1}_{Kerr} \right) \chi_{Kerr} \beta_{Kerr} \nonumber \\ + \slashed{g}^{-1}_{Kerr} \left(\chi - \chi_{Kerr} \right) \beta_{Kerr} + \left(b-b_{Kerr}\right) \slashed{\Gamma}_{Kerr} \beta_{Kerr} + b_{Kerr} \left(\slashed{\Gamma} - \slashed{\Gamma}_{Kerr}\right) \beta_{Kerr}  \nonumber \, .
\end{align}
All other Bianchi equations can be treated similarly. We observe that Proposition \ref{uneq2} on the structure of the Bianchi equations still holds with one notable modification: The terms involving the difference of the Christoffel-symbols are new and we comment on them in Section \ref{sec:Kerrest}.

Similarly, it can be checked that the null-structure equations are still of the schematic form of Proposition \ref{uneq}.

Finally, commutation with the frame derivatives also proceeds as before: Recall that nowhere did we rely on the fact that the spacetimes we constructed were almost spherically symmetric. The commutation process with the $\slashed{\nabla}$-operators required only keeping track of the radial decay at infinity, no ``almost Killing'' properties were being used. In view of this, the results of Section \ref{sec:commute} continue to hold because the additional angular momentum terms appearing for Kerr are seen to contribute at higher order in terms of decay in $r$ (in the asymptotically flat coordinate system that we are using) and hence not affect the cancellations of the main terms exploited in the results of Section \ref{sec:commute}.
\subsection{Step 3: Estimates} \label{sec:Kerrest}
To measure the exponential decay (and in particular, to state the bootstrap assumptions) we will need the analogue of the function $\tau$ for the Kerr metric. Such a function is defined in \cite{dafrodlargea}; in fact, we adapted the latter Kerr-construction for our Schwarzschild case when we defined (\ref{fpiece})!

The estimates now proceed as in the Schwarzschild case. In particular, as the null-condition near infinity has already been made manifest in Step 2, the curvature estimates of Proposition \ref{prop:improve1} continue to hold with the same radial weights. There are more error-terms but all of them are handled as in Section \ref{errorest}, with one exception: As observed in Step 2 above, the error will now contain the difference of the Christoffel-symbols on the spheres already at the level of the renormalized equations themselves (and not only at the level of differences as in Section \ref{sec:convergence}). In any case, we already resolved this additional difficulty when we estimated differences (the elliptic estimates proven in Section \ref{sec:diffest} can of course be proven also for the solutions themselves!): One simply bootstraps (\ref{boot1}) and (\ref{boot2}), and retrieves the latter via the elliptic estimate (\ref{ellipticco}). Finally, for the estimates on the $\Gamma_p$ one repeats Proposition \ref{prop:improve2} with the same weights. Again, the only additional difficulty are the new Christoffel terms which requires an additional estimate, which we already carried out in Section \ref{sec:diffest} at the level of differences.

\section{Robinson-Trautman metrics} \label{sec:RobTraut}

In this section we briefly review the family of Robinson-Trautman metrics and compare them with the metrics obtained from Theorem \ref{theo:full}. 

\subsection{Construction} \label{sec:rotr1}
Fix a metric $\overline{g}_0\left(\theta\right)$ on $S^2$ and define the $u$-dependent family of metrics
\[
\bar{g}\left(u,\theta\right) = e^{2\lambda\left(u,\theta\right)} \overline{g}_0\left(\theta\right) \, .
\]
Fix also a constant $M$ and a smooth seed function $\lambda_0\left(\theta\right)$ on $S^2$.
Then consider the following parabolic problem
\begin{equation} \label{Calabi}
\begin{cases} \phantom{rc}
\Delta_{\bar{g}} R_{\bar{g}} = \frac{1}{24M} \partial_u \lambda & \text{for $u>u_0$} \\ 
\lambda \left(u_0, \theta\right) = \lambda_0 \left(\theta\right) & \text{at $u=u_0$} \, 
\end{cases} 
\end{equation} 
with $R_{\bar{g}}$ the scalar curvature of $\bar{g}$.
The above equation may be written as $\bar{g}_{AB} \Delta_{\bar{g}} R_{\bar{g}} = \frac{1}{12M} \partial_u \bar{g}_{AB}$, and is known as the Calabi equation in the literature \cite{Calabi}.\footnote{While (\ref{Calabi}) and their associated Robinson-Trautman metrics below can be considered for other topologies, we focus here on the case of $S^2$.} In \cite{Chruscielexistence}, Chru\'sciel proved global existence and (exponential) convergence to the round metric as $u \rightarrow \infty$ for the parabolic problem (\ref{Calabi}).

As observed by Robinson and Trautman \cite{RobTraut}, from a solution to (\ref{Calabi}) one can in turn construct a solution of the Einstein vacuum equations (\ref{vacEq}). Let $\mathcal{M} =\left[u_0,\infty\right) \times \left(0,\infty\right) \times S^2$ and equip it with the metric
\begin{align} \label{RobTraut}
g = -\left( \frac{R_{\bar{g}}}{2} + \frac{r}{12M} \Delta_{\bar{g}} R_{\bar{g}} - \frac{2M}{r} \right) du^2 - 2 du dr + r^2  e^{2\lambda\left(u,\theta\right)} \left(\overline{g}_0\right)_{AB} dx^A dx^B  \, .
\end{align}
Then $\left(\mathcal{M},g\right)$ is Ricci-flat. It is also asymptotically flat at null-infinity, which can be seen, for instance, by passing to Bondi coordinates \cite{Winicour}. The vectorfield $\partial_r$ generates a shear-free congruence of null-geodesics which is hypersurface orthogonal (to the hypersurfaces of constant $u$). These facts make the metric algebraically special.

\subsection{Global Structure}
The global structure of Robinson-Trautman (RT) metrics (\ref{RobTraut}) has been understood by Chru\'sciel in \cite{Chruscielglobal} and is best illustrated by a Penrose diagram:
\[
\input{robtraut.pstex_t}
\]
For us, the shaded region is the most interesting. In general, there exists no Robinson-Trautman extension past the null hypersurface $u=u_0$, as this would involve solving a backwards heat equation.\footnote{Of course, one could extend the metric backwards \emph{outside} the Robinson-Trautman class provided suitable data is prescribed on null-infinity. Cf.~point 6 of Section \ref{sec:mainpoints}.} To investigate the question of whether the metric extends smoothly to the future of $\mathcal{H}^+$, i.e.~whether the horizon is smooth, Chru\'sciel  \cite{Chruscielnonsmooth} derived an asymptotic expansion of the metric towards $u \rightarrow \infty$  from an asymptotic expansion of the solution to the parabolic equation (\ref{Calabi}). After a coordinate change $\hat{u}=\exp\left(-u/4M\right)$, $\hat{v}=\exp \left((u+2r+4M\log(r-2M))/4M \right)$, the horizon is seen as the boundary $\hat{u}=0$ of a larger manifold, $\overline{\mathcal{M}}$. The results then are the following: 1) The metric on the horizon is Schwarzschild up to terms of order $\hat{u}^6$ and hence, in particular, there is no energy flux through the horizon for Robinson-Trautman metrics. 2)
 A careful expansion of the metric in powers of $\hat{u}$ will be completely regular up to some power of $\hat{u}$ but generically involve terms of the form $\hat{u}^n \log |\hat{u}|$ at some level. This in turn can be used to prove that generically the metric is not $C^m$-extendible through $\mathcal{H}^+$ for some large $m$.\footnote{At least not to a manifold $\overline{\mathcal{M}}$ smoothly foliated by spheres coinciding with the $u=const$, $r=const$ spheres in $\mathcal{M}$. See \cite{Chruscielnonsmooth}.} 3) For a special (non-generic) choice of initial data the level at which the log-terms appear can \emph{in principle}\footnote{Chru\'sciel \cite{Chruscielnonsmooth} shows existence of data making the first log-term vanish.} be pushed to larger $n$. However, the conjecture (formulated in \cite{Chruscielglobal}) is that the only RT-metric admitting a completely smooth extension through $\mathcal{H}^+$ is the Schwarzschild metric. 
 
\subsection{Remarks}
As seen in Section \ref{sec:rotr1}, Robinson-Trautman metrics 
are algebraically special and have less formal degrees of freedom than the solutions arising from Theorem \ref{theo:full}, as the former are completely parametrized by the metric on a single sphere, while the scattering data for the solutions of Theorem \ref{theo:full} admit the complete functional degrees of freedom along both the horizon (where the RT solutions are necessarily trivial) and null-infinity. Moreover, it is not clear whether there is an analogue of the Robinson-Trautman construction producing spacetimes asymptotically settling down to  Kerr.

Interestingly, the Robinson-Trautman class is in fact not included in the class of solutions we construct in Theorem \ref{theo:full}. This is simply because the exponential decay rate exhibited by the Robinson-Trautman solutions is ``borderline" (in the sense that it is the minimum decay rate expected from the blue-shift heuristics) while we have made no attempt to optimize our exponential decay rate $P$. It is an interesting problem to determine the minimal exponential decay rate of our solutions and to retrieve the RT solutions from our framework. See also Remark \ref{rem:Pdep} in this context.

\appendix \label{appendix}

\section{The remaining null-structure equations}
For completeness, we collect here those null-structure equations not listed in Sections \ref{sec:rc} and \ref{sec:mq}:

\begin{align} \label{4uhatc}
\slashed{\nabla}_4 \underline{\hat{\chi}} + \frac{1}{2} tr \chi \underline{\hat{\chi}} = -2 \slashed{\mathcal{D}}_2^\star \underline{\eta} - \hat{\omega} \underline{\hat{\chi}} - \frac{1}{2} tr \underline{\chi} \hat{\chi} + \underline{\eta} \hat{\otimes} \underline{\eta} \, ,
\end{align}
\begin{align} \label{4utrc}
\slashed{\nabla}_4 \left(  tr \underline{\chi}  - tr \underline{\chi}_\circ\right) =\left[ -\frac{1}{2} tr \chi - \hat{\omega} \right] \left(tr \underline{\chi} - tr \underline{\chi}_\circ\right) -\frac{1}{2} tr \underline{\chi}_\circ \left(tr \chi - tr \chi_\circ\right)   \nonumber \\ 
+ 2 \slashed{div} \underline{\eta} - tr \underline{\chi}_\circ \left(\hat{\omega} - \hat{\omega}_\circ \right) - \hat{\chi}\cdot \underline{\hat{\chi}} + 2 \underline{\eta} \cdot \underline{\eta} + 2 \left(\rho-\rho_\circ\right) \, ,
\end{align}
\begin{align} \label{3hatc}
\slashed{\nabla}_3 {\hat{\chi}} + \frac{1}{2} tr \underline{\chi} {\hat{\chi}} = -2 \slashed{\mathcal{D}}_2^\star {\eta} - \underline{\hat{\omega}} {\hat{\chi}} - \frac{1}{2} tr{\chi}  \underline{\hat{\chi}} + {\eta} \hat{\otimes} {\eta} \, ,
\end{align} 
\begin{align} \label{3trc}
\slashed{\nabla}_3 \left(  tr {\chi}  - tr  {\chi}_\circ\right) = -\frac{1}{2} tr \chi \left(tr \underline{\chi} - tr \underline{\chi}_\circ\right) -\frac{1}{2} tr \underline{\chi}_\circ \left(tr \chi - tr \chi_0\right) + 2 \slashed{div} \eta \nonumber \\ 
- \hat{\chi}\cdot \underline{\hat{\chi}} + 2 \eta \cdot \eta + 2 \left(\rho-\rho_\circ\right) \, ,
\end{align}
\begin{align} \label{Gauss}
K=- \frac{1}{4} tr \chi tr \underline{\chi} + \frac{1}{2} \hat{\chi} \cdot \hat{\underline{\chi}} - \rho \, ,
\end{align}
\begin{align}
\slashed{curl} \eta = - \frac{1}{2} \hat{\chi} \wedge \underline{\hat{\chi}} + \sigma
\ \ \ \ \ , \ \ \ \ \ 
\slashed{curl} \underline{\eta} = + \frac{1}{2} \hat{\chi} \wedge \underline{\hat{\chi}} - \sigma \, ,
\end{align}
\begin{align} \label{elliptic}
\left(\slashed{g}^{-1}\right)^{BC} \slashed{\nabla}_C \hat{\underline{\chi}}_{BA} = \frac{1}{2} \slashed{\nabla}_A tr \underline{\chi} - \left(\slashed{g}^{-1}\right)^{BC} \underline{\eta}_B \hat{\underline{\chi}}_{AC}  + \frac{1}{2} tr \underline{\chi} \  \underline{\eta}_A + \underline{\beta}_A \, ,
\end{align}
\begin{align} \label{elliptic2}
\left(\slashed{g}^{-1}\right)^{BC} \slashed{\nabla}_C \hat{{\chi}}_{BA} = \frac{1}{2} \slashed{\nabla}_A tr {\chi} - \left(\slashed{g}^{-1}\right)^{BC}{\eta}_B \hat{{\chi}}_{AC}  + \frac{1}{2} tr {\chi} \  {\eta}_A - {\beta}_A \, .
\end{align}

\bibliographystyle{hacm}
\bibliography{thesisrefs}

\end{document}